\documentclass[11pt]{article}
\usepackage[utf8]{inputenc}
\usepackage[margin=1in,letterpaper]{geometry}

\usepackage{graphicx}
\usepackage{amsmath}
\usepackage{amsthm}
\usepackage{amssymb}
\usepackage{amsfonts}
\usepackage{mathabx}
\usepackage{algorithm}
\usepackage{algorithmic}
\usepackage{paralist}
\usepackage{color}
\usepackage{mathtools}
\usepackage{xspace}
\usepackage{enumitem}
\usepackage[title]{appendix}
\usepackage[compact]{titlesec}

\usepackage{booktabs,multicol,multirow}
\usepackage{xcolor}
\usepackage{nameref}
\definecolor{ForestGreen}{rgb}{0.1333,0.5451,0.1333}
\definecolor{DarkRed}{rgb}{0.5,0.1,0.1}
\definecolor{DarkBlue}{rgb}{0.1,0.1,0.5}
\usepackage[linktocpage=true,
pagebackref=true,colorlinks,
linkcolor=DarkRed,citecolor=ForestGreen,
bookmarks,bookmarksopen,bookmarksnumbered,ocgcolorlinks]{hyperref}

\usepackage{thmtools}
\usepackage{thm-restate} 

\definecolor{mygray}{gray}{0.95}
\usepackage{mdframed}
  {\begin{mdframed}[backgroundcolor=mygray,roundcorner=10pt,leftmargin=20,rightmargin=20,outerlinewidth=2,innertopmargin=0]\begin{result}}
  {\end{result}\end{mdframed}}

\newenvironment{shadetheorem}
  {\begin{mdframed}[backgroundcolor=mygray,roundcorner=10pt,leftmargin=20,rightmargin=20,outerlinewidth=2,innertopmargin=0]\begin{theorem}}
  {\end{theorem}\end{mdframed}}

\definecolor{ultragray}{gray}{0.99}
  {\begin{mdframed}[backgroundcolor=ultragray,roundcorner=10pt,leftmargin=20,rightmargin=20,outerlinewidth=2,innertopmargin=0]\begin{definition}}
  {\end{definition}\end{mdframed}}


\makeatletter
\newcommand*{\bx}{bx}
\newcommand*{\IfBold}{
  \ifx\f@series\bx
    \expandafter\@firstoftwo
  \else
    \expandafter\@secondoftwo
  \fi
}
\makeatother

\DeclareMathOperator{\poly}{poly}
\DeclareMathOperator{\polylog}{polylog}

\declaretheorem{theorem}
\declaretheorem[numberwithin=section]{lemma}

\declaretheorem[numberlike=lemma]{corollary}
\declaretheorem[numberlike=lemma]{claim}

\declaretheorem[numberlike=lemma]{remark}
\declaretheorem{result}

\usepackage[capitalise,nameinlink]{cleveref}
\crefname{theorem}{Theorem}{Theorems}
\crefname{section}{Section}{Sections}
\crefname{lemma}{Lemma}{Lemmas}
\crefname{observation}{Observation}{Observations}
\crefname{algorithm}{Algorithm}{Algorithms}
\crefname{mdalg}{Algorithm}{Algorithms}
\crefname{step}{Step}{Steps}
\crefname{fact}{Fact}{Facts}
\crefname{claim}{Claim}{Claims}
\crefname{part}{Property}{Properties}
\crefname{issue}{Challenge}{Challenges}
\crefname{tech-issue}{Technical Challenge}{Technical Challenges}

\theoremstyle{definition}
\declaretheorem{definition}
\theoremstyle{definition}

\DeclareMathOperator*{\Exp}{\mathbb{E}}

\DeclarePairedDelimiter{\abs}{\lvert}{\rvert}%
\DeclarePairedDelimiter{\card}{\lvert}{\rvert}%
\DeclarePairedDelimiter{\set}{\lbrace}{\rbrace}%
\DeclarePairedDelimiter{\range}{\lbrack}{\rbrack}%
\DeclarePairedDelimiter{\parens}{\lparen}{\rparen}%
\DeclarePairedDelimiter{\floor}{\lfloor}{\rfloor}%
%


\usepackage{color}
\usepackage{graphicx}
\usepackage{tcolorbox}
\tcbuselibrary{skins,breakable}
\tcbset{enhanced jigsaw}

\newenvironment{tbox}{\begin{tcolorbox}[
		enlarge top by=5pt,
		enlarge bottom by=5pt,
		 boxsep=0pt,
                  left=4pt,
                  right=4pt,
                  top=10pt,
                  arc=0pt,
                  boxrule=1pt,toprule=1pt,
                  colback=white
                  ]
	}
{\end{tcolorbox}}

\theoremstyle{definition}
\newtheorem{mdalg}{Algorithm}
\newenvironment{Algorithm}{\begin{tbox}\begin{mdalg}}{\end{mdalg}\end{tbox}}

\renewcommand{\paragraph}[1]{\medskip\noindent{\bf #1}\xspace}

\newcommand{\Qed}[1]{\ensuremath{\qedsymbol_{\,\,\textnormal{\cref{#1}}}}}

\renewcommand{\qedsymbol}{\nobreak \ifvmode \relax \else
      \ifdim\lastskip<1.5em \hskip-\lastskip
      \hskip1.5em plus0em minus0.5em \fi \nobreak
      \vrule height0.75em width0.5em depth0.25em\fi}

\linespread{1}

\title{A Distributed Palette Sparsification Theorem}

\author{
Maxime Flin\\
\small Reykjavik University\\
\small \texttt{maximef@ru.is}\and
Mohsen Ghaffari\\
\small MIT\\
\small \texttt{ghaffari@mit.edu}\and
Magn\'us M. Halld\'orsson\\
\small Reykjavik University\\
\small \texttt{mmh@ru.is}\and
Fabian Kuhn\\
\small University of Freiburg\\
\small \texttt{kuhn@cs.uni-freiburg.de}\and
Alexandre Nolin\\
\small CISPA\\
\small \texttt{alexandre.nolin@cispa.de}
}

\date{}

\newcommand{\alg}[2][]{{\IfBold{\MakeUppercase{#2}}{\textup{\textsc{#2}}}}{#1}\xspace}

\newcommand{\multitrial}[1][]{\alg[#1]{SlackColor}}
\newcommand{\slackcolor}[1][]{\alg[#1]{SlackColor}}

\newcommand{\matching}{\alg{Matching}}
\newcommand{\augpath}{\alg{AugPath}}
\newcommand{\growtree}{\alg{GrowTree}}

\newcommand{\harvest}{\alg{Harvest}}
\newcommand{\randompush}{\alg{RandomPush}}

\newcommand{\model}[1]{\ensuremath{\mathsf{#1}}\xspace}
\newcommand{\distream}{\model{LocalStream}}

\newcommand{\NCC}{\model{NCC}}
\newcommand{\CONGEST}{\model{CONGEST}}
\newcommand{\LOCAL}{\model{LOCAL}}
\newcommand{\congest}{\model{CONGEST}}
\newcommand{\local}{\model{LOCAL}}


\newcommand{\ov}[1]{\overline{#1}}

\newcommand{\cA}{\ensuremath{\mathcal{A}}\xspace}

\newcommand{\cG}{\ensuremath{\mathcal{G}}\xspace}
\newcommand{\cK}{\ensuremath{\mathcal{K}}\xspace}

\newcommand{\cS}{\ensuremath{\mathcal{S}}\xspace}
\newcommand{\cX}{\ensuremath{\mathcal{X}}\xspace}
\newcommand{\cW}{\ensuremath{\mathcal{W}}\xspace}

\newcommand{\evt}{\ensuremath{\mathcal{E}}\xspace}

\newcommand{\avail}{\ensuremath{\mathsf{avail}}}


\newcommand{\hC}{\widehat{C}}

\newcommand{\cC}{\widecheck{C}}

\newcommand{\sparse}[1]{\ensuremath{\widetilde{#1}}}
\newcommand{\Gsparse}{\ensuremath{\widetilde{G}}}
\newcommand{\Vsparse}{\ensuremath{V_{\mathsf{sparse}}}}

\newcommand{\avganti}{\ensuremath{\bar{d}}}

\newcommand{\smin}{s_{\min}}

\newcommand{\emax}{e_{\max}}
\newcommand{\Next}{N_{\mathrm{ext}}}

\newcommand{\K}{\ensuremath{2\alpha}}
\renewcommand{\epsilon}{\varepsilon}

\newcommand{\hatd}{\ensuremath{\widehat{d}}}
\newcommand{\hS}{S'}
\newcommand{\eqdef}{\stackrel{\text{\tiny\rm def}}{=}}

\newcommand{\Lg}{\ensuremath{L^{\mathsf{G}}}}
\newcommand{\Lh}{\ensuremath{L^{\mathsf{H}}}}

\begin{document}
\maketitle

\begin{abstract}
The celebrated palette sparsification result of [Assadi, Chen, and Khanna SODA'19] shows that to compute a $\Delta+1$ coloring of the graph, where $\Delta$ denotes the maximum degree, it suffices if each node limits its color choice to $O(\log n)$ independently sampled colors in $\{1, 2, \dots, \Delta+1\}$. They showed that it is possible to color the resulting sparsified graph---the spanning subgraph with edges between neighbors that sampled a common color, which are only $\tilde{O}(n)$ edges---and obtain a $\Delta+1$ coloring for the original graph. However, to compute the actual coloring, that information must be gathered at a single location for centralized processing. We seek instead a local algorithm to compute such a coloring in the sparsified graph. The question is if this can be achieved in $\poly(\log n)$ distributed rounds with small messages.

Our main result is an algorithm that computes a $\Delta+1$-coloring after palette sparsification  with $O(\log^2 n)$ random colors per node and runs in $O(\log^2 \Delta + \log^3 \log n)$ rounds on the sparsified graph, using $O(\log n)$-bit messages.
We show that this is close to the best possible: any distributed  $\Delta+1$-coloring algorithm that runs in the \LOCAL model on the sparsified graph, given by palette sparsification, for any $\poly(\log n)$ colors per node, requires $\Omega(\log \Delta / \log\log n)$ rounds. This distributed palette sparsification result leads to the first $\poly(\log n)$-round algorithms for $\Delta+1$-coloring in two previously studied distributed models: the Node Capacitated Clique, and the cluster graph model.
\end{abstract}

\pagenumbering{roman}
\thispagestyle{empty}
\newpage

\thispagestyle{empty}
\tableofcontents
\newpage

\pagenumbering{arabic}

\section{Introduction}
The  \emph{Palette Sparsification Theorem} of Assadi, Chen, and Khanna (ACK, henceforth) \cite{ACK19} is a beautiful and powerful sparsification result for the $\Delta+1$-coloring problem: the problem of assigning a color $c(v)\in\set{1, \ldots, \Delta+1}$ to each node $v\in V$ of an $n$-node graph $G=(V,E)$ such that adjacent nodes $u,v\in V$, for which $uv\in E$, receive different colors. Here, $\Delta$ is the maximum degree of the graph. ACK show that we can $\Delta+1$-color any graph $G$, by list-coloring a \emph{sparse} sub-graph $\sparse{G}$, which has only $\tilde{O}(n)$ edges. Their theorem led to several breakthroughs for sublinear algorithms, including graph streaming algorithms, sublinear query algorithms, and massively parallel computation algorithms.

More precisely, the theorem states that for any graph $G$, if we independently sample random a list $L(v)$ of $O(\log n)$ colors for each vertex $v\in V$, with high probability, the graph $G$ is $L$-list-colorable. That is, there exists a coloring of $G$ where each $v$ is assigned a color $c(v)\in L(v)$. To compute a $\Delta+1$-coloring of $G$, one then computes an $L$-list-coloring of the sub-graph $\sparse{G}$ retaining only edges $uv\in E$ where $L(u)\cap L(v)\neq\emptyset$. A simple argument shows that $\sparse{G}$ is sparse and has maximum degree $O(\log^2 n)$, thereby giving the aforementioned sub-linear algorithms. 

The ACK result gives rise to the hope that there might be an ultimately scalable (distributed) solution for the $\Delta+1$ coloring, where each graph node needs to interact and coordinate with only $\poly(\log n)$ of its neighbors. However, all known applications of the palette sparsification theorem require gathering the sparsified subgraph $\sparse{G}$ in one location, and solving the resulting list-coloring problem in a centralized fashion. This is prohibitively expensive in distributed models with restrictive communication, e.g., if each node can send/receive only $\poly\log n$ bits per round.

In this paper, we remedy this problem by giving a nearly-optimal \emph{distributed} version of the palette sparsification theorem. Informally, we show that there is a fast distributed algorithm for coloring the sparsified subgraph, and using communications only on the sparsified graph (modulo a small relaxation in the graph's degree, compared to ACK). This leads to the first poly-logarithmic randomized algorithms in constrained settings studied in the distributed literature \cite{rozhovn2022undirected, ghaffari2015flow,ghaffari2016distributed,ghaffari2013cut,ghaffari2022universalCut,AGGHSKL19}.

\subsection{Background and State of the Art}
\paragraph{Distributed Coloring.}
The $\Delta+1$-coloring problem has been one of the central problems in the study of distributed graph algorithms \cite{PanconesiS97,johansson99,SW10,fraigniaud16,BEPSv3,HSS18,CLP20,GGR20,HKMT21,GK21,HKNT22}. In fact, this was the main problem studied by Linial in his celebrated paper introducing the \local model \cite{linial92}.
In this model, we have a communication network between processors, abstracted as an undirected graph, and this is also the graph for which we want to compute a vertex coloring. Each vertex is equipped with a $O(\log n)$-bit unique identifier (where $n=|V|$) and communicates in synchronous rounds with its neighbors. The variant of this model with $O(\log n)$-bit messages is known as the \congest model \cite{peleg00}.

In recent years, there has been exciting progress on sublogarithmic time randomized algorithms \cite{BEPSv3,HSS18,CLP20,GK21,HKNT22,HNT22,ghaffari2023fasterMIS} culminating in state-of-the-art complexities of $O(\log^3 \log n)$ in \congest and $\widetilde{O}(\log^2 \log n)$ in \local. 
In fact, when $\Delta \ge \Omega(\log^4 n)$ --- which is the interesting range for \cite{ACK19} --- the best round complexity known is $O(\log^* n)$ \cite{HKNT22,HNT22}. 

In constrained distributed models such as \textit{cluster graphs} and \textit{the node congested clique} (see \cref{sec:intro-corollaries}), where nodes can effectively send/receive only $\poly\log n$ bits per rounds (or more generally, $\poly\log n$ bit aggregate summaries of the messages), no $\poly\log n$ algorithm is known. A major impediment is this: all known algorithms work by computing the coloring gradually, and in the intermediate steps, nodes need to learn which colors are already used by their neighbors. This forces communications that need $\Omega(\Delta)$ bits. 

The palette sparsification theorem of \cite{ACK19} reduces the problem of $\Delta+1$-coloring $G$ to a list-coloring problem on a graph $\sparse{G}$ with $O(\log^2 n)$ maximum degree. Hence, it seemingly opens the road for ultimately scalable distributed algorithms, where each node sends/receives only $\poly(\log n)$ bits. However, that hinges on whether $\sparse{G}$ can be colored fast \emph{distributively}. 
Unfortunately, the proof of \cite{ACK19} is intrinsically centralized (for reasons explained in \cref{sec:ack}). All applications of \cite{ACK19} use centralization to compute the $\Delta+1$-coloring. The research question at the core of our paper is to investigate the discrepancy between the locality of the $\Delta+1$-coloring problem and the locality of the induced list-coloring problem on the sparsified graph.

\begin{quote}
\begin{center}
\emph{Can the sparsified graph be colored \textbf{locally}?}
\end{center}
\end{quote}

\subsection{Our Results}
Our answer is two-fold. We design an algorithm for $\Delta+1$-coloring such that the color of a vertex $v$ depends only on its $O(\log^2\Delta)$-hop neighborhood in \sparse{G} (when $\Delta \ge \Omega(\log^4 n)$), and we concretely give efficient distributed algorithms with small messages to compute such a coloring. Conversely, we show that no algorithm can achieve a locality smaller than $\widetilde{\Omega}(\log\Delta)$. We next state the results in a more formal manner. 

We present a \congest algorithm to list-color the sparsified graph in $O(\log^2\Delta)$ rounds when $\Delta \ge \Omega(\log^4 n)$. When $\Delta \le O(\log^4 n)$, the input graph $G$ is sparse already and can be colored by the $O(\log^3\log n)$-round state-of-the-art \congest algorithm \cite{HKNT22,HNT22,GK21}.

\vspace{.5em}
\begin{mdframed}[backgroundcolor=mygray,roundcorner=10pt,leftmargin=20,rightmargin=20,outerlinewidth=2,innertopmargin=0]
  \begin{restatable}{theorem}{DistPaletteSparsificationThm}[Distributed Palette Sparsification Theorem]
  \label{thm:DPS}
   Suppose that each node in a graph $G$ samples $\Theta(\log^2 n)$ colors u.a.r.\ from $[\Delta+1]$. 
   There is a distributed message-passing algorithm operating on the sparsified graph,
   that computes a valid list-coloring in $O(\log^2 \Delta + \log^3\log n)$ rounds, using $O(\log n)$-bit messages.
 In particular, each node needs to communicate with only $O(\log^4 n)$ different neighbors.
 \end{restatable}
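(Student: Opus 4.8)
The plan is to follow the architecture of the Assadi--Chen--Khanna proof \cite{ACK19} but to realize each of its centralized ingredients by a low-bandwidth subroutine that communicates only along edges of $\sparse{G}$. First I dispose of small degrees: when $\Delta+1 \le \Theta(\log^2 n)$ every node samples the whole palette, so $\sparse{G}=G$ and the $O(\log^3\log n)$-round \CONGEST $(\Delta+1)$-coloring algorithm of \cite{HKNT22,HNT22,GK21} applies directly; a short recursion covers the range $\Theta(\log^2 n) \le \Delta \le O(\log^4 n)$, and this is where the additive $\log^3\log n$ term comes from. For the rest assume $\Delta \ge \Omega(\log^4 n)$. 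Then $\sparse{G}$ has maximum degree $O(\log^4 n)$ with high probability --- which already yields the claimed bound on the number of distinct neighbors each node must contact, since the whole algorithm runs in $O(\log^2 n)$ rounds --- and, crucially, inside each almost-clique $K$ the graph $\sparse{G}[K]$ behaves like a dense random graph on $\Theta(\Delta)$ vertices, hence is connected with diameter $O(\log \Delta)$ whp.

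Second, I compute an almost-clique decomposition $(\Vsparse, K_1, \dots, K_t)$ of $G$ while communicating only on $\sparse{G}$. The point is that the statistics needed to classify a node as sparse or dense and to assign it to an almost-clique --- counts of common neighbors, internal/external degrees --- are, after rescaling by the color-sampling probability, unbiased estimators computable from the corresponding $\sparse{G}$-quantities, and standard concentration recovers the decomposition up to the usual constant-factor slack. Third comes slack generation: one round of random color trials (each node tries a uniform color from its list, keeps it if no $\sparse{G}$-neighbor tried the same) gives every sparse node permanent slack $\Omega(\eps^2 \Delta)$ in the induced list-coloring instance by the classical non-edge argument transported through the sampling, and a \createslackfordense-style step --- putting aside a random $o(\Delta)$-sized subset of each almost-clique and finding a colorful matching among the put-aside nodes --- endows the remaining dense nodes with comparable slack. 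Both steps take $O(1)$ rounds on $\sparse{G}$.

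Fourth, I color the sparse nodes together with all put-aside nodes. Each such node has slack $\Omega(\Delta)$ but only $O(\log^4 n)$ uncolored $\sparse{G}$-neighbors, so repeated random color trials from the residual list shrink the uncolored $\sparse{G}$-degree geometrically; after $O(\log \Delta)$ trial rounds --- each $O(1)$ rounds on $\sparse{G}$ --- the still-uncolored nodes induce components of size $O(\log n)$, collected and finished by brute force in $O(\log n)$ rounds. Fifth, and this is the main obstacle, I color the almost-cliques. A node $v \in K$ has essentially no slack of its own and cannot be told the colors of the $\Theta(\Delta)$ other members of $K$ without violating the bandwidth bound, so instead each $K$ runs $O(\log \Delta)$ epochs of a \synchronizedcolortrial: in an epoch a leader draws an $O(\log n)$-bit seed, floods it over $\sparse{G}[K]$ in $O(\mathrm{diam}(\sparse{G}[K])) = O(\log \Delta)$ rounds, every node deterministically derives from it the same (almost $k$-wise independent) random permutation of the palette and attempts the color at its rank, while simultaneously learning over $\sparse{G}$ which of its attempted colors are already blocked by colored external neighbors. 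The \cite{ACK19}-type analysis --- using the density of $K$ and the slack from the third step --- shows the uncolored fraction of $K$ falls by a constant factor per epoch, so after $O(\log \Delta)$ epochs $K$ has $O(\log n)$ uncolored nodes, again finished by local brute force. Multiplying $O(\log \Delta)$ epochs by the $O(\log \Delta)$-round cost of each gives the $O(\log^2 \Delta)$ bound. I expect the real difficulty to lie exactly here: confining all coordination --- above all the agreement on shared randomness inside an almost-clique --- to $\sparse{G}$ with $O(\log n)$-bit messages while preserving enough independence for the concentration bounds, which is what forces the almost $k$-wise-independent permutations and the two-layer (sparse-first, then put-aside-and-colorful-matching) slack bookkeeping.
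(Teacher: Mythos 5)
Your first four steps track the paper reasonably well (small $\Delta$ handled by \congest directly, almost-clique decomposition computed through $\sparse{G}$ via concentration of shared-neighbor statistics, slack generation and randomized color trials for sparse nodes, colorful matching to cut the effective clique size), but your fifth step --- the almost-clique coloring, which you yourself identify as the crux --- does not work and is precisely the place where the paper has to introduce a genuinely new idea.

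The problem with \synchronizedcolortrial here: a node $v$ can only be colored with a color from its sampled list $L(v)$, because (a) that is what the sparsified-graph guarantee requires, and (b) $v$'s $\sparse{G}$-neighborhood consists exactly of neighbors with whom it shares a sampled color, so $v$ has no way to even \emph{verify} that a color $c \notin L(v)$ is free of conflicts: a $G$-neighbor holding $c$ would typically not be a $\sparse{G}$-neighbor of $v$ and could not be heard from. Yet the shared permutation you propose assigns each uncolored node of $K$ a color that is essentially uniform in $[\Delta+1]$; the probability that this color lands in $L(v)$ is $\Theta(\log^2 n / \Delta) = o(1)$ for $\Delta = \omega(\log^2 n)$. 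So the uncolored fraction of $K$ does not drop by a constant factor per epoch --- it drops by a negligible $\Theta(\log^2 n/\Delta)$ factor --- and you would need $\Theta(\Delta/\log^2 n)$ epochs, far outside the claimed $O(\log\Delta)$ bound. This is exactly the obstruction the paper discusses when explaining why the centralized ACK argument (look for a perfect matching between $C$ and colors) has no cheap local analogue. The paper's replacement is the \emph{augmenting tree} construction (\cref{sec:augmenting-path}): grow, inside $\sparse{C}$, trees rooted at uncolored nodes whose internal edges always connect a node to a neighbor whose current color the node could legally adopt --- this is automatic because tree edges are $\sparse{G}$-edges, hence share a sampled color. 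Then only the $\Theta(\Delta/k)$ \emph{leaves} need to find a fresh color from $\Psi_C$, and with that many independent leaf attempts the per-root success probability becomes $\Omega(1)$, giving the constant-factor decrease in $k$ per phase and the $O(\log^2\Delta)$ total. Your proposal has no analogue of this chain-of-recolorings idea, and without it the epoch-level progress claim fails.

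A secondary gap: you do not mention \emph{preconditioning} (the paper's \cref{thm:strong-ACD}), which reduces the external degree of every surviving dense node to $O(\Delta/\log n)$. This is not an optimization but a prerequisite for the concentration arguments in the almost-clique phase (\cref{lem:conditionning-outside}, \cref{lem:avail}): once nodes sample $\Theta(\log n)$ candidate colors per attempt, an unpreconditioned $\epsilon\Delta$ external degree would, with high probability, block \emph{all} of a node's available colors, and the ``density plus slack'' analysis you invoke from ACK no longer goes through. ACK avoid this by sequentializing over cliques and treating the outside as adversarial; you are coloring cliques in parallel, so you need the external-degree reduction that the paper's Step 1 provides.
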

  \end{mdframed}
\vspace{.5em}

\paragraph{Our Techniques in a Nutshell.} We shall give an overview of our algorithm in \cref{sec:tech-intro}. For now, we merely mention three aspects in which our algorithm differs significantly from both streaming and distributed algorithms.
\begin{enumerate}
\item Contrary to \cite{ACK19}, we cannot afford to color dense clusters sequentially. In particular, when we color a cluster, we cannot assume that colors on the outside are adversarial. We give an algorithm that functions as long as conflicting colors with the outside are only a small fraction of the color space.
To ensure that this property holds, we \emph{precondition} clusters. That is, we reduce the number of connections between clusters beforehand, so that, later in the algorithm, random decisions on the outside only harm a small enough fraction of nodes on the inside. This preconditioning might be useful in other applications of palette sparsification.
\item We introduce a new technique of \emph{augmenting trees} to distributively color dense clusters of the graph. It consists of $O(\log\Delta)$ steps for growing trees rooted at uncolored nodes such that if a leaf can recolor itself, we can color the root. We show that a constant fraction of the uncolored nodes are colored by this process, resulting in the $O(\log^2\Delta)$ runtime.
\item To reach the nearly-optimal $O(\log^2\Delta)$ runtime, we need this process to succeed with high probability \emph{even when $o(\log n)$ nodes remain uncolored}. We overcome this issue by locally amplifying probabilities and using that only few nodes remain to resolve contentious efficiently.
\end{enumerate}

\paragraph{Lower Bound.}
We give evidence that this round complexity is in the right ballpark, by showing that $\Omega(\log \Delta / \log\log n)$ rounds are needed to compute a valid coloring after sparsification of neighborhoods by uniformly random palette sparsification.

\begin{shadetheorem}
\label{thm:intro-lowerbound}
Any \local algorithm that operates on the sparsified graph and computes a $(\Delta+1)$-coloring with at least a constant probability of success needs $\Omega\parens*{\frac{\log\Delta}{\log\log n}}$ rounds.
This holds even if the original graph is a $(\Delta+1)$-clique, even if the distributed algorithm running on the sparsified graph uses unbounded messages, and even if each node samples a large $\poly\log n$ number of colors in the sparsification.
\end{shadetheorem}

\subsection{Corollaries for Other Models} 
\label{sec:intro-corollaries}

\paragraph{Distributed Streaming.}
The semi-streaming model -- where we skim through the (very large) set of edges of the graph and store only $\poly\log n$ bit of memory per node before solving the problem -- has been studied extensively \cite{AMS96, BJKST02, CCF02, CM04, AGM12, ACK19}. A frequent technique in this setting is \emph{(distributed) sketching} \cite{AGM12,KLMMS14,GMT15,ACK19}: nodes locally compress their neighborhoods to $\poly\log n$-bit sketches before combining all of them centrally. 
We find it helpful to think of our algorithm in the following similar setting: first, nodes look at their edges in one streaming pass, using only $\poly\log n$ memory; nodes then communicate in a distributed fashion using only edges they stored locally. We emphasize, however, that it is crucial for our applications that \emph{nodes communicate only with $\polylog n$ nodes per round}. Contrary to the semi-streaming model, it does not suffice to reduce the problem to $\widetilde{O}(n)$ edges: each neighborhood must contain at most $\poly\log n$ edges. See \cref{sec:local-stream} for a more precise definition.

\paragraph{Coloring Cluster Graphs.} A natural situation that arises frequently in distributed graph algorithms is that of \textit{cluster graphs}, as we explain next (this appears under various names, see e.g., \cite{rozhovn2022undirected, ghaffari2015flow,ghaffari2016distributed,ghaffari2013cut,ghaffari2022universalCut}). Suppose that in the course of some algorithm, the nodes have been partitioned into vertex-disjoint (low-diameter) clusters. The corresponding \textit{cluster graph} is an abstract graph with one node for each cluster, where two clusters are adjacent if they contain neighboring nodes.
Note that this corresponds to (graph-theoretically) \textit{contracting} each cluster, an operation that is easy centrally but has no meaningful distributed counterpart. In distributed settings with such cluster graphs, we often need to solve certain graph problems on this cluster graph to facilitate other computations. Distributed computation on the cluster graph assumes we have a low-depth cluster tree that spans each cluster and can be used for broadcast and convergecast in the cluster. One round of communication on the cluster graph involves: (1) broadcasting a $\poly(\log n)$-bit message from the cluster center to all its nodes; (2) passing information on the edges between neighboring clusters; (3) convergecasting any $\poly(\log n)$-bit aggregate function from the cluster nodes to the center. 

Prior to our work, it remained open whether one can compute a $\Delta+1$-coloring in $\poly\log n$ rounds of communication on the cluster graph. Here, $\Delta$ denotes the maximum number of clusters that are adjacent to a cluster. The more traditional approaches to $\Delta+1$-coloring (e.g., \cite{linial92, johansson99, BEPSv3}) fall short of this $\poly\log n$ round complexity goal as they usually need to learn the colors remaining available to one cluster, after some partial coloring of other clusters, and that may require gathering $\Delta$ bits at the cluster center. Our distributed palette sparsification theorem resolves this and gives the first efficient distributed $\Delta+1$-coloring on cluster graphs, as we state informally next. See \Cref{sec:clusterGraphs} for definitions and the actual result.

\paragraph{Coloring in the Node Capacitated Clique.}  Another immediate consequence of our work is the first $\poly\log n$-round Node Capacitated Clique algorithm for $\Delta+1$-coloring. This model was introduced by \cite{AGGHSKL19} to capture peer-to-peer systems in which nodes have access to global communication in the network while being restrained to $O(\log n)$-bit messages to $O(\log n)$ nodes within one communication round.
To identify the edges to use in the sparsified graph, we assume the nodes have access to shared randomness; alternatively, as we show in \cref{sec:NCC}, this can be replaced by an existential construction\footnote{We believe that our algorithm can be implemented using $\poly\log n$-wise independent random bits to sample each of the $\poly\log n$ colors in the lists. It would be at the cost of a higher $\poly\log n$ round complexity and possibly larger $\poly\log n$ number of colors in lists. As this is not a major contribution and requires a significant amount of extra technicalities, we reserve this for future work.}.

\subsection{Related Work and Problems}

The groundbreaking palette sparsification theorem of Assadi, Chen, and Khanna \cite{ACK19} showed that $\Delta+1$-coloring was possible in the semi-streaming model, even in dynamic streams.
The sparsification property was fundamental, as it allowed also for optimal algorithms in two, seemingly unrelated models: a sublinear-time algorithm in the query model, and a two-round algorithm in the Massively Parallel Computation model with $\tilde{O}(n)$ memory per machine.
The theorem was extended to several more constrained coloring problems in \cite{AA20}, such as $O(\Delta/\log \Delta)$-coloring triangle-free graphs, and to $\deg+1$-list coloring in \cite{HKNT22}.
It was also a crucial ingredient in the recent semi-streaming algorithm for $\Delta$-coloring \cite{AKM22}.

Palette sparsification is a form of a sampling technique that holds in the restrictive \emph{distributed sketching model}.
The latter corresponds to multi-party communication with shared blackboard model and vertex partitioned inputs, as well as to the \emph{broadcast congested clique} (though the congested clique term usually refers to the case that the message size is $O(\log n)$). 
Starting with the seminal work of \cite{AGM12}, many graph problems have been solved with distributed sketching.
Though, notably, the problems of maximal independent set and maximal matching---which are closely related to $\Delta+1$-coloring---have been shown to require much larger space \cite{AKZ22}, even if allowed multiple rounds of writing to the shared blackboard.

Coloring plays a central role in distributed algorithms as a natural approach to breaking symmetry and scheduling access to exclusive resources.
In particular, the original work of Linial \cite{linial92} introducing local algorithms and the \local model was specifically about the $\Delta+1$-coloring problem.
Since then, there has been a lot of work on local coloring algorithms,
both randomized \cite{johansson99,BEPSv3,SW10,HSS18,CLP20,HKMT21,HKNT22} and deterministic (e.g., \cite{barenboim15,BEG18,MT20,GK21,ghaffari2023fasterMIS}).

The Node-Capacitated Clique model was introduced in \cite{AGGHSKL19} to model distributed systems built on top of virtual overlay networks. They gave algorithms for the maximal independent set problem and $O(a)$-coloring, with time linear in $a$, where $a$ is the arboricity of the graph. The question of efficient $\Delta+1$-coloring has remained open.

Many models of distributed computing, both in theory and in nature, are much more restrictive than \local or \congest, both in terms of communication abilities and processing power: e.g., beeping model \cite{CK10}, wireless (ad-hoc, unstructured...), programmable matter \cite{Amoebot14}, networked finite-state machines \cite{EW13}. These often capture distributed features of the natural and physical world. A logical direction is therefore to identify models that strongly limit the power of the nodes, yet allow for fast distributed computation. Few features are as fundamental as limiting the amount of space available.

A recent work by \cite{FGHKN23} uses some (of the earlier) subroutines from our work to $\Delta+1$-color graphs in the broadcast congest model of distributed computing. In that model, per round, each node must send the same $O(\log n)$-bit message to all of its neighbors. They adapt \cref{alg:ACD,alg:matching-compact} to compute an almost-clique decomposition and a colorful matching in $O(1)$ rounds. However, we emphasize that the core technical challenges and contributions in the two works are different.

\section{Technical Introduction}
\label{sec:tech-intro}

In this section, we outline the techniques we use to prove \cref{thm:DPS,thm:intro-lowerbound}.
Our algorithm builds on existing literature, both streaming \cite{ACK19,AA20,AKM22} and distributed \cite{SW10,EPS15}. It differs nonetheless from both in key aspects. On the one hand, streaming algorithms \cite{ACK19,AA20} require a global view of the sparsified graph -- which we avoid by computing the coloring in a distributed fashion. On the other hand, existing distributed algorithms \cite{BEPSv3,HSS18,CLP20,HKNT22} require that nodes communicate with all of their neighbors -- which we avoid since nodes communicate only on the sparsified graph.

In \Cref{sec:ack}, we review the palette sparsification theorem of \cite{ACK19} and explain why it does not extend to our setting. Then, in \Cref{subsec:ourApproach}, we outline the technical novelties in our algorithm. Finally, in \Cref{sec:loweroverview}, we describe an overview of our lower bound result.

\subsection{Comparison with Palette Sparsification}
\label{sec:ack}

The proof of \cite{ACK19} relies on a variant of the \emph{sparse-dense decomposition} introduced by \cite{Reed98}. Variants of this decomposition were used in earlier distributed coloring algorithms~\cite{hsinhao_coloring, CLP18}. This decomposition partitions the graph into a set of ``\emph{locally sparse}'' vertices and a collection of ``\emph{almost-cliques}''. For intuition, consider a (somewhat degenerate) example of a sparse node: a vertex of degree $\Delta/2$. If we try to color this vertex with a color from $[\Delta+1]$ uniformly at random, it has probability $1/2$ to succeed (no matter what colors its neighbors choose). Repeating this process iteratively $O(\log n)$ times is enough to color all such vertices with high probability. It was observed by \cite{EPS15} that this reasoning extends to sparse vertices (with a more involved analysis). It is worth noting that such randomized color trials are easily implemented in \LOCAL (and \congest) and form a core component of fast randomized distributed coloring algorithms \cite{johansson99,BEPSv3,CLP20}. Indeed, the part of the algorithm of \cite{ACK19} for coloring sparse vertices is also already distributed. \cref{thm:DPS} improves the $O(\log n)$-round algorithm of \cite{ACK19} to $O(\log\Delta)$ by using the faster algorithm of \cite{SW10} to color sparse nodes, but we do not have any significant technical novelty in that part.

In \cite{ACK19}, most of the effort (and novelty) goes into the handling of almost-cliques. They iterate over almost-cliques \emph{sequentially} and color each one assuming the coloring on the outside is \emph{adversarial}. Clearly, in our setting, we cannot afford to process almost-cliques one by one. Furthermore, to achieve the $O(\log^2\Delta)$ runtime claimed by \cref{thm:DPS}, for reasons expounded in \cref{sec:intro-augtree}, we cannot assume the outside colors to be adversarial. When we color each almost-clique, we must carefully resolve contentions with the outside, including other almost-cliques that are getting colored in parallel.

To color a fixed almost-clique $C$, \cite{ACK19} looks for a perfect matching in the bipartite graph with vertices of $C$ on the one side, colors in $[\Delta+1]$ on the other and an edge between $v\in C$ and $c\in[\Delta+1]$ if $c$ is in $L(v)$ and not used by an outside neighbor. If $|C| \le \Delta+1$, classic results from random graph theory (e.g., \cite[Section VII.3]{bollobas1998random}) show that, with high probability, a perfect matching exists, and therefore a list-coloring is possible. While the mere existence of the matching is enough for \cite{ACK19}, \cref{thm:DPS} provides a \emph{distributed algorithm} to compute it. Note that learning the full topology of $C$ in $O(\log\Delta)$ rounds of \local to then decide on a matching does not work because it does not account for conflicts with concurrent almost-cliques. Furthermore, our algorithm uses $O(\log n)$-bit messages, which prohibit centralization approaches. Our main technical contribution is the design of an $O(\log^2\Delta)$-round algorithm using $O(\log n)$-bit messages to compute this matching in almost-cliques in parallel (see \cref{sec:augmenting-path}), while also managing outside conflicts. 

In large almost-cliques $|C|\ge \Delta+1$, such a perfect matching cannot exist. To deal with those, \cite{ACK19} shows the existence of a \emph{colorful matching}. Namely, they color pairs of anti-neighbors in $C$ (pair of nodes that are not connected by an edge) using the same color so that the number of uncolored nodes decreases twice as fast as the number of free colors. We give a fast distributed version of the sequential algorithm of \cite{ACK19} (see \cref{sec:intro-colorful-matching}). While this is not the main technical contribution of our work, we think this procedure itself might find applications in future distributed coloring algorithms.

Another noteworthy challenge for us is regarding \emph{probability amplification}. Contrary to the centralized setting, we cannot afford algorithms with a constant probability of success. Indeed, amplifying success probability with $O(\log n)$ independent repetitions would exceed our $O(\log^2\Delta)$ runtime. We deal with this issue by increasing the number of colors sampled, compared to \cite{ACK19}, such that we always have concentration on large enough quantities, i.e. at least $\Omega(\log n)$. Naturally, trying more colors creates more conflicts, which must be resolved.

\subsection{Our Approach and New Ideas}
\label{subsec:ourApproach}

\subsubsection{Step 1: Preconditioning of Almost-Cliques}
\label{sec:intro-preconditionning}

Like \cite{ACK19}, our algorithm heavily relies on an $\epsilon$-almost-clique decomposition. More formally, it decomposes the graph into a set $\Vsparse$ of locally sparse nodes and a collection $C_1, \ldots, C_t$ of almost-cliques. An $\epsilon$-almost-clique is a cluster $|C|\le (1+\epsilon)\Delta$ such that each node $v\in C$ has $|N(v)\cap C|\ge (1-\epsilon)\Delta$ neighbors in $C$ (see \cref{def:ACD}). The preconditioning step strengthens the properties of our almost-cliques. More precisely, it computes a partial coloring such that all uncolored nodes are clustered in almost-cliques and have $O(\Delta/\log n)$ connections to uncolored nodes in other almost-cliques, compared to the usual $\epsilon\Delta$ (see \cref{thm:strong-ACD} for a formal definition). This property is key to ensure, later in our algorithm, that random decisions outside of a cluster cannot seriously impede its progress on the inside (see \cref{lem:conditionning-outside,lem:avail}). We now give further details on that aspect. Readers that are not familiar with palette sparsification results may skip the remainder of this subsubsection on the first reading.

The key property of cliques that our algorithm uses is that when $k$ nodes are uncolored, there are $k$ ``available colors'' that are used by no one in the clique. The colorful matching (\cref{sec:intro-colorful-matching}) allows us to extend this to almost-cliques. However, we still need to ensure that if a node (re)colors itself with one of these $k$ available colors, it will not create a conflict with an external neighbor. 
For the sake of concreteness, assume only one node is left to color in almost-clique $C$, i.e., $k=1$. In our algorithm, a constant fraction of $C$ samples $K\log n$ colors for some large enough $K > 0$. This way, the probability that at least one node in this almost-clique finds that one available color is at least $1-(1-1/\Delta)^{K\Delta\log n}\ge 1-1/\poly(n)$.
Having nodes sample more colors has a drawback: it increases the competition for colors.
If a node $v\in C$ has $\epsilon\Delta$ external neighbors in active almost-cliques, the probability that at least one of them blocks the one color that $v$ is looking for is $1-(1-1/\Delta)^{\epsilon K\Delta\log n}\ge 1-1/\poly(n)$ (as $\epsilon K \in\Theta(1)$).
During preconditioning, we ensure that nodes in active almost-cliques have at most $\Delta/(K\log n)$ external neighbors in other active almost-cliques. The probability that an external neighbor blocks the one color that $v$ is looking for becomes $1-(1-1/\Delta)^{\frac{\Delta K\log n}{K\log n}} \approx 1-1/e$. Therefore, only a small fraction of $C$ is affected by the randomness outside of $C$. This argument is made formal in \cref{lem:conditionning-outside}.

To precondition almost-cliques, we use an idea from distributed coloring algorithms \cite{SW10,EPS15,HKMT21}. Namely, nodes that have $\Omega(\Delta/\log n)$ connections to nodes in other almost-cliques are $\Omega(\Delta/\log n)$ sparse. By coloring nodes in a carefully chosen order, we get \cref{thm:strong-ACD}.
As it only uses well-known techniques from distributed coloring, we defer the analysis to \cref{sec:strong-ACD}. While the preconditioning algorithm in itself is not a major contribution of our work, we believe it could find further use in the (distributed) coloring literature.

\subsubsection{Step 2: Distributed Colorful Matching}
\label{sec:intro-colorful-matching}

A colorful matching is defined as a matching in the complement graph of the almost-clique such that endpoints of a matched edge are colored the same (\cref{def:colorful-matching}). 
Intuitively, this reduces the size of the clique: if one merges matched nodes, one reduces the number of nodes in the almost-clique while maintaining a proper coloring of the original graph. In an almost-clique of $(1+\epsilon)\Delta$ nodes, finding $\epsilon\Delta$ such pairs essentially reduces the coloring problem to that of coloring a clique.
This technique was introduced by \cite{ACK19} in the first palette sparsification theorem to deal with that exact issue and we claim no novelty in its use. In \cite{ACK19}, however, only the existence of a large enough matching is proven, whereas we also need to compute it efficiently in a distributed setting.

We define $\avganti_C$ as the average anti-degree in $C$: such that $\avganti_C|C|/2$ is the number of anti-edges. When nodes try a random color in $[\Delta+1]$, an anti-edge is monochromatic with probability $1/(\Delta+1)$. Therefore, the expected number of monochromatic edges is $\delta\avganti_C$ for some small constant $\delta > 0$, because a node can retain its color with constant probability. Using tools similar to \cite{EPS15}, one can show this random variable is concentrated near its mean with probability $1-e^{-\Omega(\avganti_C)}$. It implies that in cliques with $\avganti_C \ge \Omega(\log n)$, for any constant $K=O(1/\epsilon)$, we can accumulate $K\avganti_C$ anti-edges in the colorful matching in $O(K/\delta)$ rounds (\cref{lem:colorful-matching-porous}).

We use a different approach when $\avganti_C \le O(\log n)$. Note that when $\avganti_C$ is smaller than some constant, nodes have hardly any anti-edges. 
Hence we can also assume $\avganti_C \ge \Omega(1)$; meaning the previous algorithm succeeds with constant probability.
Instead of trying a single color, nodes try $\Theta(\log n)$ colors at the same time.
Clearly, a large enough colorful matching exists: using the sampled colors, the previous process can be implemented in $O(\log n)$ rounds.
To find that matching efficiently (in $O(\log \Delta)$ rounds), we capitalize on the fact that there are few non-edges in the clique (\cref{lem:colorful-matching-porous}). 
Since the probability of a non-edge having both endpoints sample a common color is $\Theta(\log^2 n/\Delta)$, only $O(\avganti_C\log^2 n)=O(\log^3 n)$ potential monochromatic non-edges are sampled. By taking advantage of the high expansion property of the sparsified clique, we can disseminate the list of $O(\log^3 n)$ sampled monochromatic edges in $O(\log\Delta)$ rounds to all nodes in the clique, which can then compute the colorful matching locally. Because $\Omega(\Delta)$ colors are available in the clique, the concurrent coloring of external neighbors can block at most a small fraction of the colors (\cref{lem:avail}).

\subsubsection{Step 3: Augmenting Trees}
\label{sec:intro-augtree}

To color almost-cliques, we take advantage of the fast expansion of the sparsified almost-clique to find many \emph{augmenting paths}. Our definition of augmenting path corresponds precisely to the one for computing maximal matching in the random bipartite graphs induced by the random lists of colors~\cite{hopcroft1973n, motwani1994average}.
We emphasize, however, that general-purpose algorithms for maximal matching do not directly apply in our setting because of conflicts between concurrent almost-cliques.
Furthermore, computing an \emph{exact} maximum matching is a global problem, and in fact requires at least $\Omega(\sqrt{n})$ rounds of \congest in general, even in low-diameter graphs~\cite{ahmadi2018distributed}. 

We now explain how we color all almost-cliques in $O(\log^2\Delta)$ rounds. 
The algorithm runs $O(\log\Delta)$ iterations of the following process. Suppose $k$ is number of uncolored nodes in $C$ at the current iteration.
We say that color $c$ is \emph{available} to a node $v$ if $c$ is neither used in $C$ nor by external neighbors of $v$ that were colored during the preconditioning step (these nodes will never change colors).  
In each iteration, we grow a forest of \emph{augmenting paths} (\cref{def:augmenting-path}). An augmenting path is a path $u_0,u_1,\ldots u_i$ in the \emph{sparsified} almost-clique such that 1) $u_0$ is the only uncolored node, 2) each $u_{j-1}$ for $j\in[i]$ can (re)color itself with the color of $u_{j}$ and 3) the last node $u_i$ of the path knows an available color $c$.
Provided with such a path, we can recolor $u_i$ with $c$ and each $u_{j-1}$ with the color of $u_j$, thereby coloring the uncolored endpoint $u_0$. 
Our algorithm builds on the following idea: if we have a path $u_0, \ldots, u_i$ verifying 1) and 2) but not 3), then $u_i$ samples a uniform color $c\in[\Delta+1]$ and finds an available one with probability $\Omega(k/\Delta)$. 

The two prior steps of our algorithm are key to ensure $u_i$ has probability $\Omega(k/\Delta)$ to find an available color \emph{that it can adopt}.
The colorful matching ensures (almost) all nodes of $C$ will have $k$ colors available (\cref{lem:clique-palette}). On the other hand, the argument sketched in \cref{sec:intro-preconditionning} shows that because of the preconditioning step, with high probability over the randomness outside of $C$, at least $\Omega(\Delta)$ nodes in $C$ can adopt $k/2$ of the colors available to them. We say of nodes that cannot adopt $k/2$ available colors that they are spoiled (\cref{def:spoiled-nodes}). Since they represent a small fraction of $C$ and the path explores the almost-clique randomly, we are unlikely to fail due to spoiled nodes (\cref{lem:unspoiled-leaves}).

This simple algorithm colors $u_0$ with probability $\Omega(k/\Delta)$. 
To color each uncolored node with constant probability, even when $k \ll \Delta$, we grow $\Delta/(\alpha k)$ paths verifying 1) and 2) from each uncolored nodes for some large enough constant $\alpha > 1$. The expected number of paths to find an available color is $\Delta/(\alpha k)\cdot \Omega(k/\Delta) = \Omega(1/\alpha)$. Therefore, we color $\Omega(k/\alpha)$ nodes in expectation. Since we must avoid collisions between paths from different uncolored nodes, we find it helpful to further restrict paths to grow trees. See \cref{fig:high-level-aug-path} for a high-level description of one iteration.

\begin{figure}[ht!]
    \centering
\begin{mdframed}
\begin{center}
    \underline{\alg{Augmenting Path Algorithm}}
\end{center}
\vspace{1em}

Let $k$ be the number of uncolored nodes.

\paragraph{Growing the forest.} The uncolored nodes are the roots of the forest. 

Repeat $O(\log\frac{\Delta}{\alpha k})$ times:
\begin{enumerate}
    \item Each leaf samples a set $S_u$ of $O(\log n)$ colors.
    \item Remove from $S_u$ the colors used by external neighbors, nodes in the forest, or sampled by other leaves. 
    \item For each $c\in S_u$ find $v_{c,u}\in C$ colored $c$ and connect them to $v$ in the forest.
\end{enumerate}

\paragraph{Harvesting the trees.} 
\begin{enumerate}
    \item If $k \ge \Omega(\log n)$, each leaf tries one random color in $[\Delta+1]$. If a leaf can retain its color, we recolor the path connecting it to the root, root included.
    \item If $k \le O(\log n)$, each leaf tries $O(\log n)$ random colors in $[\Delta+1]$. The roots learn $\Theta(k)$ colors with which leaves in their trees can recolor themselves. They disseminate this list in $O(\log\Delta)$ rounds to all nodes of $C$. Nodes then compute a color-leaf matching and recolor the corresponding paths.
\end{enumerate}
\end{mdframed}
\caption{High level description of one iteration.}
    \label{fig:high-level-aug-path}
\end{figure}

An iteration has two phases: the \emph{growing phase} (\cref{alg:growtree}), where we grow the trees, and the \emph{harvesting phase} (\cref{alg:harvest-high,alg:harvest-low}), where we try to recolor augmenting paths. The growing phase needs $O(\log\frac{\Delta}{\alpha k})$ rounds because each time we increase the number of paths by a constant factor. 
The harvesting phase differs depending on $k$. That is because when $k\le O(\log n)$, we cannot show progress \emph{with high probability} with a simple concentration on $k$.
See \cref{sec:augmenting-path-overview} for a more detailed description.

\subsection{Lower Bound}
\label{sec:loweroverview}

We complement our upper bound with a lower bound on the distributed complexity of coloring a graph after random palette sparsification. The lower bound applies even if the graph prior to the palette sparsification was simply a complete graph. Concretely, the lower bound states the following. Assume that we have a complete graph $K_n$ on $n$ nodes $V=\set{1,\dots,n}$ that we want to color with $n=\Delta+1$ colors. Every node $v\in V$ samples a random subset $S_v$ of colors $C=\set{1,\dots,n}$ as follows. For each node $v\in V$ and each color $x\in C$, $x$ is included in $S_v$ independently with probability $p=f(n)/n$, where $f(n)\geq c\ln n$ for a sufficiently large constant $c$ and $f(n)\leq \polylog n$. Recall that the sparsified graph is the graph induced by all edges of $K_n$ between nodes $u,v\in V$ with $S_u\cap S_v\neq \emptyset$. We prove that any distributed message passing algorithm on the sparsified graph requires $\Omega(\log n / \log\log n)$ rounds to properly color the $K_n$ in such a way that each node $v$ is colored with a color from its sample $S_v$. This holds even if the message sizes are not restricted.

\paragraph{Relation to perfect matching on random bipartite graphs.} Note that this is equivalent to the following bipartite matching problem. Define a bipartite graph $B=(V\cup C, E_B)$, where $C=\set{1,\dots,n}$ represents the set of colors. There is an edge between nodes $v\in V$ and $x\in C$ whenever $x\in S_v$. A \emph{valid coloring} of the nodes in $V$ then corresponds to a \emph{perfect matching} in the bipartite graph $B$. Note that if $p\geq c\ln n / n$ for a sufficiently large constant $c>0$, then the bipartite graph $B$ has a perfect matching with high probability. This (in even sharper versions) is well known in the random graph literature~(e.g., \cite[Section VII.3]{bollobas1998random}) and can be proven by checking Hall's condition for any non-empty subset of $V$. Our lower bound essentially shows that distributedly computing a perfect matching in the random graph $B$ requires $\Omega(\log n / \log\log n)$ rounds with at least constant probability, even in the \LOCAL model (i.e., even if the nodes in $B$ can exchange arbitrarily large messages). Note that the sparsified subgraph of $K_n$ and the bipartite graph $B$ can simulate each other with only constant overhead in the distributed setting. Any $T$-round algorithm on the sparsified graph can be run in $O(T)$ rounds on $B$ and any $T$-round algorithm on $B$ can be run in $O(T)$ rounds in the sparsified subgraph of $K_n$ (in the second case, each color node $x\in C$ can be simulated by one of the nodes $v\in V$ for which $x\in S_v$).

\paragraph{Lower bound on computing a perfect matching in random bipartite graphs.} In general, it is not too surprising that computing a perfect matching of a graph is a global problem where nodes at different ends of the graph need to coordinate. Consider for example the problem of computing a perfect matching of a $2n$-node cycle. There are exactly two such perfect matchings and deciding which of the two matchings to choose cannot be decided without global coordination within the cycle. However, the case of a random bipartite graph needs much more care.

Our lower bound is based on the following observation regarding perfect matchings in bipartite graphs. Let $v_0$ be some node of a bipartite graph $H$ and for each $d\geq 0$, let $V_d$ be the set of nodes of $H$ that are at hop distance exactly $d$ from $v_0$. Since $H$ is bipartite, a node in a set $V_d$ can only be connected to nodes in sets $V_{d-1}$ and $V_{d+1}$. Clearly, $|V_0|=1$ and, because $v_0$ must be matched, there must be exactly one matching edge between nodes in $V_0$ and nodes in $V_1$. Further, since every other node of $V_1$ must be matched to nodes in $V_2$, there must be exactly $|V_1|-1=|V_1|-|V_0|$ matching edges between nodes in $V_1$ and nodes in $V_2$. With a similar argument, the number of matching edges between nodes in $V_2$ and nodes in $V_3$ is exactly $|V_2|-|V_1|+|V_0|$. By extending this argument, one can see that for every $d$, the number of matching edges between $V_d$ and $V_{d+1}$ depends on the sizes of all the sets $V_0,\dots,V_d$. Changing the size of a single one of those sets also changes the number of matching edges between $V_d$ and $V_{d+1}$.

For the lower bound proof, we now proceed as follows. Assume that there is a $T$-round distributed algorithm that computes a perfect matching of the random bipartite graph $B$. We consider some node $v_0$ in the random bipartite graph $B$ and two integers $\ell$ and $h$ such that $\ell>0$ and $h-\ell>T$. We consider the decisions of the assumed distributed perfect matching algorithm for nodes in $V_h$. Note that in $T$ rounds, nodes in $V_h$ do not see nodes at distance more than $T$, and in particular, they do not see nodes in $V_\ell$. However, by the above observation, the number of matching edges between nodes in $V_h$ and nodes in $V_{h+1}$ depends on the knowledge of $|V_\ell|$. If $T$ is sufficiently small and a large fraction of the graph is outside the $T$-hop neighborhoods of nodes in $V_h$, then even collectively, the nodes in $V_h$ have significant uncertainty about the value of $|V_\ell|$. Therefore, they cannot determine the number of matching edges between $V_h$ and $V_{h+1}$ (and thus their matching edges) with reasonable probability. The actual proof that formalizes this intuition is somewhat tedious. The details appear in \Cref{sec:lower}.

\section{Preliminaries}

\paragraph{Notation.} For any integer $k \ge 1$, we write $[k]$ for the set $\set{1, 2, \ldots, k}$. For a tuple $X=(X_1, X_2, \ldots, X_k)$ and some $i\in [k]$, we define $X_{\le i}=(X_1, \ldots, X_i)$. For a graph $G=(V,E)$ and any set $S\subset V$, we denote by $N_G(S)=\set{v\in V\setminus S: \exists u\in S\text{ and } uv\in E}$ the neighborhood of $S$. Moreover, for any sets $S_1, S_2\subseteq V$, let $E_G(S_1, S_2)=E\cap (S_1\times S_2)$ be the set of edges between nodes of $S_1$ and $S_2$. 

A \emph{partial} $(\Delta+1)$-coloring is a function from the nodes $V$ to $[\Delta+1]\cup\set{\bot}$ that assigns colors in $[\Delta+1]$ or no colors (in the form of the null color $\bot$) to vertices, such that adjacent vertices have different colors in $[\Delta+1]$ (but they may both have color $\bot$). For some partial $(\Delta+1)$-coloring of the graph and any set $S\subseteq V$, we denote by $\widehat{S}$ the uncolored nodes of $S$ and by $\widecheck{S}$ the colored ones.

When we say that an event happens ``with high probability", we mean it occurs with probability $1-1/\poly(n)$ for a suitably large polynomial in $n$ to union bound over polynomially many such events.

\subsection{Distributed Coloring}

A standard technique in distributed coloring used by randomized algorithms is to have each node repeatedly try a color picked uniformly at random in its palette: the set of colors not already used by its neighbors. It was introduced by \cite{johansson99} and is used in all efficient distributed algorithms \cite{BEPSv3, CLP20, HKNT22}. We introduce this notion here for further use.

\begin{definition}[Palette]
\label{def:palette}
The \emph{palette} $\Psi_v$ of a node $v$ with respect to some coloring of the nodes is the set of colors that are not used by its neighbors.
\end{definition}

\begin{definition}[Slack]
The slack of $v$ is the difference between the size of its palette and its uncolored degree.
\end{definition}

If nodes have slack proportional to their degree, they can be colored in $O(\log^* n)$ rounds of \local by trying random colors. 
The following result has origins in \cite{SW10} and was generalized by \cite{CLP20}. 
It is straightforward to see that the proof of \cite{HKNT22} extends to our setting.

\begin{lemma}[{Lemma 1 in \cite{HKNT22}}]
\label{lem:multi-trial}
Consider the $(\deg+1)$-list coloring problem where each node $v$ has slack $s(v)=\Omega(d(v))$.
Let $1<\smin \leq \min_v s(v)$ be globally known.
For every $\kappa\in (1/\smin,1]$, there is a randomized \LOCAL algorithm {\slackcolor[$(\smin)$]} that in $O(\log^* \smin+1/\kappa)$ rounds properly colors each node $v$ w.p.\ $1 - \exp(-\Omega(\smin^{1/(1+\kappa)})) - \Delta e^{-\Omega(\smin)}$, even conditioned on arbitrary random choices of nodes at distance $\geq 2$ from~$v$.
Using $O(\log n)$-bit messages, it requires $O(\log\Delta)$ rounds of communication, 
and requires nodes to sample up to $\Theta\parens{\frac{s_v\log n}{\Delta}}$ colors in $[\Delta+1]$.
\end{lemma}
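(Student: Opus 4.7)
The plan is to inherit the main argument of \cite{HKNT22} wholesale and then verify that each ingredient continues to work in the \congest setting with the stated budget on sampled colors. The algorithm \slackcolor itself is the standard multi-trial procedure: in each stage every still-uncolored node $v$ independently samples a set $T_v$ of $\Theta\parens*{\frac{s(v)\log n}{\Delta}}$ colors from $[\Delta+1]$, keeps the subset lying in its current palette $\Psi_v$, assigns uniformly random priorities, and commits to the highest-priority surviving attempt that is not claimed by a higher-priority neighbor. Because $|\Psi_v| \geq s(v) = \Omega(d(v))$, every sampled color belongs to $\Psi_v$ with constant probability, and its priority beats the priorities of the at most $d(v)$ neighboring attempts on the same color with probability $\Omega(1)$ as well.

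First I would reproduce the amplification computation: after $O(1/\kappa)$ stages, the probability that a fixed $v$ remains uncolored drops to $\exp(-\Omega(s(v)^{1/(1+\kappa)}))$. The argument is recursive and isolates the effect of $v$'s one-hop neighborhood, using Talagrand/Chernoff on the sampling randomness (which is the only randomness that matters inside the ball of radius $1$ around $v$); the conditioning on the random choices of nodes at distance $\geq 2$ therefore leaves the calculation intact, exactly as in \cite{HKNT22}. This step is the hard kernel of the proof, but since we are using the identical sampling distribution and the same palette assumption $s(v) = \Omega(d(v))$, the inequality goes through verbatim.

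Next I would handle the clean-up stage. With the failure bound above, the uncolored subgraph at the end of the multi-trial stages has degree at most $\poly\log \smin$ with probability $1 - \exp(-\Omega(\smin^{1/(1+\kappa)}))$, and the residual list-coloring instance is then solved in $O(\log^* \smin)$ rounds by a Linial-style recursion, cf.\ \cite{SW10}. The extra $\Delta e^{-\Omega(\smin)}$ term in the failure bound absorbs the (union-bounded) event that some single neighbor of $v$ refuses to finish in this residual step and blocks $v$.

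The main obstacle, and the only reason one must \emph{verify} rather than simply invoke, is fitting the procedure into $O(\log n)$-bit messages and $O(\log\Delta)$ rounds of communication, while still sampling up to $\Theta(s(v)\log n/\Delta)$ colors. A naive broadcast of $T_v$ would cost $\Theta(s(v)\log^2 n/\Delta)$ bits. Instead I would pipeline: over $O(\log\Delta)$ communication rounds, $v$ announces its candidate colors one at a time in decreasing priority order (each color is $O(\log\Delta)$ bits), and neighbors respond with a single $O(\log n)$-bit acknowledgment indicating the first unblocked candidate; because the number of \emph{effective} attempts a node makes before committing is $O(\log\Delta)$ in expectation and $O(\log n)$ with high probability by a standard geometric-tail argument, this schedule suffices. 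The analysis is unaffected because the joint distribution of committed colors is identical to the parallel implementation; only the communication schedule is different.
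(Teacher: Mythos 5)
The paper offers no proof of this lemma: it is imported verbatim as Lemma~1 of \cite{HKNT22}, and the only accompanying text is the remark that ``it is straightforward to see that the proof of \cite{HKNT22} extends to our setting.'' Your plan --- inherit the HKNT22 argument wholesale and verify only the uniform-sampling and bandwidth adaptations --- is therefore exactly the route the paper takes, and your pipelined schedule for announcing candidate colors is a plausible way to substantiate the $O(\log\Delta)$-round, $O(\log n)$-bit claim that the paper asserts without any detail.

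One caution about the kernel you are deferring: your description of where the $O(\log^*\smin)$ term comes from is not the actual mechanism. The algorithm of \cite{SW10,HKNT22} is not ``a fixed number of samples per stage with priorities'' followed by a Linial-style recursion on a low-degree residual graph. Rather, the number of colors a surviving node tries per round grows tower-like: a node that tried $x_i$ colors in round $i$ fails with probability $\exp(-\Omega(x_i))$, which licenses trying $x_{i+1}$ exponentially larger in the next round, and it is this growth that yields $\log^*\smin$ rounds before the final $O(1/\kappa)$ amplification stages; the $\Delta e^{-\Omega(\smin)}$ term accounts for neighbors failing to retain their slack rather than for a residual clean-up. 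Since you explicitly lean on the citation for this computation, the misdescription is presentational rather than a fatal gap, but a written-out proof would need the geometric trial schedule (with the matching per-round uniform-sampling budgets for hitting $\Psi_v$ by sampling from $[\Delta+1]$) instead of the scheme you describe.
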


\subsection{Sparse-Dense Decomposition}
We use a decomposition of the graph into \emph{locally sparse} nodes, which have many non-edges from in their neighborhood, and dense clusters called \emph{almost-cliques} (also informally called \emph{cliques}). Almost-clique decomposition was first introduced in graph theory by \cite{Reed98} and has been used extensively in streaming \cite{ACK19,AW22} and distributed \cite{HSS18,CLP20,HKNT22} algorithms.

\begin{definition}[Sparsity]
\label{def:sparsity}
The \emph{sparsity} of a node $v$ is the value $\zeta_v=\frac{1}{\Delta}\parens*{\binom{\Delta}{2} - |E(N(v))|}$. We say a node is \emph{$\zeta$-sparse} if $\zeta_v \ge \zeta$, otherwise it is \emph{$\zeta$-dense}.
\end{definition}

\begin{definition}[Almost-Clique Decomposition]
\label{def:ACD}
For $\epsilon\in (0,1/3)$, a $\epsilon$-almost-clique decomposition is a partitioning of the vertices into sets $\Vsparse, C_1, \ldots, C_k$ for some $k$ such that:
\begin{enumerate}
    \item All $v\in\Vsparse$ are $\Omega(\epsilon^2\Delta)$-sparse.
    \item For any $i\in[k]$, almost-clique $C_i$ has the following properties:
    \begin{enumerate}
    \item $|C_i|\le (1+\epsilon)\Delta$;
    \item $|N(v)\cap C|\ge (1-\epsilon)\Delta$ for all nodes $v\in C_i$.
    \end{enumerate}
\end{enumerate}

For a dense node $v$ in some almost-clique $C$, we call its \emph{external degree} $e_v$ the number of neighbors $v$ has  outside of its almost-clique, i.e.\ $e_v=|N(v)\setminus C|$, and its \emph{anti-degree} $a_v$ the number of non-neighbors in $C$, i.e. $a_v=|C\setminus N(v)|$. We denote by $\avganti_C=\sum_{v\in C} a_v/|C|$ the average anti-degree of $C$.
\end{definition}

\section{Palette Sampling and The Sparsified Graph}
\label{sec:streaming}

\paragraph{Parameters.} We assume that $\Delta \ge \Omega(\log^4 n)$ as otherwise nodes can store all their adjacent edges and simply run the \congest algorithm of \cite{HKMT21}. We define the following parameters\footnote{which we have not attempted to optimize.} for our algorithm:
\begin{equation}
\label{eq:parameters}
\begin{array}{llll}
  \alpha &\eqdef 500:\quad&\text{quantify the number of leaves that an augmenting tree must have,}\\
  \beta &\eqdef \floor{C\log n}:\quad &\text{used to bound the number of sampled colors,}\\
  \epsilon &\eqdef 10^{-8}:\quad &\text{a small enough constant.}\\
\end{array}
\end{equation}

The constant $C$ in $\beta$ is sufficiently large for high probability events to hold, even when we union bound over polynomially many events. It is independent of the constants $\alpha$ and $\epsilon$. We use the following relation between our parameters:
\begin{equation}
\label{eq:eps-alpha}
\epsilon \le 1/\alpha^2\qquad\text{ and }\qquad 2\alpha < 1/(18\epsilon) \ .
\end{equation}

\subsection{Palette Sampling}

Similar to \cite{ACK19}, we see the lists of random colors $L(v)$ in our algorithm as a source of \emph{fresh random colors}. Whenever a node samples a color in $[\Delta+1]$, it reveals a new color from its list. To simplify the analysis, we partition $L(v)$ into sub-lists, each used for a different purpose in the algorithm. The main difference with \cite{ACK19} is that lists are larger: $O(\log^2 n)$ colors instead of $O(\log n)$.

\begin{Algorithm}
\label{alg:palette-sampling}
Palette Sampling for a node $v$.

\begin{itemize}
\item $L_1(v)$: sample $O(\log^2 n)$ colors independently and uniformly at random in $[\Delta+1]$. 
\item $L_2(v)= \set{L_{2,i}(v), \text{for $i\in [O(1/\epsilon)]$}} \cup \set{L_2^*(v)}$: for each $i\in[O(1/\epsilon)]$, sample each color in $L_{2,i}(v)$ independently with probability $\frac{1}{4\Delta}$. Sample each color independently in $L_2^*(v)$ with probability $\gamma\cdot\frac{\beta}{\Delta}$ for some constant $\gamma=\Theta(1/\epsilon^2)$ defined in \cref{lem:ack-colorful-matching}.
\item $L_3(v)=\set{L_{3,i}(v)=\Lg_{3,i}(v)\cup \Lh_{3,i}(v), \text{ for $i\in[\beta]$}}$ where we sample each color $c\in[\Delta+1]$ in $\Lg_{3,i}$ and $\Lh_{3,i}$ independently with probability $\frac{20\beta^2}{\Delta}$.
\end{itemize}
\end{Algorithm}

By union bound, a fixed color $c\in[\Delta+1]$ is included in $L_2(v)$ with probability $O\parens{\frac{\beta}{\epsilon^2\Delta}}$ and in $L_3(v)$ with probability $O(\frac{\beta^2}{\Delta})$. Since each color is included in $L_2$ independently, a simple Chernoff bound proves the following claim:

\begin{claim}
\label{lem:lists-size}
With high probability, $|L_2(v)|\le O(\log n/\epsilon^2)$ and $|L_3(v)|\le O(\log^2n)$ for all $v\in V$. Furthermore, each $\Lg_{3,i}(v)$ and $\Lh_{3,i}(v)$ contains at least $6\beta$ different colors.
\end{claim}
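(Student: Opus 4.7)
The plan is to prove the claim by standard multiplicative Chernoff bounds applied to sums of independent Bernoulli indicators, followed by a union bound. The key observation is that each of $|L_{2,i}(v)|$, $|L_2^*(v)|$, $|\Lg_{3,i}(v)|$, and $|\Lh_{3,i}(v)|$ is a sum of $\Delta+1$ mutually independent $\{0,1\}$-valued variables, one per color $c\in[\Delta+1]$, with inclusion probabilities as prescribed in \cref{alg:palette-sampling}; different sublists are sampled independently of each other.

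First I would record the expected sizes by linearity. We have $\Exp|L_{2,i}(v)| = (\Delta+1)/(4\Delta) = \Theta(1)$, so summing over the $O(1/\epsilon)$ sublists of this type contributes $O(1/\epsilon)$. Next, $\Exp|L_2^*(v)| = \gamma\beta(\Delta+1)/\Delta = \Theta(\beta/\epsilon^2) = \Theta(\log n/\epsilon^2)$, which dominates, giving $\Exp|L_2(v)| = \Theta(\log n/\epsilon^2)$. Finally, $\Exp|\Lg_{3,i}(v)| = \Exp|\Lh_{3,i}(v)| = 20\beta^2(\Delta+1)/\Delta \approx 20\beta^2 = \Theta(\log^2 n)$, which aggregated across the $2\beta$ sublists controls $|L_3(v)|$.

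For the upper bounds, I would invoke the standard multiplicative Chernoff tail $\Pr[X \ge 2\,\Exp X] \le \exp(-\Omega(\Exp X))$. For the ``at least $6\beta$'' lower bound on each $|\Lg_{3,i}(v)|$ and $|\Lh_{3,i}(v)|$, I would invoke $\Pr[X \le \Exp X/2] \le \exp(-\Omega(\Exp X))$; since $\Exp|\Lg_{3,i}(v)| \approx 20\beta^2$ is asymptotically much larger than $6\beta$, the failure probability is $\exp(-\Omega(\beta^2)) \ll 1/\poly(n)$. The assumption $\Delta \ge \Omega(\log^4 n)$, together with the freedom to choose the constant $C$ in $\beta = \floor{C\log n}$ large enough, guarantees that all Chernoff exponents are $\Omega(\log n)$, so failure probabilities are at most $1/\poly(n)$ for any prescribed polynomial.

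A final union bound over the $n$ vertices and the $O(1/\epsilon) + 1 + 2\beta = O(\log n)$ sublists per vertex is absorbed by the polynomial slack. There is no serious obstacle here: the claim is a routine concentration calculation, and the only thing to watch is ensuring the constant $C$ dominates the hidden constants in all the Chernoff exponents simultaneously---which is exactly the role it is given in \eqref{eq:parameters}.
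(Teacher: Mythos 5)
Your method---per-sublist expectations, multiplicative Chernoff, and a union bound over vertices and sublists---is exactly what the paper intends (its ``proof'' is a one-sentence appeal to a simple Chernoff bound), and your treatment of $L_2(v)$ and of the lower bound on each $\Lg_{3,i}(v)$, $\Lh_{3,i}(v)$ is fine. The problem is the upper bound on $|L_3(v)|$: you correctly compute $\Exp|\Lg_{3,i}(v)|\approx 20\beta^2=\Theta(\log^2 n)$ from the sampling probability $20\beta^2/\Delta$ stated in \cref{alg:palette-sampling}, but aggregating over the $2\beta=\Theta(\log n)$ sublists then yields $\Exp|L_3(v)|=\Theta(\beta^3)=\Theta(\log^3 n)$, which contradicts the claimed $O(\log^2 n)$. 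The assertion that the per-sublist expectation ``aggregated across the $2\beta$ sublists controls $|L_3(v)|$'' silently drops this extra factor of $\beta$; as written, your argument proves $|L_3(v)|\le O(\log^3 n)$, not the statement of the claim.

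The discrepancy traces back to the parameter in \cref{alg:palette-sampling} itself, which is inconsistent with everything downstream: the sentence immediately preceding the claim says a fixed color lands in $L_3(v)$ with probability $O(\beta^2/\Delta)$ (i.e., $2\beta$ sublists at $O(\beta/\Delta)$ each), \cref{lem:conditionning-outside} uses that a color is sampled in $L_{3,\ell}(u)$ with probability at most $20\beta/\Delta$, and \cref{lem:sample-color} bounds the lists $L_{3,\ell}(\cdot)$ by $40\beta$ colors per node. All of this forces the per-sublist probability to be $20\beta/\Delta$. With that value each $\Lg_{3,i}(v)$ has expected size about $20\beta$, so the ``at least $6\beta$'' bound is the Chernoff lower tail at less than a third of the mean (failure probability $\exp(-\Omega(\beta))=1/\poly(n)$ for $C$ large), and $\Exp|L_3(v)|=O(\beta^2)=O(\log^2 n)$ with the matching upper tail. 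You should either work with this corrected probability, or explicitly note that the claim is false for $q=20\beta^2/\Delta$ as literally written; taking the stated value at face value and concluding $O(\log^2 n)$ anyway is the one real gap in your write-up.
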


Our algorithm will color each $v\in V$ with a color from its list $L(v)$. Following the observation of \cite{ACK19}, edges between nodes with non-intersecting lists can be dropped. A standard argument shows the induced subgraph is sparse with high probability.

\begin{definition}[The Sparsified Graph]
\label{def:conflict-graph}
For a graph $G=(V,E)$ and lists $L(v)$ such as described in \cref{alg:palette-sampling}, let $\Gsparse=(V,\widetilde{E})$ be the subgraph of $G$ with edges $uv\in E$ such that $L(u)\cap L(v)\neq\emptyset$. We call $\Gsparse$ the \emph{sparsified graph}. For any set $S\subseteq V$, we denote by $\sparse{S}$ the induced subgraph $\Gsparse[S]$.
\end{definition}

\begin{claim}[{\cite[Lemma 4.1]{ACK19}}]
\label{lem:sparse-graph}
For any graph $G$, w.h.p., the sparsified graph $\sparse{G}$ has maximum degree $O(\log^4 n)$.
\end{claim}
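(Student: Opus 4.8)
The plan is to fix a node $v$ and bound the number of neighbors $u \in N_G(v)$ for which $L(u) \cap L(v) \neq \emptyset$, then union bound over all $n$ nodes. First I would condition on the (high-probability) event from \cref{lem:lists-size} that $|L(v)| \le O(\log^2 n)$; call this size $\ell = O(\log^2 n)$. Working with the realized set $L(v)$ fixed, I would look at each neighbor $u$ independently: $u$ contributes an edge in $\Gsparse$ iff $u$ samples at least one color from the fixed set $L(v)$. The key quantity is $\Pr[L(u) \cap L(v) \neq \emptyset]$. Each color in $L_1(u)$ is uniform in $[\Delta+1]$, and each color of $L_2(u), L_3(u)$ lands in any fixed color with probability at most $O(\beta^2/\Delta) = O(\log^2 n / \Delta)$; since $L(u)$ contains $O(\log^2 n)$ colors total, a union bound over the colors of $L(u)$ and over the $\ell$ colors of $L(v)$ gives $\Pr[L(u)\cap L(v)\neq\emptyset] \le O(\log^2 n) \cdot O(\log^2 n / \Delta) \cdot \ell / \ell$ — more carefully, the probability a single sampled color of $u$ hits $L(v)$ is at most $|L(v)| \cdot O(\log^2 n/\Delta) = O(\log^4 n / \Delta)$, and summing over the $O(\log^2 n)$ colors of $u$ this is $O(\log^6 n/\Delta)$. (One should double-check which combination of list sizes is actually needed; the bound only has to come out $\le \polylog n / \Delta$, which comfortably suffices.)

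Next, let $X_v = |\{u \in N_G(v) : L(u)\cap L(v)\neq\emptyset\}|$ be the degree of $v$ in $\Gsparse$. Since $|N_G(v)| \le \Delta$ and, conditioned on $L(v)$, the events ``$u$ hits $L(v)$'' are independent across $u \in N_G(v)$ (the lists of distinct nodes are sampled independently), we have $\Exp[X_v \mid L(v)] \le \Delta \cdot O(\polylog n / \Delta) = O(\polylog n)$. Concretely with the bound above, $\Exp[X_v\mid L(v)] = O(\log^6 n)$, but even a crude bound of $O(\log^2 n)$ per color of $v$ summed over $\ell = O(\log^2 n)$ colors would give the claimed $O(\log^4 n)$; I would be slightly more careful to get exactly $O(\log^4 n)$ as stated, using that each color of $v$ is hit by a given neighbor with probability $O(\log^2 n/\Delta)$ and there are $\le \Delta$ neighbors, so each color of $v$ contributes $O(\log^2 n)$ neighbors in expectation. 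A standard Chernoff bound then gives $\Pr[X_v \ge C' \log^4 n \mid L(v)] \le e^{-\Omega(\log^4 n)} \le 1/\poly(n)$ for a large enough constant $C'$ (using $\Delta \ge \Omega(\log^4 n)$ so the expectation is at least polylogarithmic and the concentration is meaningful; if the expectation were tiny one would instead use the multiplicative-additive Chernoff form to still get high probability at the additive level $\Theta(\log^4 n)$). Removing the conditioning and union-bounding over the $O(\log^2 n)$-size event of \cref{lem:lists-size} and over all $n$ choices of $v$ completes the argument.

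The main obstacle — really the only subtlety — is bookkeeping the constants and logarithmic exponents so that the final bound is genuinely $O(\log^4 n)$ rather than a larger polylog, and making sure the Chernoff concentration is applied in a regime where it yields $1/\poly(n)$ failure: this needs $\Exp[X_v\mid L(v)]$ to be $\Omega(\polylog n)$ with enough room, or else one appeals to the additive Chernoff bound $\Pr[X \ge \Exp[X] + t] \le e^{-t^2/(2(\Exp[X]+t/3))}$ with $t = \Theta(\log^4 n)$. Everything else — independence of lists across nodes, the per-color hitting probability $O(\log^2 n/\Delta)$, the union bounds — is routine. (Since the claim is quoted from \cite{ACK19}, it is also legitimate to simply cite it; the above is how I would reprove it.)
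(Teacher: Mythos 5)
Your argument is correct and is the standard calculation behind the cited ACK Lemma~4.1, which the paper simply invokes without reproving. The brief detour to $O(\log^6 n/\Delta)$ double-counts: the per-color Bernoulli inclusion probability $O(\log^2 n/\Delta)$ already absorbs the expected size of $L(u)$, so $\Pr[L(u)\cap L(v)\neq\emptyset]\le |L(v)|\cdot O(\log^2 n/\Delta)=O(\log^4 n/\Delta)$ directly without a further sum over $L(u)$; you catch this in the next line, and the resulting $O(\log^4 n)$ expectation plus a Chernoff bound over the (conditionally on $L(v)$) independent indicators, followed by a union bound over $v$, gives the claim.
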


\subsection{Decomposition and Properties}
\label{sec:decomposition}

Similarly to other distributed algorithms \cite{HSS18,CLP20,HKNT22} our algorithm needs to know the almost-clique decomposition in order to compute the coloring. However, existing \congest algorithms require communication with $\gg\poly\log n$ nodes \cite{HKMT21,HNT22}. Building on \cite{HNT22}, we give a \congest algorithm where nodes need only to communicate on a sparse subgraph of $G$. \cref{alg:high-level-ACD} gives an overview of the communication needed.

We emphasize that \cref{alg:high-level-ACD} uses a sparse subgraph of $G$ that is independent of the sparsified subgraph induced by the random lists (\cref{def:conflict-graph}). We found it simpler to state our algorithm this way. Observe, however, that \cref{alg:high-level-ACD} could be implemented using edges sampled from random lists. We briefly explain why. Two nodes $u$ and $v$ adjacent in the sparsified graph share at least one color $c$. To know if they share a large fraction of their neighborhood --- i.e.,  if they are friends (\cref{def:friend-edges}) --- notice that the number of nodes in $N(u)\cap N(v)$ that sample $c$ is concentrated, and therefore provides an unbiased estimator for the size of this set. Using a bandwidth compression technique introduced by \cite{HNT22}, $u$ and $v$ can compare their neighborhoods in $O(1)$ rounds using $O(\log n)$ bandwidth. To know if $v$ has many friends --- i.e., if it is popular (\cref{def:popular}) --- notice that its neighboring edges are sampled independently in the sparsified graph. Therefore, a node will detect a lot of friendly edges in the sparsified graph if and only if it is sufficiently popular. We prefer the following less technical and more general algorithm that does not depend on the sparsified graph and could be of independent interest.

\begin{Algorithm}
\label{alg:high-level-ACD}
High-level algorithm computing $\epsilon$-almost-clique decomposition.

\textbf{Input:} the sparsified graph $\sparse{G}=(V,\sparse{E})$.

\textbf{Output:} an $\epsilon$-almost-clique decomposition of $G$.

\begin{enumerate}
\item Each node $v$ samples a value $r(v)\in [\Theta(\Delta/\epsilon)]$ using public randomness.

\item Each node $v$ with degree at least $\Delta/2$ computes
\begin{itemize}[leftmargin=*]
\item the set $F(v)=\set{r(u): u\in N(v), r(u) \le \sigma}$ where $\sigma=\Theta(\log n/\epsilon^4)$.
\item a set $E_s(v)$ of $O(\log n/\epsilon^2)$ random edges.
\end{itemize}

\item (Communication Phase) After \emph{$O(1/\epsilon^4)$ rounds of communication using edges $E_s(v)$} and $O(\log\Delta)$ rounds of communication on $\sparse{G}$, each $v$ knows if it is sparse or dense as well as the unique identifier of its cluster if it is dense.
\end{enumerate}
\end{Algorithm}

\begin{remark}
Some important remarks about \cref{alg:high-level-ACD}.
\begin{enumerate}
\item  
Note that $F(v)$ depends on the randomness of the entire neighborhood of $v$. This is not an issue for any of our applications as it can easily be computed on a stream with $O(\log n/\epsilon^4)$ local memory, with public randomness in the Node Congested Clique and with aggregation of a single $O(\log n/\epsilon^4)$-bitmap in cluster graphs.
\item Nodes sample edges that might not belong to $\sparse{G}$. Nonetheless, we assume they can communicate along these edges. To remove this assumption, one could encode the sampling of $E_s(v)$ within the palette sampling process. Observe that adding $E_s(v)$ does not affect the sparsity of $\sparse{G}$\footnote{This is the reason we sample $E_s(v)$ only for high-degree nodes; nodes of degree less than $\Delta/2$ will be sparse anyway.}. We phrase it this way for simplicity.
\end{enumerate}
\end{remark}

\begin{restatable}{lemma}{ACD}
\label{lem:acd}
There is a $O(\log\Delta)$-round algorithm computing an $\epsilon$-almost-clique decomposition. It only broadcasts $O(\log n/\epsilon^4)$-bit messages and samples $O(\log n/\epsilon^2)$ edges per node.
\end{restatable}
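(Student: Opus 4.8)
The plan is to build on the sparse-dense decomposition machinery of \cite{HNT22}, which computes an $\epsilon$-almost-clique decomposition in $O(1)$ rounds of \congest but with communication on the full graph $G$, and argue that the only parts of that algorithm which actually need large neighborhoods can be replaced by (i) a few rounds of communication on a sparse sampled subgraph $E_s$, and (ii) $O(\log\Delta)$ rounds of communication on $\sparse{G}$. The decomposition of \cite{HNT22} classifies a node $v$ as dense if it has many \emph{friends} --- neighbors $u$ with $|N(u)\cap N(v)|\ge (1-\text{const}\cdot\epsilon)\Delta$ --- i.e., if $v$ is \emph{popular}; and the almost-cliques are essentially the connected components of the friendship graph restricted to popular nodes. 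So the two primitives to implement are: (a) testing whether a given edge $uv$ is a friend edge, and (b) testing whether a node is popular, and then (c) agreeing on cluster identifiers.

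First I would implement the friend test via sampling. For an edge $uv$, the quantity $|N(u)\cap N(v)|$ can be estimated by having $u$ and $v$ each compute, for a commonly known hash/sampling seed (the public-randomness values $r(\cdot)$), the multiset $F(v)=\{r(w):w\in N(v),\,r(w)\le\sigma\}$ with $\sigma=\Theta(\log n/\epsilon^4)$; then $|F(u)\cap F(v)|$ is, up to scaling by $\Theta(\Delta/\sigma)$, an unbiased estimator of $|N(u)\cap N(v)|$, and by a Chernoff bound over the $\Theta(\log n/\epsilon^4)$ independent ``slots'' it concentrates to within $\pm\epsilon\Delta$ with high probability. Each $F(v)$ is an $O(\log n/\epsilon^4)$-bit object, which is exactly the message/broadcast/aggregation budget claimed; so a node can send $F(v)$ to its neighbors (in the applications: by broadcast, by public randomness, or as a single bitmap aggregation), and each edge locally decides friendship. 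This is where the bandwidth-compression idea of \cite{HNT22} enters: rather than literally exchanging $F(v)$, one compresses the comparison to $O(\log n)$ bits, but for the purpose of this lemma the coarser $O(\log n/\epsilon^4)$-bit statement suffices.

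Next, the popularity test. A node $v$ is popular iff it has $\ge (1-\text{const}\cdot\epsilon)\Delta$ friends among its neighbors, and checking this exactly would need to look at all $\Delta$ neighbors. Instead I would have each high-degree node sample a set $E_s(v)$ of $O(\log n/\epsilon^2)$ random incident edges, run the friend test on those, and declare $v$ popular iff the sampled fraction of friend edges exceeds an appropriate threshold; a Chernoff bound over the $O(\log n/\epsilon^2)$ samples distinguishes ``$\ge(1-c_1\epsilon)\Delta$ friends'' from ``$\le(1-c_2\epsilon)\Delta$ friends'' w.h.p., and nodes in the ambiguous middle band can be placed on either side without violating the ACD guarantees (this is exactly the slack the definition of ACD leaves via the $O(\epsilon)$ slack in degrees and sparsity; one picks the constants in \eqref{eq:parameters} accordingly). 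This costs $O(1/\epsilon^4)$ rounds on the sampled graph $E_s$ for the friend tests plus local computation. Finally, for cluster identifiers: two popular, mutually-friend nodes are in the same almost-clique, and any almost-clique has diameter $O(1)$ in the friend graph; so once each node knows which of its $\sparse{G}$-neighbors are friends, $O(\log\Delta)$ --- in fact $O(1)$ --- rounds of communication on $\sparse{G}$ suffice to broadcast the minimum identifier within each clique and reach agreement. (We route this on $\sparse{G}$ because friend edges between popular nodes survive into $\sparse{G}$ with good probability once $\beta=\Theta(\log n)$ colors are sampled; alternatively one absorbs these edges into $E_s$.)

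The main obstacle --- and the part requiring genuine care rather than routine Chernoff bounds --- is showing that the \emph{estimation errors compose correctly}: a friend test that is only accurate to $\pm\epsilon\Delta$ is used inside a popularity test that is itself only a sampled estimate, and the resulting (possibly wrong at the margins) classification must still yield a partition satisfying all clauses of \Cref{def:ACD} for the \emph{target} $\epsilon$. The clean way to handle this is the standard two-constants trick: run the whole pipeline with a smaller internal parameter $\epsilon'=\Theta(\epsilon)$ for the friend/popularity thresholds, so that every node that the \emph{true} decomposition would call dense is detected, every truly-sparse node with sparsity $\ge\Omega(\epsilon^2\Delta)$ is detected, and the nodes whose classification flips lie in a band where \emph{both} possible assignments are consistent with the $\epsilon$-ACD axioms (their clique, if assigned one, still has size $\le(1+\epsilon)\Delta$ and internal degree $\ge(1-\epsilon)\Delta$, and if left sparse they still have $\Omega(\epsilon^2\Delta)$ sparsity). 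One then checks each clause of \Cref{def:ACD} under this relaxed classification, and checks that the public-randomness and aggregation requirements match the $O(\log n/\epsilon^4)$-bit and $O(\log n/\epsilon^2)$-edge budgets in the statement. Because all of these ingredients --- the friend-estimation concentration, the popularity sampling, and the $O(1)$-diameter clustering --- are ``well-known techniques from distributed coloring,'' as the excerpt notes, the bulk of the write-up is bookkeeping, and I would relegate the full verification to the appendix (\Cref{sec:strong-ACD} in the paper's organization) while stating the reduction cleanly here.
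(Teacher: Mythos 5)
Your proposal matches the paper's proof essentially step for step: the same \cite{HNT22}-style friend test via the hashed neighborhood sets $F(v)$ with $\sigma=\Theta(\log n/\epsilon^4)$, the same popularity test via $O(\log n/\epsilon^2)$ sampled incident edges with a Chernoff bound and a two-threshold slack band, and the same identification of almost-cliques as connected components of popular nodes (the paper cites this last reduction to \cite{ACK19}). One small correction: your parenthetical claim that cluster-identifier agreement takes ``in fact $O(1)$'' rounds is wrong --- although the friendship graph restricted to a cluster has diameter $O(1)$, communication is only possible along the stored sparse edges ($E_s$ or $\sparse{G}$), where the cluster is a sparse random graph of degree $\poly\log n$ on $\Theta(\Delta)$ nodes and hence has diameter $\Theta(\log\Delta)$; this is exactly where the $O(\log\Delta)$ round bound in the lemma comes from, and the paper proves it via the vertex expansion of the sampled subgraph.
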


Since \cref{lem:acd} is a rather straightforward extension of \cite{HNT22}, we defer its proof to the appendix (see \cref{sec:ACD}).

\paragraph{Useful properties of the sparsified clique.}
Let $C$ be an $\epsilon$-almost-clique. The sparsified clique $\sparse{C}$ is a random graph on (almost) $\Delta$ nodes where edges are sampled with probability $O(\log^4 n/\Delta)$. As such, the graph $\sparse{C}$ has typical properties of random graphs.

\begin{lemma}[Expansion]
\label{lem:expansion}
Let $C$ be an almost-clique, and assume $\Delta \ge \beta^4$.
With high probability, for all subsets $S\subseteq C$ of size at most $|S|\le 3\Delta/4$,
\[ |E_{\sparse{C}}(S,C\setminus S)|\ge |S|\beta^4/40 \qquad\text{and}\qquad |N_{\sparse{C}}(S)\cap (C\setminus S)| \ge \Omega(|S|). \]
\end{lemma}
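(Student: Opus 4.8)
The plan is to prove the two bounds---edge expansion and vertex expansion---essentially simultaneously via a union bound over all candidate subsets $S\subseteq C$ of each size $s=|S|\le 3\Delta/4$. Fix such a size $s$ and a set $S$ of that size. Since $|C|\ge(1-\epsilon)\Delta$ and $s\le 3\Delta/4$, the complement $C\setminus S$ has size $\ge(1-\epsilon)\Delta-3\Delta/4 = \Omega(\Delta)$; in particular it is a constant fraction of $\Delta$, which is where the ``room'' for expansion comes from. Recall that in $\sparse{C}$ each potential edge between two nodes of $C$ is present independently. The precise inclusion probability of an edge $uv$ is $1-(1-p')^{O(1)}$ where $p'$ accounts for the relevant sublists; the point we need is only that, for $u,v$ that are genuine neighbors in $G$ (i.e., $uv\in E$, which holds for all but $a_v$ pairs per node), the edge $uv\in\sparse{E}$ with probability $\Theta(\beta^4/\Delta)$, and these events are independent across disjoint pairs. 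The anti-edges of $C$ (pairs $u,v$ with $uv\notin E$) number at most $\avganti_C|C|/2$ and can only be missing, so we simply discard them; for $|S|=s$ with $s\ge 1$ the number of $G$-edges between $S$ and $C\setminus S$ is at least $s\cdot\Omega(\Delta) - (\text{anti-edges incident to }S)$, and since a union-bound-friendly treatment lets us absorb the $o(\Delta)$-per-node anti-degree into the constant, we may assume at least $s\cdot c_0\Delta$ candidate independent edges for a constant $c_0>0$.

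First I would handle edge expansion. Let $X=|E_{\sparse{C}}(S,C\setminus S)|$. Then $X$ is a sum of at least $s\,c_0\Delta$ independent Bernoulli$(\Theta(\beta^4/\Delta))$ indicators, so $\Exp[X]\ge c_1 s\beta^4$ for a constant $c_1$. By a Chernoff bound, $\Pr[X < c_1 s\beta^4/2] \le \exp(-\Omega(s\beta^4))$. We now union bound over all sets $S$ of size $s$: there are $\binom{|C|}{s}\le(e\Delta/s)^s = \exp(s\ln(e\Delta/s))$ of them. Since $\beta=\Theta(\log n)$ and $\Delta\le\poly(n)$, we have $\beta^4=\Theta(\log^4 n)\gg \ln(e\Delta/s)=O(\log n)$, so $\exp(s\ln(e\Delta/s))\cdot\exp(-\Omega(s\beta^4))\le \exp(-\Omega(s\beta^4))\le \exp(-\Omega(\beta^4))=1/\poly(n)$ for the stated large polynomial. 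Summing over the $\le\Delta$ possible values of $s$ costs only another factor $\Delta=\poly(n)$, which is absorbed by taking the hidden constant in $\beta$ large. Choosing constants so that $c_1/2\ge 1/40$ gives the claimed $|E_{\sparse{C}}(S,C\setminus S)|\ge|S|\beta^4/40$.

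For vertex expansion, $|N_{\sparse{C}}(S)\cap(C\setminus S)|$ is trickier because neighborhood size is not a sum of independent indicators. The cleanest route is to bound it from below via the edge count combined with a cap on the maximum degree: by \cref{lem:sparse-graph} every node of $\sparse{C}$ has degree $O(\log^4 n)=O(\beta^4)$, so $|N_{\sparse{C}}(S)\cap(C\setminus S)| \ge |E_{\sparse{C}}(S,C\setminus S)| / \max_{u}\deg_{\sparse{C}}(u) \ge (|S|\beta^4/40)/O(\beta^4) = \Omega(|S|)$, which is exactly the second inequality. (Alternatively, for $s$ up to a small constant times $\Delta/\beta^4$ one can argue directly that each of the $\Omega(\Delta)$ nodes of $C\setminus S$ lies in $N_{\sparse{C}}(S)$ independently with probability $\ge 1-(1-\Theta(\beta^4/\Delta))^s = \Omega(s\beta^4/\Delta)$ and apply Chernoff plus the same union bound; but routing through the degree bound avoids this case split.) The main obstacle is purely bookkeeping: making the union bound over $\binom{|C|}{s}$ sets go through for \emph{all} $s\le 3\Delta/4$ simultaneously, which works precisely because the failure probability $\exp(-\Omega(s\beta^4))$ has $\beta^4=\mathrm{polylog}\,n$ in the exponent while the entropy term $s\ln(e\Delta/s)$ is only $O(s\log n)$ --- this is the one place the assumption $\Delta\ge\beta^4$ (equivalently, $\Delta\ge\mathrm{polylog}\,n$) is genuinely used, to guarantee $\Exp[X]=\Theta(s\beta^4)\ge 1$ even for $s=1$ so that Chernoff gives a meaningful bound down to singleton sets.
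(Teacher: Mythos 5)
Your overall skeleton---Chernoff for a fixed $S$, a union bound over the $\binom{|C|}{s}$ sets of each size with the entropy term $s\log\Delta$ beaten by a concentration exponent that is $s\cdot\polylog(n)$, and vertex expansion obtained by dividing the edge count by the $O(\beta^4)$ maximum degree of $\sparse{G}$---matches the paper, and the last step in particular is exactly how the paper derives the second inequality from the first. The gap is in the one step that carries all the technical weight: you assert that ``in $\sparse{C}$ each potential edge between two nodes of $C$ is present independently'' and then treat $|E_{\sparse{C}}(S,C\setminus S)|$ as a sum of independent Bernoulli$(\Theta(\beta^4/\Delta))$ indicators. This is false. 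An edge $uv$ is present iff $L(u)\cap L(v)\neq\emptyset$, so all edges incident to a common node $u$ depend on the single random list $L(u)$ and are correlated; only vertex-disjoint pairs give independent indicators, and the edges of $E(S,C\setminus S)$ are very far from vertex-disjoint (each $u\in S$ contributes $\Theta(\Delta)$ of them). A plain Chernoff bound therefore does not apply to $X$, and the claimed tail $\exp(-\Omega(s\beta^4))$ is not justified.

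Handling this dependence is essentially the entire content of the paper's proof: it reveals the randomness in two stages (first the color choices of the $S$-endpoints, then those of the $C\setminus S$-endpoints), partitions the colors into buckets so that each bucket's contribution is a sum of variables with range $O(\beta^2)$ to which a martingale/Chernoff-with-domination bound applies, and introduces auxiliary ``primed'' variables to correct for the overcounting that arises because a single pair can share several colors. The resulting per-set failure probability is only $\exp(-\Omega(|S|\beta))$ (not $\exp(-\Omega(|S|\beta^4))$), which is still enough for the union bound since $\beta = C\log n$ for a large constant $C$. To repair your argument you would need at minimum a two-stage conditioning of this kind (or a concentration inequality for functions of the independent per-node color choices, e.g.\ a bounded-differences or Talagrand-type bound), not an independent-edges Chernoff bound; as written, the central concentration claim is unsupported.
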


\begin{proof}
  We show that for any fixed set $S \subseteq C$, the edge expansion property ($|E_{\sparse{C}}(S,C\setminus S)|\ge |S|\beta^4/40$) holds w.p.\ $1-\exp(-\Omega(\card{S} \beta))$. Since there are at most $\card{C}^x = \exp(O(x \log \Delta))$ subsets $S \subseteq C$ of size $x$, this allows us to claim by union bound that, w.h.p., the edge expansion property holds for all subsets $S \subseteq C$. The vertex expansion property then follows easily.
 
    Let us consider a fixed subset $S \subseteq C$, and set $\ov{S} \eqdef C\setminus S$ and $L \eqdef \max\parens*{1,\frac{\Delta}{\card{S}\beta}}$.
    We partition the colors into $(\Delta+1) / L$ groups of size $L$. Let $B_i \subseteq [\Delta+1]$ be the colors of bucket number $i$. We introduce several random variables.
    \begin{itemize}
        \item For each color $c \in [\Delta +1]$ and $e=uv \in E_G[S\times \ov{S}]$, let $X_{c,e}$ be the indicator random variable for whether $e$'s $S$-endpoint node $u$ sampled color $c$. Let $X_{c,u} = \sum_{v \in \ov{S}:uv \in E[S \times \ov{S}]} X_{c,uv}$ and $X_{c} = \sum_{u \in S} X_{c,u}$.
        \item For each color $c \in [\Delta +1]$ and $e=uv \in E_G[S\times \ov{S}]$, let $Y_{c,e}$ be the indicator random variable for whether both of $e$'s endpoint nodes $u$ and $v$ sampled color $c$. Note that $X_{c,e} = 0 \Rightarrow Y_{c,e} = 0$. Let $Y_{c,v} = \sum_{u \in S :uv \in E[S \times \ov{S}]} Y_{c,uv}$ and $Y_{c} = \sum_{v \in \ov{S}} Y_{c,v} = \sum_{e \in E[S \times \ov{S}]} Y_{c,e}$.
        \item For each $i \in [\Delta / L]$, let $Z_i = \sum_{c\in B_i} \sum_{e \in E[S \times \ov{S}]} (X_{c,e} \cdot Y_{c,e})$ be the contribution of bucket number $i$ to the number of edges between $S$ and $\ov{S}$.
    \end{itemize}
    Note that the random variables $Z_i$ are defined with multiplicity, i.e., possibly counting each edge multiple times. We will fix that soon. 
    
    A node $v\in S$ has at least $|N_G(v)\cap\ov{S}|\ge|\ov{S}|-\epsilon\Delta \ge (1/4-\epsilon)\Delta \ge \Delta/5$ neighbors in $\ov{S}$ before sparsification -- and at most $\Delta$. In total, $E_G[S \times \ov{S}]$ contains between $\card{S}\Delta/5$ and $\card{S}\Delta$ edges.
    We have
    \[ \Exp[X_{c,e}] = \beta^2 / (\Delta+1), \quad \Exp[X_c] \in [\beta^2 \card{S}/5,\beta^2 \card{S}],\quad \text{and} \quad \Exp[Z_i] \in [L\beta^2 \card{S}/5,L\beta^2 \card{S}].\]

    Let us now define auxiliary random variables $X'_{c,e}$, $Y'_{c,e}$, $Z'_i$, and related quantities in a manner that avoids the issue of overcounting. We reveal the colors in increasing order, and for each color $c$, let $\cG_c$ be the event that smaller colors already sampled $\card{S}\beta^4/20$ distinct edges into the sparsified graph. Note that $\cG_c$ is fully determined by the randomness of earlier colors. Let $X'_{c,e} = X_{c,e}$ and $Y'_{c,e} = Y_{c,e}$ when $\cG_c$ holds. When $\cG_c$ does not hold, let $X'_{c,e}$ and $Y'_{c,e}$ always equal $0$ if both endpoints of $e$ sampled an earlier color, and otherwise let $X'_{c,e} = X_{c,e}$ and $Y'_{c,e} = Y_{c,e}$ as before. $Z'_i = \sum_{c\in B_i} \sum_{e \in E[S \times \ov{S}]} (X'_{c,e} \cdot Y'_{c,e})$.
    Since $\card{S}\beta^4/10 \le \card{S}\Delta/10$, there are always at least $\card{S}\Delta/10$ edges for which $\Exp[X'_{c,e}] > 0$ and $\Exp[Y'_{c,e}] > 0$.
    We have
    \[\Exp[X'_c] \in [\beta^2 \card{S}/10,\beta^2 \card{S}],\quad \text{and} \quad\Exp[Z_i] \in [L\beta^2 \card{S}/10,L\beta^2 \card{S}].\]

    Consider a bucket $B_i$. We now make all random decisions regarding whether each node in $S$ samples each color $c \in B_i$. Consider the summation $\sum_{c \in B_i} \frac {1}{\Delta} X'_{c}$.
    It has an expected value of at least  $L\cdot \beta^2 \card{S}/(10\Delta)$, and is a sum of random variables $\frac 1 \Delta X'_{c}$ whose sampling spaces are contained in the range $[0,1]$. Hence, $\sum_{c \in B_i}X'_{c} \geq L\cdot \beta^2 \card{S}/20$ w.p.\ at least $1-\exp(-\Omega(L\cdot \beta^2 \card{S}/\Delta)) \geq 1-\exp(-\Omega(\beta))$ by \cref{lem:chernoff} (Chernoff bound), i.e., w.h.p. Furthermore, for each $v \in \ov{S}$ and color $c \in [\Delta+1]$, at most $O(\beta^2)$ of its neighbors in $S$ sample $c$, w.h.p. 

    Let us now make all random decisions regarding whether nodes in $\ov{S}$ sample each color in $B_i$. Let us analyze the summation
    \[Z'_i = \sum_{c\in B_i} \sum_{e \in E[S \times \ov{S}]} (X'_{c,e} \cdot Y'_{c,e}) = \sum_{c\in B_i} \sum_{v \in \ov{S}} \parens*{\sum_{u \in S: uv \in E[S \times \ov{S}]} (X'_{c,uv} \cdot Y'_{c,uv})}.\]
    $Z'_i$ has an expected value of at least $L\cdot \beta^4 \card{S}/(10\Delta)$.
    As random choices where fixed in $S$ s.t.\ each vertex $v \in \ov{S}$ has at most $O(\beta^2)$ of its neighbors in $S$ sample each color $c$, the inner term $\parens*{\sum_{u \in S: uv \in E[S \times \ov{S}]} (X'_{c,uv} \cdot Y'_{c,uv})}$ is distributed in $[0,O(\beta^2)]$, i.e., 
    the decision of any given $v \in \ov{S}$ for each color impacts the sum by at most $O(\beta^2)$. We can thus divide the sum by $O(\beta^2)$ in order to get random variables distributed in $[0,1]$ and apply \cref{lem:chernoff}. This gives that $Z'_i \geq L\cdot \beta^4 \card{S}/(20\Delta)$ w.p.\ at least $1-\exp(-\Omega((L\cdot \beta^4 \card{S}/\Delta)/\beta^2)) \geq 1-\exp(-\Omega(\beta))$.

    Hence, each bucket contributes at least $L\cdot \beta^4 \card{S}/(20\Delta)$ to the overall sum $Z'=\sum{i \in [\Delta / L]} Z'_i$, w.h.p., regardless of the choices of previous buckets. Let us analyze the probability that $\Delta/(2L)$ or more buckets fail to have this good contribution. Consider a specific set of $\Delta/(2L)$ buckets. The probability that they all fail to have a good contribution is $\exp(-\Omega(\beta\Delta/L))$. There are less than $2^{\Delta/L}$ ways to choose $\Delta/(2L)$ buckets out of $\Delta/L$, so by union bound the probability that less than $\Delta/(2L)$ buckets have a good contribution is at most $2^{\Delta/L} \cdot \exp(-\Omega(\beta\Delta/L)) = \exp(-\Omega(\beta\Delta/(2L)))$. Finally, $\exp(-\Omega(\beta\Delta/L)) = \exp(-\Omega(\min(\beta\Delta,\card{S}\beta^2))) < \exp(-\Omega(\card{S}\beta))$, so the sparsified graph contains at least $\beta^4 \card{S}/40$ edges w.p.\ at least $1-\exp(-\Omega(\card{S}\beta))$.
    
    This small failure probability $\exp(-\Omega(\card{S}\beta))$ allows us to union bound over all choices of $S$, using $\beta$ a large enough $\Omega(\log n)$. Hence, w.h.p., the edge expansion property holds. The vertex expansion follows from \cref{lem:sparse-graph}. Since the maximum degree in $\sparse{G}$ is $O(\beta^4)$, the number of nodes in $\ov{S}$ is at least $\frac{|E_{\sparse{C}}(S,\ov{S})|}{O(\beta^4)} = \Omega(|S|)$. 
\end{proof}

\Cref{lem:expansion} implies the following result as any two nodes can reach more than half of the clique in $O(\log\Delta)$ hops.

\begin{corollary}
\label{cor:spar-clique-low-diam}
The sparsified almost-clique $\sparse{C}$ has diameter $O(\log\Delta)$. 
\end{corollary}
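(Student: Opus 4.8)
The plan is to derive the diameter bound directly from the vertex-expansion half of \cref{lem:expansion}. The key idea is the standard ball-doubling argument: starting from any node $u\in C$, the expansion guarantees that balls around $u$ (measured in $\sparse{C}$) grow by a constant factor at every hop until they exceed the threshold $3\Delta/4$ past which \cref{lem:expansion} no longer applies. Concretely, for $r\ge 0$ let $B_r(u)$ denote the set of nodes of $C$ within hop-distance $r$ of $u$ in $\sparse{C}$. As long as $|B_r(u)|\le 3\Delta/4$, we may invoke \cref{lem:expansion} with $S=B_r(u)$ to get $|N_{\sparse{C}}(B_r(u))\cap(C\setminus B_r(u))|\ge \Omega(|B_r(u)|)$, and since $B_{r+1}(u)\supseteq B_r(u)\cup\bigl(N_{\sparse{C}}(B_r(u))\cap(C\setminus B_r(u))\bigr)$ this yields $|B_{r+1}(u)|\ge (1+c)\,|B_r(u)|$ for some absolute constant $c>0$. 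Hence after $O(\log\Delta)$ hops we must have $|B_r(u)|> 3\Delta/4$.

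The second step is to turn ``both balls are large'' into ``the two balls intersect.'' Fix any two nodes $u,v\in C$. By the first step there is some $r=O(\log\Delta)$ with $|B_r(u)|>3\Delta/4$ and $|B_r(v)|>3\Delta/4$. Since $|C|\le(1+\epsilon)\Delta$, we have $|B_r(u)|+|B_r(v)|>\tfrac32\Delta>(1+\epsilon)\Delta\ge|C|$ for $\epsilon$ small (which holds by \eqref{eq:parameters}), so $B_r(u)\cap B_r(v)\ne\emptyset$; picking a common node $w$ gives a $u$--$v$ path of length at most $2r=O(\log\Delta)$. As $u,v$ were arbitrary, $\sparse{C}$ has diameter $O(\log\Delta)$.

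One small wrinkle to handle cleanly: the vertex-expansion bound in \cref{lem:expansion} is only stated for sets of size at most $3\Delta/4$, so the doubling argument must \emph{stop} once the ball first exceeds $3\Delta/4$ rather than continuing all the way to $|C|$ — this is exactly why the threshold $3\Delta/4$ (and not, say, $\Delta/2$) matters, as it leaves room for the ``two large balls overlap'' counting argument above. A second minor point is the base case: we need $|B_0(u)|=1\ge 1$ to be a valid set to feed into \cref{lem:expansion}, which it is. Everything here is conditioned on the high-probability event of \cref{lem:expansion}, so the conclusion also holds with high probability; no additional union bound is needed since we reuse the same event for every pair $u,v$.

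I do not expect any real obstacle — the only thing to be a little careful about is making the constant in the geometric growth rate explicit enough that $O(\log\Delta)$ hops provably suffice, and confirming that the stopping threshold $3\Delta/4$ together with $|C|\le(1+\epsilon)\Delta$ indeed forces the overlap; both are immediate from the already-proved \cref{lem:expansion} and the parameter choice in \eqref{eq:parameters}.
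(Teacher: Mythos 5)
Your proposal is correct and is exactly the argument the paper has in mind: the paper's one-line justification ("any two nodes can reach more than half of the clique in $O(\log\Delta)$ hops") is precisely the ball-doubling-from-vertex-expansion plus pigeonhole-overlap argument you spelled out, so no comparison is needed.
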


Also, observe that two nodes from the same almost-clique that sampled the same color must be within distance 2 in the sparsified graph. 

\begin{claim}
\label{claim:color-group}
For a clique $C$, let $u$ and $v$ be two nodes of $C$ and $c\in[\Delta+1]$ be an arbitrary color. Then, with high probability, there exist at least $2\beta^2/5$ nodes $w\in N_G(u)\cap N_G(v)\cap C$ that sample $c\in L(w)$. In particular, if $c\in L(u)\cap L(v)$ then $u$ and $v$ are at two hops from each other in the sparsified graph $\sparse{G}$.
\end{claim}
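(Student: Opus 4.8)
The plan is to exploit two facts: an almost-clique is nearly complete, so $u$ and $v$ share almost all of $C$ as common neighbors; and palette sampling (\cref{alg:palette-sampling}) places a fixed color into any fixed node's list with probability $\Theta(\beta^2/\Delta)$, independently across nodes. First I would lower-bound the set of common neighbors $W \eqdef N_G(u)\cap N_G(v)\cap C$. By \cref{def:ACD}, $|N_G(u)\cap C|, |N_G(v)\cap C| \ge (1-\epsilon)\Delta$ and $|C| \le (1+\epsilon)\Delta$; since $N_G(u)\cap C$ and $N_G(v)\cap C$ are both subsets of $C$, inclusion-exclusion gives $|W| \ge 2(1-\epsilon)\Delta - (1+\epsilon)\Delta = (1-3\epsilon)\Delta$. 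Note also that $u,v \notin W$ and $uw, vw \in E_G$ for every $w \in W$.

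Next I would track $Z \eqdef \card{\set{w\in W : c\in L(w)}}$. For any $w\in W$, the sub-list $\Lg_{3,1}(w)\subseteq L(w)$ contains each color of $[\Delta+1]$ independently with probability $20\beta^2/\Delta$ (a valid probability since $\Delta \ge \Omega(\log^4 n) \gg \beta^2$), so $\Pr[c\in L(w)] \ge 20\beta^2/\Delta$. As the nodes sample independently, $Z$ stochastically dominates a sum of independent Bernoullis of mean $\Exp[Z] \ge (1-3\epsilon)\cdot 20\beta^2 > 19\beta^2$ (using $\epsilon = 10^{-8}$). Since $2\beta^2/5 < \Exp[Z]/2$, the multiplicative Chernoff bound (\cref{lem:chernoff}) gives $\Pr[Z < 2\beta^2/5] \le \exp(-\Omega(\beta^2))$. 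With $\beta = \floor{C\log n}$ for a large enough constant $C$, this is $n^{-\omega(1)}$, so a union bound over the at most $n^2(\Delta+1) \le n^3$ choices of a pair $(u,v)$ and a color $c$ shows that w.h.p.\ $Z \ge 2\beta^2/5$ simultaneously for all of them; this is the first assertion.

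Finally, for the $2$-hop consequence I would condition on the above w.h.p.\ event: if $c\in L(u)\cap L(v)$, pick any $w\in W$ with $c\in L(w)$, which exists since $Z \ge 2\beta^2/5 > 0$. Then $uw, vw \in E_G$ because $w \in N_G(u)\cap N_G(v)$, and $c \in L(u)\cap L(w)$ and $c\in L(v)\cap L(w)$, so by \cref{def:conflict-graph} both $uw$ and $vw$ survive in $\Gsparse$; hence the path $u, w, v$ has length $2$ in the sparsified graph and $u, v$ are within distance $2$ in $\sparse{G}$. There is no genuine obstacle here; the only point that needs attention is that the Chernoff failure probability $\exp(-\Omega(\beta^2))$ be small enough to survive a $\poly(n)$-sized union bound over all color/pair choices, which is precisely why the sampling rate for $\Lg_{3,i}$ is set to $\Theta(\beta^2/\Delta)$ rather than $\Theta(\beta/\Delta)$.
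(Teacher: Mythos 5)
Your proof is correct and follows essentially the same route as the paper's: lower-bound the common neighborhood $N_G(u)\cap N_G(v)\cap C$ via the almost-clique properties, observe that each common neighbor samples $c$ independently with probability $\Theta(\beta^2/\Delta)$, apply Chernoff plus a union bound over pairs and colors, and exhibit the explicit two-hop path $u,w,v$. One small caveat: you restrict to the single sub-list $\Lg_{3,1}(w)$ and take the rate $20\beta^2/\Delta$ literally from \cref{alg:palette-sampling}; that figure is inconsistent with other parts of the paper (the $O(\log^2 n)$ bound on $|L_3|$ and the $6\beta$ per-sub-list size in \cref{lem:lists-size}, and the $20\beta/\Delta$ rate used in \cref{lem:conditionning-outside} all indicate the intended per-sub-list rate is $20\beta/\Delta$), so the robust version of your argument should bound $\Pr[c\in L(w)]\ge \beta^2/\Delta$ using the union of all $\beta$ sub-lists of $L_3(w)$ — which is exactly what the paper's proof does with its parameter $q\eqdef\beta^2/\Delta$ — after which your computation goes through unchanged.
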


\begin{proof}
Let $q\eqdef\frac{\beta^2}{\Delta}$. Nodes $u$ and $v$ share at least $(1-2\epsilon)\Delta$ neighbors in $C$ and each $w\in N_G(u)\cap N_G(v)\cap C$ sampled the color $c$ with probability (at least) $q$. In expectation, at least $(1-2\epsilon)\Delta \cdot q \ge (4/5)\beta^2$ such $w$ sampled $c$. Since each $w$ samples its color independently, the classic Chernoff bound applies. At least $(2/5)\beta^2$ shared neighbors sampled $c$ with probability $1-\exp(-\Omega(\beta^2)) \gg 1-1/\poly(n)$. 
\end{proof}

\section{The Distributed Palette Sparsification Theorem}
\label{sec:color-logdelta}

In this section, we give the \congest algorithm for our main theorem. Again, we assume $\Delta \ge \Omega(\log^4 n)$ and show a runtime of $O(\log^2\Delta)$.

\DistPaletteSparsificationThm*

\subsection*{Step 1: Preconditioning Almost-Cliques}
When we compute the colorful matching or build augmenting trees, nodes might sample $\Theta(\log n)$ random colors within a round. If a node has $\Omega(\Delta)$ neighbors, this might result in all colors being blocked by its external neighbors. To circumvent this issue, we use standard techniques from distributed coloring to strengthen guarantees given by the almost-clique decomposition (\cref{def:ACD}).

\begin{restatable}{theorem}{strongACD}
\label{thm:strong-ACD}
Let $\epsilon \in (0,1/3)$ be \underline{a constant independent of $n$ and $\Delta$}, and $\eta$ be any number (possibly depending on $n$ and $\Delta$) such that $\Delta/\eta \ge K\log n$ for a large enough constant $K > 0$. There exists an algorithm computing a partial coloring of $G$ where all uncolored nodes are partitioned in almost-cliques $C_1, \ldots, C_t$ for some $t$ such that for any $i\in[t]$, almost-clique $C_i$ is such that: 
\begin{enumerate}
    \item\label[part]{part:SACD-upp} $|C_i| \le (1+\epsilon)\Delta$;
    \item\label[part]{part:SACD-inside} $|N(v)\cap C_i| \ge (1-\epsilon)\Delta$ for all nodes $v\in C_i$;
    \item\label[part]{part:SACD-ext} $|N(v)\cap \bigcup_{j\neq i} C_j| \le \emax=\Delta/\eta$
\end{enumerate}

Furthermore, the algorithm runs in $O(\log\Delta + \log \eta)$ rounds, uses $O(\eta\log n)$ colors from lists $L_1$, and samples $O(\eta\log n)$ edges per node.
\end{restatable}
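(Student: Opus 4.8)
The plan is to reduce \cref{thm:strong-ACD} to the classical distributed coloring machinery by partially coloring the ``boundary'' of each almost-clique before the main algorithm begins. Start from an $\epsilon'$-almost-clique decomposition with $\epsilon' = \epsilon/4$ (say), obtained via \cref{lem:acd}; this already gives \cref{part:SACD-upp,part:SACD-inside} with a bit of room to spare. The key structural observation, as indicated in \cref{sec:intro-preconditionning}, is that a dense node $v$ with external degree $e_v \ge \Delta/\eta$ is automatically $\Omega(\Delta/\eta)$-sparse: indeed, its neighborhood contains its $\ge (1-\epsilon')\Delta$ clique-neighbors, which form an almost-clique and hence miss only $O(\epsilon' \Delta^2)$ internal edges, while the $\ge \Delta/\eta$ external neighbors contribute at most $O(e_v \cdot \epsilon' \Delta) = O(\epsilon' \Delta^2/\eta)$ edges back into the clique part, so $\zeta_v = \Omega(e_v) = \Omega(\Delta/\eta)$ provided $\epsilon'$ is small enough. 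Call such nodes \emph{boundary nodes}; all other dense nodes already satisfy \cref{part:SACD-ext}.

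Next I would color the boundary nodes (together with $\Vsparse$, which we must color anyway for the main theorem's purposes, or at least handle so they create no uncolored obstruction — actually we only need to color the boundary nodes, since \cref{thm:strong-ACD} allows sparse nodes to stay, but it is cleanest to fold them in). Every node to be colored has slack $\Omega(\Delta/\eta)$: sparse nodes by definition of the decomposition, boundary nodes by the computation above. Invoke \slackcolor[$(\smin)$] from \cref{lem:multi-trial} with $\smin = \Theta(\Delta/\eta) \ge \Omega(K\log n)$ and $\kappa$ a constant, say $\kappa = 1$. Since $\smin = \Omega(\log n)$, the per-node failure probability $\exp(-\Omega(\smin^{1/2})) + \Delta e^{-\Omega(\smin)}$ is $1/\poly(n)$, so by a union bound all boundary and sparse nodes get colored w.h.p.; the conditioning statement in \cref{lem:multi-trial} (success even conditioned on distance-$\ge 2$ choices) lets us run this for all such nodes simultaneously. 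The round count is $O(\log^* \smin + 1/\kappa) = O(\log^*\Delta)$ in \local, and $O(\log\Delta)$ rounds with $O(\log n)$-bit messages; the color budget is $O(\frac{\smin \log n}{\Delta}) = O(\eta\log n)$ colors per node drawn from $L_1$, and the communication is on a sampled subgraph of the appropriate density, matching the claimed resource bounds. (If $\eta$ can be as small as $\Theta(\log n)$ then $\Delta/\eta$ can be as large as $\Delta/\Theta(\log n)$, which is fine; the $O(\log\Delta+\log\eta)$ accounting absorbs the extra $\log\eta$ that may appear if one iterates the slack-coloring to drive failure down, or it is simply slack in the statement.)

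It remains to check that after this partial coloring the uncolored nodes still form almost-cliques with the three properties. The uncolored set is contained in $\bigcup_i (C_i \setminus \text{boundary})$. Within each $C_i$ define $C_i' $ to be its uncolored part; then $|C_i'| \le |C_i| \le (1+\epsilon)\Delta$, giving \cref{part:SACD-upp}. For \cref{part:SACD-inside}, a node $v \in C_i'$ had $\ge (1-\epsilon')\Delta$ neighbors in $C_i$ originally; it lost at most the boundary nodes of $C_i$ among them, but $v$ itself is \emph{not} a boundary node, so $e_v < \Delta/\eta$, hence $v$ has $< \Delta/\eta$ neighbors outside $C_i$ and thus $\ge (1-\epsilon')\Delta - |C_i \setminus C_i'|$... — here I need the number of boundary nodes per clique to be small, which is \emph{not} a priori true, so instead I argue directly: $|N(v) \cap C_i'| \ge |N(v)\cap C_i| - (\text{number of } C_i\text{-neighbors of } v \text{ that got colored})$. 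A cleaner route: keep \cref{part:SACD-inside} by noting that the uncolored nodes of $C_i$, being non-boundary, have $\ge (1-\epsilon')\Delta$ clique-neighbors and the coloring only removes colored nodes; re-form the decomposition on the uncolored graph if needed, or observe $N(v)\cap C_i'$ may shrink but we can enlarge $\epsilon'$ to $\epsilon$ to absorb the loss as long as few neighbors are lost — and few are, because $v$ being non-boundary is a statement about edges leaving $C_i$, not about how many of $v$'s neighbors inside $C_i$ are themselves boundary. The honest fix, which I expect to be the main obstacle, is the standard one used in such preconditioning lemmas (cf. \cite{SW10,HKMT21}): first peel off \emph{high-external-degree} nodes clique by clique using their slack, and observe that removing them does not decrease any surviving node's internal degree below $(1-\epsilon)\Delta$ because each survivor had external degree $< \Delta/\eta \ll \epsilon\Delta$ and internal degree $\ge (1-\epsilon')\Delta$, so even losing \emph{all} colored neighbors it retains $\ge (1-\epsilon')\Delta - |C_i| + |C_i'|$; combined with $|C_i| - |C_i'| = $ (colored nodes of $C_i$) $\le$ (nodes of $C_i$ with external degree $\ge \Delta/\eta$), and a counting bound $\sum_{v\in C_i} e_v \le |C_i|\cdot\epsilon'\Delta$ forcing at most an $O(\epsilon'\eta)$-fraction of $C_i$ to be boundary — which is $o(|C_i|)$ when $\eta = o(1/\epsilon')$, but \emph{not} when $\eta$ is large. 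So for large $\eta$ one must instead recompute the decomposition after coloring, which is fine since the uncolored subgraph's max degree has dropped; I would present this as: run the slack-coloring on all $\ge \Delta/\eta$-external-degree dense nodes, then re-run \cref{lem:acd} on $G$ restricted to uncolored nodes, and verify \cref{part:SACD-ext} now holds because every remaining dense node had external degree $< \Delta/\eta$ already. Finally \cref{part:SACD-ext} itself: for a surviving dense node $v$, its uncolored external neighbors number at most its total external neighbors $< \Delta/\eta = \emax$, which is exactly the claim. Round and message complexity is dominated by the $O(\log\Delta+\log\eta)$ from slack-coloring and the $O(\log\Delta)$ from \cref{lem:acd}, completing the proof.
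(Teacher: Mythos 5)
You correctly identify the two key ingredients (dense nodes of external degree $\ge \Delta/\eta$ get permanent slack $\Omega(\Delta/\eta)$ via \cref{lem:slackgen}, and they can be detected by sampling $O(\eta\log n)$ edges per node), and you correctly flag that \cref{part:SACD-inside} is endangered when a clique contains many boundary nodes. But the proposal has a gap one step earlier: your plan colors \emph{all} boundary nodes in one shot with \slackcolor, and \cref{lem:multi-trial} requires slack $s(v)=\Omega(d(v))$ where $d(v)$ is the \emph{uncolored degree}. A boundary node sitting in an otherwise uncolored almost-clique has uncolored degree $\Theta(\Delta)$ but permanent slack only $\Theta(\Delta/\eta)$, so for $\eta\gg 1$ (the regime actually used is $\eta=\Theta(\log n)$) the hypothesis fails and the lemma does not apply. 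The fallback of re-running \cref{lem:acd} on the uncolored subgraph does not repair this: the new decomposition produces new sparse nodes that would remain uncolored (the theorem requires \emph{all} uncolored nodes to lie in almost-cliques), and its guarantees are relative to the reduced maximum degree rather than the original $\Delta$, so neither \cref{part:SACD-inside} nor \cref{part:SACD-ext} follows in the stated form.

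The paper resolves both issues simultaneously with an extrovert/introvert dichotomy on cliques (a clique is extrovert if it has more than $2\epsilon'\Delta$ high-external-degree nodes) and a careful coloring order. Extroverted nodes in \emph{introvert} cliques have $\ge(1-3\epsilon')\Delta$ introverted clique-mates that stay uncolored forever, hence $\Omega(\Delta)$ temporary slack, so \slackcolor applies to them directly. In \emph{extrovert} cliques one first colors the \emph{introverted} nodes (they get $\Omega(\epsilon'\Delta)$ temporary slack from their many inactive extroverted neighbors), and only then the extroverted ones: $O(\log\eta)$ rounds of plain randomized color trials shrink their uncolored degree to $O(\Delta/\eta)$, at which point their permanent slack is proportional to their uncolored degree and \slackcolor finishes --- this is where the $\log\eta$ term in the round bound genuinely comes from, not from amplification. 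The uncolored remainder is then exactly the set of introverted nodes of introvert cliques; each introvert clique loses at most $2\epsilon'\Delta$ nodes, so \cref{part:SACD-inside} holds with $\epsilon=3\epsilon'$, and \cref{part:SACD-ext} holds because every survivor is introverted.
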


Note that the bound on the external degree given by \cref{part:SACD-ext} is much stronger than the one from the classical almost-clique decomposition.
Henceforth, we assume we are given the coloring and decomposition of \cref{thm:strong-ACD} with maximum external degree
\begin{equation}
\label{eq:emax}
\emax\eqdef \Delta/\eta \quad\text{where}\quad \eta\eqdef\max\parens{160\alpha\beta, L_{\max}/\epsilon} \ ,
\end{equation}
where $L_{\max}=O(\log n)$ is the upper bound on the size of lists $L_2$ of \cref{lem:lists-size}. As this uses only standard techniques from distributed coloring, we sketch the algorithm here and defer the complete proof to \cref{sec:strong-ACD}. 

\begin{proof}[Proof Sketch]
Compute an $(\epsilon/3)$-almost-clique decomposition in $O(\log\Delta)$ rounds (by \cref{lem:acd}). We use the fact that nodes with external degree $\Delta/\eta$ are $\Omega(\Delta/\eta)$-sparse; hence receive permanent slack from randomized color trials (\cref{lem:slackgen}).

To ensure \cref{part:SACD-ext}, we divide cliques of the almost-clique decomposition into two categories: \emph{introvert} cliques, with at most $(2\epsilon/3)\Delta$ nodes of high external degree; and \emph{extrovert} cliques, where more than $(2\epsilon/3)\Delta$ nodes have high external degree. We begin by generating slack in $\Vsparse$ and extroverted cliques. We next color nodes of low-external-degree in extroverted cliques using the $(\epsilon/3)\Delta$ temporary slack provided by their inactive neighbors of high external degrees. We color sparse nodes and high-external-degree nodes in introverted cliques using their permanent slack. By \cref{lem:multi-trial}, this takes $O(\log\Delta)$ rounds. We finish by coloring the high-external-degree nodes in extroverted cliques in two steps: first $O(\log\log n)$ rounds of randomized color trials to reduce their degree to $O(\Delta/\log n)$, then $O(\log\Delta)$ rounds using their permanent slack. What remains uncolored are then only the low-external-degree nodes in introverted cliques.

To detect nodes of high external degree, we use random edge samples. As we sample $O(\eta\log n)=O(\log^2 n)$ edges per node, it is not an issue for our applications: nodes communicate to only $O(\log^2 n)$ nodes during this step.
\end{proof}

\subsection*{Step 2: Colorful Matching}
The remaining uncolored nodes are very dense (more than $\Delta/\eta$-dense). We find a large matching of anti-neighbors in each clique and color (the endpoints of) each such node-pair with a different color. Such matchings are called \emph{colorful} and were introduced by \cite{ACK19} in the original palette sparsification theorem.

\begin{definition}[Colorful Matching]
\label{def:colorful-matching}
For any partial coloring of the nodes,
a matching $M$ in $\ov{C}$ (anti-edges in $C$) is \emph{colorful} if and only if the endpoints of each edge in $M$ are colored the same.
\end{definition}

If a clique has a colorful matching, the set of colors that are not used in the clique approximates well the palette of nodes \emph{with small anti-degree}. We call this set of colors \emph{the clique palette}.

\begin{definition}[Clique Palette]
\label{def:clique-palette}
For an almost-clique $C$, 
the \emph{clique palette} $\Psi_C$ (w.r.t.\ a valid coloring of the vertices) is the set of colors not used by nodes of $C$.
\end{definition}

The following lemma formalizes the idea that $\Psi_C$ and $\Psi_v$ are similar (for most nodes $v$).
Recall that $\hC$ and $\cC$ denote respectively the set of uncolored and colored nodes of $C$.

\begin{lemma}
\label{lem:clique-palette}
Let $C$ be an almost-clique with a colorful matching $M$ and fix any partial coloring of the nodes. We say $v\in C$ is \emph{promising} (with respect to $M$) if it satisfies $a_v\le |M|$, and otherwise it is \emph{unpromising}.
For each promising node $v$, we have \[ |\Psi_C\cap \Psi_v| \ge |\hC|. \]
\end{lemma}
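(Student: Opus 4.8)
The plan is to count colors by a double-counting / inclusion-exclusion argument on the vertices of $C$, separating colored and uncolored nodes and using the colorful matching to save on the colored side. Fix the partial coloring and let $v\in C$ be promising, so $a_v \le |M|$. I want to lower bound $|\Psi_C \cap \Psi_v|$, i.e.\ the number of colors that are used by \emph{no} node of $C$ and by \emph{no} neighbor of $v$. Since $v$'s neighbors outside $C$ are irrelevant to $\Psi_C$ but do matter for $\Psi_v$, the key observation I would exploit is that for a node in an almost-clique, $\Psi_v$ fails to contain a color $c\in\Psi_C$ only because some neighbor of $v$ \emph{outside} $C$ uses $c$ — but wait, that's not quite the right framing either, since $\Psi_v$ also excludes colors used by neighbors \emph{inside} $C$. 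Let me reorganize: a color $c$ lies in $\Psi_C\cap\Psi_v$ iff $c$ is used by no node of $C$ at all (that's $c\in\Psi_C$) and additionally by no neighbor of $v$ outside $C$. Actually since $C$ is almost-complete, $v$'s non-neighbors in $C$ are few (exactly $a_v$ of them), so the colors in $\Psi_C$ are "almost" all in $\Psi_v$ except possibly for colors used by $v$'s external neighbors. Hmm — but the lemma statement makes no reference to external degree, so the intended argument must instead be purely internal to $C$. The correct reading: $\Psi_C$ counts colors unused inside $C$; a color $c\in\Psi_C$ can fail to be in $\Psi_v$ only if some neighbor of $v$ uses $c$, and such a neighbor cannot be inside $C$ (nobody in $C$ uses $c$), so it is outside $C$. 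Thus $|\Psi_C\cap\Psi_v|\ge |\Psi_C| - e_v$, which is the wrong bound. So the intended statement must really be about $\Psi_C$ versus the restriction of $\Psi_v$ to colors-relevant-to-$C$; I will instead prove the cleaner inequality that is actually used, namely a \emph{lower bound on $|\Psi_C|$ itself in terms of $|\widehat C|$ and the anti-degree}, and then observe $\Psi_C\subseteq$ (colors available to $v$ within $C$).

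Concretely, here is the core counting. Partition $C = \widehat C \sqcup \widecheck C$. The colors used by $C$ are exactly the colors appearing on $\widecheck C$; since the coloring is proper, distinct \emph{adjacent} colored nodes get distinct colors, but a colorful matching edge is an \emph{anti-edge}, i.e.\ a \emph{non}-adjacent pair, whose two endpoints share a color. Hence among the $|\widecheck C|$ colored nodes, each of the $|M|$ matching edges contributes a repeated color, so the number of \emph{distinct} colors used in $C$ is at most $|\widecheck C| - |M|$. Therefore
\[
|\Psi_C| \;=\; (\Delta+1) - \#\{\text{colors used in }C\} \;\ge\; (\Delta+1) - \bigl(|\widecheck C| - |M|\bigr) \;=\; (\Delta+1) - |C| + |\widehat C| + |M|.
\]
Using $|C|\le (1+\epsilon)\Delta \le \Delta+1 + \epsilon\Delta$ — actually I need a bound going the other way, so I would instead use $|C|\le\Delta+1$ is false in general; the honest route is $|\Psi_C|\ge |\widehat C| + |M| - (|C|-\Delta-1)$ and then absorb the $O(\epsilon\Delta)$ error elsewhere, OR the lemma is stated for the regime $|C|\le\Delta+1$. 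Given the lemma as written asserts exactly $|\Psi_C\cap\Psi_v|\ge|\widehat C|$, the clean derivation must be: distinct colors used in $C$ is at most $|\widecheck C|-|M| = |C|-|\widehat C|-|M|$, so $|\Psi_C|\ge (\Delta+1)-|C|+|\widehat C|+|M|$.

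Now bring in the promising hypothesis $a_v\le |M|$ to pass from $\Psi_C$ to $\Psi_C\cap\Psi_v$. A color $c\in\Psi_C$ is unused throughout $C$; it can be excluded from $\Psi_v$ only by a neighbor of $v$, and since no node of $C$ uses $c$, such a neighbor lies in $N(v)\setminus C$ — but the lemma gives no control on those. So the genuinely intended statement restricts attention to colors that could possibly matter inside the clique: I will interpret $\Psi_v$ here as $v$'s palette and handle external neighbors by noting the lemma is applied after preconditioning where external conflicts are separately accounted for — or, more likely, the intended inequality reads $|\Psi_C\cap\Psi_v|\ge |\widehat C|$ with the understanding that $\Psi_C\subseteq\Psi_v$ \emph{up to the $a_v$ non-neighbors of $v$ in $C$*, giving $|\Psi_C\cap\Psi_v|\ge |\Psi_C| - a_v \ge (\Delta+1-|C|) + |\widehat C| + |M| - a_v \ge (\Delta+1-|C|) + |\widehat C|$, and then using $|C|\le\Delta+1$. \textbf{The main obstacle} is pinning down exactly which quantity plays the role of "$\Psi_v$" so that the clean bound $|\widehat C|$ comes out with no error term: one must either be in the regime $|C|\le\Delta+1$, or the $\epsilon\Delta$ slack must be shown to be dominated by the colorful matching size $|M|$ (which the preconditioning and colorful-matching steps guarantee, $|M|=\Omega(\epsilon\Delta)$ in large cliques). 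Modulo that bookkeeping, the proof is the two-line counting argument above: colorful matching $\Rightarrow$ at most $|\widecheck C|-|M|$ distinct colors used in $C$ $\Rightarrow$ $|\Psi_C|$ large, then subtract $a_v\le|M|$ non-neighbors of $v$ inside $C$.
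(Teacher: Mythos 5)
There is a genuine gap, and it sits exactly where you flagged ``the main obstacle.'' You correctly derive the two facts the paper's proof rests on --- $|\Psi_C|\ge(\Delta+1)-|C|+|\widehat C|+|M|$, and $|\Psi_C\cap\Psi_v|\ge|\Psi_C|-e_v$ (a color unused throughout $C$ can be missing from $\Psi_v$ only because of one of $v$'s $e_v$ external neighbors) --- but you then discard the second as ``the wrong bound'' and replace it with $|\Psi_C\cap\Psi_v|\ge|\Psi_C|-a_v$. That replacement is not valid: the $a_v$ non-neighbors of $v$ inside $C$ impose no constraint whatsoever on $\Psi_v$, since the palette only excludes colors used by \emph{neighbors}, so there is no reason to subtract $a_v$ at that point. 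Moreover your fallback assumption $|C|\le\Delta+1$ is false in general for almost-cliques (they can have up to $(1+\epsilon)\Delta$ nodes), and the lemma is stated for an arbitrary colorful matching and arbitrary partial coloring, so you cannot import $|M|=\Omega(\epsilon\Delta)$ from the preconditioning step to absorb the overflow.

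The missing idea is a degree identity that makes $e_v$ cancel. Since $\deg(v)=|N(v)\cap C|+e_v\le\Delta$ and $|C|=|N(v)\cap C|+a_v$, you get $\Delta-|C|\ge e_v-a_v$. Plugging this into your (correct) lower bound on $|\Psi_C|$ gives $|\Psi_C|\ge|\widehat C|+|M|+e_v-a_v\ge|\widehat C|+e_v$, using the promising hypothesis $a_v\le|M|$. Now the bound you discarded finishes the proof: $|\Psi_C\cap\Psi_v|\ge|\Psi_C|-e_v\ge|\widehat C|$. In other words, the term $\Delta+1-|C|$ that worried you, which can indeed be as negative as $-\epsilon\Delta$, is at least $e_v-a_v$; the $+e_v$ exactly pays for the external-neighbor losses and the $-a_v$ is paid for by the matching. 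This is the paper's argument, and it needs no regime restriction and no bookkeeping deferred to other parts of the algorithm.
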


\begin{proof}
Let $C$, $M$ and $v$ be as described above. We have $|\Psi_C|\ge \Delta-|\cC|+|M|$ because $\Psi_C$ loses at most one color per colored node but saves one for each color used by the colorful matching.
Since nodes are either colored or uncolored, i.e. $|C|=|\hC|+|\cC|$, we can lower bound the number of colors in the clique palette by $|\Psi_C|\ge |\hC|+\Delta-|C|+|M|$. On the other hand, observe that $\Delta \ge |N(v)\cap C| + e_v$ and $|C| = |N(v)\cap C|+a_v$. Hence, we have $|\Psi_C|\ge |\hC| + |M| + e_v - a_v \ge |\hC|+e_v$, using that $v$ is promising. The lemma follows as $|\Psi_C\cap\Psi_v| \ge |\Psi_C| - e_v \ge |\hC|$.
\end{proof}

In our setting, the existence of such a matching is not enough; we must compute it in few rounds. In \cref{sec:colorful-matching}, we describe how to compute a colorful matching of $\Theta(K\cdot\avganti_C)$ anti-edges in $O(K)$ rounds for any $K = O(1/\epsilon)$. 

\begin{restatable}{theorem}{colorfulmatching}
\label{thm:colorful-matching}
Let $K > 0$ be a constant such that $K < 1/(18\epsilon)$.
There is a $O(\log\Delta)$-round algorithm computing a colorful matching of size at least $K\cdot\avganti_C$ with high probability in all cliques $C$ of average anti-degree $\avganti_C\ge 1/(2\alpha)$.
\end{restatable}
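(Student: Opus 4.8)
The plan is to prove \cref{thm:colorful-matching} by iterating a simple randomized color-trial procedure $\Theta(K)$ times, where in each round every still-uncolored node of $C$ tries a single uniformly random color from $[\Delta+1]$ (using a fresh sublist from $L_2$), keeps it if no neighbor in $C$ kept the same color, and we then add to the colorful matching any anti-edge $uv$ both of whose endpoints have just been colored with the same color. The key quantity to control is the number $M_r$ of monochromatic anti-edges produced in a single round. First I would show that in any single round, conditioned on at least $\avganti_C |C|/4$ anti-edges still having both endpoints uncolored (which holds as long as we have accumulated only $O(K \avganti_C) = O(\avganti_C/\epsilon)$ matched pairs, hence removed $O(\avganti_C/\epsilon) \ll \epsilon\Delta \le (1-\epsilon)\Delta/2$ nodes, so most anti-edges are still ``active''), a fixed active anti-edge $uv$ becomes monochromatic-and-both-retained with probability $\Omega(1/\Delta)$: each of $u,v$ picks the same color with probability $1/(\Delta+1)$, and conditioned on that, each retains it with constant probability since the expected number of neighbors in $C$ picking that specific color is $|N(v)\cap C|/(\Delta+1) \le 1$, so by a Markov/second-moment argument (or the standard argument of \cite{EPS15}) retention happens with probability $\ge 1 - 1/e - o(1)$. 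Hence $\Exp[M_r] \ge \delta \avganti_C$ for a small constant $\delta > 0$.

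The second step is concentration: I would argue that $M_r \ge (\delta/2)\avganti_C$ with probability $1 - \exp(-\Omega(\avganti_C))$. This is where one must be careful, because $M_r$ is a function of the color choices of up to $\Theta(\Delta)$ nodes and is not a sum of independent indicators — an anti-edge's contribution depends on the choices of all $C$-neighbors of its two endpoints. The clean route is the one the paper hints at for \cref{lem:colorful-matching-porous}: use a read-$k$ / bounded-dependence concentration inequality (e.g.\ Talagrand, or the method of bounded differences on the sequence of per-node color choices, where changing one node's choice changes $M_r$ by at most its anti-degree — but anti-degrees may be large, so instead one should bucket colors or use the fact that the number of anti-edges is only $\avganti_C|C|/2$ and apply a concentration bound tailored to sparse-dependency sums, exactly as in the proof of \cref{lem:expansion}). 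Concretely, reveal the $S$-side choices first, note each color is picked by $O(1)$ neighbors of a given node w.h.p., then reveal the rest; this makes each node's contribution bounded and lets Chernoff-type bounds go through with deviation $\exp(-\Omega(\avganti_C))$. Since the hypothesis is $\avganti_C \ge 1/(2\alpha) = \Omega(1)$, a single round only succeeds with constant probability, so — as in the paper's remark — when $\avganti_C = O(\log n)$ one instead has each node try $\Theta(\log n)$ colors simultaneously (from $L_2^*$), which boosts the expected number of monochromatic anti-edges found to $\Omega(\log^2 n \cdot \avganti_C / 1) $... no: one must be careful that competition also grows, but since $\Omega(\Delta)$ colors remain free in the clique (\cref{lem:avail}) only a small fraction is lost, and the relevant count of candidate monochromatic anti-edges is $O(\avganti_C \log^2 n) = O(\log^3 n)$, which by \cref{cor:spar-clique-low-diam} and \cref{lem:expansion} can be gathered at every node of $C$ in $O(\log\Delta)$ rounds using $O(\log n)$-bit messages, after which the matching is computed locally; for $\avganti_C \ge \Omega(\log n)$ the per-round concentration already gives high probability directly.

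Putting it together: after $t = \Theta(K/\delta)$ rounds the colorful matching has size at least $t \cdot (\delta/2)\avganti_C \ge K \avganti_C$, provided we check at each round that we have not yet matched $K\avganti_C$ pairs (at which point we stop), so the ``most anti-edges still active'' hypothesis is maintained throughout; the constraint $K < 1/(18\epsilon)$ is exactly what guarantees $K\avganti_C$ matched pairs remove fewer than a small enough fraction of $C$ so that \cref{part:SACD-inside}-type counting still applies and the clique palette / availability arguments (\cref{lem:avail}) hold against concurrently-colored external neighbors. Finally, the total round count is $O(K) = O(1/\epsilon) = O(1)$ color-trial rounds in \LOCAL, or $O(\log\Delta)$ rounds of \CONGEST when we need the dissemination step for small $\avganti_C$, matching the statement. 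The main obstacle I anticipate is the concentration argument for $M_r$ under the $\avganti_C \ge \Omega(1)$ regime: one genuinely cannot get high probability from a single round, so the whole small-$\avganti_C$ case has to be handled by the sample-many-colors-and-gather approach, and making the interaction with external (concurrently colored) neighbors rigorous there — showing they block only an $\epsilon$-fraction of the relevant color space — is the delicate part, relying crucially on the preconditioning bound $\emax = \Delta/\eta$ from \cref{thm:strong-ACD}.
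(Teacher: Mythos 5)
Your high-level plan matches the paper's structure: a two-case dichotomy between $\avganti_C = \Omega(\log n)$ (iterate single-color trials and show per-round concentration) and $\avganti_C = O(\log n)$ (sample $\Theta(\log n)$ colors at once, gather the $O(\log^3 n)$ candidate monochromatic anti-edges across $C$, and compute the matching locally). You also correctly identified that the constraint $K < 1/(18\epsilon)$ is what ensures enough available colors survive the process, and that the preconditioning bound $\emax = \Delta/\eta$ is what makes the external-neighbor interference tolerable.

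However, there is a genuine gap in the concentration step for the large-$\avganti_C$ case, and your write-up acknowledges but does not close it. You propose that each node try a single uniformly random color and then look for monochromatic anti-edges; under this model the per-color contributions to $M_r$ are \emph{not} independent (if $u$ picks $c$ it does not pick $c'$), which is exactly why a direct Chernoff bound fails and why you are reaching for read-$k$/bucketing/Talagrand without a clean Lipschitz bound. The paper sidesteps this entirely with a different sampling rule (\cref{alg:matching-compact}, \cref{step:clique-slack-RC}): each node samples each color into $L_{2,i}$ \emph{independently} with probability $1/(4\Delta)$, and then becomes inactive unless it sampled exactly one color. Because colors are sampled independently, the per-color indicators $A_c$ (``some anti-edge sampled $c$'') are genuinely independent across colors, so the lower bound $A \ge \avganti_C/(640e)$ follows from a vanilla Chernoff bound; Talagrand is only needed for the correction term $B$ counting anti-edges lost to nodes that over-sampled, which is $2$-Lipschitz and $2$-certifiable. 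The ``heavy colors'' device (\cref{claim:heavy-colors}) is also doing real work here to guarantee $\Exp[A_c]$ is uniformly bounded below for $\Omega(\Delta)$ colors; your proposal does not have an analogue of this. Finally, a small arithmetic slip: retention probability under a single uniform trial is $(1-1/(\Delta+1))^{d} \approx 1/e$, not $1 - 1/e$, though both are constants so this does not affect the outcome. In short: same algorithmic skeleton, but you should adopt the independent-per-color sampling to make the concentration go through cleanly rather than fighting the dependencies introduced by a one-color-per-node trial.
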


\begin{corollary}
    \label{cor:promising}
    After Step 2,
    there are at most $\Delta/\alpha$ \emph{unpromising} nodes in $C$.
\end{corollary}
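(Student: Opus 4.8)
The plan is to obtain \cref{cor:promising} from \cref{thm:colorful-matching} by a one‑line averaging (Markov) argument on the anti‑degrees. Recall from \cref{lem:clique-palette} that a node $v\in C$ is \emph{unpromising} exactly when $a_v>|M|$, where $M$ is the colorful matching produced by Step~2. Since $\sum_{v\in C} a_v=\avganti_C\,|C|$, at most $\avganti_C\,|C|/t$ nodes of $C$ can have anti‑degree exceeding a given threshold $t$; combined with $|C|\le(1+\epsilon)\Delta$ (\cref{part:SACD-upp}) and a lower bound of the form $|M|=\Omega(\avganti_C)$ on the matching size, this is all that is needed.

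First I would invoke \cref{thm:colorful-matching} with the constant $K\eqdef 2\alpha$, which is legitimate since $2\alpha<1/(18\epsilon)$ by \eqref{eq:eps-alpha}. Then I split on the size of $\avganti_C$. If $\avganti_C\ge 1/(2\alpha)$, \cref{thm:colorful-matching} guarantees, w.h.p., a colorful matching with $|M|\ge K\avganti_C=2\alpha\,\avganti_C$, so every unpromising node satisfies $a_v>2\alpha\,\avganti_C$, and by the averaging bound the number of such nodes is at most $\tfrac{\avganti_C\,|C|}{2\alpha\,\avganti_C}=\tfrac{|C|}{2\alpha}\le\tfrac{(1+\epsilon)\Delta}{2\alpha}<\tfrac{\Delta}{\alpha}$. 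If instead $\avganti_C<1/(2\alpha)$, I would not use the matching at all: since $|M|\ge 0$ trivially, every unpromising node has $a_v\ge 1$, so the count of unpromising nodes is at most $\sum_{v\in C}a_v=\avganti_C\,|C|<\tfrac{(1+\epsilon)\Delta}{2\alpha}<\tfrac{\Delta}{\alpha}$. In both regimes at most $\Delta/\alpha$ nodes of $C$ are unpromising, which is the claim.

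This is essentially a bookkeeping corollary, so I do not anticipate a genuine technical obstacle. The only points that need a little care are: (i) checking that the chosen constant $K=2\alpha$ respects the hypothesis $K<1/(18\epsilon)$ of \cref{thm:colorful-matching}, which is precisely the second inequality of \eqref{eq:eps-alpha}; (ii) handling the low‑anti‑degree regime $\avganti_C<1/(2\alpha)$ separately, because there \cref{thm:colorful-matching} provides no matching‑size guarantee (indeed $M$ may be empty), so one must fall back on the raw bound $\sum_v a_v=\avganti_C|C|$; and (iii) noting that the failure probability is inherited solely from \cref{thm:colorful-matching} — the averaging step being deterministic — so the bound holds w.h.p.\ simultaneously for all cliques $C$, as needed later.
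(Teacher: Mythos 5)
Your proposal is correct and matches the paper's own proof essentially verbatim: the same case split on $\avganti_C$ versus $1/(2\alpha)$, the same choice $K=2\alpha$ justified by \eqref{eq:eps-alpha}, and the same Markov/averaging bound yielding at most $(1+\epsilon)\Delta/(2\alpha)\le\Delta/\alpha$ unpromising nodes in each case. No issues.
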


\begin{proof}
In a clique $C$ with $\avganti_C \le 1/(2\alpha)$, at most $|C|/(2\alpha)$ nodes have anti-degree at least 1, by Markov inequality.
In a clique $C$ with $\avganti_C \ge 1/(2\alpha)$, we compute a colorful matching $M$ with $2\alpha\cdot\avganti_C$ edges. By Markov inequality, at most $|C|/(2\alpha)$ nodes have anti-degree more than $|M|$. In both cases, at most $(1+\epsilon)\Delta/\K \le 2\Delta/(2\alpha) = \Delta/\alpha$ nodes are unpromising.
\end{proof}

\subsection*{Step 3: Coloring Dense Nodes}

\paragraph{Reducing the number of uncolored nodes.}
When nodes try colors from their palettes, they get colored with constant probability. In our setting, nodes cannot directly sample colors from their palette as they must use colors from the lists they sampled in \cref{alg:palette-sampling}. If they have large enough palette though, (uninformed) sampling $O(\log n)$ colors in $[\Delta+1]$ is enough to find one in their palette with constant probability.

\begin{lemma}
\label{lem:reduce}
There exists a $O(\log\log n)$-round algorithm such that, with high probability, the number of uncolored nodes in each almost-clique is afterwards at most $\Delta/(\alpha\beta)$. Furthermore, nodes only use $O(\log n\cdot\log\log n)$ fresh random colors.
\end{lemma}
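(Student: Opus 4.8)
The plan is to run $O(\log\log n)$ rounds of an uninformed random-color trial, where in each round every still-uncolored node $v$ in an almost-clique samples a batch of $\Theta(\log n)$ fresh colors from its list $L_1$ (or a dedicated sub-list), and keeps the first sampled color that lies in its current palette $\Psi_v$ and is not simultaneously grabbed by a neighbor in $\sparse{G}$. First I would establish the single-round success probability: after Step 2 (colorful matching), Corollary \ref{cor:promising} guarantees that all but $\Delta/\alpha$ nodes are promising, and for a promising node $v$ Lemma \ref{lem:clique-palette} gives $|\Psi_C\cap\Psi_v|\ge|\hC|$, so as long as $\hat C$ is still reasonably large, a promising node has palette at least, say, $\Omega(|\hC|)$. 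Sampling one uniform color in $[\Delta+1]$ lands in $\Psi_v$ with probability $\Omega(|\hC|/\Delta)$; sampling $\Theta(\log n)$ of them boosts this. But since we want the uncolored count to drop by a constant factor \emph{per round} even while $|\hC|$ shrinks, the clean statement is: conditioned on $|\hC|=k$, each promising node gets colored with probability $\Omega(k/\Delta)\cdot$(number of tries), and by growing the number of tries, or just by the standard observation that with slack proportional to degree a constant fraction gets colored each round, $|\hC|$ contracts geometrically.

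Next I would set up the contraction argument carefully. Write $k_r$ for the number of uncolored nodes in $C$ after round $r$. I claim $\Exp[k_{r+1}\mid k_r]\le k_r/2 + O(\Delta/\alpha)$ — the $O(\Delta/\alpha)$ slack term absorbs the unpromising nodes plus any nodes spoiled by external conflicts (here the preconditioning of Step 1, giving external degree $\le\emax=\Delta/\eta$ with $\eta=\Theta(\beta)$, ensures at most a small fraction of a node's $\Theta(\log n)$ attempted colors are blocked from outside, cf.\ the discussion around \cref{lem:conditionning-outside}). The subtlety is concentration: once $k_r$ drops below $\Theta(\log n)$ a Chernoff bound on the number of newly colored nodes no longer gives a high-probability guarantee. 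This is exactly why the target is $\Delta/(\alpha\beta)=\Theta(\Delta/\log n)$ rather than something smaller — we only need to contract from $(1+\epsilon)\Delta$ down to $\Theta(\Delta/\log n)$, which takes $O(\log((1+\epsilon)\Delta / (\Delta/\log n)))=O(\log\log n)$ rounds, and throughout this range $k_r\ge\Theta(\Delta/\log n)\gg\Delta/\alpha$ so the $O(\Delta/\alpha)$ slack term is negligible and the number of nodes colored per round is $\Omega(k_r)\ge\Omega(\Delta/\log n)=\Omega(\log^3 n)$, large enough for a Chernoff bound and a union bound over all $\le n$ cliques. I would make the martingale/Chernoff step precise using the same machinery as \cref{lem:expansion} (bounded-difference, with each node's decision affecting the count by $O(1)$ after accounting for the $\sparse{G}$-degree being $O(\beta^4)$).

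For the round and randomness accounting: $O(\log\log n)$ rounds, each node trying $O(\log n)$ colors per round, gives the claimed $O(\log n\cdot\log\log n)$ fresh random colors; these are drawn from $L_1$ (or the portion of it reserved for this step), and by \cref{lem:lists-size}-type Chernoff bounds the lists are large enough to supply them. I'd also note that after this step, the "$k$ uncolored $\Rightarrow$ $k$ available colors" invariant we rely on later (via \cref{lem:clique-palette}) is preserved, since we only ever color nodes with colors from their palettes. The main obstacle, as flagged above, is the concentration argument in the tail of the contraction: I expect I would need to argue the bounded-difference inequality applies with the right Lipschitz constant (using that each uncolored node has only $O(\beta^4)$ neighbors in $\sparse{G}$, so flipping one node's randomness changes the count of newly colored nodes by $O(1)$), and verify that the failure probability $\exp(-\Omega(\Delta/\log n))=\exp(-\Omega(\log^3 n))$ survives the union bound over all $O(\log\log n)$ rounds and all $\le n$ almost-cliques.
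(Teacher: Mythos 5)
Your overall strategy is the same as the paper's: $O(\log\log n)$ iterations of an uninformed color trial in which each uncolored node samples $\Theta(\log n)$ uniform colors in $[\Delta+1]$ to locate one in its palette, with geometric contraction of $|\hC|$ and with concentration available precisely because the count never drops below $\Delta/(\alpha\beta)=\Omega(\log^3 n)$. The round and color accounting is also right. However, there is a concrete error in your contraction argument. You write $\Exp[k_{r+1}\mid k_r]\le k_r/2+O(\Delta/\alpha)$, with the additive term absorbing the unpromising nodes, and then dismiss that term by claiming $k_r\ge\Theta(\Delta/\log n)\gg\Delta/\alpha$. This inequality is backwards: $\alpha=500$ is a constant while $\beta=\Theta(\log n)$, so $\Delta/\alpha\gg\Delta/(\alpha\beta)=\Theta(\Delta/\log n)$. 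A recursion with an additive floor of $O(\Delta/\alpha)$ stalls at $\Theta(\Delta/\alpha)$ uncolored nodes and never reaches the target $\Delta/(\alpha\beta)$, so as written your argument does not prove the lemma. The paper carries no such floor: its proof invokes \cref{lem:clique-palette} to give every uncolored node $|\Psi_C\cap\Psi_v|\ge k$ and hence a per-round success probability of at least $1/16$, making the contraction purely multiplicative. If you insist on excluding unpromising nodes from the success count, you must either show that they too have palettes of size $\Omega(k)$ (e.g., via $|\Psi_v|\ge\hatd_v+1$ together with the colorful matching) or bound their number by $O(\Delta/(\alpha\beta))$; you do neither.

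A second, smaller gap is the conflict-resolution rule. ``Keeps the first sampled color that \dots is not simultaneously grabbed by a neighbor'' is not workable in early rounds: when $k=\Theta(\Delta)$, the union of the neighbors' $\Theta(\log n)$-size samples covers essentially all of $[\Delta+1]$, so every color is ``grabbed'' and nothing is kept. The paper instead has each node commit to \emph{trying a single color} (the first of its samples that lands in $\Psi_v$, which is uniform in $\Psi_v$), activates each node independently with probability $1/4$, and uses a Chernoff bound to show that at most $\hatd_v/2<|\Psi_v|/2$ neighbors are active; since each active neighbor blocks only the one color it tries, the tried color survives with probability at least $1/2$ under any conditioning on the neighbors' choices. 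Your appeal to ``slack proportional to degree'' is not the right invariant here (dense nodes have essentially no slack); the relevant fact is the $\deg+1$ property $|\Psi_v|\ge\hatd_v+1$. Finally, for concentration the paper uses the martingale/domination form of \cref{lem:chernoff} rather than bounded differences; note that color conflicts occur between neighbors in $G$, not only in $\Gsparse$, so the $O(\beta^4)$ bound on the sparsified degree does not by itself give an $O(1)$ Lipschitz constant.
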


\begin{proof}
Consider a clique $C$ with $k \ge \Delta/(\alpha\beta)$ uncolored nodes. By \cref{lem:clique-palette}, every node has $|\Psi_C\cap\Psi_v| \ge k$ colors in its palette.
A node is set as active (independently) with probability $1/4$. For a node $v$, denote its uncolored degree by $\hatd_v$. If $\hatd_v< |\Psi_v|/2$, a random color in $\Psi_v$ colors $v$ with probability $1/2$. Otherwise, by the classic Chernoff bound, with probability $1-e^{-\Omega(\hatd_v)} \ge 1-e^{-\Omega(k)} \ge 1-1/\poly(n)$, node $v$ has at most $\hatd_v/2$ active neighbors. Therefore, for any conditioning on the colors tried by active neighbors, $v$ retains a uniform random color $c\in\Psi_v$ with probability $1/2$. 
If $v$ samples $t=\alpha\beta$ colors in $[\Delta+1]$ independently, it fails to find at least one color from its palette with probability $(1-|\Psi_v\cap\Psi_C|/\Delta)^t \le (1-1/(\alpha\beta))^t \le \exp(-t/(\alpha\beta)) \le 1/2$. Overall, a fixed node $v$ retains a color with probability $1/4\times1/2\times 1/2=1/16$. So the expected number of uncolored nodes $\Exp[|\hC|]$ in $C$ decreases by constant factor each round. By Chernoff with domination \cref{lem:chernoff}, it holds with probability $1-e^{-\Omega(|\hC|)}=1-1/\poly(n)$ at each round (because $\Delta \ge \Omega(\log^4 n)$). After $O(\log(\alpha\beta))=O(\log\log n)$ rounds, with high probability, $|\hC| < \Delta/(\alpha\beta)$.
\end{proof}

\paragraph{Finishing the coloring with augmenting paths.} 
Now that the number of uncolored nodes is small, we resort to the new technique of coloring with \emph{augmenting paths} outlined in \cref{sec:intro-augtree}.

\begin{restatable}{theorem}{thmaugpath}
\label{thm:augpath}
Assume all cliques have at most $\Delta/(\alpha\beta)$ uncolored nodes and at most $\Delta/\alpha$ unpromising nodes. There is a $O(\log\Delta)$-round algorithm \augpath that colors a constant fraction of the nodes in each clique with high probability.
\end{restatable}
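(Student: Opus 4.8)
The plan is to analyze a single iteration of the Augmenting Path Algorithm of \cref{fig:high-level-aug-path} and show it colors a constant fraction of the uncolored nodes with high probability; running $O(\log \Delta)$ such iterations (since a constant fraction is colored each time and the starting count is at most $\Delta/(\alpha\beta) \le \poly\log n$) then gives the theorem, with each iteration costing $O(\log \Delta)$ rounds --- wait, that would give $O(\log^2\Delta)$; in fact since the number of uncolored nodes is already only $O(\log^3 n/\log n) = O(\log^2 n)$ and each iteration's growing phase costs $O(\log(\Delta/(\alpha k)))= O(\log\Delta)$ rounds, we need $O(\log\log n)$ iterations to drive $k$ to zero, for a total of $O(\log\Delta\cdot\log\log n) = O(\log^2\Delta)$; but the theorem only claims one $O(\log\Delta)$-round phase colors a constant fraction, so I only need to analyze one iteration. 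First I would fix a clique $C$ with $k = |\hC| \le \Delta/(\alpha\beta)$ uncolored nodes and set $\ell \eqdef \Delta/(\alpha k) \ge \beta$ as the target number of leaves per root. I would show the \growbranch/\growtree process (\cref{alg:growtree}) succeeds: starting from each uncolored root, each of the $O(\log \ell)$ rounds multiplies the number of live leaves by a constant factor, using the edge-expansion of \cref{lem:expansion} to guarantee that a leaf set $S$ of size $\le 3\Delta/4$ has $\Omega(|S|)$ fresh neighbors in $\sparse C$, and using \cref{claim:color-group} to ensure that a leaf sampling $O(\log n)$ colors from $\Lh_{3,i}$ can find, for each sampled color $c$ not yet blocked, a node in $C$ colored $c$ within two hops --- so the tree can be extended. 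The key counting point: after removing colors used by external neighbors ($\le \emax = \Delta/\eta$ of them), by nodes already in the forest ($\le O(k\ell) = O(\Delta/\alpha)$), and by other leaves, a constant fraction of each leaf's $\Theta(\beta)$ sampled colors survive, so in expectation the branch grows by a constant factor; a Chernoff bound on $\Omega(\beta) = \Omega(\log n)$ surviving colors makes this hold w.h.p.\ at every leaf, so after $O(\log \ell)$ rounds every root has $\Omega(\ell)$ leaves w.h.p.

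Next I would analyze the harvesting phase. Condition on the coloring so far and on the topology of all trees; the only remaining randomness is the leaves' color trials in $[\Delta+1]$. By \cref{lem:clique-palette} and \cref{cor:promising}, all but $\le \Delta/\alpha$ \emph{unpromising} nodes of $C$ have $|\Psi_C \cap \Psi_v| \ge |\hC| = k$ available colors. I would invoke \cref{def:spoiled-nodes} / the preconditioning guarantee (the argument of \cref{sec:intro-preconditionning}, formalized as \cref{lem:conditionning-outside}): because \cref{thm:strong-ACD} caps external degree at $\emax = \Delta/(160\alpha\beta)$, with high probability over the external randomness only a small constant fraction (say $\le \Delta/(100\alpha)$) of nodes in $C$ are \emph{spoiled}, i.e.\ cannot \emph{adopt} at least $k/2$ of their available colors. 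Since spoiled and unpromising nodes together are a $\le 1/50$-fraction of $C$, and the augmenting trees were grown by a process that explores $\sparse C$ essentially uniformly, \cref{lem:unspoiled-leaves} gives that a constant fraction of leaves in each tree are good (promising and unspoiled). In the case $k \ge \Omega(\log n)$ (\cref{alg:harvest-high}): each good leaf tries one uniform color in $[\Delta+1]$, succeeds in finding an adoptable available color with probability $\ge k/(2\Delta)$, and these trials are independent across leaves in a fixed tree; a root with $\Omega(\ell) = \Omega(\Delta/(\alpha k))$ good leaves thus has at least one successful leaf with probability $1 - (1-k/(2\Delta))^{\Omega(\Delta/(\alpha k))} \ge 1 - e^{-\Omega(1/\alpha)} = \Omega(1/\alpha)$, so $\Omega(k/\alpha)$ roots get colored in expectation; since $k \ge \Omega(\log n)$ a Chernoff bound (with the trees' recoloring-path conflicts handled by a constant-factor loss, e.g.\ an independent-set / random-priority argument among paths sharing a node) shows $\Omega(k)$ roots are actually colored w.h.p.

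For the case $k \le O(\log n)$ (\cref{alg:harvest-low}) we cannot get a high-probability bound by concentrating on $k$; here I would instead have each leaf try $\Theta(\log n)$ random colors, so that each root, w.h.p., learns a list of $\Theta(k)$ distinct available-and-adoptable colors witnessed by leaves in its tree (amplifying per-leaf success by the $\Theta(\log n)$ trials). The roots then need to agree on an injective assignment of colors to roots; this is a perfect matching in a bipartite graph on $\le k$ roots and their witnessed colors, and a Hall-type argument (any $j$ roots collectively witness $\ge j$ colors w.h.p., since each sees $\Theta(k) \ge j$ of them with lots of independent randomness) guarantees it exists. To compute it distributedly without centralization beyond the clique: each root broadcasts its $O(k) = O(\log n)$-color witnessed list and the IDs of the corresponding leaves through $\sparse C$; by \cref{cor:spar-clique-low-diam} the diameter is $O(\log \Delta)$, and the total data is $O(k^2) = O(\log^2 n)$ items, which can be pipelined through the $O(\log\Delta)$-hop low-diameter sparsified clique in $O(\log\Delta)$ rounds using $O(\log n)$-bit messages (the same dissemination trick as in \cref{sec:intro-colorful-matching}); every node then locally computes the same matching and the roots recolor their augmenting paths. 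This colors \emph{all} $k$ uncolored nodes w.h.p.\ when $k \le O(\log n)$, which is more than the claimed constant fraction.

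The main obstacle I expect is the conditioning/independence bookkeeping in the harvesting step: the trees were grown using randomness from lists $L_3$, the ``spoiled'' status depends on randomness of external neighbors (in other cliques being colored concurrently), and the harvest color trials use yet more fresh randomness --- and the events ``leaf $u$ succeeds'' are not independent of ``leaf $u'$ succeeds'' if the two trees overlap or share forest nodes whose colors were fixed earlier. The fix is the standard one used throughout this paper: partition each node's list $L(v)$ into disjoint sublists ($L_3$ for growing, a fresh block of $L_1$ for harvesting), so the growing and harvesting randomness are independent; handle the external randomness by the ``even conditioned on arbitrary choices at distance $\ge 2$'' clause of \cref{lem:multi-trial}-style arguments and the preconditioning lemma \cref{lem:conditionning-outside}; and handle path conflicts within $C$ by only recoloring an independent set of augmenting paths (losing at most a constant factor), which is safe because paths have length $O(\log\Delta)$ and live in the bounded-degree graph $\sparse C$. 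The rest --- the expansion-based growth, the $k/(2\Delta)$ success bound, the Hall condition --- follows from the cited lemmas with routine Chernoff and union bounds over the $\poly\log n$ cliques and nodes.
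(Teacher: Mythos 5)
Your plan follows the paper's proof essentially step for step (balanced tree growth, bounding spoiled and unpromising nodes via the preconditioning and the colorful matching, and a harvest phase split on whether $k\ge\beta$), but two steps as written would not go through. First, in the high-$k$ harvest you claim the leaves' trials ``are independent across leaves in a fixed tree'' and propose to handle conflicts by ``an independent-set / random-priority argument among paths sharing a node.'' This misidentifies the dependence: the trees are node-disjoint by construction (\cref{step:filter} forces uniquely sampled colors, so paths from different roots never share vertices). The real issue is that whether tree $i$ successfully recolors its root depends on the colors tried by leaves of \emph{other} trees, which can block the color of tree $i$'s successful leaf; the indicator ``tree $i$ succeeds'' is therefore not a sum of independent terms and plain Chernoff does not apply. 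The paper handles this by writing the count as a difference $X=Y-Z$ of Lipschitz, certifiable functions and invoking Talagrand's inequality (\cref{lem:talagrand-difference}), exactly as in \cref{lem:harvest-high}; some such argument is needed and your proposal is missing it.

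Second, in the low-$k$ case your Hall-type argument for a \emph{perfect} matching between roots and witnessed colors is not justified: from ``each root witnesses $\Theta(k)$ colors'' you can only conclude that any $j$ roots collectively witness $\Omega(k)$ colors, which verifies Hall's condition only for $j\le O(k)$ with the right constants --- and the witnessed color sets of different roots may overlap heavily. The paper does not claim a perfect matching; it observes that the bipartite graph on $\hC\cup[\Delta+1]$ has $\Omega(k^2)$ edges and maximum degree $k$, so a greedy matching of size $\Omega(k)$ exists, which suffices for the constant-fraction claim. Your proposal should either drop the perfect-matching claim and use this weaker greedy bound (which your own data already supports), or supply an actual verification of Hall's condition for all set sizes. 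The remaining ingredients you cite --- \cref{lem:conditionning-outside}, \cref{lem:unspoiled-leaves}, the $k/(2\Delta)$ per-leaf success probability, the $O(\log\Delta)$-round dissemination via the expander structure of $\sparse{C}$ --- match the paper's proof.
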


Before giving a detailed description and proof of the \augpath algorithm in \cref{sec:augmenting-path}, we conclude this section with the proof of our main theorem.

\subsection*{Proof of \cref{thm:DPS}}
If $\Delta = O(\log^4 n)$, nodes can store all their adjacent edges and color the graph in $O(\log^3\log n)$ rounds of \congest \cite{BEPSv3,GK21}.

Assume now $\Delta \ge \Omega(\log^4 n)$. We precondition almost-cliques (using \cref{thm:strong-ACD}) with $\eta=O(\log n)$ (\cref{eq:emax}) in $O(\log\Delta)$ rounds and using $O(\log^2 n)$ random colors. 
The colorful matching requires $O(\log\Delta)$ rounds (\cref{thm:colorful-matching}) and almost-cliques have at most $\Delta/\alpha$ unpromising nodes (\cref{cor:promising}). 
For $O(\log\log n)=O(\log\Delta)$ rounds, nodes try random colors from their palettes. All cliques are left with $\Delta/(\alpha\beta)$ uncolored nodes (by \cref{lem:reduce}). 
We run \augpath for $O(\log\Delta)$ times. Each time, the number of uncolored nodes decreases by a constant factor with high probability (\cref{thm:augpath}). Overall, we use $O(\log^2\Delta)$ rounds to complete the coloring. \Qed{thm:DPS}

\section{Augmenting Paths}
\label{sec:augmenting-path}

This section is dedicated to the central argument of \cref{thm:DPS}.

\thmaugpath*

We first give a high-level description of the complete algorithm in \cref{sec:augmenting-path-overview}. \cref{sec:aug-grow} contains the proofs related to the first part of the algorithm: growing the augmenting trees. We call the phase of recoloring augmenting paths \emph{harvesting the trees} and address it in \cref{sec:harvest-tree}.

\subsection{Detailed Description of the Algorithm}
\label{sec:augmenting-path-overview}

\begin{Algorithm}
\label{alg:aug}
\augpath.

\textbf{Input:} A partial coloring such that each almost-clique has $|\hC|<\Delta/(\alpha\beta)$ uncolored nodes, at most $\Delta/\alpha$ unpromising nodes, and each $v\in C$ is connected to at most $\emax$ nodes in other almost-cliques.

\vspace{1em}
For each almost-clique $C$, do \emph{in parallel}:
\begin{enumerate}
\item Count the number of uncolored nodes $k = \card{\hC}$.
\item $F=\growtree$ (\cref{alg:growtree}). 
\item if $k\ge \beta$, run $\harvest(k,F)$ for high $k$ (\cref{alg:harvest-high}).

\item if $k < \beta$, run $\harvest(k,F)$ for small $k$ (\cref{alg:harvest-low}).
\end{enumerate}
\end{Algorithm}

\begin{definition}[Augmenting Path]
\label{def:augmenting-path}
Let $P=u_1u_2, \ldots, u_t$ be a path in $C$ where $u_1$ is uncolored and $u_i$ has color $c_i$ for each $2 \le i \le t$. 
We say it is an \emph{augmenting path} if $u_t$, the colored endpoint of $P$, knows a color $c\neq c_t$ such that if we recolor each node $u_i$ using color $c_{i+1}$ for $i\in[t-1]$ and $u_t$ using $c$, the coloring of the graph remains proper.
\end{definition}

From an uncolored node $u=u_1$ in a clique with one uncolored nodes. Start with the path $P=u_1$ and as long as $P=u_1,\ldots, u_i$ is not augmenting, do the following: $u_i$ samples a color $c\in[\Delta+1]$, if $c$ is not used in the clique $P$ is an augmenting path; if $c$ is used by a node $u_{i+1}\in C$, add $u_{i+1}$ to the end of $P$ and repeat this process.
Unfortunately, this algorithm is not fast enough as each time we extend $P$, we find an augmenting path with probability $1/\Delta$. Hence, we need to spend $\Omega(\Delta)$ rounds exploring the clique before finding the one available color. To speed-up this process, we grow a tree of many augmenting paths.

\begin{definition}[Augmenting Tree/Forest]
\label{def:augmenting-tree}
An \emph{augmenting tree} is a tree such that each root-leaf path is augmenting, provided the leaf finds an available color.
An \emph{augmenting forest} is a set of disjoint augmenting trees.
\end{definition}

Say we computed an augmenting forest such that all trees have $\Omega(\Delta/k)$ leaves. Since a leaf finds an available color in $\Psi_C$ with probability $\Omega(k/\Delta)$, each tree contains an augmenting path with constant probability. 

\paragraph{Technical challenges.}
This process can fail in several ways.
\begin{enumerate}
\item\label[tech-issue]{part:issue-balanced} 
We need to show a constant probability of progress \emph{for each uncolored node}. It is not enough to have that all leaves in $C$ recolor their path with probability $\Omega(k/\Delta)$. We need to show that (with constant probability) \emph{each tree} finds a leaf with which it can recolor its root. Moreover, trees connecting to different roots must be disjoint.
\item\label[tech-issue]{part:spoiled-nodes} Consider a tree $T$ and one of its leaves $v\in T$. If $v$ has a high anti-degree, it has few available colors among the ones available in the clique (\cref{lem:clique-palette}). Similarly, it is possible that all external neighbors of $v$ block the $k$ colors it has available. We call such nodes \emph{spoiled} and must ensure that they only represent a small fraction of every tree.
\item\label[tech-issue]{part:issue-inside} Assuming all trees have $\Theta(\Delta/k)$ \emph{unspoiled} leaves, the leaves used to recolor augmenting paths in each tree have to use different colors. We say that we \emph{harvest} the trees. When $k$ is $\Omega(\log n)$, if each leaf try one color, w.h.p., the number of conflicts between trees is small, so the issue is merely to detect them. When $k$ is $O(\log n)$, as leaves try $\Theta(\log n)$ colors (to ensure to be successful w.h.p.), many conflicts may arise.
\end{enumerate}

\paragraph{Growing balanced augmenting trees.}
To overcome \cref{part:issue-balanced}, when growing the trees, we ensure \emph{they all grow at the same speed}. More precisely, the \growtree algorithm (\cref{alg:growtree}) takes as input a forest $F$ and finds \emph{exactly} $\beta$ children for each leaf in $F$ for $\floor{\log_\beta(\Delta/(\alpha k))}$ rounds so that each tree has $\Omega(\Delta/\beta k)$ leaves. Nodes then sample a precise number of colors to ensure that w.h.p.\ the majority of them finds enough leaves to grow trees to $\Theta(\Delta/k)$ leaves (\cref{lem:unspoiled-leaves}).

\paragraph{Bounding the number of spoiled nodes.} When a leaf $v$ attempts to recolor its path to some uncolored node, it must sample colors in $\Psi_C\cap\Psi_v$. This might be a problem for two reasons. First, if $v$ is unpromising (\cref{lem:clique-palette}). 
The second possibility is it that its $k$ colors in $\Psi_C\cap\Psi_v$ are blocked by external neighbors. The latter eventuality demands more caution. In particular, if we allow adversarial behavior on the outside, it might be that external neighbors block the one remaining color in $\Psi_C$ for all nodes.

\begin{definition}[Spoiled Node]
\label{def:spoiled-nodes}
We say a node $v\in C$ is \emph{spoiled} if $|\Psi_C\cap \Psi_v| \le k/2$ \emph{after conditioning on the outside}.
\end{definition}

We deal with \cref{part:spoiled-nodes} in two ways. 
We previously computed a colorful matching (\cref{def:colorful-matching}) of size $\Theta(\avganti_C)$ for a large enough constant to reduce the number of unpromising nodes to a sufficiently small fraction of $C$ (\cref{cor:promising}).
Second, we show that it is very unlikely that nodes outside of $C$ block more than $k/2$ colors for a large fraction of $C$. It stems from the two following observations
\begin{itemize}
    \item external neighbors in other cliques try $\Theta(\log n)$ \emph{uniform colors} in $[\Delta+1]$, and 
    \item the preconditioning of almost-cliques  reduced the external degree to $\emax=O(\Delta/\log n)$. (\cref{thm:strong-ACD})
\end{itemize}
So, with high probability, $\Omega(\Delta)$ nodes in $C$ have $\Omega(k)$ available colors in $\Psi_C$.
More precisely, in \cref{lem:conditionning-outside}, we show that with high probability \emph{over the randomness outside of $C$}, at most $3\Delta/\alpha$ nodes are spoiled in $C$. Then, \cref{lem:unspoiled-leaves} show that with high probability \emph{over the randomness inside} $C$, all trees have $\Theta(\Delta/k)$ unspoiled leaves.

\paragraph{Harvesting trees.}
While \cref{part:spoiled-nodes} was about the conflict with external neighbors, \cref{part:issue-inside} is about the conflicts inside the clique. Say leaves sample one color. For a fixed tree $T_u$ and one of its unspoiled leaves $v$, the expected number of colors blocked in $\Psi_C\cap\Psi_v$ by sampling in other trees is $(k-1)\cdot \Theta(\Delta/\alpha k)\cdot k/\Delta = \Theta(k/\alpha)$. When $k=\Omega(\log n)$, we get concentration and show that w.h.p.\ a constant fraction of the leaves still have $\Theta(k)$ colors available, even after revealing the randomness in other trees. Therefore, as long as $k=\Omega(\log n)$, \harvest colors a constant fraction of the uncolored nodes with high probability (\cref{lem:harvest-high}).

When $k=O(\log n)$, leaves sampling $\Theta(\log n)$ colors ensure w.h.p.\ that every leaf has $\Theta(\log n)$ colors to choose from. The drawback to such intensive sampling is that we must resolve conflicts between trees. Using the high expansion property of the sparsified graph, it is possible to deliver to every node in the clique the list of $\Theta(\log n)$ available colors to each of the $k$ uncolored nodes. Conflicts are then resolved locally (by every node).

\subsection{Growing the Trees}
\label{sec:aug-grow}

\begin{Algorithm}
\label{alg:growbranch}
\label{alg:growtree}
\growtree (for the $\ell$-th iteration of \augpath).

\textbf{Input:} An almost-clique $C$ with $|\hC|=k < \Delta/(\alpha\beta)$ uncolored nodes, a colorful matching $M$ of size at most $2\alpha\cdot\avganti_C$ and at most $\Delta/\alpha$ unpromising nodes.

\vspace{1em}

Define $d = \floor*{\log_\beta\frac{\Delta}{\alpha k}}$ and $U_0 = \hC$.

For $i=0$ to $d$,
\begin{enumerate}[label=(\textbf{G}\arabic*)]
\item\label[step]{step:sample} If $i < d$, let $x=5\beta$. 

For $i=d$, nodes compute the exact size of $|U_d|$ in $O(d)$ rounds and set $x=\floor*{\frac{6\Delta}{\alpha|U_d|}}$.

Each $u\in U_i$ picks a set $S_u$ of $x$ fresh random colors from $L_{3,\ell}(u)$.
\item\label[step]{step:filter} For each active node $u\in U_i$, let $S_u'\subseteq S_u$ be the colors $c$ that are
\begin{enumerate}[label=\roman*)]
    \item\label[step]{step:filter-ext} unused by colored external neighbors of $u$ and $c\notin L_{3,\ell}(\bigcup_{C'\neq C} C'\cap N(u))$;
    \item unused by nodes of $B_i=U_{\le i}\cup M$;
    \item uniquely sampled: $c\notin\bigcup_{v\in U_i\setminus\set{u}} S_v$; and
    \item \underline{used} by a colored node $v_{u,c}\in N(u)\cap C$
\end{enumerate}
\item\label[step]{step:grow} For each $u\in U_i$, choose $y_u$ arbitrary colors $c_1, \ldots, c_{y_u}\in S_u'$ where 
\[
y_u = \left\{\begin{array}{ll}\beta&\text{if } i < d\\ |S_u'|&\text{if }i=d\end{array}\right. \ .
\]
Define $v_{u,j}$ as the node of $(N(u)\cap C)\setminus B_i$ with color $c_j$ for all $j\in[y_u]$.

Let $U_{i+1}=\bigcup_{u\in U}\set{v_{u,1},\ldots, v_{u,y_u}}$ and $\pi(v_{u,j})=u$ for all $j\in[y_u]$ and $u\in U_i$.
\end{enumerate}

\end{Algorithm}

\paragraph{Bounding Conflicts with the Outside.}
In the next lemma, we bound the number of spoiled nodes in $C$ (\cref{def:spoiled-nodes}). Note that the probability is taken only over the randomness of nodes \emph{outside of $C$}.

\begin{lemma}
    \label{lem:conditionning-outside}
    Consider an almost-clique $C$. With high probability over $L_{3,\ell}(V\setminus C)$, almost-clique $C$ contains
    at most $3\Delta/\alpha$ spoiled nodes.
\end{lemma}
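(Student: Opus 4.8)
The goal is to show that, with high probability over the color samples $L_{3,\ell}$ of nodes \emph{outside} $C$, at most $3\Delta/\alpha$ nodes of $C$ are spoiled, i.e.\ have $|\Psi_C\cap\Psi_v|\le k/2$ after conditioning on the outside. Since the clique palette $\Psi_C$ is determined by the coloring inside $C$ (which we treat as fixed), and since by \cref{lem:clique-palette} every promising node $v$ satisfies $|\Psi_C\cap\Psi_v|\ge |\hC| = k$ \emph{before} accounting for colors newly tried by external neighbors in this iteration, a node can only become spoiled if either (a) it is unpromising, or (b) its external neighbors in other cliques collectively block more than $k/2$ of the $k$ colors in $\Psi_C\cap\Psi_v$. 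By \cref{cor:promising} there are at most $\Delta/\alpha$ unpromising nodes, so it suffices to show that at most $2\Delta/\alpha$ nodes are of type (b).

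The plan is to bound the number of type-(b) nodes by a first-moment / Markov argument, using that each external neighbor samples colors in $[\Delta+1]$ with controlled marginal probability. Fix a promising node $v\in C$, and let $A_v \subseteq \Psi_C\cap\Psi_v$ be a fixed set of $k$ colors available to it (available in the clique palette and not used by $v$'s already-colored external neighbors). Node $v$ has at most $\emax = \Delta/\eta$ external neighbors (by \cref{part:SACD-ext} of \cref{thm:strong-ACD}), each of which samples $\Theta(\log n)$ colors from $[\Delta+1]$, so each fixed color $c\in A_v$ is included in the sample of a fixed external neighbor with probability $O(\beta/\Delta)$. Summing over the at most $\emax$ external neighbors, the expected number of colors of $A_v$ that are blocked is at most $\emax \cdot O(\beta/\Delta)\cdot k = O(\beta k/\eta)$. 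Since $\eta \ge 160\alpha\beta$ by \cref{eq:emax}, this is at most $k/(\text{large constant})$, in particular at most $k/8$ in expectation. By Markov's inequality (over the outside randomness, for this fixed $v$), the probability that more than $k/2$ colors of $A_v$ are blocked is at most $1/4$ — but a constant is not good enough to union bound. Instead of union bounding per node, I would bound the \emph{total} number of (node, blocked-color) incidences: $\sum_{v\text{ promising}} |\{c\in A_v : c\text{ blocked}\}|$ has expectation at most $O(\beta k/\eta)\cdot|C| \le (k/2)\cdot (2\Delta/\alpha)\cdot(\text{slack})$, so by Markov the number of nodes with $\ge k/2$ blocked colors is at most $2\Delta/\alpha$ with probability, say, $1 - 1/\poly(\alpha)$... which is still only constant probability.

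To get \emph{high} probability, the key step is a concentration argument rather than a pure first-moment bound: the quantity $X \eqdef \#\{v\in C : \ge k/2 \text{ of } A_v \text{ blocked}\}$ is a function of the independent color samples of the outside nodes, and I would show it concentrates below $2\Delta/\alpha$. The cleanest route is to observe that each outside node's sample affects $X$ by a bounded amount (a single outside node $u$ has at most $\Delta$ neighbors in $C$, but only $O(\log n)$ sampled colors, each of which can tip at most... hmm, potentially many nodes) — so a naive bounded-differences / McDiarmid application is too lossy. Instead I would exploit negative/limited dependence: the events ``$c$ is blocked for $v$'' for distinct colors $c$ and the bad events for distinct nodes $v$ with \emph{disjoint} external neighborhoods are independent; for general $v$ one groups by external neighbors and uses that each color is blocked independently across the $\le\emax$ neighbors, giving $\Pr[\ge k/2\text{ of }A_v\text{ blocked}] \le \exp(-\Omega(k))$ by a Chernoff bound when the expectation is $\le k/8$ and $k$ is large; when $k = O(\log n)$ one pushes $k$ up to the regime where $\exp(-\Omega(k))$ is still $1/\poly(n)$, or uses that $\beta = \Theta(\log n)$ colors are sampled so the relevant counts are $\Omega(\log n)$ anyway. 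Combined with a union bound over the $\le (1+\epsilon)\Delta$ nodes of $C$ this shows that \emph{every} promising node is unspoiled with high probability when $k$ is large enough; when $k$ is small, $\exp(-\Omega(k))$ per node is too weak, so there one falls back on the Markov bound on the total incidence count and shows concentration of $X$ via a Chernoff bound on the sum of the (nearly independent, bounded) indicator variables $\mathbf{1}[v\text{ type-(b)}]$ — their sum has expectation $O(\Delta/\alpha)$, which is $\Omega(\log^4 n/\alpha) = \Omega(\log n)$ since $\Delta \ge \Omega(\log^4 n)$, so Chernoff gives deviation above $2\Delta/\alpha$ with probability $\exp(-\Omega(\Delta/\alpha)) = 1/\poly(n)$. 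Adding the $\le \Delta/\alpha$ unpromising nodes yields at most $3\Delta/\alpha$ spoiled nodes w.h.p.

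\textbf{Main obstacle.} The delicate point is the dependence structure: the bad events for different nodes of $C$ are \emph{not} independent (two nodes can share external neighbors, and one outside node's few sampled colors can simultaneously block colors for up to $\Delta$ clique-nodes). So the Chernoff bound on $X = \sum_v \mathbf{1}[v\text{ type-(b)}]$ cannot be applied naively. The resolution is that what we actually need to control is the \emph{count} $X$ exceeding its mean by only a constant factor, and $\mathbb{E}[X] = \Omega(\log n)$; one argues that $X$ is dominated by a sum of independent indicators (e.g.\ by revealing outside samples in a fixed order and using a Doob/martingale bound with the observation that each outside node changes $X$ by at most its degree into $C$, then separately arguing via the first-moment bound that most outside nodes have small such degree — or more cleanly, by noting the bad event for $v$ depends only on $v$'s own external neighborhood's samples, and using a Chernoff-type bound for sums of ``read-$k$'' functions). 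Getting this concentration to hold simultaneously for all values of $k$ that arise (from $\Omega(1)$ up to $\Delta/(\alpha\beta)$), with the parameter choices $\eta \ge 160\alpha\beta$ and $\beta = \Theta(\log n)$ doing the quantitative work, is where the real care is needed.
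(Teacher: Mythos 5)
Your decomposition (unpromising nodes via \cref{cor:promising}, plus promising nodes whose external neighbors block more than $k/2$ of their $k$ selected colors) matches the paper's, and you correctly identify both the right quantity to control --- the total number of (node, blocked-color) incidences --- and the reason a per-node Markov or union bound fails. But there is a genuine gap at the concentration step, and it is precisely the step you flag as the ``main obstacle'' without resolving. You compute the expectation of the total incidence count and then apply \emph{Markov} to it, correctly concluding that this only gives constant success probability; you then pivot to trying to concentrate the \emph{number of bad nodes} $X=\#\{v: \ge k/2 \text{ of } A_v \text{ blocked}\}$ directly, which runs into the dependence problem you describe (one outside node's few sampled colors can tip many clique nodes simultaneously), and your proposed fallbacks (Chernoff on the non-independent indicators $\mathbf{1}[v \text{ type-(b)}]$, read-$k$ functions, a case split on $k$) are either invalid as stated or left as open sub-problems.

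The paper's resolution is to apply \emph{Chernoff, not Markov, to the total incidence count itself}, and to make the conversion to a bound on the number of bad nodes purely deterministic. Concretely, it writes the incidence count as $X=\sum_{u\in B}\sum_{c}X_{u,c}$, where $X_{u,c}$ is (a rescaling by $1/\emax$ of) the number of selected promising neighbors $v$ of the external node $u$ that selected color $c$, multiplied by $\mathbb{I}[c\in L_{3,\ell}(u)]$. These summands are independent across pairs $(u,c)$ because each color is sampled independently into each list, and after the $1/\emax$ rescaling the expectation is $\Exp[X]=\Theta(k\beta)=\Omega(\log n)$ \emph{for every} $k\ge 1$, since $\beta=\Theta(\log n)$. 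Hence $\Pr[X\ge 2\Exp[X]]\le \exp(-\Omega(k\beta))=1/\poly(n)$ with no case analysis on $k$ --- this is exactly where the oversampling ($\beta$ colors per list) buys high probability even when $k=O(1)$. Given that the total number of conflicts is at most $\emax\cdot 40\beta k|S|/\Delta\le k|S|/(4\alpha)$ (using $\eta\ge 160\alpha\beta$), pigeonhole immediately gives that at most $\frac{k|S|/(4\alpha)}{k/2}\le \Delta/\alpha$ selected nodes lose $k/2$ or more colors; no further probabilistic argument about the node count is needed. In short: you had the right random variable in hand but discarded it after a first-moment bound; the missing idea is that this variable is a sum of independent, bounded contributions with $\Omega(\log n)$ mean, so it concentrates w.h.p., and everything downstream of that is counting.
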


\begin{proof}
    By \cref{cor:promising}, the clique contains at least $|C|-\Delta/\alpha\ge(1-\epsilon-1/\alpha)\Delta \ge (1-2/\alpha)\Delta$ promising nodes, and each such node $v$ has at least $k$ colors in $\Psi_C \cap \Psi_v$. Let us focus on a set $S$ of exactly $(1-2/\alpha)\Delta$ promising nodes. For each selected promising node $v\in S$, we focus on exactly $k$ colors $\Psi'_v\subseteq \Psi_C \cap \Psi_v$. We consider the $k \card{S}$ pairs $(c,v)$ where $v\in S$ is selected promising node and $c$ is a color $c\in \Psi_v'$ from its selected colors.

    Let us denote by $\Next(v)\eqdef N(v)\cap \bigcup_{C'\neq C}C'$, the external neighbors of $v$ that might get (re)colored. Let $B = \bigcup_{v \in S} \Next(v)$ the set of vertices that are external neighbors of at least one node in $S$. Recall that, by \cref{thm:strong-ACD}, for each $v\in C$, $|\Next(v)|\le \emax$.
    For each $u \in B$ and color $c \in [\Delta+1]$, let $X_{u,c}$ be defined as\footnote{where $\mathbb{I}[\evt]$ is the indicator r.v.\ of some event $\evt$}:
    \[
    X_{u,c} = \frac{\card{\set{v\in S\cap N(u) \text{ such that } c \in \Psi'_v}}}{\emax}\cdot \mathbb{I}[c \in L_{3,\ell}^*(u)]\ ,
    \]
    counting how many times a $u\in B$ is in conflicting with the selected nodes $S$ over a selected color $c$, re-scaled by $1/\emax$ so $X_{u,c}$ is distributed in $[0,1]$. 
    
    Let $X=\sum_{u\in B}\sum_{c\in[\Delta+1]} X_c$. Each edge between $v\in S$ and $u\in B$ contributes $1/\emax$ to $X$ for each color $c\in \Psi_v'$ such that $c\in L_{3,\ell}(u)$. There are at most $|S|\emax$ such edges, and each color $c\in \Psi_v'$ is sampled in $L_{3,\ell}(u)$ with probability at most $20\beta/\Delta$. By linearity of expectation, $\Exp[X] \le |S|\cdot k\cdot \frac{40\beta}{\Delta} = \Theta(k\beta)$ because $|S|=\Theta(\Delta)$. Note $\emax$ cancelling itself.

    Random variables $X_{u,c}$ are independent, since each color is sampled independently.
    By Chernoff Bound (\cref{lem:chernoff}), $\Pr[X \geq 2\Exp[X]] \le \exp(-\Exp[X]/3) \le 1/\poly(n)$.
    Therefore, there are at most $\emax \cdot \frac{40\beta k \card{S}}{\Delta} \leq k \card{S}/(4\alpha)$ conflicts between the selected colors of nodes in $S$ and the colors sampled by their external neighbors (by \cref{eq:emax}). Therefore, at most $\frac{k\card{S}/(4\alpha)}{k/2} \le \Delta/\alpha$
    node are left with fewer than $k/2$ colors in $\Psi_v\cap \Psi_C$.
\end{proof}

\paragraph{Growing the Forest.}
Henceforth, we fix the random lists $L_{3,\ell}(V\setminus C)$ such that $C$ contains at most $3\Delta/\alpha$ spoiled nodes, which holds w.h.p.\ by \cref{lem:conditionning-outside}. 
Let $U=\bigcup_i U_i$ and $\pi$ be the values produced by \cref{alg:growtree}, the graph $F=(U,E(F))$ with $E(F)=\set{u\pi(u): u\in U_{>0}}$ is a forest. Suppose that at each \cref{step:grow}, each $u$ satisfies $|S_u| \ge y_u$. Then $F$ is a forest of $\beta$-ary trees of depth $d+1$ and at most $6\Delta/(\alpha k)$ leaves. We reveal the randomness inside $C$ as we grow the tree, conditioning at each growing step on arbitrary randomness from nodes that are already in the forest. The following lemma shows that it is unlikely that nodes sample bad colors.

\begin{lemma}
\label{lem:sample-color}
Let $0 \le i < d$. Then, for any lists in $\Lg_{3,\ell}(U_{\le i})$, if a node $u$ samples a fresh color $c$, we have
\[\Pr_{c\in[\Delta+1]}[c\text{ violates a condition in \cref{step:filter}}|~U_{\le i}] \le 12/\alpha \ . \]
\end{lemma}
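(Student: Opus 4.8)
The plan is to bound the probability that a freshly sampled color $c \in [\Delta+1]$ violates each of the four conditions in \cref{step:filter} separately, and then union bound. Fix $i < d$, fix the lists $\Lg_{3,\ell}(U_{\le i})$ arbitrarily (so that the structure of the forest grown so far, including $U_{\le i}$ and the colorful matching $M$, is determined), and consider a node $u \in U_i$ sampling one fresh uniform color $c$. I will estimate, for each condition, the fraction of colors in $[\Delta+1]$ that are ``bad'' for that condition; since $c$ is uniform, this fraction is exactly the violation probability.

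\emph{Condition (i), external conflicts.} A bad color here is one used by a colored external neighbor of $u$, or sampled in $L_{3,\ell}$ by an (uncolored) external neighbor. Since $u$ is spoiled-or-not, we do not need that distinction: what we need is that the total number of colors blocked externally is small. The colored external neighbors of $u$ number at most $\emax = \Delta/\eta$ by \cref{thm:strong-ACD} (\cref{part:SACD-ext}), contributing at most $\Delta/\eta$ bad colors. For the uncolored external neighbors, each samples $O(\log n)$ colors in $L_{3,\ell}$ (of size $O(\log^2 n)$ over all $\beta$ sublists, hence $O(\log n)$ per sublist index by \cref{lem:lists-size}); by a Chernoff bound the number of distinct colors they collectively sample is, w.h.p., $O(\emax \cdot \beta^2/\Delta \cdot \Delta) = O(\emax \beta^2)$... — here I must be careful, and this is where I will lean on the exact parameter choice $\eta = \max(160\alpha\beta, L_{\max}/\epsilon)$ in \cref{eq:emax}, which is precisely engineered so that $\emax$ times the per-node sampling rate is a tiny fraction of $\Delta$. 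The upshot is that condition (i) is violated with probability at most, say, $3/\alpha$.

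\emph{Conditions (ii)--(iv).} Condition (ii): a bad color is one used by a node of $B_i = U_{\le i} \cup M$. We have $|U_{\le i}| \le$ (total size of the forest so far) $\le 6\Delta/(\alpha k) \cdot (\text{number of levels})$, but more simply $|U_{\le i}| = O(\Delta/(\alpha k)) \cdot \beta^{O(1)}$; combined with $k < \Delta/(\alpha\beta)$ and $|M| \le 2\alpha \avganti_C = O(\Delta)$... again this must be checked against the parameters, but the design guarantees $|B_i| \le \Delta/\alpha$ or so, giving violation probability $O(1/\alpha)$. Condition (iii): a bad color is one sampled by another node in $U_i$; since $|U_i| \le 6\Delta/(\alpha k)$ and each samples $x = 5\beta$ colors, the expected number of such colors is $O(\beta\Delta/(\alpha k))$, but a fixed color is in $L_{3,\ell}$ of a given node with probability $O(\beta^2/\Delta)$, so by a union bound / Chernoff argument the count of distinct blocked colors is w.h.p. $O(\beta^3 \Delta/(\alpha k \Delta)) = O(\beta^3/(\alpha k))$ — this needs $k$ not too small, but in \growtree we only reach level $i=d$ with the special rule, and for $i<d$ we have the clean bound; I will verify this yields $O(1/\alpha)$. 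Condition (iv): the color must be \emph{used} by some colored node in $N(u) \cap C$. The nodes of $C$ \emph{not} adjacent to $u$ number at most $a_u \le$ (anti-degree), and the uncolored nodes of $C$ number $k$ plus the forest nodes; so the colors failing (iv) are those used by the $\le a_u + O(\Delta/(\alpha k))\cdot\beta^{O(1)}$ non-eligible nodes. For promising unspoiled $u$ this is a small fraction; but the lemma as stated must hold for \emph{any} $u$, so I expect the statement is really being applied to nodes already known to be unspoiled/promising, or the $12/\alpha$ bound already absorbs an $a_u \le \epsilon\Delta$ worst case — wait, $\epsilon\Delta$ is not $O(\Delta/\alpha)$ since $\epsilon \le 1/\alpha^2$, so in fact $a_u \le \epsilon\Delta \le \Delta/\alpha^2$, which is fine.

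\emph{Main obstacle.} The delicate point is condition (i): controlling the colors blocked by \emph{uncolored} external neighbors, because those neighbors sample $\Theta(\log n)$ colors each and there can be up to $\emax$ of them. Making this rigorous requires a concentration argument (as in the proof of \cref{lem:conditionning-outside}) showing that the number of \emph{distinct} colors blocked is close to its expectation $\Theta(\emax \beta^2 / \Delta \cdot \Delta) $ — no, the right quantity is $\emax$ nodes each sampling with per-color rate $20\beta^2/\Delta$ in sublist $\Lg_{3,\ell}$, so expected distinct blocked colors is at most $\emax \cdot 20\beta^2 = O(\Delta\beta^2/\eta) = O(\Delta\beta/\alpha)$ — and this is $o(\Delta)$ only because $\eta \ge 160\alpha\beta$ forces the ratio down to $\Delta/(8\alpha)$ or less. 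So the whole lemma hinges on the preconditioning bound $\emax = \Delta/\eta$ with $\eta$ large enough, and on \cref{lem:lists-size} bounding list sizes; assembling these into a clean $\le 12/\alpha$ via union bound over the four conditions (roughly $3/\alpha + 3/\alpha + 3/\alpha + 3/\alpha$) is then routine. I would present it by handling (ii),(iii),(iv) quickly with counting bounds and devoting the bulk of the argument to (i) with an explicit Chernoff bound mirroring \cref{lem:conditionning-outside}.
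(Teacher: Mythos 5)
Your skeleton---union-bounding the four filter conditions and, for each, bounding the fraction of bad colors in $[\Delta+1]$---matches the paper's approach. But the paper's proof is a \emph{deterministic} count once the high-probability event of \cref{lem:lists-size} and the conditioning on $\Lg_{3,\ell}(U_{\le i})$ are in place; you instead invoke Chernoff arguments where none are needed, and in two places this leads you to incorrect bounds.

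The concrete error is in your treatment of conditions (ii) and (iii). You write $|U_i|\le 6\Delta/(\alpha k)$ and arrive at a blocked-color count of $O(\beta^3/(\alpha k))$, which diverges as $k\to 1$ and therefore cannot give a bound independent of $k$. But $6\Delta/(\alpha k)$ is the \emph{leaf} count targeted only at level $d$; for $i<d$, \cref{alg:growtree} produces $|U_i|\le k\beta^i$, so the colors sampled by other active nodes number at most $x|U_i|\le 5\beta\cdot k\beta^i = 5k\beta^{i+1}\le 5k\beta^d\le 5\Delta/\alpha$ (using $i+1\le d$ and $\beta^d\le\Delta/(\alpha k)$), a bound with no $1/k$ factor. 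The same issue affects your bound on $|U_{\le i}|$ in condition (ii): the right count is $\sum_{j\le i}|U_j|\le 2k\beta^d\le 2\Delta/\alpha$, not $O(\Delta/(\alpha k))\cdot\beta^{O(1)}$. Writing "the design guarantees $|B_i|\le\Delta/\alpha$ or so" acknowledges you have not actually verified this, and the geometric summation is the missing step.

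Two smaller points: for condition (i), once \cref{lem:lists-size} is assumed, each uncolored external neighbor holds $O(\beta)$ colors in $L_{3,\ell}$ and there are at most $\emax=\Delta/\eta$ of them with $\eta\ge 160\alpha\beta$, so the blocked set is at most $O(\beta\emax)\le\Delta/\alpha$ deterministically---no concentration argument is needed, and invoking one (as in your sketch mirroring \cref{lem:conditionning-outside}) adds nothing. And in condition (iv) you count "the forest nodes" among the uncolored nodes; only the roots $U_0=\hC$ are uncolored (at most $\Delta/(\alpha\beta)$ of them), while the forest's interior consists of already-colored nodes, which is what condition (iv) actually hinges on.
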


\begin{proof}
For a fixed $i < d$, we have $|U_i|\le k\beta^{i}$. 
We bound the number of colors $c$ might be conflicting with for each item in \cref{step:filter}:
\begin{enumerate}[label=\roman*)]
\item Node $u$ has at most $\epsilon\Delta$ colored external neighbors (\cref{thm:strong-ACD}, \cref{part:SACD-inside}). The number of colors in $L_{3,\ell}(\bigcup_{C'\neq C} C'\cap N(u))$ is at most $40\beta \emax < \Delta/\alpha$ by \cref{eq:emax}.
\item For $i \le d$, the number of nodes in $U_{\le i}$ is $\sum_{j=0}^{d} |U_j| \le 2k\beta^{d} \le 2\Delta/\alpha$. Adding the colorful matching, at most $\Delta/\alpha + 2\alpha\avganti_C \le (1/\alpha + 2\epsilon\alpha)\Delta \le 3\Delta/\alpha$ nodes are colored in $B_i$ (by \cref{eq:eps-alpha}).
\item The number of colors sampled (and thus blocked) by active nodes is 
$x|U_i| = xk\beta^i \le 6k\beta^{d} \le 6\Delta/\alpha$
where the first inequality comes from $i<d$.
\item The number of uncolored nodes in $C$ is at most $\Delta/\alpha\beta$; hence, the number of colors used in $N(u)\cap C$ is be at least\footnote{Note that, apart for the colorful matching, each node in $C$ uses a different color} $(1-\epsilon)\Delta-\Delta/\alpha\beta \ge (1-2/\alpha)\Delta$ by \cref{eq:eps-alpha}. 
\end{enumerate}

Summing all failure probabilities with an union bound, we get the claimed bound.
\end{proof}

Since nodes sample $\Omega(\log n)$ colors when $i<d$, \cref{lem:sample-color} implies the following corollary.

\begin{corollary}
With high probability over $\Lg_{3,\ell}(U_{< d})$, for each $i < d$ and $u\in U_i$, $|S_u'| \ge \beta\eqdef y_u$.
\end{corollary}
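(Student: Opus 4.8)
The claim is that when a node $u\in U_i$ with $i<d$ samples $x=5\beta$ fresh colors from $\Lg_{3,\ell}(u)$, with high probability at least $\beta$ of them survive all four filtering conditions of \cref{step:filter}, so that $y_u=\beta$ children can indeed be chosen. The plan is to combine the per-color bad-probability bound of \cref{lem:sample-color} with a Chernoff argument over the $5\beta$ independent samples, and then union bound over all $u\in U_{<d}$ (there are at most $\poly(n)$ of them since $|U_{<d}|\le 2k\beta^d\le 2\Delta/\alpha\le n$).

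Concretely, first I would fix $i<d$ and a node $u\in U_i$, and condition on arbitrary lists $\Lg_{3,\ell}(U_{\le i})$ — this is the conditioning under which \cref{lem:sample-color} is stated. The $x=5\beta$ fresh colors that $u$ draws from $\Lg_{3,\ell}(u)$ are mutually independent, and by \cref{lem:sample-color} each one independently fails a condition in \cref{step:filter} with probability at most $12/\alpha$. With $\alpha=500$ this is well below $1/5$, so in expectation at least $(1-12/\alpha)\cdot 5\beta \ge 4\beta$ of the sampled colors land in $S_u'$. Since $\beta=\Theta(\log n)$, a standard Chernoff bound (\cref{lem:chernoff}) gives that the number of surviving colors is at least $\beta$ — indeed at least, say, $3\beta$ — except with probability $\exp(-\Omega(\beta))=1/\poly(n)$. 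One subtlety to address: the four conditions in \cref{step:filter} are not literally independent events across colors (condition iii, "uniquely sampled", couples the samples of different nodes $v\in U_i$), but \cref{lem:sample-color} already folds this into the stated per-color bound, and the randomness we are applying Chernoff over is only the choice of $u$'s own $5\beta$ colors, conditioned on everything relevant about the other nodes' samples being revealed; so the $x$ indicator variables "color $c_j$ is bad" are functions of independent fresh draws and the Chernoff bound applies. A second detail: one must note that $\Lg_{3,\ell}(u)$ contains at least $5\beta$ — in fact at least $6\beta$ — distinct colors by \cref{lem:lists-size}, so there genuinely are $x=5\beta$ fresh colors available to sample at step $i<d$ across all $d=O(\log_\beta\Delta)$ iterations; this is why $\Lg_{3,i}$ was sized with $\ge 6\beta$ colors.

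Finally I would take a union bound over all nodes $u\in U_{<d}$ and over all $d+1$ iterations: since each bad event has probability $1/\poly(n)$ and there are at most $n$ nodes total in the forest across $O(\log\Delta)\le O(\log n)$ levels, the failure probability remains $1/\poly(n)$ for a suitably large constant $C$ in $\beta=\floor{C\log n}$. Hence, with high probability over $\Lg_{3,\ell}(U_{<d})$, every $u\in U_i$ for every $i<d$ has $|S_u'|\ge\beta=y_u$, which is exactly the statement. The main (minor) obstacle is getting the conditioning right — making sure that when we invoke \cref{lem:sample-color}'s per-color bound and then apply Chernoff, the $5\beta$ samples of $u$ are genuinely independent of the conditioning and of each other, and that the "uniquely sampled" condition does not secretly introduce a correlation that breaks the product form; but this is precisely why \cref{lem:sample-color} was phrased as a bound that holds for \emph{any} lists in $\Lg_{3,\ell}(U_{\le i})$ and incorporates condition iii directly, so the corollary follows immediately once that lemma is in hand.
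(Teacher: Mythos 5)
Your proposal is correct and matches the paper's (one-line) argument: the paper derives the corollary directly from \cref{lem:sample-color} plus the fact that each $u$ samples $x=5\beta=\Omega(\log n)$ fresh colors, which is exactly your Chernoff-plus-union-bound plan. Your care about the conditioning for condition iii) is well placed but resolved as you say — the bound in \cref{lem:sample-color} is a deterministic count of blocked colors that holds for any realization of the other nodes' samples, so $u$'s own draws remain independent and the concentration argument goes through.
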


Because of rounding in $d$, trees might not contain enough leaves after $d$ growing steps. Furthermore, we also need to show that most leaves are unspoiled.

\begin{lemma}
\label{lem:unspoiled-leaves}
Let $U_d$ be the set given by \cref{alg:growtree}. With high probability over $\Lg_{3,\ell}(U_d)$, the number of unspoiled nodes in $U_{d+1}$ is at least $\Delta/\alpha$. 
\end{lemma}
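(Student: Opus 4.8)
The plan is to count how many of the $\beta$ colors each leaf $u \in U_d$ samples in \cref{step:sample} survive the filtering of \cref{step:filter}, using that $x = \lfloor 6\Delta/(\alpha|U_d|)\rfloor$ is now chosen to compensate for the fact that $|U_d|$ may be smaller than $\Delta/(\alpha k)\cdot\beta^{d}$ by up to a factor $\beta$ due to the rounding in $d$. First I would observe that, by the previous corollary, w.h.p.\ $U_d$ has full branching, so $|U_d| = k\beta^{d} \le \Delta/\alpha$ and hence $x \ge 6\Delta/(\alpha|U_d|) - 1 \ge \beta/2$ (using $|U_d| \le \Delta/\alpha$ and $|U_d|\ge k \ge 1$, together with $\Delta$ large). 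The expected number of children produced, summed over all leaves, is then $|U_d|\cdot x \cdot (1 - 12/\alpha) \ge |U_d|\cdot x/2$ by \cref{lem:sample-color} (whose bound on the per-color failure probability applies verbatim for $i=d$ since the bookkeeping in its proof only used $i \le d$), which is $\Omega(\Delta/\alpha)$, in fact at least $6\Delta/(4\alpha) \ge \Delta/\alpha$ in expectation after accounting for the constants.

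Next I would upgrade this expectation bound to a high-probability bound and simultaneously discard the spoiled leaves. For the first part, note that each leaf $u\in U_d$ samples its colors from a fresh sublist $\Lg_{3,\ell}(u)$, independently across leaves; the indicator that a given sampled color of $u$ survives \cref{step:filter} depends only on $u$'s own randomness (conditioned on the already-revealed randomness $\Lg_{3,\ell}(U_{<d})$ and the fixed outside lists), so $|U_{d+1}| = \sum_{u\in U_d} |S_u'|$ is a sum of $|U_d|\ge k$ independent bounded random variables with mean $\Omega(\Delta/\alpha) = \Omega(\log^4 n)$. A Chernoff bound (\cref{lem:chernoff}) gives $|U_{d+1}| \ge 3\Delta/\alpha$ w.p.\ $1 - \exp(-\Omega(\Delta/\alpha)) = 1 - 1/\poly(n)$. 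For the second part, recall from \cref{lem:conditionning-outside} that the fixed outside lists leave at most $3\Delta/\alpha$ spoiled nodes in all of $C$; since $U_{d+1}\subseteq C$, at most $3\Delta/\alpha - \Delta/\alpha$... — more carefully, subtracting the at most $3\Delta/\alpha$ spoiled nodes would not immediately leave $\Delta/\alpha$, so instead I would take a slightly larger target in the first part: show $|U_{d+1}| \ge 4\Delta/\alpha$ w.h.p.\ (which follows by the same Chernoff bound after checking the constants in $x$ and in \cref{lem:sample-color} give mean at least, say, $8\Delta/\alpha$), and then conclude that at least $4\Delta/\alpha - 3\Delta/\alpha = \Delta/\alpha$ leaves in $U_{d+1}$ are unspoiled.

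The main obstacle I anticipate is the constant-chasing: making sure that the definition $x = \lfloor 6\Delta/(\alpha|U_d|)\rfloor$, the $(1-12/\alpha)$ survival rate from \cref{lem:sample-color}, the $\le 3\Delta/\alpha$ bound on spoiled nodes from \cref{lem:conditionning-outside}, and the $\le 2\alpha\avganti_C$ colorful-matching size are mutually consistent with $\alpha = 500$ and $\epsilon \le 1/\alpha^2$ so that the arithmetic closes with the stated target $\Delta/\alpha$ and with room to absorb the floor and the $-1$ from $x\ge\beta/2$. A secondary subtlety is the conditioning structure: one must be careful that the event "$U_d$ is full" (on which we are conditioning) is determined by $\Lg_{3,\ell}(U_{<d})$, so that conditioned on it the leaf randomness $\Lg_{3,\ell}(U_d)$ is still fresh and independent across leaves, and that the set of spoiled nodes is a function only of the already-fixed $L_{3,\ell}(V\setminus C)$ (via $L_2^*$) and the coloring so far — hence deterministic at this point and not re-randomized by the growing step.
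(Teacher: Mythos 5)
Your proposal follows essentially the same route as the paper: re-run the union bound of \cref{lem:sample-color} for the last level with the two modifications (the total of $x|U_d|\le 6\Delta/\alpha$ colors sampled by active leaves, and the at most $3\Delta/\alpha$ spoiled nodes from \cref{lem:conditionning-outside}, each blocking one color), then concentrate over the leaves and close the arithmetic via $|U_d|\cdot x\ge 6\Delta/\alpha-|U_d|\ge 5\Delta/\alpha$. Whether you fold the spoiled nodes into the per-color failure event (as the paper does, getting $15/\alpha$ per color) or subtract them at the end (as you do) is immaterial.

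The one genuine flaw is your independence claim: you assert that whether a sampled color of $u$ survives \cref{step:filter} ``depends only on $u$'s own randomness,'' but condition iii) of \cref{step:filter} discards a color of $u$ if \emph{any other leaf} in $U_d$ sampled it, so the variables $|S_u'|$ are not independent across leaves. The classical Chernoff bound therefore does not apply directly to $\sum_{u}|S_u'|$; the paper instead orders the leaves by ID and invokes the martingale form of the bound (\cref{lem:chernoff}), bounding each leaf's conditional failure probability given the outcomes of earlier leaves (the per-color collision probability $x|U_d|/(\Delta+1)\le 6/\alpha$ holds under arbitrary conditioning on the other leaves' samples). Two smaller slips that do not affect the conclusion: $x=\floor*{6\Delta/(\alpha|U_d|)}$ can be as small as a constant when $|U_d|$ is close to $\Delta/\alpha$, so $x\ge\beta/2$ is false; and the mean of $|U_{d+1}|$ is at most $6\Delta/\alpha$ (not $8\Delta/\alpha$), though it is still above $4.8\Delta/\alpha$, so your target of $4\Delta/\alpha$ survives. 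Also, since each summand is bounded by $x$ rather than by $1$, the concentration exponent is $\Omega(|U_d|)=\Omega(\Delta/(\alpha\beta))$ rather than $\Omega(\Delta/\alpha)$ --- still $1-1/\poly(n)$ because $\Delta\ge\Omega(\log^4 n)$, which is exactly the bound the paper obtains.
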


\begin{proof}
For each node $u\in U_{d-1}$, define a random variable $X_{u,i}$ for each of its sampled color $i\in[x]$. Let $X_{u,i}$ be one if and only if the $i$-th color it samples 1) has no conflict in \cref{step:filter} and 2) the corresponding $v_{u,c}$ is unspoiled. The analysis is similar to \cref{lem:sample-color} but has to be a bit more careful. Namely, failures caused by i), ii) and iv) remain unchanged but the number of colors sampled by active nodes is different. Furthermore, we now have to filter out spoiled nodes.
\begin{itemize}
\item Spoiled nodes are easily dealt with by \cref{lem:conditionning-outside}. Indeed, there are at most $3\Delta/\alpha$ spoiled nodes in $C$. Since each such node blocks one color, they only amount to a small fraction.
\item Since nodes try $x=\floor*{6\Delta/(\alpha|U_d|)}$ colors, the total number of colors sampled by active nodes is  $x|U_d| \le 6\Delta/\alpha$.
\end{itemize}
If we union bound over all possible failures for a random color $c\in[\Delta+1]$, we get $\Pr[X_{u,c}=1|U_{d}\setminus\set{u}] \le 15/\alpha \le 1/25$.
By Markov inequality, a node $u\in U_d$ samples more than $x/5$ bad colors w.p.\ at most $\frac{x}{25}\cdot\frac{5}{x}=1/5$. Giving priority to nodes of lowest ID, the martingale inequality (\cref{lem:chernoff}) shows that, w.p.\ $1-e^{-\Omega(|U_d|)}$, at most $|U_d|/4$ nodes sample more than $x/5$ bad colors. Note that, by definition of $d$,
\[ |U_d| = k\beta^d \ge k\beta^{\log_\beta(\Delta/\alpha k)-1} \ge \frac{\Delta}{\alpha\beta} \ . \]
Therefore, $1-e^{-\Omega(|U_d|)} \ge 1-e^{-\Omega(\Delta/\beta)} \ge 1-\poly(n)$ (because $\Delta \in \Omega(\log^4 n)$) and the previous claim holds with high probability. That means that w.h.p.\ the number of unspoiled nodes in $U_{d+1}$ is at least 
\begin{align*}
    \frac{|U_d|}{4}\cdot\frac{4x}{5} &\ge \frac{|U_d|}{5}\parens*{\frac{6\Delta}{\alpha|U_d|}-1} \tag{because $x\eqdef\floor*{\frac{6\Delta}{\alpha|U_d|}}$}\\
    &=\frac{6\Delta/\alpha-|U_d|}{5} \ge \frac{\Delta}{\alpha} \ ,\tag{because $|U_d|\le\Delta/\alpha$}
\end{align*}
which concludes the proof of the Lemma.
\end{proof}

\subsection{Harvesting the Trees}
\label{sec:harvest-tree}

In the previous section, we argued that there were $\Delta/\alpha$ unspoiled leaves. In this section, we argue that enough of these leaves find good colors to color a constant fraction of uncolored nodes. While in \cref{lem:unspoiled-leaves}, we bound from below the total number of unspoiled nodes, because all trees have roughly the same size (at most $6\Delta/\alpha k$ each), a simple counting argument gives the following claim.

\begin{claim}
There are at least $0.9k$ trees with $\Delta/(2\alpha k)$ unspoiled leaves.
\end{claim}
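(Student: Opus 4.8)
The plan is to exploit the rigid combinatorial structure of the forest $F$ produced by \growtree, rather than only its aggregate leaf count. First I would condition on all the high-probability events of \cref{sec:aug-grow}; in particular, by the corollary following \cref{lem:sample-color}, for every $i<d$ and every $u\in U_i$ we have $|S_u'|\ge\beta=y_u$, so each level-$i$ vertex receives \emph{exactly} $\beta$ children, and condition iii) of \cref{step:filter} forces these children to be pairwise distinct across all of $U_i$. Hence $|U_i|=k\beta^i$ for every $i\le d$, and restricted to levels $0,\dots,d$ the forest $F$ is a disjoint union of $k$ perfect $\beta$-ary trees of depth $d$; in particular every tree holds exactly $\beta^d$ vertices of $U_d$, and these level-$d$ vertices are the only source of size differences between trees.

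Second, I would reuse the per-vertex estimate underlying \cref{lem:unspoiled-leaves}. Call $u\in U_d$ \emph{rich} if at most a $(1/5)$-fraction of its $x$ sampled colors are ``bad'' --- i.e.\ fail a condition of \cref{step:filter} or have a spoiled target $v_{u,c}$ --- so a rich vertex contributes at least $\tfrac{4}{5}x$ unspoiled vertices to $U_{d+1}$. The Markov-plus-martingale argument in the proof of \cref{lem:unspoiled-leaves} shows that, w.h.p., at most a $\rho$-fraction of $U_d$ fails to be rich: the version written there already gives $\rho=1/4$, and $\rho<1/10$ is obtained by taking a slightly larger internal threshold (say ``at most $\tfrac{2}{5}x$ bad colors'', whence a rich vertex still contributes at least $\tfrac{3}{5}x$ unspoiled vertices) together with the choice $\alpha=500$ in \cref{eq:parameters}, which makes the per-color failure probability $15/\alpha$ small enough.

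Third comes the counting step. Call a tree \emph{good} if it contains at least $\beta^d/c$ rich level-$d$ vertices, for a suitable small constant $c$; such a tree then has at least $\tfrac{\beta^d}{c}\cdot\tfrac{3}{5}x$ unspoiled leaves. Since $x=\floor{6\Delta/(\alpha|U_d|)}\ge 5\Delta/(\alpha|U_d|)$ --- using $|U_d|=k\beta^d\le\Delta/\alpha$, so $\Delta/(\alpha|U_d|)\ge1$, and $\floor{6t}\ge 5t$ for all $t\ge1$ --- this leaf count is at least $\tfrac{\beta^d}{c}\cdot\tfrac{3}{5}\cdot\tfrac{5\Delta}{\alpha k\beta^d}=\tfrac{3\Delta}{c\alpha k}\ge\tfrac{\Delta}{2\alpha k}$ as soon as $c\le 6$. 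Conversely, a non-good tree carries more than $(1-1/c)\beta^d$ non-rich level-$d$ vertices; as there are at most $\rho k\beta^d$ non-rich vertices in total, the number of non-good trees is below $\rho k/(1-1/c)$, which is at most $k/10$ for a suitable fixed pair (e.g.\ $\rho\le 0.08$, $c=5$). Therefore at least $0.9k$ trees are good, i.e.\ have at least $\Delta/(2\alpha k)$ unspoiled leaves.

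The step I expect to be the main obstacle is the second one --- pinning down a small enough non-rich fraction $\rho$. A purely mass-based argument, using only ``total unspoiled leaves $\ge\Delta/\alpha$ and at most $6\Delta/(\alpha k)$ leaves per tree'', is much too weak: it yields only $\Omega(k)$ good trees with a tiny constant, not $0.9k$. The jump to a $0.9$ fraction genuinely relies on the structural uniformity of the trees at level $d$ together with a quantitatively small bound on how many level-$d$ vertices can be non-rich, so one really must re-run the concentration argument of \cref{lem:unspoiled-leaves} with the explicit constants of \cref{eq:parameters} and a carefully chosen deficiency threshold, rather than invoking its conclusion as a black box.
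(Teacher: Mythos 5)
Your proof is correct, and you have correctly diagnosed why the paper's informal justification (``a simple counting argument'') is insufficient if one insists on the literal constant $0.9$. A pure mass-based count from \cref{lem:unspoiled-leaves} (total unspoiled leaves $\geq \Delta/\alpha$, each tree has at most $6\Delta/(\alpha k)$ leaves) yields only: letting $t$ be the number of trees with $\geq \Delta/(2\alpha k)$ unspoiled leaves, $t\cdot 6\Delta/(\alpha k) + (k-t)\cdot\Delta/(2\alpha k)\geq\Delta/\alpha$, i.e., $t\geq k/11$, which is far short of $0.9k$. Your fix --- using that every tree has \emph{exactly} $\beta^d$ level-$d$ vertices, then re-running the Markov-plus-martingale step of \cref{lem:unspoiled-leaves} with a coarser threshold (bad $= {>}(2/5)x$ bad colors, giving a non-rich fraction around $0.075+\text{slack}<1/10$ since $15/\alpha = 0.03$ with $\alpha=500$) to bound the non-rich level-$d$ vertices per tree --- is exactly the argument needed, and your constants ($\rho\le 0.08$, $c=5$) check out: a good tree has at least $(\beta^d/5)(3x/5)\geq(3/5)\Delta/(\alpha k)\geq\Delta/(2\alpha k)$ unspoiled leaves, and at most $0.08k/0.8=0.1k$ trees are bad.

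Two remarks worth making. First, the constant $0.9$ is not actually load-bearing downstream: \cref{lem:harvest-high} and \cref{lem:last-nodes-progress} only need \emph{some} constant fraction of ``good'' trees to conclude $\Omega(k)$ progress, so the weak mass-based argument would already suffice for the paper's purposes; the paper's ``simple counting argument'' is a harmless imprecision given how the claim is used, but your proof is what is needed if one wants the stated $0.9k$. Second, a small technical caveat: tightening the Markov threshold changes the martingale concentration target, so you need that the slack in the Chernoff/martingale step (on top of the Markov estimate $\Pr[\text{bad}]\le 0.075$) is below $0.005$; this holds because $|U_d|\geq\Delta/(\alpha\beta)=\Omega(\log^3 n)$, so concentration with a small constant relative error has failure probability $\exp(-\Omega(|U_d|))$, which is still superpolynomially small. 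You gesture at this but it is worth stating explicitly, since the ``w.h.p.'' of \cref{lem:unspoiled-leaves} silently relies on the same lower bound on $|U_d|$.
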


Henceforth, we will be focusing on those trees with many unspoiled leaves. Note that at \cref{step:filter} of \cref{alg:growtree}, we ensure that if a leaf finds a color, each node on its path to the root can change its color without creating conflicts. Hence, this section focuses on counting successful leaves in each tree.

\paragraph{When $k$ is large.} Assume first that $k\ge \Omega(\log n)$.

\begin{Algorithm}
\label{alg:harvest-high}
\harvest (for $k$ greater than $\Omega(\log n)$).

\textbf{Input:} the forest $F$ of augmenting trees computed by \cref{alg:growtree}.

\begin{enumerate}[label=(\textbf{H}\arabic*)]
\item Leaves $v\in F$ try \emph{one} fresh color $c_v\in \Lh_3(v)$. We call $v$ \emph{successful} if it can adopt $c_v$, and thereby recolor the path in $F$ connecting $v$ to its uncolored root. Leaves can learn if they are successful in $O(d)$ rounds. 
\item 
Each leaf can learn in $O(1)$ rounds if a leaf from another tree sampled the same color (by \cref{claim:color-group}). We call a tree \emph{successful} if it has at least one successful leaf that is not conflicting with leaves from other trees.

\item\label[step]{step:augpath-recolor} 
If a tree is successful, it can recolor the path from its root to its successful leaf in $O(d)$ rounds.
\end{enumerate}

\end{Algorithm}

\begin{lemma}
\label{lem:harvest-high}
Let $C$ be a clique with $\beta \le k \le \Delta/(\alpha\beta)$ uncolored nodes.
In \cref{step:augpath-recolor}, with high probability over $\Lh_{3,\ell}(C)$, we color at least $\Omega(k/\alpha)$ uncolored nodes.
\end{lemma}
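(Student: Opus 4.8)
The plan is to analyze \cref{alg:harvest-high} in two stages: first show that within each of the $0.9k$ ``good'' trees (those with $\Delta/(2\alpha k)$ unspoiled leaves, by the preceding Claim), a constant fraction of leaves are \emph{successful} in step (H1) with good probability; then show that after resolving inter-tree conflicts in step (H2), a constant fraction of the good trees survive as \emph{successful} trees, each of which colors its root in step (H3). Combining, $\Omega(k)$ roots get colored, which is $\Omega(k/\alpha)$ since $\alpha = \Theta(1)$.

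For the first stage, fix a good tree $T_u$ and an unspoiled leaf $v$. By \cref{def:spoiled-nodes}, $v$ has at least $k/2$ colors in $\Psi_C \cap \Psi_v$ even after conditioning on the outside randomness (which we have already fixed via \cref{lem:conditionning-outside}). When $v$ tries a fresh color $c_v$ from $\Lh_3(v)$, the only remaining sources of conflict are the colors tried by leaves of \emph{other} trees --- the path structure and \cref{step:filter} of \cref{alg:growtree} guarantee that any color in $\Psi_C \cap \Psi_v$ that is not blocked this way lets $v$ recolor its whole root path. Each of the other $\le k-1$ trees has $\le 6\Delta/(\alpha k)$ leaves, each trying one color, so the expected number of colors in $v$'s set of $\ge k/2$ available colors that get hit is $O(k \cdot (\Delta/\alpha k) \cdot (1/\Delta)) = O(k/\alpha)$. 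Making the implicit constant in ``good tree'' and the constant $\alpha$ work together, a constant fraction --- say $\ge k/4$ --- of $v$'s available colors remain unblocked in expectation, so $v$ is successful with probability $\Omega(1)$. Since $k = \Omega(\log n)$ and the tree has $\Delta/(2\alpha k) = \Omega(\Delta/(\alpha\log n)) = \Omega(\log^3 n/\alpha)$ unspoiled leaves, and the leaf events within a tree can be made to concentrate (leaves sample independently; a bad event for one leaf affects at most $O(1)$ colors of another, so a bounded-difference / Chernoff-with-domination argument via \cref{lem:chernoff} applies with priority by lowest ID exactly as in \cref{lem:unspoiled-leaves}), with high probability each good tree has $\Omega(\Delta/(\alpha k))$ successful leaves.

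For the second stage, I need to rule out the pathological case where a tree has successful leaves but \emph{all} of them collide with successful leaves of other trees. Since each tree has $\Omega(\Delta/(\alpha k))$ successful leaves and there are $\le k$ trees each contributing $\le 6\Delta/(\alpha k)$ leaves total, a union bound / deletion argument shows the expected number of colliding (color-clashing) pairs is $O((\Delta/(\alpha k))^2 \cdot k / \Delta)$ per tree, which is an $o(1)$ fraction of the $\Omega(\Delta/(\alpha k))$ successful leaves when $\alpha$ is a sufficiently large constant (this is precisely where $\alpha = 500$ is used). Concretely, by Markov, at most a constant fraction of the good trees can have more than half their successful leaves involved in collisions; the remaining $\Omega(k)$ trees retain at least one non-colliding successful leaf and are therefore successful. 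Each such tree colors its root in $O(d) = O(\log\Delta)$ rounds in step (H3), and since roots are distinct uncolored nodes, this colors $\Omega(k) = \Omega(k/\alpha)$ uncolored nodes of $C$. The whole procedure runs in $O(d) = O(\log\Delta)$ rounds, matching the claim, and the failure probability is $1 - 1/\poly(n)$ by the per-tree high-probability bounds union-bounded over the $\le n$ trees (valid since $\Delta = \Omega(\log^4 n)$).

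The main obstacle I expect is the second stage --- the inter-tree collision analysis --- rather than the first. Within a tree, leaf trials are independent and concentration is routine; but across trees, one must carefully argue that \emph{every} good tree (not just most of them in aggregate) keeps a usable leaf, which is why the algorithm spends effort in \growtree making all trees grow at the same rate and why $\alpha$ must be large: the slack between ``$\Omega(\Delta/(\alpha k))$ successful leaves per tree'' and ``$O(1)$ expected collisions per successful leaf'' is exactly a factor of $\Theta(\alpha)$, so the constants in \cref{eq:parameters,eq:eps-alpha} have to be threaded through precisely. A secondary subtlety is that the conditioning on $\Lh_{3,\ell}(C)$ must be done in the right order (outside already fixed; then reveal leaf colors tree by tree or all at once with a bounded-difference bound) so that the ``even conditioned on other trees' randomness'' phrasing in the technique overview is actually justified.
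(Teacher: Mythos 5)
There is a genuine gap, and it is in your first stage, not (as you anticipated) in the second. You claim that an unspoiled leaf $v$ is \emph{successful} with probability $\Omega(1)$ and hence that each good tree has $\Omega(\Delta/(\alpha k))$ successful leaves w.h.p. But in \cref{alg:harvest-high} a leaf tries a \emph{single} fresh color $c_v$, essentially uniform over $[\Delta+1]$, and is successful only if $c_v$ lands in its set of $\ge k/2$ adoptable colors. That happens with probability $\Theta(k/\Delta)$, not $\Omega(1)$ --- you conflated ``a constant fraction of $v$'s available colors remain unblocked'' with ``$v$'s one sampled color is available.'' Consequently the expected number of successful leaves per tree is $\frac{\Delta}{2\alpha k}\cdot\Theta\bigl(\frac{k}{\Delta}\bigr)=\Theta(1/\alpha)$, a constant below $1$: a constant fraction of the trees will have \emph{no} successful leaf at all, and no concentration argument can rescue the claim that every (or almost every) good tree has many successful leaves. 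Your second stage inherits this false premise, and the final count of $\Omega(k)$ colored roots is also unattainable --- only $\Omega(k/\alpha)$ is.

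The paper's proof is organized around exactly this constant per-tree success probability. For each of the $k'=0.9k$ good trees it defines an indicator $X_i$ for the event that \emph{exactly one} selected leaf of $T_i$ samples one of its selected colors \emph{and} no other node in the almost-clique tried that color; a direct computation gives $\Pr[X_i=1]\ge e^{-12/\alpha}/(4\alpha)$, so $\Exp[\sum_i X_i]=\Omega(k/\alpha)$. Concentration is then obtained not by per-tree w.h.p.\ bounds but by writing $X_i=Y_i-Z_i$ (at least one selected hit, minus the bad events of a second hit or a collision) and applying the Talagrand-type inequality \cref{lem:talagrand-difference} to the Lipschitz, certifiable sums $Y$ and $Z$; the resulting failure probability $\exp(-\Omega(k/\alpha))$ is $1/\poly(n)$ precisely because $k\ge\beta=\Omega(\log n)$ in this regime. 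Your instinct that inter-tree collisions are controlled by the factor-$\alpha$ slack is correct and appears in the paper as the $(1-1/\Delta)^{6\Delta/\alpha}$ term, but that control must be folded into a per-tree success probability of $\Omega(1/\alpha)$ rather than layered on top of a (nonexistent) abundance of successful leaves per tree.
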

\begin{proof}
    Let $k'=0.9k$ and 
    $T_1, \ldots, T_{k'}$ be $k'$ trees with at least $\Delta/(2\alpha k)$ successful paths to unspoiled leaves. For each such tree, let us only consider exactly $\Delta/(2\alpha k)$ selected unspoiled leaves.
    For each selected leaf $v$, let us focus on a subset of size $k/2$ of its palette $\Psi_C\cap\Psi_v$, which we call its selected colors.

    For each $i \in [k']$, let $X_i$ be the indicator random variable for the event that (1) tree $i$ contains exactly one selected leaf that samples a selected color, and (2) no other node in the almost-clique tried the same color as this selected leaf. Note that $X_i$ may be expressed as the difference between two random variables $Y_i$ and $Z_i$, where $Y_i$ corresponds to the event that at least one selected leaf of $T_i$ tries one of its selected colors, and $Z_i$ corresponds to at least two selected leaves trying a selected color or a selected leaf trying a selected color but failing to keep it. We have:

    \[
    \Pr[X_i=1] \geq \frac{\Delta}{2\alpha k} \cdot \frac{k}{2 \Delta} \cdot \parens*{1-\frac{k/2 + 1}{\Delta}}^{\Delta/(2\alpha k)-1} \cdot \parens*{1-\frac{1}{\Delta}}^{6\Delta/\alpha}
    \]

    The first term ($\Delta/(2\alpha k)$) corresponds to doing a sum over all selected nodes in $T_i$ of their probability of being the selected node that succeeds. The second term ($k/(2\Delta) $) corresponds to the probability that each selected node samples one of its selected colors. Call this color $c$. The third term corresponds to all other selected leaves of $T_i$ sampling neither one of their selected colors nor $c$. The last term corresponds to all leaves not trying $c$.
    Using $1-x/2 \geq e^{-x}$ for $x\in [0,1]$ (\cref{lem:useful-inequalities}), $\Pr[X_i=1] \geq e^{-12/\alpha}/(4\alpha)$. Let $X = \sum_{i=1}^{k'100} X_i$, by linearity $\Exp[X] \geq k' \cdot e^{-12/\alpha}/(4\alpha)=\Omega(k/\alpha)$.

    Consider similarly the random variables $Y_i$ and their sum $Y$. We have:
    \[\Pr[Y_i=1] = 1-\parens*{1-\frac{k}{2\Delta}}^{\Delta/(4\alpha k)} \geq \frac{1}{8 \alpha +1}\]

    Therefore, $\Exp[X]$ and $\Exp[Y]$ are both of order $\Theta(k / \alpha)$. Additionally, $Y$ is $1$-Lipschitz and $1$-certifiable, and $Z=Y-X$ is $2$-Lipschitz and $2$-certifiable. Hence, by \cref{lem:talagrand-difference}, $\Pr[X < k' \cdot e^{-12/\alpha}/(8\alpha)] \leq \exp(-\Omega(k/\alpha))$. Since $X$ is a lower bound on the number of trees that successfully recolor their root, at least $\Omega(k/\alpha)$ uncolored nodes get colored, w.h.p.
\end{proof}

\paragraph{Coloring the last nodes.} Assume now that only $k = O(\log n)$ uncolored nodes remain.

\begin{Algorithm}
\label{alg:harvest-low}
\harvest (for $k$ smaller than $O(\log n)$).

\textbf{Input:} the forest $F$ of augmenting trees computed by \cref{alg:growtree}.
\begin{enumerate}[label=(\textbf{L}\arabic*)]
\item\label[step]{step:harvest-low-sample} Leaves try $\beta$ fresh colors. A leaf keep its color if it not used by any colored neighbor nor tried by external neighbors.
\item Via simple aggregation on the augmenting tree, each uncolored node $v$ learns a list $S_v$ of up to $\Theta(k)$ \emph{candidate} colors that its leaves could use to recolor themselves.
\item\label[step]{step:match-uncolored} 
After some routing, the whole almost-clique knows about all of $\set{S_v, v\in \hC}$. All nodes then locally 
compute the same matching of size $\Omega(k)$ in the graph with vertices $\hC\cup [\Delta+1]$ and edges $(v,c)$ iff $c\in S_v$.
\end{enumerate}
\end{Algorithm}

Sending the sets $\set{S_v, v\in \hC}$ in \cref{step:match-uncolored} is done by \randompush.

\begin{Algorithm}
\label{alg:randompush}
\randompush.

\textbf{Input:} An almost-clique $C$ with $x$ messages.
\smallskip

For each node $v$ that knows at least one message and each incident edge, 
$v$ picks a random message that it knows and sends it along the edge.
\end{Algorithm}

\begin{restatable}{lemma}{randompushlemma}
    \label{lem:randompush}
    Let $x \leq \beta^3$ messages of $O(\log n)$ bits each known by exactly one node in the sparsified almost-clique $\sparse{C}$. After $O(\log \Delta)$ iterations of \randompush, each node in the sparsified almost-clique learns all $x$ messages, with high probability.
\end{restatable}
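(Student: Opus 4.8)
The plan is to analyze the spread of a single fixed message $m$ and show that it reaches every node of $\sparse{C}$ within $O(\log\Delta)$ rounds with probability $1-1/\poly(n)$; a union bound over the $x\le\beta^3=\poly\log n$ messages (and over the $O(\log\Delta)$ rounds) then yields the lemma. Throughout, we condition on the (w.h.p.) expansion guarantees of \cref{lem:expansion} and the degree bound of \cref{lem:sparse-graph}, which concern the graph $\sparse{C}$ only and are independent of \randompush's fresh coins. Two structural facts drive everything: (i) applying the edge-expansion bound of \cref{lem:expansion} to singletons shows every node of $\sparse{C}$ has degree $\ge\beta^4/40$, while \cref{lem:sparse-graph} bounds degrees by $O(\beta^4)$; and (ii) since there are only $x\le\beta^3$ messages in total, a node holding $m$ forwards $m$ along any fixed incident edge with probability $\ge 1/\beta^3$, and these per-edge choices are mutually independent.

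Let $S_i\subseteq C$ denote the set of nodes that know $m$ after round $i$, with $|S_0|=1$. \emph{Bootstrap round.} The owner of $m$ has $\ge\beta^4/40$ incident edges and forwards $m$ on each with probability $\ge1/\beta^3$ independently, so the expected number of neighbors that learn $m$ is $\ge\beta/40=\Omega(\log n)$, and a Chernoff bound gives $|S_1|\ge\beta/80=\Omega(\log n)$ w.h.p. \emph{Growth phase.} I claim that in any round with $\Omega(\log n)\le|S_i|\le|C\setminus S_i|$ one has $|S_{i+1}|\ge(1+c)|S_i|$ w.h.p., for an absolute constant $c>0$. Writing $S=S_i$, $\bar S=C\setminus S$, we have $|S|\le|C|/2<3\Delta/4$, so \cref{lem:expansion} gives $|E_{\sparse{C}}(S,\bar S)|\ge|S|\beta^4/40$. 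For a frontier node $v\in F:=N_{\sparse{C}}(S)\cap\bar S$ with $k_v:=|N_{\sparse{C}}(v)\cap S|$ informed neighbors, the probability $v$ learns $m$ this round is at least $1-(1-1/\beta^3)^{k_v}\ge \min(k_v,\beta^3)/(2\beta^3)$. Splitting $F$ into the nodes with $k_v\le\beta^3$ and those with $k_v>\beta^3$ and using the $O(\beta^4)$ degree bound to count the latter gives $\sum_{v\in F}\min(k_v,\beta^3)=\Omega\bigl(\frac1\beta\sum_{v\in F}k_v\bigr)=\Omega\bigl(\frac1\beta|E_{\sparse{C}}(S,\bar S)|\bigr)=\Omega(|S|\beta^3)$, so the expected number of newly informed nodes is $\Omega(|S|)$. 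The events ``$v$ learns $m$'', for distinct $v\in F$, depend on disjoint sets of per-edge forwarding choices (namely the choices of the $S$-endpoints of edges incident to $v$), hence are independent, so a Chernoff bound with mean $\Omega(|S|)=\Omega(\log n)$ yields the claim. Iterating, $|S_i|$ surpasses $|C|/2$ within $O(\log\Delta)$ rounds.

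\emph{Shrink phase and finish.} Once $|S_i|>|C|/2$, let $U_i=C\setminus S_i$; while $|U_i|\ge\beta^4/80$ (so $|U_i|<|C|/2<3\Delta/4$ and \cref{lem:expansion} applies to $U_i$), the identical computation with the roles of informed/uninformed swapped shows a $\Omega(1)$-fraction of $U_i$ learns $m$ w.h.p., so $|U_i|$ shrinks geometrically and drops below $\beta^4/80$ within $O(\log\Delta)$ rounds. Finally, when $|U_i|<\beta^4/80$, every uninformed node has $\ge\beta^4/40-\beta^4/80=\beta^4/80$ informed neighbors, hence learns $m$ next round with probability $\ge1-(1-1/\beta^3)^{\beta^4/80}\ge1-e^{-\beta/80}=1-1/\poly(n)$; a union bound over the at most $|C|$ remaining nodes finishes one message. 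Summing the round counts of the three phases (plus the final round) gives $O(\log\Delta)$, and a union bound over the $x\le\beta^3$ messages (and $O(\log\Delta)$ per-round growth/shrink events each) keeps the total failure probability at $1/\poly(n)$.

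The main obstacle is the counting step that converts edge expansion into a gain of $\Omega(|S|)$ in expectation despite the forwarding probability being as small as $1/\beta^3$: one must exploit that the ratio (min degree)$/x \ge \beta^4/\beta^3=\beta=\Theta(\log n)$ both to absorb the pushes ``wasted'' on already-informed and on high-back-degree frontier nodes, and to make Chernoff concentration hold with high rather than merely constant probability. The two endpoint regimes — the single-source first round and the tiny-complement last round, where there is no large family of independent frontier events to concentrate over — require the separate bootstrap and finishing arguments above; and one must check carefully that the frontier events really are independent, which holds because they are measurable with respect to disjoint sets of edge-forwarding coins.
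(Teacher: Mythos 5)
Your proof is correct and follows essentially the same approach as the paper: a bootstrap round, a geometric-growth phase driven by the edge-expansion bound of \cref{lem:expansion} plus Chernoff on independent per-frontier-node events, a symmetric geometric-shrink phase, and a separate finishing argument once the uninformed set is tiny. The only cosmetic difference is that you convert edge expansion to $\Omega(|S|)$ expected new recipients via a two-regime split on $\min(k_v,\beta^3)$ where the paper invokes concavity of $y\mapsto y/(x+y)$; both yield the same $\Omega(1/\beta)$ loss factor against the $\beta^4$ minimum degree, which is the same $\log n$ slack the paper exploits.
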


The proof of \cref{lem:randompush} follows easily from the expansion of the sparsified almost-clique (see \cref{lem:expansion}), and we defer it to \cref{app:randompush}.

\begin{lemma}
\label{lem:last-nodes-progress}
Let $C$ be a clique with $k \le \beta$ uncolored nodes. At the end of \cref{step:match-uncolored}, with high probability over $\Lh_{3,\ell}(C)$, we color at least $\Omega(k/\alpha)$ uncolored nodes.
\end{lemma}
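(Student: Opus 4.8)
The plan is to show that the bipartite graph built in \cref{step:match-uncolored}---on vertex set $\hC\cup[\Delta+1]$ with an edge $(v,c)$ whenever $c\in S_v$---admits a matching of size $\lfloor k/4\rfloor=\Omega(k/\alpha)$; recolouring the augmenting paths corresponding to the matched edges then colours that many uncoloured nodes, which is exactly the claim. The crucial point is that sampling $\beta=\Theta(\log n)$ colours per leaf (rather than one) boosts the per-tree success probability to $1/\poly(n)$ \emph{uniformly in $k$}, so no separate treatment of small $k$ is needed, unlike in \cref{lem:harvest-high}.

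First I would fix the random choices: condition on the high-probability events of the growing phase (over $\Lg_{3,\ell}$), so that by the Claim above at least $0.9k$ of the trees in $F$ are \emph{good}, i.e.\ have at least $\Delta/(2\alpha k)$ unspoiled leaves, and on the event of \cref{lem:conditionning-outside} (over $\Lh_{3,\ell}(V\setminus C)$) that $C$ has at most $3\Delta/\alpha$ spoiled nodes; the only randomness left is then $\Lh_{3,\ell}(C)$. After this conditioning, for each unspoiled leaf $w$ there is a fixed set of at least $k/2$ colours of $\Psi_C$ that $w$ may safely adopt; fix an arbitrary $(k/2)$-subset $P_w$ of it. In \cref{step:harvest-low-sample} the leaf $w$ draws $\beta$ \emph{fresh} uniform colours $T_w$ from $\Lh_{3,\ell}(w)$; since they are fresh, the sets $(T_w)_w$ are mutually independent and independent of all the conditioning, and $w$ makes available to its root every colour of $T_w\cap P_w$ (if $w$ sampled such a colour it can keep it and, by \cref{def:augmenting-tree} and \cref{step:filter} of \cref{alg:growtree}, the whole root--leaf path can be recoloured). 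Thus for a good root $v$, its candidate list $S_v$ contains $\bigcup_{w\text{ leaf of }v}(T_w\cap P_w)$ whenever that union has size below the cap $\Theta(k)$, and otherwise has size exactly $\Theta(k)\ge k$.

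The core estimate I would prove is: for any good tree with root $v$ and any colour set $R$ with $|R|\le k/4$,
\[
\Pr[\,S_v\subseteq R\,]\ \le\ \bigl(1-\Omega(\beta k/\Delta)\bigr)^{\Delta/(2\alpha k)}\ \le\ \exp\!\bigl(-\Omega(\beta/\alpha)\bigr)\ \le\ 1/\poly(n).
\]
Indeed, if $|S_v|\ge k$ then $S_v\not\subseteq R$ trivially, while if $|S_v|<k$ then $S_v=\bigcup_w(T_w\cap P_w)$, so $S_v\subseteq R$ forces that no leaf samples a colour of $P_w\setminus R$; since each unspoiled leaf $w$ has $|P_w\setminus R|\ge k/4$ it misses $P_w\setminus R$ with probability $\bigl(1-\tfrac{|P_w\setminus R|}{\Delta+1}\bigr)^\beta\le 1-\Omega(\beta k/\Delta)$ (using $\beta k\le\beta^2\le\Delta$), and these events are independent over the $\ge\Delta/(2\alpha k)$ leaves of the tree; the last bound uses $\beta=\lfloor C\log n\rfloor$ with $C$ a sufficiently large constant. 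The key feature is that this estimate is \emph{uniform} over all $R$ of size $\le k/4$ and depends only on the samples of that tree's leaves.

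With this in hand I would build the matching by an adaptive greedy and analyse it with the standard ``fresh randomness per step'' argument: enumerate the good roots $v_1,\dots,v_{k'}$, $k'\ge 0.9k$, by identifier, and process them in order, maintaining a matching $M$; when processing $v_j$ with $j\le\lfloor k/4\rfloor$, the set $R_j$ of colours already matched has $|R_j|\le j-1\le k/4$ and depends only on the leaves of $v_1,\dots,v_{j-1}$, hence is independent of the leaves of $v_j$, so by the core estimate there is, with probability $1-1/\poly(n)$, a colour $c\in S_{v_j}\setminus R_j$, which we add to $M$ as the edge $(v_j,c)$. A union bound over the $\lfloor k/4\rfloor\le\beta$ steps shows that, with high probability over $\Lh_{3,\ell}(C)$, the greedy never gets stuck, so the graph of \cref{step:match-uncolored} has a matching of size $\lfloor k/4\rfloor$; every node of $C$ can then agree on such a matching (e.g.\ the lexicographically first maximum one). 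For the implementation: $\{S_v:v\in\hC\}$ consists of at most $k$ lists of at most $\Theta(k)$ colours, i.e.\ $O(k^2)=O(\beta^2)\le\beta^3$ messages of $O(\log n)$ bits, so by \cref{lem:randompush} (via the expansion of \cref{lem:expansion}) $O(\log\Delta)$ iterations of \randompush deliver them to every node of $\sparse C$; together with $O(d)=O(\log\Delta)$ rounds of tree aggregation (of the $S_v$ together with a provider leaf for each colour) and $O(d)$ rounds of recolouring the matched paths, the whole step runs in $O(\log\Delta)$ rounds. The recolourings are valid and non-conflicting because the matched colours are distinct, the trees are vertex-disjoint, and \cref{step:filter} of \cref{alg:growtree} guarantees each matched path can be shifted without clashing with the unchanged colours of $C$ or with external neighbours; hence $\lfloor k/4\rfloor=\Omega(k/\alpha)$ uncoloured nodes are coloured. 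The step I expect to be the main obstacle is the core estimate together with the reason it can be fed to the greedy: a priori the leaf palettes could be concentrated on few colours, so one cannot afford a union bound over all forbidden sets $R$ (that would cost a prohibitive $e^{\Theta(k\log\Delta)}$ factor); what rescues the argument is that an unspoiled leaf always retains $\ge k/4$ usable colours outside any set of size $\le k/4$, that a good tree has $\Omega(\Delta/k)$ such leaves so the failure probability compounds to $e^{-\Omega(\beta/\alpha)}=1/\poly(n)$ regardless of how small $k$ is, and that $R_j$ in the greedy is assembled from randomness disjoint from $v_j$'s, which is precisely what lets the uniform bound be applied conditionally.
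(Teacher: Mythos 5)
Your proposal is correct, and while the scaffolding (the $0.9k$ good trees with $\Delta/(2\alpha k)$ unspoiled leaves, the $\ge k/2$ safe colors per unspoiled leaf, the \randompush dissemination and local matching computation) matches the paper, the central probabilistic step is genuinely different. The paper fixes one good tree and runs a sequential ``discover a new color'' argument over its leaves: as long as fewer than $k/4$ distinct colors have been found, each unspoiled leaf finds a fresh one with probability $\Omega(k\beta/\Delta)$, so by the martingale Chernoff bound each good root collects $\Omega(\min(k,\beta/\alpha))=\Omega(k/\alpha)$ \emph{distinct} candidate colors w.h.p.; the matching then comes from a purely deterministic count (a bipartite graph with $\Omega(k^2)$ edges and maximum degree $k$ has a greedy matching of size $\Omega(k)$). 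You instead prove a uniform tail bound --- a good tree's candidate set escapes any fixed forbidden set of size $\le k/4$ except with probability $\exp(-\Omega(\beta/\alpha))=1/\poly(n)$ --- and build the matching by an adaptive greedy across trees, exploiting that $R_j$ is measurable with respect to the earlier (vertex-disjoint) trees' fresh samples and hence independent of tree $j$'s. Your route buys a slightly cleaner independence argument and avoids having to control the number of \emph{distinct} colors per tree (you only need non-containment in a small set); the paper's route buys a cleaner separation between the probabilistic claim (each list is large) and a deterministic combinatorial fact, which is also what justifies that all nodes can locally compute the \emph{same} large matching. Your identification of the key obstacle --- that one cannot union bound over all sets $R$ and must instead use the independence of $R_j$ from tree $j$'s randomness --- is exactly right. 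One cosmetic fix: take $\lceil k/4\rceil$ rather than $\lfloor k/4\rfloor$ greedy steps (or note that the first step alone already succeeds w.h.p.) so that the bound is nonvacuous for $k\le 3$.
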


\begin{proof}
There exist $k'=0.9k$ uncolored nodes with at least $\Delta/(2\alpha k)$ successful paths to unspoiled leaves.
  
Let us now argue that an uncolored node $v$ with this many successful paths in its tree has them find at least $k/4$ distinct colors, w.h.p. Let us associate to the $i$th such leaf a random variable $X_i$ such that:
\begin{itemize}
    \item If previous leaves discovered $k/4$ distinct colors already, then $X_i = 1$ w.p.\ 1,
    \item If previous leaves discovered fewer than $k/4$ distinct colors, then $X_i = 1$ iff leaf number $i$ discovers a new color.
\end{itemize}

When previous leaves have discovered fewer than $k/4$ distinct colors, since the $i$-th leaf is unspoiled, it still has at least $k/4$ colors it can discover. The probability that it finds one of them is at least:
\[ \Exp[X_i]= 1-\parens*{1-\frac{k}{4\Delta}}^\beta \geq 1 - \frac{1}{1+k\beta/(4\Delta)} = \frac{k\beta}{4\Delta + k\beta} \geq \frac{k\beta}{5\Delta}\]
where we used \cref{lem:useful-inequalities} for the first step and the last comes from $\Delta \ge \beta^2 \ge k\beta$.

Therefore, $\Exp[\sum_i X_i] \geq \frac{\Delta}{2\alpha k} \cdot \frac{k\beta}{5\Delta} = \beta / (10\alpha)$. The series of $X_i$ satisfy the conditions of \cref{lem:chernoff}, hence it has value at least $\beta / (20\alpha)$, w.h.p.
By definition of $\set{X_i}$, if $\beta / (20\alpha) \geq k/4$, this gives that at least $k/4$ colors are found in the tree, while if $\beta / (20\alpha) \leq k/4$, we only get that at least $\beta / (20\alpha) \geq k/(20\alpha)$ colors are found in the tree (as $\beta \ge k$).

We now argue that our algorithm has the claimed runtime. Nodes can share their sampled colors with nodes in other cliques in $O(\log\Delta)$ rounds as the total amount of bits to communicate is $O(\log n\cdot\log \Delta)$.

In each tree, the root learns a subset of the colors its leaves can pick in $O(\log \Delta)$ rounds as follows: in each round, each node that knows about an available color that it has not yet sent towards the root sends as many such colors that it can towards the root (with a maximum of $\log \Delta / \log n$ colors per round). In $O(\log \Delta + k\cdot \log \Delta / \log n)$ rounds, the root learns about a set $S_v$ of $\Omega(k)$ colors, if that many are available in the tree.

Each root $v$ crafts a message of the form $(ID_v,c)$ for each of the colors $c\in S_v$, and selects a subset of $k$ of them if it has more than $k$. The almost-clique then runs \randompush with $O(k^2)=O(\beta^2)$ messages for $O(\log \Delta)$ rounds with the selected messages. 
Note that the bipartite graph with vertices $\hC\cup [\Delta+1]$ and edges $(v,c)$ such that $c\in S_v$ is now known to all nodes in $C$. Moreover, this graph has $\Omega(k^2)$ edges and maximum degree $k$. Therefore, it has a matching of size $\Omega(k)$ (which can be computed locally by a simple deterministic greedy algorithm). It follows that all nodes compute the same matching without extra communication and recolor the path corresponding to each edge in the matching in $O(d)$ rounds. Therefore, at least $\Omega(k)$ nodes get colored.
\end{proof}

\section{Colorful Matching}
\label{sec:colorful-matching}

In this section, we show the following theorem:

\colorfulmatching*

Throughout this section, we fix a clique $C$ and fix the colors \emph{used and sampled} outside of $C$ adversarially.
At the beginning of this step, nodes of $C$ are uncolored. 
For a set $D$ of colors, define $\avail_D(e)$ to be the number of colors that an anti-edge $e$ can adopt in $D$ without conflicting with external neighbors. 
This includes all possible colors in $L_2$ that external neighbors might use to color themselves. 
If $D$ contains a single color $c$, we abuse notation and denote $\avail_D(e)$ by $\avail_c(e)\in\set{0,1}$.
By extension, for a set $F$ of anti-edges $\avail_D(F)=\sum_{e\in F} \avail_D(e)$. 
A similar quantity was introduced by \cite{ACK19}. A major difference with \cite{ACK19} is that colors become unavailable if an uncolored external neighbor merely samples it in $L_2$. In particular, one uncolored external neighbor blocks $\Theta(\log n)$ colors. We can afford to lose so many colors because of preconditioning and (in contrast to \cref{sec:augmenting-path}), at this stage $\Omega(\Delta)$ colors are still available in the almost-clique.
The following lemma states that, at the beginning of this step, many edges have many available colors regardless of conditioning of random variables outside of $C$.

\begin{lemma}
\label{lem:avail}
Let $D=[\Delta+1]$ and $F$ be the set of all anti-edges in $C$.
For any (possibly adversarial) conditioning outside of $C$,  we have $\avail_D(F) \ge \avganti_C\Delta^2/3$.
\end{lemma}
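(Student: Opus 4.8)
The plan is to lower-bound $\avail_D(F)$ by counting, for each anti-edge $e = uv$ in $C$, how many colors in $[\Delta+1]$ are \emph{not} forbidden by the external neighbors of $u$ and $v$. The key point is that although each uncolored external neighbor can block $\Theta(\log n)$ colors (every color in its $L_2$ list), preconditioning (\cref{thm:strong-ACD}) has reduced the external degree of every node to at most $\emax = \Delta/\eta$ with $\eta = \max(160\alpha\beta, L_{\max}/\epsilon)$. So the total number of colors blocked for $e$ is at most $2\emax \cdot L_{\max}$ (external degree of both endpoints, times the maximum list size), which by the choice of $\eta$ is at most $2\epsilon\Delta \le \Delta/2$ (say). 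Hence $\avail_D(e) \ge \Delta+1 - \Delta/2 \ge \Delta/2$ for \emph{every} anti-edge $e$, deterministically, regardless of the conditioning outside $C$.

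First I would make precise what "can adopt color $c$ without conflicting with external neighbors" means: $\avail_c(e) = 1$ iff $c$ is used by no colored external neighbor of $u$ or $v$ \emph{and} $c \notin L_2(w)$ for any uncolored external neighbor $w$ of $u$ or $v$. Then the set of "bad" colors for $e$ is contained in $\bigcup_{w \in \Next(u) \cup \Next(v)} (\{c(w)\} \cup L_2(w))$ (with $c(w)$ only for colored $w$), whose size is at most $(e_u + e_v) \cdot (1 + L_{\max}) \le 2\emax(1+L_{\max})$. Plugging in $\emax = \Delta/\eta$ and $\eta \ge L_{\max}/\epsilon$ gives a bound of roughly $2\epsilon\Delta(1 + 1/L_{\max})\cdot L_{\max}/\ ... $ — more carefully, $2\emax(1+L_{\max}) \le 4\emax L_{\max} \le 4\epsilon\Delta$, using $L_{\max} \ge 1$ and $\emax L_{\max} \le \epsilon\Delta$. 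Since $\epsilon$ is a tiny constant ($10^{-8}$), this is far below $\Delta/2$, so $\avail_D(e) \ge \Delta - 4\epsilon\Delta \ge \Delta/2 \ge \Delta/3$.

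To conclude, I would sum over all anti-edges. By \cref{def:ACD}, the number of anti-edges in $C$ is $\avganti_C |C|/2 \ge \avganti_C(1-\epsilon)\Delta/2 \ge \avganti_C \Delta/3$ (using $|C| \ge |N(v)\cap C| \ge (1-\epsilon)\Delta$ and $\epsilon$ small). Therefore
\[
\avail_D(F) = \sum_{e \in F} \avail_D(e) \ge |F| \cdot \frac{\Delta}{2} \ge \frac{\avganti_C \Delta}{3}\cdot\frac{\Delta}{2} \ge \frac{\avganti_C \Delta^2}{3},
\]
after adjusting the constant (the clean way is to note $|F| \ge \avganti_C(1-\epsilon)\Delta/2$ and $\avail_D(e)\ge(1-4\epsilon)\Delta$, so $\avail_D(F) \ge (1-\epsilon)(1-4\epsilon)\avganti_C\Delta^2/2 \ge \avganti_C\Delta^2/3$). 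I do not expect any real obstacle here — the lemma is essentially a bookkeeping statement and the only thing to be careful about is tracking which quantities $\emax$, $L_{\max}$, $\eta$, $\epsilon$ are constants versus $\polylog n$, and verifying the inequality $4\emax L_{\max} \le \Delta/6$ follows from \cref{eq:emax}. The "deterministic, regardless of conditioning" part is automatic because the bound on the number of forbidden colors only uses the external degree bound, never any probabilistic property of the colors chosen or sampled outside $C$.
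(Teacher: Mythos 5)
Your proof takes essentially the same route as the paper's: a deterministic, per-anti-edge count of blocked colors using the preconditioned external-degree bound and the $L_2$ list-size bound, followed by summing over the $\avganti_C|C|/2$ anti-edges. The one imprecision is your claim that preconditioning bounds the \emph{total} external degree by $\emax=\Delta/\eta$: \cref{part:SACD-ext} of \cref{thm:strong-ACD} only bounds the number of neighbors in \emph{other surviving almost-cliques} by $\emax$; an endpoint may still have up to $\epsilon\Delta$ external neighbors that were \emph{colored} during preconditioning, so $e_u+e_v\le 2\emax$ is false in general. This does not break the argument, because those colored neighbors block only their single fixed color: the correct accounting (as in the paper) is $\epsilon\Delta\cdot 1 + \emax\cdot L_{\max}\le 2\epsilon\Delta$ blocked colors per endpoint, which still yields $\avail_D(e)\ge(1-4\epsilon)\Delta$ and hence the stated bound after summation.
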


\begin{proof}
For a fixed edge $e\in F$, we bound from below its number of available colors. Each \emph{colored} neighbor blocks at most one color. Observe that to compute a colorful matching, uncolored nodes use only colors sampled in $L_{2,i}$ for $i\in[O(1/\epsilon)]$ and $L_2^*$. By \cref{lem:lists-size}, an \emph{uncolored} neighbor in another clique blocks at most $L_{\max}=O(\log n)$ colors. Since a node in $C$ has at most $\epsilon\Delta$ colored neighbors (necessarily outside $C$) and at most $\emax=\Delta/\eta$ neighbors in other cliques (by \cref{part:SACD-ext} of \cref{thm:strong-ACD} and \cref{eq:emax}), we have $\avail_D(e) \ge \Delta - \epsilon\Delta - L_{\max} \cdot \Delta/\eta \ge (1-2\epsilon)\Delta$.
Summing over all edges, we get
\[\avail_D(F) = \sum_{e}\avail_D(e) \ge \frac{\avganti_C|C|}{2}\parens*{1-2\epsilon}\Delta \ge \avganti_C\Delta^2/3\ .\qedhere\]
\end{proof}

To compute a colorful matching, we greedily add same-colored anti-edges to $M$. Each time we do so, we remove the color of that edge from $D$, which then becomes unavailable to other edges, and we remove the matched edge as well as all its adjacent edges from $F$. We argue that as long as the total number of available colors is large, there must be colors available to many edges. We call a color $c$ \emph{heavy} if 
$\avail_c(F)\ge \avganti_C\Delta/20$. 
The following claim is immediate from the limited contributions to $\avail_D(F)$ from both all non-heavy colors and from each heavy color individually.

\begin{claim}
\label{claim:heavy-colors}
As long as $\avail_D(F)\ge \avganti_C\Delta^2/6$, there are at least $\Delta/10$ heavy colors in $D$. 
\end{claim}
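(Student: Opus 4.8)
The plan is to bound the total availability $\avail_D(F)$ by splitting the sum over colors $c \in D$ according to whether $c$ is heavy or not, and to observe that each individual color $c$ contributes at most $\avail_c(F) = \sum_{e \in F} \avail_c(e)$ to the total, which is at most $|F| \le \avganti_C |C| / 2 \le \avganti_C \Delta$ trivially (since $\avail_c(e) \in \{0,1\}$). So first I would write
\[
\avail_D(F) = \sum_{c \in D} \avail_c(F) = \sum_{c \text{ heavy}} \avail_c(F) + \sum_{c \text{ non-heavy}} \avail_c(F).
\]

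Next I would upper-bound each of the two pieces. For the non-heavy colors: by definition of non-heavy, each such $c$ satisfies $\avail_c(F) < \avganti_C \Delta / 20$, and there are at most $|D| \le \Delta + 1$ of them, so their total contribution is less than $(\Delta+1)\cdot \avganti_C\Delta/20 \le \avganti_C\Delta^2/18$ (say, absorbing the $+1$ since $\Delta$ is large). For the heavy colors: if there were fewer than $\Delta/10$ of them, then since each contributes at most $\avail_c(F) \le |F| \le \avganti_C\Delta$ — using that $F$ consists of anti-edges and $|C| \le (1+\epsilon)\Delta \le 2\Delta$, so $|F| = \avganti_C|C|/2 \le \avganti_C\Delta$ — their total contribution would be less than $(\Delta/10)\cdot \avganti_C\Delta = \avganti_C\Delta^2/10$.

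Putting the two bounds together: if there were strictly fewer than $\Delta/10$ heavy colors, then
\[
\avail_D(F) < \frac{\avganti_C\Delta^2}{10} + \frac{\avganti_C\Delta^2}{18} < \frac{\avganti_C\Delta^2}{6},
\]
contradicting the hypothesis $\avail_D(F) \ge \avganti_C\Delta^2/6$. Hence there must be at least $\Delta/10$ heavy colors in $D$, as claimed.

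The only thing to be slightly careful about is the crude bound $\avail_c(F) \le |F| \le \avganti_C\Delta$ used for heavy colors — one should make sure the constants genuinely close, i.e. that $1/10 + 1/18 < 1/6$, which holds since $1/10 + 1/18 = 14/90 = 7/45 < 7.5/45 = 1/6$. I do not expect any real obstacle here; this is a direct pigeonhole/averaging argument once the contribution of a single color is correctly bounded. (If one wanted a cleaner constant one could instead bound $\avail_c(F) \le \avganti_C |C|/2 \le (1+\epsilon)\avganti_C\Delta/2$, but the loose bound already suffices.)
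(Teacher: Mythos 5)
Your proof is correct and is exactly the argument the paper has in mind (the paper calls the claim ``immediate from the limited contributions to $\avail_D(F)$ from both all non-heavy colors and from each heavy color individually''): decompose $\avail_D(F)$ over colors, bound non-heavy colors by $(\Delta+1)\cdot\avganti_C\Delta/20$ and each heavy color by $|F|\le\avganti_C\Delta$, and check that $1/10+1/18<1/6$. The constants close as you verified, so there is nothing to add.
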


\paragraph{When $\avganti_C$ is large.} We first run the following algorithm. It produces a large enough matching in cliques with $\avganti_C \ge \beta$.

\begin{Algorithm}
\matching.

\textbf{Input:} a constant $0 < K < 1/(18\epsilon)$.

\textbf{Output:} a colorful matching $M_C$ of size $K\cdot\avganti_C$ in each almost-clique such that $\avganti_C \ge \beta$.

\vspace{1em}
Initially, $M_C=\emptyset$ for each almost-clique $C$.

For $i=1$ to $5\cdot10^3\cdot K$, in each almost-clique $C$ \emph{in parallel}:
\begin{enumerate}
\item\label[step]{step:clique-slack-RC} Each uncolored $v\in C$ is active. It samples each $c\in[\Delta+1]$ independently into $L_{2,i}(v)$ with probability $p=1/(4\Delta)$. 
\item\label[step]{step:clique-slack-inactive} Each $v\in C$ sampling less or fewer than 1 color ($|L_{2,i}(v)| \ne 1$) becomes inactive.
\item\label[step]{step:clique-slack-inactive2} Each $v\in C$ 
sampling a single color $c$ becomes inactive if $c$ is used by a colored external neighbor, by an anti-edge in $M_C$, or is sampled by an external neighbor of $v$.
\item\label[step]{step:clique-slack-adopt} Each \emph{active} node retains its color if it has an \emph{active} anti-neighbor with this color. If the same color is used by two or more anti-edges, we keep the anti-edge with the smallest ID.
\end{enumerate}
\end{Algorithm}

\begin{lemma}
\label{lem:colorful-matching-porous}
Let $K > 0$ be a constant such that $K < 1/(18\epsilon)$.
W.p. $1-\exp(-\Omega(\avganti_C))$ over the randomness in $L_2(C)$, the set $M_C$ produced by $\matching$ is a colorful matching in $C$ of size at least $K\cdot\avganti_C$.
\end{lemma}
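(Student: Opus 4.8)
The plan is to track the quantity $\avail_D(F)$ through the $5 \cdot 10^3 \cdot K$ iterations of \matching and argue that, as long as the matching has size less than $K \cdot \avganti_C$, each iteration adds $\Omega(\avganti_C)$ new anti-edges to $M_C$ in expectation, with concentration of the same order. First I would set up the potential: initially $\avail_D(F) \ge \avganti_C \Delta^2/3$ by \cref{lem:avail}, and the matching is empty. I claim the target size $K \cdot \avganti_C < \avganti_C/(18\epsilon)$ is small enough that, each time we add a matched anti-edge, $\avail_D(F)$ drops by only a bounded amount: removing a color $c$ from $D$ costs at most $\avail_c(F) \le \Delta$ (before it was heavy it could be up to all of $F$, but crucially each color contributes at most $\Delta$ to $\avail_D(F)$ since $\avail_c(e) \in \{0,1\}$ and... wait, $\avail_c(F) = \sum_e \avail_c(e)$ can be as large as $|F| = \avganti_C|C|/2 \approx \avganti_C\Delta/2$), and removing the matched edge plus its $O(\avganti_C)$ adjacent anti-edges from $F$ costs at most $O(\avganti_C \Delta)$ (each such anti-edge has at most $\Delta$ available colors). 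So after accumulating $K\avganti_C$ matched edges, the total loss to $\avail_D(F)$ is at most $O(K \avganti_C^2 \Delta)$, which by the relation $K < 1/(18\epsilon)$ and $\avganti_C \le \epsilon\Delta$ (average anti-degree is at most $\epsilon\Delta$ in an $\epsilon$-almost-clique) is at most a small fraction of $\avganti_C\Delta^2/3$; hence $\avail_D(F)$ stays above $\avganti_C\Delta^2/6$ throughout, so \cref{claim:heavy-colors} keeps supplying $\Delta/10$ heavy colors.

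Next I would analyze a single iteration conditioned on the current state (with $\ge \Delta/10$ heavy colors available). For a heavy color $c$, the anti-edges $e$ with $\avail_c(e) = 1$ number at least $\avganti_C\Delta/20$; each such anti-edge becomes monochromatic in $c$ and both endpoints survive Steps~2--3 (i.e., each endpoint samples exactly the single color $c$, $c$ is not blocked by a colored external neighbor, not in $M_C$, and not sampled by an external neighbor) with probability $\Theta(1/\Delta^2)$ — the $p = 1/(4\Delta)$ sampling of a single color contributes $\Theta(1/\Delta)$ per endpoint, and the external-neighbor survival is $\Omega(1)$ because there are only $\emax = \Delta/\eta$ external neighbors each sampling $O(1/\Delta)$ colors so the blocking probability is $O(\emax/\Delta) = O(1/\eta) = o(1)$. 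So the expected number of newly available monochromatic anti-edges is $\Omega(\Delta/10) \cdot \Omega(\avganti_C \Delta/20) \cdot \Theta(1/\Delta^2) = \Omega(\avganti_C)$ (tie-breaking by smallest ID in Step~4 loses at most a constant factor since few colors are doubly-used). For concentration, I would use a Chernoff/Talagrand-type bound in the style of \cite{EPS15}: the number of new matched edges is a function of the independent per-node color samples, each node affecting it by $O(1)$, so it concentrates to within a constant factor of its mean with probability $1 - \exp(-\Omega(\avganti_C))$. Since $K$ and hence the iteration count is constant, a union bound over the $O(K)$ iterations preserves the $1 - \exp(-\Omega(\avganti_C))$ failure probability, and the matching reaches size $K\avganti_C$ after $O(K)$ iterations.

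The main obstacle I anticipate is the concentration step: we need a constant-factor lower tail on the number of new matched edges that holds with probability $1 - \exp(-\Omega(\avganti_C))$ even though $\avganti_C$ can be as small as $\beta = \Theta(\log n)$. The difficulty is that the event ``anti-edge $e$ gets added to $M_C$'' is not independent across edges sharing a vertex, and the smallest-ID tie-breaking in Step~4 introduces further dependence. I would handle this by first lower-bounding, for each heavy color $c$ separately, the number of surviving monochromatic $c$-anti-edges using a bounded-difference inequality (each node's random choice is a single color, affecting the count for color $c$ by at most its anti-degree, but one can bucket by color to keep Lipschitz constants $O(1)$), then sum the per-color contributions and subtract a crude union bound for collisions where one color is claimed by two anti-edges — such collisions are rare since they require four specific nodes to all sample the same $c$, an event of probability $O(1/\Delta^4)$ per 4-tuple. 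A cleaner alternative, which I would pursue if the direct argument gets unwieldy, is to only reveal the samples of a carefully chosen subset of $\Theta(\avganti_C)$ vertex-disjoint anti-edges per heavy color — making the relevant indicators genuinely independent — and apply a plain Chernoff bound; this sacrifices constants but immediately gives the $1 - \exp(-\Omega(\avganti_C))$ bound. The remaining bookkeeping — verifying $K\avganti_C \le \epsilon\Delta/18$ keeps $\avail_D(F)$ above the $\avganti_C\Delta^2/6$ threshold, and that $\avganti_C \ge \beta$ makes $\exp(-\Omega(\avganti_C))$ a genuine high-probability bound — is routine given \cref{eq:eps-alpha,eq:emax} and \cref{lem:avail,claim:heavy-colors}.
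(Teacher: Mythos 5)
Your plan follows the paper's proof essentially step for step: the same potential argument on $\avail_D(F)$ via \cref{lem:avail} and \cref{claim:heavy-colors}, the same $\Omega(\avganti_C)$ expected gain per iteration from heavy colors, and a concentration step that the paper realizes by counting, per color, the independent indicators ``exactly one anti-edge sampled $c$'' (plain Chernoff, since colors are sampled independently) minus a Talagrand-controlled correction for endpoints that sampled more than one color --- very close to your bucketing-by-color and vertex-disjoint-edges alternatives. One small slip worth fixing: a matched anti-edge can have up to $\Theta(\epsilon\Delta)$ (not $O(\avganti_C)$) adjacent anti-edges, since individual anti-degrees can reach $2\epsilon\Delta$ even when the average is tiny, so the per-insertion loss to $\avail_D(F)$ is $\Theta(\epsilon\Delta^2)$ rather than $O(\avganti_C\Delta)$; the hypothesis $K<1/(18\epsilon)$ is precisely what makes $K\avganti_C$ such insertions affordable, exactly as in the paper's accounting.
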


\begin{proof}
Suppose $\avail_D(F)\ge \avganti_C\Delta^2/6$. For each color $c$, define $A_c$ as the indicator random variable of the event that at least one anti-edge in $C$ samples $c$. For a heavy color $c\in D$, we bound $\Pr[A_c=1]$ from below by the probability that exactly one anti-edge samples $c$: 
\begin{align*}
    \Pr[A_c=1]&\ge\sum_{e\in F}\Pr[\text{endpoints of $e$ are the only nodes to sample $c$}]\\
    &\ge \avail_c(F) \cdot p^2 \cdot \parens*{1-p}^{|C|-2} \\
    &\ge \frac{\avganti_C\Delta}{20} \cdot \frac{1}{16\Delta^2} \cdot \exp(-2p|C|)  \tag{because $c$ is heavy}\\
    &\ge \frac{\avganti_C}{320e\Delta} \ . \tag{because $|C| \le (1+\epsilon)\Delta$}
\end{align*}
By \cref{claim:heavy-colors}, there exist at least $\Delta/10$ such heavy colors in $D$. Therefore, $\Exp[A]=\Exp[\sum_{c\in D} A_c] \ge \avganti_C/320e$. Since each color is sampled independently, random variables $A_c$ are independent and we can apply the classic Chernoff bound. With probability $1-\exp(-\Omega(\avganti_C))$, we have $A \ge \avganti_C/640e$.

Random variable $A$ is not the number of anti-edges we can insert in $M$ because we do not account for nodes sampling more than one color (\cref{step:clique-slack-inactive}). We emphasize that nodes can adopt any color available to them: by definition of $\avail$ it cannot be conflicting on the inside nor with colored nodes in $M$.
Let $B_c$ be the random variable equal to one if and only if at exactly one anti-edge sampled $c$ (i.e., $A_c=1$) and at least one endpoint becomes inactive in \cref{step:clique-slack-inactive}.
Condition on $A_c=1$ and let $e$ be the anti-edge that sampled $c$. 
Since an endpoint samples each color independently with probability $p$, it only samples $c$ with probability $(1-p)^{\Delta} \ge \exp(-1/2) \ge 3/5$ (by \cref{eq:exp-ineq}). Therefore, the probability that one endpoint of $e$ sampled more colors is at most $2\times (1-3/5)=4/5$. 
We overestimate the number of anti-edges inserted in $M$ by $B=\sum_{c\in D} B_c$ which, in expectation, is $\Exp[B]=\sum_{c\in D}\Exp[B_c|A_c=1]\Exp[A_c] \le (4/5)\cdot \Exp[A]$.
Random variable $B$ is $2$-Lipschitz.
It is also $2$-certifiable, because to certify that $B_c=1$ we can point at two colors sampled by one of the endpoints. By Talagrand inequality (\cref{lem:talagrand-difference}),
with probability $1-\exp(-\Omega(\avganti_C))$, we add $A-B \ge \Exp[A-B]/2 \ge \frac{\avganti_C}{5\cdot10^3}$ anti-edges to the colorful matching.

This shows that as long as $\avail_D(F) \ge \avganti_C\Delta^2/6$, we add $\frac{\avganti_C}{5\cdot10^3}$ anti-edges to the matching at each iteration. Hence, after $5\cdot10^3 \cdot K$ iterations, the matching has $K\cdot\avganti_C$ anti-edges. Observe that when we insert an anti-edge in $M_C$, we remove one color $c$ from $D$ and at most $2\epsilon\Delta$ anti-edges from $F$. Color $c$ contributed at most $\avail_c(F)\le\avganti_C\Delta \le \epsilon\Delta^2$ to $\avail_D(F)$ and each anti-edge at most $\Delta$.
Inserting an anti-edge in $M_C$ decreases $\avail_D(F)$ by at most $3\epsilon\Delta^2$. If, after some iterations, the number of available colors drops below $\avganti_C\Delta^2/6$, we must have inserted at least $\frac{\avganti_C\Delta^2/6}{3\epsilon\Delta^2} = \avganti_C/(18\epsilon) > K\cdot\avganti_C$ anti-edges into $M$.
\end{proof}

\paragraph{When $\avganti_C$ is small.}
If $\avganti_C \le O(\log n)$, \cref{lem:colorful-matching-porous} fails to compute a colorful matching with high probability. In this section, we explain how to compute a large enough matching in cliques with small anti-degree. Note that nodes can count the number of anti-edges in the matching in $O(\log\Delta)$ rounds. If they find fewer than $K\beta$ anti-edges, it must be that $\avganti_C < \beta$. We then uncolor all nodes in $C$ and run \cref{alg:matching-compact}.

Intuitively, since $\avganti_C \ge \Omega(1)$, if we repeat the previous procedure $\Theta(\log n)$ times, we would get a failure probability of $\parens*{e^{-\Theta(\avganti_C)}}^{\Theta(\log n)} = 1/\poly(n)$. This implies that even when $\avganti_C$ is a small constant, a colorful matching of size $\Theta(\avganti_C)$ exists. This was, in fact, already shown in the first palette sparsification theorem.

\begin{lemma}[{\cite[Lemma 3.2]{ACK19}}]
\label{lem:ack-colorful-matching}
Let $C$ be a $\epsilon$-almost clique, $D\subseteq[\Delta+1]$ and $F$ a subset of anti-edges in $C$. Fix any partial coloring given by \cref{thm:strong-ACD} where nodes of $C$ are uncolored and $\avail_D(F) \ge \avganti_C\Delta^2/3$. Suppose each node sample colors in $[\Delta+1]$ independently with probability $q\eqdef\gamma\frac{\beta}{\Delta}$ for some constant $\gamma(\epsilon)=\gamma > 0$ (depending on $\epsilon$ but not $n$ nor $\Delta$). Then there exists a colorful matching of size at least $\avganti_C/(414\epsilon)$ with high probability.
\end{lemma}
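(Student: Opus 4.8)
The plan is to prove this as an \emph{existence} statement by the same greedy extraction used for \cref{lem:colorful-matching-porous}, but iterated $\Theta(\log n/\epsilon^2)$ times, which I access by coupling the single $q$-sampling (with $q=\gamma\beta/\Delta$) to many sparser rounds. Concretely, set $p\eqdef 1/(4\Delta)$ and $m\eqdef\floor{4\gamma\beta}$, and choose $\gamma$ to be a sufficiently large $\Theta(1/\epsilon^2)$, so $m=\Theta(\log n/\epsilon^2)$. Since $1-(1-p)^m\le mp\le q$, for every node and every color I can couple the two experiments so that each node's $q$-sample contains the union of $m$ independent $p$-samples of that node. Consequently, any colorful matching built using only the $m$ round-samples remains a valid colorful matching under the $q$-sampling (each matched edge's color is still sampled by both its endpoints), and it suffices to show that the following auxiliary $m$-round process produces a colorful matching of size $\avganti_C/(414\epsilon)$ w.h.p.\ over the $m$ $p$-samples.

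Maintain $D'\subseteq D$ of still-unused colors and $F'\subseteq F$ of still-available anti-edges, initially $D'=D$ and $F'=F$. In round $i$ run one iteration of \matching{} with the round-$i$ $p$-samples and with $M_C,D',F'$ in their usual roles: a node that sampled exactly one color $c$ that is available for it (not used by a colored external neighbor, not sampled by an external neighbor, not removed) survives; for each color $c$ sampled by exactly one available anti-edge of $F'$, add the smallest-ID such edge $e$ to $M_C$, delete $c$ from $D'$, and delete $e$ and its incident anti-edges from $F'$. As in \cref{lem:colorful-matching-porous}, inserting an anti-edge decreases $\avail_{D'}(F')$ by at most $3\epsilon\Delta^2$, so if ever $\avail_{D'}(F')<\avganti_C\Delta^2/6$ we have already inserted more than $\avganti_C/(18\epsilon)>\avganti_C/(414\epsilon)$ anti-edges and are done. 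Otherwise \cref{claim:heavy-colors} applies at the start of round $i$: there are $\ge\Delta/10$ heavy colors, each sampled by exactly one available anti-edge with probability $\Omega(\avganti_C/\Delta)$ (the computation is that of \cref{lem:colorful-matching-porous}, now with $p=1/(4\Delta)$), so the number $A_i$ of such colors satisfies $\Exp[A_i]=\Omega(\avganti_C)$; after the correction for nodes sampling several colors and colors sampled by several anti-edges, via \cref{lem:talagrand-difference} exactly as in \cref{lem:colorful-matching-porous}, with probability $\ge 1-e^{-\Omega(\avganti_C)}$ round $i$ adds at least $\avganti_C/(5\cdot 10^3)$ anti-edges. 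Call such a round \emph{successful}.

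If $\avganti_C<414\epsilon$ the target size is below $1$ and there is nothing to prove, so assume $\avganti_C\ge 414\epsilon$; then, conditioned on the entire past and on the target not yet being reached, each round is successful with probability $\rho\ge 1-e^{-\Omega(\avganti_C)}=\Omega(\epsilon)$ (using $1-e^{-x}\ge x/2$). Since each successful round adds $\ge\avganti_C/(5\cdot 10^3)$ anti-edges, only $O(1/\epsilon)$ successful rounds are needed to reach size $\avganti_C/(414\epsilon)$. A standard stopping-time domination shows that the number of successful rounds among the first $m$ stochastically dominates a $\mathrm{Binomial}(m,\rho)$ with mean $m\rho=\Theta(\log n/\epsilon^2)\cdot\Omega(\epsilon)=\Omega(\log n/\epsilon)$, which is $\gg O(1/\epsilon)$; a Chernoff bound (\cref{lem:chernoff}) then gives at least $\Omega(\log n/\epsilon)$ successful rounds with probability $1-e^{-\Omega(\log n/\epsilon)}=1-1/\poly(n)$. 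In either case $M_C$ is a valid colorful matching (each color used by at most one matched edge, each edge an anti-edge that can adopt its color) of size at least $\avganti_C/(414\epsilon)$, w.h.p.

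The main obstacle is precisely the regime where $\avganti_C$ is a small constant: there no single round succeeds with high probability, so the argument genuinely relies on having $\Theta(\log n)$ virtual rounds and on $\gamma=\Theta(1/\epsilon^2)$ being large enough that $m\rho$ dominates the $O(1/\epsilon)$ successful rounds required. The other delicate point is setting up the coupling so that the per-round analysis of \cref{lem:colorful-matching-porous} — the filtering, the availability bookkeeping against external neighbors (handled in aggregate by the hypothesis $\avail_D(F)\ge\avganti_C\Delta^2/3$, which is what \cref{lem:avail} provides), and the Talagrand estimate — carries over unchanged when the colors ``sampled'' in round $i$ are taken to be exactly the round-$i$ $p$-samples; once that is in place, everything else is reuse of \cref{lem:avail}, \cref{claim:heavy-colors}, \cref{lem:chernoff}, and \cref{lem:talagrand-difference}.
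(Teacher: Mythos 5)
The paper does not actually prove this lemma: it is discharged by citation to \cite[Lemma 3.2]{ACK19}, with only the remark (after \cref{lem:avail}) that the stronger definition of $\avail$ still satisfies the hypothesis needed by the external proof. Your proposal instead gives a self-contained argument, and it formalizes exactly the amplification heuristic the paper sketches in the paragraph preceding the lemma (``if we repeat the previous procedure $\Theta(\log n)$ times, we would get a failure probability of $(e^{-\Theta(\avganti_C)})^{\Theta(\log n)}$''). The coupling is sound: with $m=\floor{4\gamma\beta}$ and $p=1/(4\Delta)$ you have $1-(1-p)^m\le mp\le q$ independently per (node, color) pair, so the union of the $m$ virtual $p$-rounds embeds into the single $q$-sample, and any matching built greedily from the rounds is colorful and conflict-free under the $q$-sampling because the availability filtering is against the fixed external forbidden sets already accounted for in $\avail_D(F)\ge\avganti_C\Delta^2/3$. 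The accounting that either $\avail_{D'}(F')$ stays above $\avganti_C\Delta^2/6$ (so \cref{claim:heavy-colors} applies) or more than $\avganti_C/(18\epsilon)$ edges were already inserted is the same as in \cref{lem:colorful-matching-porous}, and the stopping-time domination by a $\mathrm{Binomial}(m,\rho)$ with $m\rho=\Omega(\log n/\epsilon)\gg O(1/\epsilon)$ is valid since each round uses fresh, independent samples. What your route buys is a proof that does not lean on an external lemma and that makes explicit why $\gamma=\Theta(1/\epsilon^2)$ suffices; what it costs is re-deriving the per-round guarantee in a regime the paper never needed to handle quantitatively.

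That regime is where the one genuine soft spot lies. Your per-round success probability $\rho\ge 1-e^{-\Omega(\avganti_C)}=\Omega(\epsilon)$ is extracted from the Talagrand step of \cref{lem:colorful-matching-porous}, but \cref{lem:talagrand-difference} is only stated ``for $\Exp[h]$ large enough''; when $\avganti_C$ is a small constant, $\Exp[h]=\Theta(\avganti_C)$ is tiny and the additive term $30c\sqrt{r\Exp[f]}$ in \cref{lem:talagrand} swamps the deviation, so the bound $e^{-\Omega(\avganti_C)}$ is not actually licensed there, and neither is the inference $1-e^{-\Omega(\avganti_C)}\ge\Omega(\avganti_C)$. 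You should replace it, in the regime $\avganti_C\le c_0$ for a suitable constant $c_0$, by a direct argument that at least one anti-edge is inserted in a round with probability $\Omega(\avganti_C)$: for each heavy color $c$ let $E_c$ be the event that exactly one available anti-edge samples $c$, its endpoints sample no other color, and no other node of $C$ samples $c$; then $\Pr[E_c]=\Omega(\avganti_C/\Delta)$, $\sum_c\Pr[E_c]=\Omega(\avganti_C)$, and $\sum_{c\neq c'}\Pr[E_c\cap E_{c'}]=O(\avganti_C^2)$, so Bonferroni gives $\Pr[\bigcup_c E_c]=\Omega(\avganti_C)$. Since for $\avganti_C\le 5\cdot 10^3$ a ``successful'' round only needs to add a single edge to meet your per-round quota of $\avganti_C/(5\cdot 10^3)$, this patch suffices and the rest of your argument goes through unchanged.
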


Note that our definition of $\avail$ is stronger than the one of \cite{ACK19} because it removes colors sampled by active external neighbor. Regardless of that, \cref{lem:avail} shows that we have a large number of available colors, which is the only requirement for the proof of \cite{ACK19}.

\begin{Algorithm}
\label{alg:matching-compact}
\matching (for almost-cliques $C$ with $\avganti_C < \beta$ in parallel).

\begin{enumerate}
\item Each $v\in C$ samples each $c\in[\Delta+1]$ into $L_2^*(v)$ independently with probability $q\eqdef \gamma\frac{\beta}{\Delta}$. 

\item Each node computes $S_v \eqdef L_2^*(v)\setminus L_2^*(V\setminus C)$, the set of colors that do not collide with those of external neighbors.

Let $\hS_v\eqdef\set{c\in S_v:~\exists u\in C\setminus N(v)\text{ such that } c\in S_u}$ be the colors of $v$ sampled by at least one anti-neighbor.

\item We count the number of candidate anti-edges in $C$: the number of pairs $(uv,c)$ where $uv$ is an anti-edge and $c\in \hS_v\cap \hS_u$. If there are more than $U\eqdef 4\gamma^2 K\avganti_C\beta^2$ candidate anti-edges, we select a set of colors $D$ such that the number of candidate edges with that color is at least $D$ and at most $O(\beta^3)$. If there are less than $U$ candidate edges, we let $D=[\Delta+1]$.

\item Each node $v$ forms messages $(ID_v, c)$ for each $c\in \hS_v\cap D$. 

Use \randompush to disseminate all messages $(ID_{v}, c)$ within $C$. 

Each node $v$ with $c\in \hS_v\cap D$ forms a message $(ID_u,ID_v,c)$ for each anti-neighbor $u$ with $c\in \hS_u\cap D$.

Use \randompush to disseminate all messages $(ID_u, ID_v, c)$ within $C$. 

\item Compute locally the colorful matching using anti-edges disseminated in the previous step.
\end{enumerate}
\end{Algorithm}

\begin{lemma}
\label{lem:colorful-matching-compact}
Let $K\in(0,1/(18\epsilon))$ be a constant. Consider all cliques with $1/(2\alpha) \le \avganti_C\le \beta$.
If nodes 
sample each color independently with probability 
$q=\gamma \frac{\beta}{\Delta}$ where $\gamma$ is the constant from \cref{lem:ack-colorful-matching},  
then in each clique $C$, with high probability, there exists a colorful matching that does not conflict with nodes on the outside. 
Moreover, \cref{alg:matching-compact} finds this matching in $O(\log\Delta)$ rounds.
\end{lemma}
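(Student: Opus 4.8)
The plan is to decouple the \emph{existence} of the desired colorful matching from its \emph{distributed computation}, and to reduce the latter to a purely local task once a small set of candidate pairs $(uv,c)$ has been disseminated throughout $C$.

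\textbf{Existence.} By \cref{lem:avail}, every clique $C$ satisfies $\avail_{[\Delta+1]}(F)\ge \avganti_C\Delta^2/3$ no matter how the colors and samples outside $C$ are fixed, so \cref{lem:ack-colorful-matching} (with $D=[\Delta+1]$ and sampling probability $q=\gamma\beta/\Delta$) yields, with high probability, a colorful matching $M^\star$ of size at least $K\avganti_C$ (for the constants in play; recall $K\le 1/(18\epsilon)$ and $\gamma=\Theta(1/\epsilon^2)$). The crucial observation is that every edge of $M^\star$ is a candidate pair in the sense of Step~3 of \cref{alg:matching-compact}: if $uv\in M^\star$ is colored $c$, then both endpoints adopted the sampled color $c$, so $c\in L_2^*(u)\cap L_2^*(v)$, and $c$ is available for $uv$, so $c\notin L_2^*(V\setminus C)$ and $c\in S_u\cap S_v$; since $u,v$ are anti-neighbours, each is a witness for the other and $c\in\hS_u\cap\hS_v$. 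Hence a valid colorful matching of size $\ge K\avganti_C$ that conflicts with no external neighbour already lives inside the candidate pairs, and the only remaining question is to make all of $C$ agree on enough candidate pairs to recover one locally.

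\textbf{Structure of the candidate pairs.} First I would control the candidate graph. Each node puts a fixed color $c$ into its $S$-set with probability at most $q$, independently across nodes, with expectation $(1+\epsilon)\Delta q=\Theta(\gamma\beta)=\Theta(\log n)$; a Chernoff bound plus a union bound over the $\Delta+1$ colors give that, w.h.p., at most $O(\beta)$ nodes of $C$ carry any given color in their $S$-set. Some care is needed because individual anti-degrees in $C$ can be as large as $\epsilon\Delta$ even when $\avganti_C$ is small; this is exactly what the comparatively generous rate $q=\gamma\beta/\Delta$ buys. It follows that, w.h.p., every color lies in $O(\beta^2)$ candidate pairs and every node lies in $O(\beta^2)$ candidate pairs, so in the conflict graph on candidate pairs — two pairs adjacent when they share a vertex or a color — every vertex has degree $O(\beta^2)$.

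\textbf{Dissemination, the case split of Step~3, and runtime.} Now run \cref{alg:matching-compact}. If the total number of candidate pairs is at most $U=4\gamma^2 K\avganti_C\beta^2=O(\beta^3)$ (using $\avganti_C\le\beta$), disseminate all of them via \randompush; by \cref{lem:randompush} these $O(\beta^3)$ messages reach every node of $C$ in $O(\log\Delta)$ rounds, and by the existence argument above a colorful matching of size $\ge K\avganti_C$ that does not conflict outside sits among them, so every node extracts the same one by a deterministic greedy rule. If the total exceeds $U$, we restrict to a color set $D$: adding colors one at a time increases the retained candidate count by at most $O(\beta^2)$, so we can stop with the count in $[U,\,U+O(\beta^2)]\subseteq[U,O(\beta^3)]$ and again disseminate via \randompush. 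In this case the retained pairs certify a large matching on their own: the conflict graph restricted to them has at least $U$ vertices and maximum degree $O(\beta^2)$, hence an independent set (a colorful matching) of size at least $U/O(\beta^2)=\Omega(\gamma^2 K\avganti_C)\ge K\avganti_C$, the last step using that $\gamma=\Theta(1/\epsilon^2)$ is a sufficiently large absolute constant. Either way $C$ agrees on a common set of candidate pairs and computes a size-$(\ge K\avganti_C)$ colorful matching with no further communication. The runtime is $O(\log\Delta)$: Step~1 is local; in Step~2 each node learns which of its $O(\log n)$ sampled colors collide with (or are used by) an external neighbour, which involves only $\poly\log n$ bits along sparsified-graph edges and is pipelined within $O(\log\Delta)$ rounds; Steps~3--5 are one aggregation over $C$ (diameter $O(\log\Delta)$ by \cref{cor:spar-clique-low-diam}) plus $O(\log\Delta)$ rounds of \randompush.

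\textbf{Main obstacle.} The delicate step is Step~3, where two competing requirements must be reconciled: we can only afford to disseminate $O(\beta^3)$ candidate pairs — whence the per-color bound $O(\beta^2)$, a high-probability claim that needs the calibration $q=\Theta(\log n/\Delta)$ together with a union bound over all colors — yet the surviving set must still witness a colorful matching of size $K\avganti_C$, which closes only because $L_2^*$ is sampled at the comparatively large rate $\gamma\beta/\Delta$ with $\gamma=\Theta(1/\epsilon^2)$. A second, milder subtlety is that the locally computed matching must honour external conflicts although no node knows all its external neighbours' colors; this is why candidate pairs are filtered through $S_v=L_2^*(v)\setminus L_2^*(V\setminus C)$ (and, where needed, through $\avail$, which is defined pessimistically), so that \cref{lem:avail} remains applicable under this stronger notion of availability.
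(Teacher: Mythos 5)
Your overall structure matches the paper's: existence via \cref{lem:avail} and \cref{lem:ack-colorful-matching}, a degree bound on the candidate-pair graph, the case split on whether the number of candidate pairs is below or above $U$, dissemination by \randompush, and local greedy extraction. Two points, one substantive and one minor.

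\textbf{The substantive gap: distributed selection of $D$.} In your Case~2 you write ``adding colors one at a time increases the retained candidate count by at most $O(\beta^2)$, so we can stop with the count in $[U,\,U+O(\beta^2)]$,'' but this is a sequential/centralized description of what set to select, not a distributed procedure. The entire point of the lemma is that \cref{alg:matching-compact} must \emph{compute} this set in $O(\log\Delta)$ rounds, and you never say how the clique comes to agree on $D$. The paper's proof actually spends a substantial paragraph on exactly this step: it elects a leader per color, has each leader learn (via 2-hop BFS aggregation, using \cref{claim:color-group}) the count of candidate pairs for its color, aggregates a total over a BFS tree spanning $C$, and then runs a top-down recursive prefix-selection on that BFS tree (each node, given its target count, forwards a residual target into exactly one child subtree and keeps all earlier children's colors) to pick a subtree-prefix set of colors whose candidate count lands in $[U, O(\beta^3)]$. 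This is a genuine algorithmic mechanism, not just a counting observation, and it is the piece your proof is missing. Without it, there is no way for all of $C$ to agree on the same $D$ and hence no way to guarantee that the messages passed to \randompush are consistent across the clique.

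\textbf{A minor slip.} In Case~2 you bound the conflict-graph degree by $O(\beta^2)$ (absorbing $\gamma^2$ into the constant) and then write $U/O(\beta^2)=\Omega(\gamma^2 K\avganti_C)$, treating the $\gamma^2$ in $U$ as if it survives division by a constant that already contains $\gamma^2$; the subsequent appeal to ``$\gamma$ sufficiently large'' is therefore not doing what you want. The paper avoids this by keeping $\gamma$ explicit: each node lies in at most $2\gamma^2\beta^2$ candidate pairs w.h.p., so inserting one pair into the matching kills at most $4\gamma^2\beta^2$ candidates, and with $U=4\gamma^2 K\avganti_C\beta^2$ the greedy extracts exactly $K\avganti_C$ edges — the $\gamma^2$ cancels, and no ``large $\gamma$'' argument is needed. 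Your conclusion is right, but the accounting should be made explicit.
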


\begin{proof}
We first explain how nodes compute $\hS_v$.
Each node $v$ starts by broadcasting $L_2^*(v)$ in $O(\log\Delta)$ rounds (since $|L_2^*(v)|=O(\log n)$).
We run a BFS for each color; two hops suffice by \cref{claim:color-group}.
To learn $\hS_v$, we count the number of nodes that sample each color using the BFS trees. A node needs to communicate over an edge only if both endpoints sampled the same color. Hence, we send at most $O(\log n\cdot\log\Delta)$ bits on an edge for each round of the BFS. In $O(\log\Delta)$ rounds, all nodes $v\in C$ know for each $c\in S_v$ how many other nodes $u\in C$ have $c\in S_u$. If this is more nodes than they know from their neighborhood, they must have an anti-neighbor with that color.

A candidate edge is a pair $(uv,c)$ where $u$ and $v$ are anti-neighbor and $c$ is a color such that $c\in \hS_v\cap \hS_u$. Namely, we could add edge $uv$ to the colorful matching using color $c$. For each color, we elect a leader amongst nodes that sampled that color. Using aggregation on 2-hops BFS trees, each leader learns the number of candidate edges for its color in $O(\log\Delta)$ rounds. We then run a BFS in the whole clique $C$ and aggregate the total number of candidate edges.

Suppose that the number of candidate edges is at most $U\eqdef4\gamma^2\cdot K\avganti_C\beta^2=O(\beta^3)$. For each candidate edge $(uv,c)$, we craft two messages $(ID_u,c)$ and $(ID_v,c)$. Note that a node can be in $O(\beta^2)$ anti-edges.
The total number of messages is $O(\beta^3)$; hence, can be disseminated to all nodes in $O(\log\Delta)$ rounds by \randompush (\cref{lem:randompush}). After this step, a node $v$ knows all candidate edges it belongs to. We run one extra \randompush for all nodes to know all candidate edges. By \cref{lem:ack-colorful-matching}, a colorful matching of size $K\avganti_C$ must exist, and nodes can find it with local computations.

Suppose now that the number of candidate edges is more than $U$. By the same argument as in \cref{lem:sparse-graph}, each node is contained in at most $2\gamma^2\beta^2$ candidate edges with high probability. Therefore, the colorful matching can be computed by a simple greedy algorithm from any set $F$ of at least $U$ candidate edges: start with an empty matching; as long as the matching has not size $K\avganti_C$, insert an arbitrary edges from $F$ into the matching and remove adjacent candidates edges from $F$. When we add an edge to the matching, we remove at most $4\gamma^2\beta^2$ edges from $F$. Since we assumed $F$ contained $U=4\gamma^2\cdot K\avganti_C\beta^2$ anti-edges, the algorithm always finds a colorful matching of $K\avganti_C$ edges. To select and disseminate a set $F$ of anti-edges, we select a subset $D$ of the colors, enough to have $U$ candidate edges but small enough to be able to disseminate the candidate edges with \randompush.
Using the same process as when the number of candidate edges is small, but using only colors of $D$, we can disseminate all selected candidate edges in $O(\log\Delta)$ rounds and compute the colorful matching locally. 

A simple recursive algorithm on the BFS tree spanning $C$ selects a subset $D$ of the colors such that the number of candidate edges with that color is at least $U$ and at most $O(\beta^3)$. Recall that each color has a unique leader which knows the number of candidate edges for its color. We say a subtree holds candidate edges with color $c$ if the leader for color $c$ belongs to this subtree. Note that when we compute the total number of candidate edges, each node learns the number of candidate edges held their subtree. Let $v$ be the root of the BFS tree spanning $C$ and $x_1, \ldots, x_t$ the number of candidate edges held by each subtree. Let $i$ be the smallest index in $[t]$ such that $\sum_{j\le i} x_j \ge U$. We select all colors whose leaders are in subtrees $0$ to $i-1$. We selected $\sum_{j<i} x_j$ candidate edges. We recursively run the algorithm on the $i$-th subtree to find $U-\parens*{\sum_{j < i} x_j}$ candidate edges. It is clear that we select at least $U$ candidate edges. We do not select more than $O(\beta^3)$ edges because each color group contains at most $O(\beta^2)$ edges. It is easy to see that the algorithm explore the tree top to bottom once as information can propagate independently in each subtree. In $O(\log\Delta)$ rounds each leader knows if its color was selected. Each leader relays the information to nodes of its group in $O(\log\Delta)$ rounds. At this point each $v\in C$ knows which color belongs to a selected candidate edge, i.e., colors such that $c\in \hS_v\cap D$.
\end{proof}

\section{Lower Bound}
\label{sec:lower}

As discussed in \Cref{sec:loweroverview}, at its core, our lower bound result is based on proving a lower bound for distributed computing a perfect matching in a random bipartite graph. More concretely, let $B$ be a bipartite graph on nodes $V\times C$, where $V$ models $n$ nodes and $C$ models $n$ colors. $B$ contain each of the $n^2$ possible edges between $V$ and $C$  with  probability $\poly\log n / n$. In the following, we show that computing a perfect matching of $B$ by a distributed message passing algorithm on $B$ requires $\Omega(\log n / \log\log n)$ rounds, even in the \LOCAL model (i.e., even if the nodes in $B$ can exchange arbitrarily large messages with their neighbors in $B$). We start with a simple observation regarding the structure of perfect matchings in bipartite graphs.

\begin{lemma}\label{lemma:bipmatching}
 Let $H=(V_E,E_H)$ be a bipartite graph, let $v_0\in V_H$ be a node of $H$, and for every integer $d\geq 0$, define $V_d\subset V_H$ be the set of nodes at distance exactly $d$ from $v_0$ in $H$. Then, if $H$ has a perfect matching, for every perfect matching $M$ of $H$ and for every $d\geq 0$, the number of edges of $M$ between nodes in $V_d$ and nodes in $V_{d+1}$ is equal to 
 \[
 S_d \eqdef \sum_{i=0}^i (-1)^i\cdot |V_d|.
 \]
\end{lemma}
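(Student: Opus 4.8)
The plan is to prove the identity by induction on $d$, after two small structural reductions. First I would pass to the connected component of $v_0$: since $M$ is a perfect matching, every node of $H$ is matched and its partner lies in the same component, so the edges of $M$ incident to $\bigcup_{d\ge 0} V_d$ form a perfect matching of the component of $v_0$, and the rest of $H$ is irrelevant. Second I would record the key structural fact that in a \emph{bipartite} graph an edge of $H$ can only join $V_d$ to $V_{d-1}$ or to $V_{d+1}$. The standard BFS-layering argument already forbids edges between $V_d$ and $V_{d'}$ with $|d-d'|\ge 2$; bipartiteness is what additionally forbids edges \emph{inside} a single layer $V_d$, because $V_d$ lies entirely on the side of the bipartition determined by the parity of $d$, so an intra-layer edge would be an edge within one side. (This is exactly where bipartiteness is used, and the statement is false without it — e.g.\ for a triangle.)

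With these in hand, write $m_d$ for the number of edges of $M$ between $V_d$ and $V_{d+1}$, and set $S_{-1}\eqdef 0$ so that the alternating sum satisfies $S_d=|V_d|-S_{d-1}$ for all $d\ge 0$. For the base case, $V_0=\{v_0\}$ and $v_0$ is matched, necessarily into $V_1$, so $m_0=1=|V_0|=S_0$. For the inductive step, assume $m_{d-1}=S_{d-1}$. Every node of $V_d$ is matched, and by the structural fact its partner lies in $V_{d-1}$ or $V_{d+1}$, exclusively. The $V_d$-endpoints of the $m_{d-1}$ edges of $M$ between $V_{d-1}$ and $V_d$ are pairwise distinct since $M$ is a matching, so exactly $m_{d-1}$ nodes of $V_d$ are matched ``downward'' into $V_{d-1}$, and by the same reasoning exactly $m_d$ nodes of $V_d$ are matched ``upward'' into $V_{d+1}$. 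Hence $|V_d|=m_{d-1}+m_d$, which gives $m_d=|V_d|-m_{d-1}=|V_d|-S_{d-1}=S_d$, closing the induction. The boundary case $V_{d+1}=\emptyset$ needs no separate argument: the same counting yields $m_d=0$, which is consistent with $S_d=0$ there.

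I do not anticipate a serious obstacle; this is a short induction once the two reductions are in place. The only points requiring care are stating explicitly that bipartiteness (not merely the BFS layering) rules out intra-layer edges, and the bookkeeping observation that the cross-layer matching edges contribute pairwise-distinct endpoints in $V_d$, so that the number of nodes of $V_d$ matched into the previous layer is exactly $m_{d-1}$ rather than just at most $m_{d-1}$.
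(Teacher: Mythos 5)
Your proof is correct and follows the same induction-on-$d$ strategy as the paper, with the same key relation $m_d=|V_d|-m_{d-1}$. You are somewhat more explicit than the paper on two points the paper treats tersely (that bipartiteness, not just BFS layering, rules out intra-layer edges, and that the $V_d$-endpoints of the cross-layer matching edges are distinct), but the argument is essentially identical.
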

\begin{proof}
  We prove the statement by induction on $d$. For $d=0$, we have $V_0=\set{v_0}$ and thus clearly the number of matching edges between $V_0$ and $V_1$ must be $S_0=|V_0|=1$. Let us, therefore, consider $d>0$ and assume that the statement holds for all $d'<d$. First note that for all $d>0$, we have $S_d = |V_d| - S_{d-1}$. Note that all neighbors of nodes in $V_d$ are either in $V_{d-1}$ or in $V_{d+1}$. Because every node in $V_d$ must be matched, the number of matching edges between $V_d$ and $V_{d+1}$ must be equal to $|V_d|$ minus the number of matching edges between $V_{d-1}$ and $V_d$. By the induction hypothesis, the number of matching edges between $V_{d-1}$ and $V_d$ is equal to $S_{d-1}$. The number of matching edges between $V_d$ and $V_{d+1}$ is therefore equal to $S_d=|V_d|-S_{d-1}$ as claimed.
\end{proof}

Note that \Cref{lemma:bipmatching} essentially states that the bipartite perfect matching problem is always a global problem in the following sense. In order to know the number of matching edges in a perfect matching between two sets $V_d$ and $V_{d+1}$, one must know the sizes of all the sets $V_0,\dots,V_d$.
As sketched in \Cref{sec:loweroverview}, we can use this observation to prove an $\Omega(\log n /\log\log n)$-round lower bound for computing a perfect matching in the random bipartite graph $B$. The formal details are given by the following theorem.

\begin{theorem}\label{thm:LOCALperfectmatching}
  Let $B=(V\cup C, E_B)$ be a random bipartite $2n$-node graph with $|V|=|C|=n$ that is defined as follows. For every $(v,c)\in V\times C$, edge $\set{v,c}$ is in $E_B$ independently with probability $p$, where $p\leq \poly\log(n)/n$ and $p\geq \alpha \ln(n)/n$ for a sufficiently large constant $\alpha>0$. Any distributed (randomized) \LOCAL algorithm that succeeds in computing a perfect matching of $B$ with probability at least $2/3$ requires at least $\Omega(\log n /\log\log n)$ rounds.
\end{theorem}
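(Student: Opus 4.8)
I would argue directly about the random bipartite graph $B$. Write $d := np$, so $\alpha\ln n \le d \le \polylog n$ and $\log d = \Theta(\log\log n)$. Fix the node $v_0 \in V$ of smallest identifier and let $V_0, V_1, V_2, \dots$ be its BFS layers in $B$. Set $T := c\log n / \log\log n$ for a small constant $c > 0$, and put $h := 5T$, $\ell := 3T$, so that $0 < \ell$ and $h - \ell = 2T > T$. Suppose towards a contradiction that some $T$-round randomized \LOCAL algorithm $\mathcal{A}$ outputs a perfect matching of $B$ with probability at least $2/3$. Let $N_h$ denote the number of edges between $V_h$ and $V_{h+1}$ in the matching output by $\mathcal{A}$. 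Whenever $B$ has a perfect matching and $\mathcal{A}$ outputs one, \Cref{lemma:bipmatching} forces $N_h = S_h$ where $S_h := \sum_{i=0}^{h} (-1)^i |V_{h-i}|$; since $B$ has a perfect matching with probability $1 - o(1)$, it suffices to show $\Pr[N_h = S_h] < 2/3$.

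\textbf{Structural facts.} By standard first- and second-moment estimates together with a comparison to a Galton--Watson branching process, I would show that with probability $1 - o(1)$: (i) $B$ is connected; (ii) the radius-$10T$ ball around $v_0$ induces a tree, since the expected number of cycles of length at most $20T$ through $v_0$ is $O(d^{20T}/n) = n^{O(c)}/n = o(1)$ for $c$ small; (iii) $|V_i| = (1 \pm o(1))\, d^i$ for all $i \le 10T$, so in particular no layer saturates; and (iv), most importantly, conditioned on the structure of layers $0, \dots, i-1$, the size $|V_i|$ is a sum of $\Theta(n)$ independent indicators of probability $1 - (1-p)^{|V_{i-1}|} = \Theta(d^i/n)$, hence has conditional standard deviation $\Theta(\sqrt{d^i})$. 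Because $d^\ell = n^{\Theta(c)} \to \infty$, the size $|V_\ell|$ genuinely fluctuates over $\omega(1)$ values.

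\textbf{Locality and the key surgery.} Every node of $V_h \cup V_{h+1}$ is at distance more than $T$ from every node of $V_0 \cup \dots \cup V_{\ell + T}$ (triangle inequality, using $h - \ell - T = T > 0$), so the radius-$T$ view of such a node in $B$, together with the private random strings it sees, never touches layers $0, \dots, \ell + T$; consequently $N_h$ is a deterministic function of the data $\mathcal{O}$ consisting of the union of the radius-$T$ views of the nodes of $V_h \cup V_{h+1}$ and their random strings. The crux is to show that conditioning on $\mathcal{O}$ does not determine $S_h$. I would establish this with a local rewiring move: pick a node $a \in V_\ell$, delete its (unique, by tree-likeness) edge to its parent in $V_{\ell - 1}$, re-attach each of $a$'s children in $V_{\ell + 1}$ to another node of $V_\ell$, and re-insert $a$ as a node at some distance $\ell'' \ge h + 2T$ from $v_0$ with the same degree it had before; bipartiteness and the identity $\sum_{i \text{ even}} |V_i| = n$ force $\ell'' \equiv \ell \pmod 2$, and $S_h$ changes by exactly $\pm 1$ since the term $|V_{\ell''}|$ does not appear in $S_h$ (as $\ell'' > h$). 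This move keeps $B$ in the support of its distribution, perturbs the graph only near layers $\ell \pm 1$ and near layer $\ell''$, and hence leaves every radius-$T$ view of $V_h \cup V_{h+1}$ --- and therefore $N_h$ --- unchanged. A path-coupling argument built from such moves then shows that, within a typical fiber of $\mathcal{O}$, the value $S_h$ ranges over $\omega(1)$ values of comparable probability, so $\max_s \Pr[S_h = s \mid \mathcal{O}] = o(1)$ with probability $1 - o(1)$ over $\mathcal{O}$.

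\textbf{Conclusion and the main obstacle.} Combining the last two points with the $o(1)$ failure probability of the structural facts, $\Pr[N_h = S_h] \le \Exp_{\mathcal{O}}\bigl[\max_s \Pr[S_h = s \mid \mathcal{O}]\bigr] + o(1) = o(1) < 2/3$, a contradiction; hence any correct algorithm needs $\Omega(\log n / \log\log n)$ rounds. I expect the third paragraph to be the genuine difficulty: while the rewiring move makes the intuition clear, turning ``$\mathcal{O}$ fixes $N_h$ but not $|V_\ell|$'' into the quantitative bound $\max_s \Pr[S_h = s \mid \mathcal{O}] = o(1)$ requires carefully choosing the order in which the randomness of $B$ is revealed --- so that the decisions on $V_h$--$V_{h+1}$ edges become measurable before the randomness controlling $|V_\ell|$ is resolved --- and tracking conditional variances along that filtration. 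This is the bookkeeping the authors themselves flag as ``somewhat tedious.''
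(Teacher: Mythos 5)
Your proposal shares the paper's architecture exactly: BFS layers around a fixed $v_0$, the alternating-sum identity of \cref{lemma:bipmatching} forcing $N_h=S_h$, the observation that nodes in $V_h\cup V_{h+1}$ cannot see $V_\ell$ in $T$ rounds, and the reduction of the lower bound to showing that the algorithm's view leaves residual uncertainty about $|V_\ell|$. Where you diverge is in \emph{how} that residual uncertainty is quantified. The paper conditions on the sizes of all layers except $V_\ell$ (a strictly stronger conditioning than any radius-$T$ view), writes the joint probability of a layer-size profile as a product of binomial terms via $\Pr(X_d=x_d\mid \mathcal{X}_{<d})$, and computes the ratio of the probabilities of the profiles with $X_\ell=x_\ell$ versus $X_\ell=x_\ell-1$ to be $1\pm o(1)$ on well-behaved instances, yielding $\max_{x}\Pr(X_\ell=x\mid\cdot)\le 1/2+o(1)$. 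You instead propose a combinatorial switching: surgically move one node from $V_\ell$ to a layer beyond $h$, preserving edge count, bipartiteness, and every radius-$T$ view of $V_h\cup V_{h+1}$, so that $S_h$ changes by $\pm 1$ while the algorithm's output does not. The surgery itself is sound (the move stays within the fiber of the view, preserves the measure of the individual graph, and the parity and distance bookkeeping check out), and as an intuition-builder it is arguably cleaner than the paper's computation.

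The genuine gap is that the step you yourself flag as the difficulty is the entire proof, and it is not carried out. Asserting that ``a path-coupling argument built from such moves shows that $S_h$ ranges over $\omega(1)$ values of comparable probability'' is not a proof: the surgery is a many-to-many correspondence (choices of $a\in V_\ell$, of where to re-attach each of $a$'s children, of the landing layer $\ell''$ and of $a$'s new neighbors), and the conditional law on the fiber of $\mathcal{O}$ weights graphs by $p^{e}(1-p)^{n^2-e}$, so comparing $\Pr[S_h=s\mid\mathcal{O}]$ with $\Pr[S_h=s\pm1\mid\mathcal{O}]$ requires a double-counting of forward and backward switchings whose ratio must be pinned to $1\pm o(1)$ --- work of at least the same order as the paper's explicit ratio computation, and you would additionally need many comparable values (not just two adjacent ones) to reach your claimed $o(1)$ bound, whereas the theorem only needs the paper's $1/2+o(1)$. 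Two smaller points: your justification of tree-likeness counts only cycles \emph{through} $v_0$, whereas you need no excess edges anywhere in the radius-$10T$ ball (the correct bound, $\sum_{i\le 10T}|V_i||V_{i+1}|p-\sum_i|V_{i+1}| = O(d^{20T+1}/n)=o(1)$ for small $c$, does hold); and the distance from $V_h$ to $V_{\ell+T}$ is exactly $T$, not ``more than $T$,'' though this is harmless since only $V_\ell$ itself must be invisible. As it stands, the proposal is a correct plan with an alternative and unexecuted core, not a proof.
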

\begin{proof}
  First note that if $p\geq \alpha\ln(n)/n$ and the constant $\alpha$ is chosen sufficiently large, then $B$ has a perfect matching w.h.p. This is well-known~\cite[Section VII.3]{bollobas1998random} and can be seen by verifying Hall's condition.
  
  Let $T\eqdef\eta\cdot \ln n / \ln\ln n$ for a sufficiently small constant $\eta>0$ that will be determined later and assume that there exists a $T$-round randomized distributed perfect matching algorithm for the random graph $B$. We assume that after $T$ rounds, every node outputs its matching edge such that with probability $\geq 2/3$, the outputs of all nodes are consistent, i.e., the algorithm computes a perfect matching of $B$. Consider some node $v_0\in V\cup C$ and for every integer $d\geq 0$, let $V_d$ be the set of nodes at distance exactly $d$ from $v_0$. We next fix two parameters $\ell\eqdef T$ and $h\eqdef\ell+T+2$. To prove the lower bound, we concentrate on the nodes in $V_h$ and the computation of their matching edges. In a $T$-round algorithm, a node $v$ can only receive information from nodes within $T$ hops, and therefore the output of a node $v$ must be a function of the combination of the initial states of the nodes of the $T$-hop neighborhood of $v$ (when assuming that all the private randomness used by a node $v$ is contained in its initial state). Assume that the initial state of a node contains its ID, as well as the IDs of its neighbors. Then, $v$'s output of a $T$-round algorithm is a function of the subgraph induced by the $(T+1)$-hop neighborhood of $v$. The outputs of the nodes in $V_h$ therefore only depend on nodes in $V_d$ for $d\in \set{\ell+1,\dots,h + T+1}$ and on edges between those nodes. And it in particular means that nodes in $V_h$ do have to decide about their matching edges without knowing anything about nodes in $V_\ell$.
  
  In the following, we assume that nodes in $V_h$ can collectively decide about their matching edges. We further assume that to do this, the nodes in $V_h$ have the complete knowledge of the subgraph of $B$ induced by $(V_0\cup\dots\cup V_{\ell-1})\cup(V_{\ell+1}\cup\dots\cup V_{h+T+1})$. That is the nodes in $V_h$ have the complete knowledge of the graph induced by the nodes that are within distance $h+T+1=\ell+2T+3$ of $v_0$, with the exception of the nodes in $V_\ell$ and all their edges. Because we want to prove a lower bound, assuming coordination between the nodes  in $V_h$ and assuming knowledge of parts of the graph that are not seen by nodes in $V_h$ can only make our result stronger. Note that by \Cref{lemma:bipmatching}, the number of matching edges between level $V_h$ and $V_{h+1}$ is equal to $S_h=\sum_{i=0}^h  (-1)^i |V_{h-i}|$, which is an alternating sum that contains the term $|V_\ell|$ (either positively or negatively, depending on the parity of $h-\ell$). Hence, given the knowledge of the subgraph induced by $(V_0\cup\dots\cup V_{\ell-1})\cup(V_{\ell+1}\cup\dots\cup V_{h+T+1})$, the number of matching edges between nodes in $V_h$ and nodes in $V_{h+1}$ is in a one-to-one relation with the number of nodes in $V_\ell$. Without knowing $|V_\ell|$ exactly, the nodes in $V_h$ can therefore not compute their matching edges. Therefore, in order to prove the lemma, we need to prove that from knowing the subgraph induced by $(V_0\cup\dots\cup V_{\ell-1})\cup(V_{\ell+1}\cup\dots\cup V_{h+T+1})$, the size of $V_\ell$ can at best be estimated exactly with probability $2/3$.
  
  For this, we define several random variables. Let $X_\ell=|V_\ell|$ be the number of nodes in $V_\ell$, i.e., $X_\ell$ is the random variable that the nodes in $V_h$ need to estimate exactly in order to compute their matching edges. If we define $\cK$ to be a random variable that describes the knowledge that is provided to the nodes in $V_h$ to determine $X_\ell$, then we intend to estimate
  \[
  p_{\mathrm{est},K} \eqdef \max_{x_\ell>0}\Pr\big(X_{\ell}=x_\ell \,|\, \cK = K\big).
  \]
  Clearly, if $K$ is the actual state of the subgraph induced by $(V_0\cup\dots\cup V_{\ell-1})\cup(V_{\ell+1}\cup\dots\cup V_{h+T+1})$, the nodes in $V_h$ can determine the exact value of $X_\ell$ with probability at most $p_{\mathrm{est},K}$. To prove the theorem, we will show that with high probability, $\cK$ takes on a ``good'' state $K$ for which $p_{\mathrm{est},K}\leq 1/2+o(1)$. For estimating $X_{\ell}$ correctly, we then either need to have a ``bad'' $K$, which happens with probability $o(1)$ or we need to have a ``good'' $K$ and estimate $X_{\ell}$ correctly, which happens with probability $1/2+o(1)$. Overall, the probability for estimating $X_{\ell}$ correctly is then at best $1/2 + o(1) \leq 2/3$ for sufficiently large $n$. In order to estimate the probability of $X_{\ell}=x$, given the knowledge of the nodes in $V_h$, we first look at the conditioning on $\cK=K$ more closely. First note that by symmetry, the probability $\Pr(X_{\ell}=x | \cK = K)$ only depends on the topology of the subgraph induced by $(V_0\cup\dots\cup V_{\ell-1})\cup(V_{\ell+1}\cup\dots\cup V_{h+T+1})$ and not on the set of node IDs that appear in the part of the graph known by $V_h$. Further, the probability also does not depend on the edges of the induced subgraph known by $V_h$. The size of $X_{\ell}$ only depends on the additional edges of the nodes in $X_{\ell-1}$ and $X_{\ell+1}$. The probability $\Pr(X_{\ell}=x | \cK = K)$ therefore only depends on the sizes of the sets $V_0,\dots, V_{\ell-1}$ and $V_{\ell+1},\dots,V_{h+T+1}$.
  
  We first introduce the necessary random variables and some notation to simplify our calculation.
  For each $d\in\set{0,\dots,h+T+1}$, we define a random variable $X_d=|V_d|$. For convenience, for every $d$, we define $V_{\geq d}=V_d\cup V_{d+1}\cup \dots$ to be the set of nodes at distance at least $d$ from $v_0$. Throughout the calculations, we will concentrate on some fixed knowledge of the nodes in $V_h$. We therefore consider some  values $x_0, x_1, \dots, x_{h+T+1}$ and for each $d$, we define $\cX_d$ as a shortcut for the event $\set{X_d=x_d}$ that the random variable $X_d$ takes the value $x_d$. For convenience, we also define $\cX_{<d}\eqdef\cX_0 \cap \dots \cap \cX_{d-1}$, $\cX_{\leq d}=\cX_{<d}\cap \cX_d$, $\cX_{> d}\eqdef \cX_{d+1}\cap\dots\cap \cX_{h+T+1}$, as well as $x_{< d}\eqdef x_0+\dots+x_{d-1}$ and $x_{\leq d} = x_{<d} + x_d$. Note that for every $d>0$, if $V_0,\dots,V_{d-1}$ are fixed and no randomness of the edges connecting the remaining nodes $V_{\geq d}$ to $V_{d-1}\cup V_{\geq d}$ is revealed, then the size $X_d$ of $V_d$ is binomially distributed with parameters $|V_{\geq d}|$ and $q_d \eqdef 1- (1-p)^{x_{d-1}}$. For all $d\geq1$, we therefore have 
  \begin{equation}\label{eq:singleleveldistribution}
      \Pr(X_d=x_d \,|\, \cX_{<d}) = \binom{n-x_{<d}}{x_d}\cdot q_d^{x_d} 
      \cdot (1-q_d)^{n-x_{\leq d}}\text{, where }
      q_d \eqdef 1 - (1-p)^{x_{d-1}}.
  \end{equation}
  Let us first look at the probability of seeing a concrete assignment of values $x_0,\dots,x_{h+T+1}$ to the random variable $X_d$, including the value of $x_\ell$ for the random variable $X_\ell$ the nodes in $V_h$ need to estimate. By applying \eqref{eq:singleleveldistribution} iteratively, we obtain
  \begin{eqnarray}
  \Pr(\cX_{\leq h+T+1}) & = & \Pr(\cX_{<\ell} \land X_\ell=x_\ell \land \cX_{>\ell})\nonumber\\
  & = & \Pr(\cX_{<\ell})\cdot \prod_{i=\ell}^{h+T+1}\Pr(X_i=x_i\,|\, \cX_{<i})\nonumber\\
  & = & \Pr(\cX_{<\ell})\cdot \prod_{i=\ell}^{h+T+1}\binom{n-x_{<i}}{x_i} \cdot q_i^{x_i} \cdot (1-q_i)^{n-x_{\leq i}}.\label{eq:beforechange}
  \end{eqnarray}
  We next analyze what happens to the above probability if the number of nodes in $V_\ell$ is only $x_{\ell}-1$ instead of $x_\ell$. Our goal is to show that this only changes the probability by a $1-o(1)$ factor. If this is true for the most likely value $x_\ell$, this will imply that the nodes in $V_h$ can exactly estimate the value of $X_{\ell}$ at best with probability $1/2+o(1)$. To analyze the above probability if $X_\ell=x_\ell-1$, for all $d\geq 1$, we define the even $\cX_{<d}'$ as follows. If $d\leq \ell$, we have $\cX_{<d}'\eqdef\cX_{<d}$ and if $d> \ell$, we have $\cX_{<d}'\eqdef\cX_{<\ell}\cap\set{X_\ell=x_{\ell}-1}\cap\bigcap_{i=\ell+1}^{d-1}\set{X_i=x_i}$. We also define $\cX_{\leq d}'$ analogously. We then have
  \[
  \Pr(\cX_{\leq h+T+1}') = 
  \Pr(\cX_{<\ell})\cdot \Pr(X_{\ell}=x_{\ell}-1\,|\,\cX_{<\ell})\cdot
  \prod_{i=\ell+1}^{h+T+1}\Pr(X_i=x_i\,|\, \cX_{<i}').
  \]
  In order to compare $\Pr(\cX_{\leq h+T+1})$ and $\Pr(\cX_{\leq h+T+1}')$, we therefore need to compare $\Pr(X_{\ell}=x_{\ell}\,|\,\cX_{<\ell})$ and $\Pr(X_{\ell}=x_{\ell}-1\,|\,\cX_{<\ell})$, as well as $\Pr(X_i=x_i\,|\, \cX_{<i})$ and $\Pr(X_i=x_i\,|\, \cX_{<i}')$ for all $i\geq \ell+1$. We have
  \begin{eqnarray}
    \Pr(X_\ell=x_\ell-1\,|\, \cX_{<\ell}) & = &
    \binom{n-x_{<\ell}}{x_\ell-1}\cdot q_\ell^{x_\ell-1}\cdot (1-q_\ell)^{n-x_{\leq \ell}+1}\nonumber\\
    & = & 
    \frac{\binom{n-x_{<\ell}}{x_\ell-1}\cdot q_\ell^{x_\ell-1}\cdot (1-q_\ell)^{n-x_{\leq \ell}+1}}
    {\binom{n-x_{<\ell}}{x_\ell}\cdot q_\ell^{x_\ell}\cdot(1-q_\ell)^{n-x_{\leq \ell}}}\cdot \Pr(X_\ell=x_\ell \,|\, \cX_{<\ell})\nonumber\\
    & = & \frac{x_\ell \cdot (1-q_\ell)}{(n-x_{\leq \ell}+1)\cdot q_\ell} \cdot
    \Pr(X_\ell=x_\ell \,|\, \cX_{<\ell}).\label{eq:levelell}
  \end{eqnarray}
  For the following calculation, we define $q_{\ell+1}' = 1 - (1-p)^{x_\ell-1}$, i.e., $q_{\ell+1}'$ is the probability that a node outside $V_0\cup\dots\cup V_{\ell}$ is connected to $V_\ell$, if we assume that $X_\ell=x_\ell-1$ (instead of $X_\ell=x_\ell$).
  We obtain
  \begin{eqnarray}
    \lefteqn{\Pr(X_{\ell+1}=x_{\ell+1}\,|\, \cX_{<\ell+1}')\ =\ \Pr(X_{\ell+1}=x_{\ell+1}\,|\, \cX_{<\ell}\land X_\ell=x_\ell-1)}\nonumber\\
     & = &
    \binom{n - x_{\leq \ell}-1}{x_{\ell+1}}\cdot \big(q_{\ell+1}'\big)^{x_{\ell+1}}\cdot (1-q_{\ell+1}')^{n-x_{\leq \ell+1}+1}\nonumber\\
    & = & \frac{\binom{n - x_{\leq \ell}-1}{x_{\ell+1}}}{\binom{n - x_{\leq \ell}}{x_{\ell+1}}}\cdot 
    \left(\frac{q_{\ell+1}'}{q_{\ell+1}}\right)^{x_{\ell+1}}\cdot
    \left(\frac{1-q_{\ell+1}'}{1-q_{\ell+1}}\right)^{n-x_{\leq \ell+1}+1}\cdot
    \Pr(X_{\ell+1}=x_{\ell+1}\,|\, \cX_{<\ell}\land X_\ell=x_\ell)\nonumber\\
    & = & \frac{n-x_{\leq \ell} - x_{\ell+1}}{n - x_{\leq \ell}}\cdot 
    \left(\frac{q_{\ell+1}'}{q_{\ell+1}}\right)^{x_{\ell+1}}\cdot
    \left(\frac{1}{1-p}\right)^{n-x_{\leq \ell+1}+1}\cdot
    \Pr(X_{\ell+1}=x_{\ell+1}\,|\, \cX_{<\ell+1}).\label{eq:levelellplus1}
  \end{eqnarray}
  Finally, for $i>\ell+1$, we
  have
  \begin{eqnarray}
    \Pr(X_{i}=x_{i}\,|\, \cX_{<i}')
     & = &
    \binom{n - x_{<i}+1}{x_{i}}\cdot q_{i}^{x_{i}}\cdot (1-q_{i})^{n-x_{\leq i}+1}\nonumber\\
    & = & \frac{n-x_{<i}+1}{n-x_{\leq i}+1}\cdot (1-q_{i})\cdot 
    {\Pr(X_{i}=x_{i}\,|\, \cX_{<i})}.\label{eq:largelevels}
  \end{eqnarray}
  
  Recall that our goal is to show that if the random variables $X_1,\dots,X_{\ell-1}$ and $X_{\ell+1},\dots,X_{h+T+1}$ that are known to the nodes in $V_h$ are close enough to their expectation, then for all reasonable values $x_\ell$, when conditioning on the values of $X_1,\dots,X_{\ell-1}$ and $X_{\ell+1},\dots,X_{h+T+1}$, $X_\ell=x_\ell$ and $X_\ell=x_\ell-1$ have almost the same probability. We call an instance of the random graph in which the values of $X_1,\dots,X_{\ell-1}$ and $X_{\ell+1},\dots,X_{h+T+1}$ are close enough to their expectation \emph{well-behaved} and we formally denote this by an event $\cW$. We next define the event $\cW$ that specifies what it means that an instance is well-behaved.
  
  We assume that the probability $p$ that determines the presence  of the individual edges is equal to $p=f(n)/n$, where $f(n)\geq 32\ln n$ and $f(n)\leq \polylog n$. The event $\cW$ is defined as follows. For all $d\in \set{1,\dots,h+T+1}$, it must hold that
  \begin{equation}\label{eq:wellbehaved}
  |X_d - f(n)\cdot X_{d-1}| \leq \sqrt{7\cdot f(n)\cdot X_{d-1} \cdot\ln n}.
  \end{equation}
  Note that condition \eqref{eq:wellbehaved} and the assumption that $f(n)\geq 32\ln n$ directly imply that $X_d \leq 1.5 f(n) \cdot X_{d-1}$ (even if $X_{d-1}=1$). Therefore in well-behaved instances, $X_d/X_{d-1}$ is at most $\polylog n$. We choose the parameter $T=\Theta(\ln(n)/\ln\ln(n))$ small enough such that in well-behaved instances, $X_0 + \cdots + X_{h + T+ 1}\leq n^{1/3}$. Similarly, \eqref{eq:wellbehaved} implies that $X_d \geq 0.5 f(n) \cdot X_{d-1}$ and thus for any $d=\Theta(\log n /\log\log n)$, we have $X_d\geq n^{\nu}$ for some constant $\nu>0$. We next show that a given instance (i.e., the neighborhood of a fixed node $v_0$ in a given random bipartite graph $B$) is well-behaved with probability $>1-1/n$.
  
  To see this, consider a given value $d\geq 1$. We know that once $X_1=x_1,\dots,X_{d-1}=x_{d-1}$ is given, $X_d$ is binomially distributed with parameters $n-x_{<d}$ and $q_d=1-(1-p)^{x_{d-1}}$. By a standard Chernoff bound, for any $\delta\in[0,1]$, we therefore know that
  \begin{equation}\label{eq:wellbehavedChernoff}
  \Pr(|X_d - \Exp[X_d|\cX_{<d}]| > \delta\cdot\Exp[X_d|\cX_{>d}] \,|\, \cX_{<d}) \leq 2 e^{-\delta^2/3\cdot \Exp[X_d]}.
  \end{equation}
  We will see that \eqref{eq:wellbehavedChernoff} implies that for every $d$, Inequality \eqref{eq:wellbehaved} holds with probability at least $1-2/n^2$. To achieve this, we first have to understand what the value of $\Exp[X_d|\cX_{<d}]$ is. We first have a look at the probability $q_d$ of the binomial distribution underlying $X_d$. We have
  \begin{eqnarray*}
  q_d & = & 1-(1-p)^{x_{d-1}} \leq p\cdot x_{d-1}\quad\text{and}\\
  q_d & = & 1-(1-p)^{x_{d-1}} \geq 1 - e^{-p x_{d-1}} \geq
  p\cdot x_{d-1} - (p\cdot x_{d-1})^2 \geq p\cdot x_{d-1}\cdot
  \left(1 - \frac{f(n)}{n^{2/3}}\right).
  \end{eqnarray*}
  We used that for any real value $y\in[0,1]$ and any $k\geq 1$, we have $(1-y)^k\geq 1-ky$ and $1-y \leq e^{-y} \leq 1 - y + y^2$. In the last inequality, we have further used that in well-behaved executions, $x_{d-1}\leq n^{1/3}$. We have $\Exp[X_d|\cX_{<d}]=q_d\cdot (n-x_{<d})$. We therefore have 
  \begin{eqnarray*}
    \Exp[X_d|\cX_{<d}] & \leq & p x_{d-1} \cdot n = f(n) \cdot x_{d-1},\\
    \Exp[X_d|\cX_{<d}] & \geq & \left(1 - \frac{f(n)}{n^{2/3}}\right)\cdot
    p x_{d-1} \cdot (n-n^{1/3}) \geq 
    \left(1- \frac{2f(n)}{n^{2/3}}\right)\cdot f(n)\cdot x_{d-1}.
  \end{eqnarray*}
  Let $\xi$ denote the maximum absolute deviation from the expectation that is still guaranteed to be well-behaved by \eqref{eq:wellbehaved}. We can lower bound $\xi$ as follows
  \begin{eqnarray*}
  \xi & \geq & \sqrt{7 f(n) \cdot x_{d-1} \cdot \ln n} -
  \frac{2f(n)}{n^{2/3}}\cdot f(n)\cdot x_{d-1}\\
  & \geq & \sqrt{7 \Exp[X_d | \cX_{<d}] \cdot \ln n} - \frac{2f^2(n)}{n^{1/3}}\\
  & \geq &
  \sqrt{6 \Exp[X_d | \cX_{<d}] \cdot \ln n}\ =\ 
  \sqrt{\frac{6\ln n}{\Exp[X_d | \cX_{<d}]}}\cdot \Exp[X_d | \cX_{<d}].
  \end{eqnarray*}
  The last inequality holds if $n\geq n_0$ for a sufficiently large constant $n_0$. Together with \eqref{eq:wellbehavedChernoff}, this now implies that for all $d$, \eqref{eq:wellbehaved} holds with probability larger than $1-2/n^2$. Note that there are less only $O(\log n / \log\log n)$ different $d$-values. If $n\geq n_0$ for a sufficiently large constant $n_0$, the number of different $d$-values can therefore for example be upper bounded by $\ln(n)/2$. In this case, a union bound over all $d$-values implies that the probability that an instance is well-behaved is at least $1-\ln(n)/n^2$.
  
  Let us now assume that we have a well-behaved instance (i.e., that $\cW$ holds). Consider some assignment to the random variables $X_1,\dots,X_{\ell-1}$ and $X_{\ell+1},\dots,X_{h+T+1}$ that are consistent with \eqref{eq:wellbehaved}. Given the values of those random variables, the nodes in $V_h$ have to guess the value of $X_h$. Note that there are extreme cases where the values of $X_1,\dots,X_{\ell-1}$ and $X_{\ell+1},\dots,X_{h+T+1}$ only allow one single value of $X_\ell$ such that \eqref{eq:wellbehaved} is satisfied. We need to show that even when conditioning on $\cW$, this only happens with a very small probability. Let us therefore define an even $\cW'\subseteq \cW$ as an instance in which replacing $X_{\ell}$ by $X_{\ell}-1$ or $X_{\ell}+1$ still satisfies \eqref{eq:wellbehaved}. Note that the above analysis has enough slack to ensure that also $\Pr(\cW')\geq 1-\ln(n)/n^2$ if $n\geq n_0$ for a sufficiently large constant $n_0$. The same analysis for example also works if the fixed constant $7$ in \eqref{eq:wellbehaved} is replaced by any smaller fixed constant that is larger than $6$. We therefore have
  \[
  \Pr(\cW'\,|\,\cW) = \frac{\Pr(\cW'\cap \cW)}{\Pr(\cW)} =
  \frac{\Pr(\cW')}{\Pr(\cW)} \geq \Pr(\cW') \geq 1-\frac{\ln n}{n^2}.
  \]
  
  We use $x_1,\dots,x_{\ell-1}$ and $x_{\ell+1},\dots,x_{h+T+1}$ to denote the concrete values of those random variables. We further define $x_\ell$ to be an arbitrary value such that the values $x_{\ell}$ and $x_{\ell}-1$ are both valid values for $X_\ell$ to make the instance well-behaved. Note that if $\cW'$ holds, there is at least one such value $x_{\ell}$ and we have seen that even conditioning on $\cW$, the probability of $\cW'$ is still at least $1-\ln(n)/n^2$. Because we are assuming $\cW$, we know that the value $x_\ell$ satisfies the following condition. 
  \[
  x_\ell = x_{\ell-1}\cdot np + \theta,\ \ \text{where}\ \ 
  |\theta| \in O(\sqrt{x_{\ell-1}f(n)\log n}) = 
  O\left(\sqrt{\frac{f(n)\log n}{x_{\ell-1}}}\right)\cdot x_{\ell-1}.
  \]

  We first look at the ratio between $\Pr(X_\ell=x_\ell-1\,|\, \cX_{<\ell})$ and $\Pr(X_\ell=x_\ell \,|\, \cX_{<\ell})$. By Equation \eqref{eq:levelell}, this ratio is equal to $\frac{x_\ell \cdot (1-q_\ell)}{(n-x_{\leq \ell}+1)\cdot q_\ell}$. In the following, we denote this ratio by $\rho_1$. By using that $x_{\leq h+T+1}\leq n^{1/3}$, we can then bound $\rho_1$ as follows.
  \[
  \rho_1=\frac{x_\ell \cdot (1-q_\ell)}{(n-x_{\leq \ell}+1)\cdot q_\ell} \leq
  \frac{f(n)\cdot x_{\ell-1}\cdot\big(1 + O\big(\sqrt{f(n)\log(n)/x_{d-1}}\big)\big)}
  {\left(1-\frac{1}{n^{2/3}}\right)\cdot n \cdot p x_{\ell-1}\cdot (1-p x_{\ell-1})}\leq 1 + \frac{1}{n^{\Omega(1)}}.
  \]
  In the last inequality, we used that $x_\ell = x_{\ell-1}\cdot np$, that $x_{\ell-1}\leq n^{1/3}$, and that $x_{\ell-1}\geq n^{\kappa}$ for some constant $\kappa>0$. Similarly, we have
  \[
  \rho_1=\frac{x_\ell \cdot (1-q_\ell)}{(n-x_{\leq \ell}+1)\cdot q_\ell} \geq
  \frac{x_{\ell-1}\cdot\big(1 - O\big(\sqrt{f(n)\log(n)/x_{d-1}}\big)\big)\cdot (1-px_{\ell-1})}{n \cdot p x_{\ell-1}}\geq 1 - \frac{1}{n^{\Omega(1)}}.
  \]
  
  We next look at the ratio between $\Pr(X_{\ell+1}=x_{\ell+1}\,|\, \cX_{<\ell+1}')$ and $\Pr(X_{\ell+1}=x_{\ell+1}\,|\, \cX_{<\ell+1})$. We denote this ratio by $\rho_2$. By Equation \eqref{eq:levelellplus1}, $\rho_2$ can be written as
  \begin{eqnarray*}
    \rho_2 & = & \frac{n-x_{\leq \ell} - x_{\ell+1}}{n - x_{\leq \ell}}\cdot 
    \left(\frac{q_{\ell+1}'}{q_{\ell+1}}\right)^{x_{\ell+1}}\cdot
    \left(\frac{1}{1-p}\right)^{n-x_{\leq \ell+1}+1}\\
    & \leq & 
    \left(\frac{p\cdot (x_{\ell}-1)}{p\cdot x_{\ell} \cdot (1- p\cdot x_{\ell})}\right)^{x_{\ell+1}}\cdot
    \left(\frac{1}{1-p}\right)^{n-x_{\leq \ell+1}+1}\\
    & \leq &
    \left(1+\frac{O(f(n))}{n^{1/3}}\right)\cdot 
    \left(1 - \frac{1}{x_{\ell}}\right)^{x_{\ell+1}}\cdot 
    \left(1 + \frac{p}{1-p}\right)^{n}\\
    & \leq & 
    \left(1+\frac{O(f(n))}{n^{1/3}}\right)\cdot 
    e^{-\frac{x_{\ell+1}}{x_{\ell}}}\cdot
    e^{np + O(np^2)}\\
    & \leq & 
    \left(1+\frac{O(f(n))}{n^{1/3}}\right)\cdot 
    e^{np - np + O(\sqrt{f(n)\log(n)/x_{\ell}})}\ \leq\ 
    1+ \frac{1}{n^{\Omega(1)}}.
  \end{eqnarray*}
  In the last inequality, we used that because $\cW$ holds, we have $x_{\ell+1} \leq x_\ell \cdot np + O(\sqrt{f(n) x_\ell \log n})$ and that $x_{\ell}\geq n^\nu$ for some constant $\nu>0$. We can similarly lower bound $\rho_2$ as follows.
  \begin{eqnarray*}
    \rho_2 & \geq &
    \frac{n-n^{1/3}}{n}\cdot
    \left(\frac{p(x_{\ell}-1)(1-p(x_{\ell}-1))}{p\cdot x_{\ell}}\right)^{x_{\ell+1}}
    \cdot\left(\frac{1}{1-p}\right)^{n - n^{1/3}}\\
    & \geq &
    \left(1-\frac{O(f(n))}{n^{1/3}}\right)\cdot 
    \left(1 - \frac{1}{x_{\ell}}\right)^{x_{\ell+1}}\cdot 
    (1 + p)^{n}\\
    & \geq &
    \left(1-\frac{O(f(n))}{n^{1/3}}\right)\cdot 
    e^{-\frac{x_{\ell+1}}{x_{\ell}}}\cdot \left(1-\frac{1}{x_{\ell}^2}\right)^{x_{\ell+1}}\cdot
    e^{pn}\cdot(1-p^2)^n\\
    & \geq &
    \left(1-\frac{O(f(n))}{n^{1/3}}\right)\cdot 
    \left(1- \frac{x_{\ell+1}}{x_{\ell}^2}\right)\cdot
    (1-p^2n)\cdot e^{np - np - O(\sqrt{np\log(n)/x_{\ell})}}\\
    & \geq &
    \left(1-\frac{O(f(n))}{n^{1/3}}\right)\cdot 
    \left(1-\frac{O(f(n))}{x_{\ell}}\right)\cdot
    \left(1-O\left(\frac{\sqrt{np\log(n)}}{\sqrt{x_{\ell}}}\right)\right)\ \geq\
    1- \frac{1}{n^{\Omega(1)}}.
  \end{eqnarray*}
  In the third inequality, we use that for $y\in [-1,1]$, it holds that $1+y\geq e^y\cdot (1-y^2)$. In the last inequality, we again used that $x_{\ell-1}$ and $x_{\ell}$ are both of size $\geq n^{\nu}$ for some constant $\nu>0$.
  
  Finally, let us look at the ratio between the probabilities $\Pr(X_{i}=x_{i}\,|\, \cX_{<i}')$ and $\Pr(X_{i}=x_{i}\,|\, \cX_{<i})$ for $i>\ell+1$. We denote this ratio by $\rho_{3,i}$, and by using Equation \eqref{eq:largelevels}, we can bound $\rho_{3,i}$ as follows.
  \[
  1-O\left(\frac{f(n)}{n^{2/3}}\right) \leq 
  \rho_{3,i} = \frac{n-x_{<i}+1}{n-x_{\leq i}+1}\cdot (1-q_i) \leq
  1+O\left(\frac{f(n)}{n^{2/3}}\right).
  \]
  We therefore have
  \[
  \frac{\Pr(\cX_{\leq h + T + 1}'\,|\,\cW')}
  {\Pr(\cX_{\leq h + T + 1})\,|\,\cW')}\ =\ 
  \rho_1\cdot\rho_2\cdot \prod_{i=\ell+2}^{h+T+1} \rho_i\ =\ 
  1 \pm o(1).
  \]
  Recall that $\cX_{\leq h + T + 1}=\cX_{<\ell}\cap \set{X_\ell=x_\ell}\cap \cX_{>\ell}$ and $\cX_{\leq h+T+1}'=\cX_{<\ell}\cap \set{X_\ell=x_\ell-1}\cap \cX_{>\ell}$. We therefore have
  \begin{eqnarray*}
    \Pr(\cX_{\leq h + T + 1}\,|\,\cW')
    & = &
    \Pr(X_\ell=x_\ell\,|\,\cX_{<\ell}\cap\cX_{>\ell}\cap\cW')\cdot
    \Pr(\cX_{<\ell}\cap\cX_{>\ell}\,|\,\cW')\quad\text{and}\\
    \Pr(\cX_{\leq h + T + 1}\,|\,\cW')
    & = &
    \Pr(X_\ell=x_\ell-1\,|\,\cX_{<\ell}\cap\cX_{>\ell}'\cap\cW')\cdot
    \Pr(\cX_{<\ell}\cap\cX_{>\ell}'\,|\,\cW')
  \end{eqnarray*}
  and thus
  \[
  \frac{\Pr(X_\ell=x_\ell-1\,|\,\cX_{<\ell}\cap\cX_{>\ell}'\cap\cW')}
  {\Pr(X_\ell=x_\ell\,|\,\cX_{<\ell}\cap\cX_{>\ell}\cap\cW')}\ =\ 
  \frac{\Pr(\cX_{\leq h + T + 1}'\,|\,\cW')}
  {\Pr(\cX_{\leq h + T + 1})\,|\,\cW')}\ =\ 
  1\pm o(1).
  \]
  However, this means that if $\cW'$ holds, the interval of possible values for $X_\ell$ contains at least two values and for any two adjacent values, the conditional probability that this is the correct guess is equal up to a $1\pm o(1)$ factor. Hence, even for the possible value $x_\ell^*$ that maximizes $\Pr(X_\ell=x_\ell^*\,|\,\cX_{<\ell}\cap\cX_{>\ell}\cap\cW')$, we have
  $\Pr(X_\ell=x_\ell^*\,|\,\cX_{<\ell}\cap\cX_{>\ell}\cap\cW')\leq 1/2 + o(1)$. Thus, if $\cW'$ holds, the nodes in $V_h$ exactly estimate $X_\ell$ with probability better than $1/2+o(1)$. Because $\cW'$ holds with probability $1-o(1)$, even if the nodes in $V_h$ always succeed in case $\cW'$ does not hold, the probability that $V_h$ can correctly guess $X_\ell$ is still at best $1/2+o(1)$. If the number of nodes $n\geq n_0$ for a sufficiently large constant $n_0$, this is at most $2/3$, which proves the claim of the theorem.
\end{proof}

We note that the success probability of $1/3$ could be boosted significantly in several ways. First, note that it would not be hard to adapt the proof so that for some constant $\nu>0$, $\cW$ allows $n^{\nu}$ different values for $X_{\ell}$ and that the probabilities for the $n^{\nu}$ most likely values are all approximately the same. This reduces the success probability to $n^{-\Omega(1)}$. Further, instead of looking at one neighborhood in the graph, we could look at polynomially many independent and disjoint neighborhoods and thus make the success probability even exponentially small in $n^{\nu}$ for some constant $\nu>0$.

Given the lower bound on computing a perfect matching in a random bipartite graph, our main lower bound theorem now follows in a relatively straightforward fashion. The following is a more precisely phrased version of \Cref{thm:intro-lowerbound}.

\begin{theorem}\label{thm:lowerbound}
    Assume that each node $v$ of a complete graph $K_n$ on $n$ nodes uniformly and independently computes a subset $S_v$ of the colors $\set{1,\dots,n}$ as follows. Each color $x$ is included in $S_v$ independently with probability $f(n)/n$, where $f(n)\geq c\ln(n)$ for a sufficiently large constant $c$ and $f(n)\leq \poly\log n$. Let $G$ be the subgraph of $K_n$ defined by all $n$ nodes and the set of edges between nodes $u$ and $v$ with $S_u\cap S_v\neq \emptyset$. Any randomized \LOCAL algorithm on $G$ to properly color $K_n$ with colors from the sets $S_v$ requires $\Omega(\log n / \log\log n)$ rounds. The lower bound holds even if the algorithm only has a success probability of $2/3$.
\end{theorem}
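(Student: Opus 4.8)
The plan is to \emph{reduce} from the perfect-matching lower bound of \cref{thm:LOCALperfectmatching}, exactly along the lines sketched in \cref{sec:loweroverview}. The starting observation is that a valid output for the problem in the statement -- a proper coloring of $K_n$ with $c(v)\in S_v$ -- is precisely a perfect matching of the bipartite graph $B=(V\cup C,E_B)$ in which $v\in V$ and $x\in C$ are adjacent iff $x\in S_v$. Indeed, a proper coloring of the complete graph $K_n$ is injective, and since $|V|=|C|=n$ it is a bijection $c:V\to C$; the constraint $c(v)\in S_v$ says $\{v,c(v)\}\in E_B$, so $M=\set*{\{v,c(v)\}:v\in V}$ is a perfect matching of $B$, and conversely every perfect matching of $B$ gives such a coloring. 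Moreover each pair $(v,x)$ is an edge of $B$ independently with probability exactly $p=f(n)/n$, so $B$ is exactly the random bipartite graph of \cref{thm:LOCALperfectmatching} as soon as we pick the constant $c$ in the present statement to be at least the constant $\alpha$ of that theorem (and $f(n)\le\polylog n$ matches the upper bound on $p$ there). So the whole task reduces to showing that a fast \LOCAL algorithm on $G$ yields a fast \LOCAL algorithm on $B$ for perfect matching.

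To do this, I would take a $T$-round randomized \LOCAL algorithm $\mathcal A$ on $G$ that outputs a valid coloring with probability $\ge 2/3$ and build a $(2T+O(1))$-round \LOCAL algorithm $\mathcal B$ on $B$ computing a perfect matching. The key point is that two vertices $u,v\in V$ are adjacent in $G$ iff $S_u\cap S_v\neq\emptyset$ iff they share a common neighbor in $B$; hence one round of communication over $G$ can be emulated by two rounds on $B$: each $v\in V$ sends its (per-neighbor) messages to all colors $x\in S_v$, and each color $x$ relays everything it received to all of its $B$-neighbors in $V$. Since \LOCAL messages are unbounded, relaying arbitrary message sets is free, and the IDs needed on $V\cup C$ may be fixed by any injective labeling (the bound of \cref{thm:LOCALperfectmatching} is insensitive to the labeling). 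In a preliminary two rounds on $B$ every $v\in V$ learns $N_G(v)=\bigcup_{x\in S_v}\set*{u:x\in S_u}$, after which the $T$ rounds of $\mathcal A$ are simulated by $2T$ rounds on $B$; finally each $v$ sends one message to the color node $c(v)$, so after one more round every node of $B$ knows its edge of $M$. By the first paragraph $M$ is a perfect matching of $B$ whenever $\mathcal A$'s output is valid, so $\mathcal B$ succeeds with probability $\ge 2/3$ in $2T+O(1)$ rounds; \cref{thm:LOCALperfectmatching} then forces $2T+O(1)=\Omega(\log n/\log\log n)$, i.e.\ $T=\Omega(\log n/\log\log n)$, proving the theorem.

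There is essentially no obstacle here beyond \cref{thm:LOCALperfectmatching} itself, which is where all the real work lies. The only point needing a little care is that the simulation is not symmetric: $G$ lives on the $n$ vertices $V$ only, while $B$ has the additional $n$ color vertices, so recovering adjacency of $G$ requires the color vertices of $B$ to act as two-hop relays, which is exactly what costs the harmless factor of two in round complexity. I would also remark, as the paper already does, that the same reduction combined with the strengthenings of \cref{thm:LOCALperfectmatching} (allowing $n^{\nu}$ near-equally-likely values of $X_\ell$, or many disjoint neighborhoods) pushes the $2/3$ in the statement down to $n^{-\Omega(1)}$ or even $\exp(-n^{\Omega(1)})$, so the constant success probability in the statement is not essential.
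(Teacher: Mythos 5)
Your reduction is exactly the paper's proof: identify valid list-colorings of $K_n$ with perfect matchings of the random bipartite graph $B$, observe that \LOCAL algorithms on $G$ and $B$ simulate each other with constant overhead (color nodes acting as two-hop relays), and invoke \cref{thm:LOCALperfectmatching}. The proposal is correct and matches the paper's argument, with slightly more explicit detail on the simulation.
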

\begin{proof}
  We define a bipartite graph $B$ between the set of nodes $V=\set{1,\dots,n}$ and the set of color $C=\set{1,\dots,n}$. There is an edge between $v\in V$ and $x\in C$ iff $x\in S_v$. We note that since for every color $x$ and every node $v$, $\Pr(x\in S_v)=f(n)/n$ and those probabilities are independent for different pairs $(v,x)$, the bipartite graph on $V$ and $C$ contains each possible edge between $V$ and $C$ independently with probability $p=f(n)/n$. Further, a valid $n$-coloring of $K_n$ is a one-to-one assignment between nodes and colors. Therefore, each valid $n$-coloring of $K_n$ that respects the sampled color set corresponds to a perfect matching in the bipartite graph $B$ between $V$ and $C$ and vice versa. Also note that clearly, in the \LOCAL model, any \LOCAL algorithm on $B$ can be run on $G$ with only constant overhead, and vice versa (when simulating $B$ on $G$, each color node $x$ has to be simulated by one of the nodes $v$ for which $x\in S_v$). Hence, any distributed coloring algorithm for $K_n$ that runs on the sampled graph $G$ implies a perfect matching algorithm on $B$ with the same asymptotic round complexity. The theorem therefore directly follows from \Cref{thm:LOCALperfectmatching}.
\end{proof}


\bibliography{arxiv.bbl}

\appendix

\section{Concentration Bounds}

\paragraph{Some useful inequalities.} We use the following classic inequalities:

\begin{lemma}[{\cite{Doerr2020}}]
\label{lem:useful-inequalities}
For $x\in [0,1]$ and $y > 0$, we have 
\begin{equation}\
\label{eq:exp-ineq}
1 - x \le e^{-x} \le 1-\frac{x}{2}\qquad and \qquad (1-x)^y \le \frac{1}{1+xy}.
\end{equation}
\end{lemma}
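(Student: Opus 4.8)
\textbf{Proof plan for \Cref{lem:useful-inequalities}.}

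The plan is to prove the four inequalities separately, each by a standard single-variable calculus or algebraic argument, since they are all classical. First I would handle $1-x \le e^{-x}$ for $x \in [0,1]$ (in fact for all real $x$): define $g(x) = e^{-x} - (1-x)$, note $g(0)=0$ and $g'(x) = 1 - e^{-x} \ge 0$ for $x \ge 0$, so $g$ is nondecreasing on $[0,\infty)$ and hence nonnegative there; this gives the left inequality. Alternatively one can simply cite convexity of $e^{-x}$ and the tangent line at $0$, which lies below the graph.

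Next, for the right inequality $e^{-x} \le 1 - x/2$ on $[0,1]$, I would set $h(x) = 1 - x/2 - e^{-x}$, observe $h(0) = 0$, and compute $h'(x) = -1/2 + e^{-x}$, which is nonnegative precisely when $e^{-x} \ge 1/2$, i.e.\ when $x \le \ln 2 \approx 0.693$. So $h$ increases up to $\ln 2$ and then decreases; since $h(0) = 0$ and I need only check the other endpoint, $h(1) = 1/2 - 1/e > 0$ (as $1/e < 0.368 < 1/2$), it follows that $h \ge 0$ on all of $[0,1]$. This is the one place requiring a tiny bit of care — the function is not monotone on the interval, so one genuinely must check the right endpoint rather than just differentiating once — but it is hardly an obstacle.

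For the last inequality $(1-x)^y \le \frac{1}{1+xy}$ with $x \in [0,1]$, $y > 0$, equivalently $(1-x)^y (1+xy) \le 1$, I would apply the first inequality already proven: $1 - x \le e^{-x}$, so $(1-x)^y \le e^{-xy}$; and separately $1 + xy \le e^{xy}$ (again the tangent-line bound $1+t \le e^t$ applied at $t = xy \ge 0$). Multiplying, $(1-x)^y(1+xy) \le e^{-xy} e^{xy} = 1$, which is exactly the claim. (One should note the degenerate case $x=1$: then the left side is $0 \le 1$, fine.)

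Since all pieces reduce to the elementary bound $1+t \le e^t$ and one endpoint check, there is no real obstacle; the writeup is short. If one prefers, the entire lemma can simply be attributed to \cite{Doerr2020} as the paper already does, in which case the "proof" is a one-line citation and the above is the verification behind it.
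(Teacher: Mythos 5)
Your verification is correct. The paper itself does not prove \cref{lem:useful-inequalities}: it simply cites \cite{Doerr2020} and moves on, so there is no internal proof to compare against. Your three arguments are all sound: the tangent-line bound gives $1-x\le e^{-x}$; for $e^{-x}\le 1-x/2$ you correctly notice that $h(x)=1-x/2-e^{-x}$ is not monotone on $[0,1]$ (it peaks at $x=\ln 2$) and therefore check both endpoints, $h(0)=0$ and $h(1)=1/2-1/e>0$, which is exactly the care that is needed; and for $(1-x)^y\le 1/(1+xy)$ the product trick $(1-x)^y(1+xy)\le e^{-xy}\cdot e^{xy}=1$ is clean and uses only nonnegativity of both factors to justify multiplying the two one-sided bounds. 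The degenerate endpoint $x=1$ is handled. In short, the proposal is a complete and correct proof of the cited lemma.
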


\paragraph{Chernoff bound with domination.} The classical version of the Chernoff bound shows concentration for sum of binary random variables and assumes independence between each variable. We use a more general form allowing for some dependencies and non-binary variables. 

\begin{lemma}[Martingales]\label{lem:chernoff}
Let $\{X_i\}_{i=1}^r$ be random variables distributed in $[0,1]$, and $X=\sum_i X_i$.
Suppose that for all $i\in [r]$ and $(x_1,\ldots,x_{i-1})\in \{0,1\}^{i-1}$ with $\Pr\range{X_1=x_1,\dots,X_r=x_{i-1}}>0$, $\Pr\range{X_i=1\mid X_1=x_1,\dots,X_{i-1}=x_{i-1}}\le q_i\le 1$, then for any $\delta>0$,
\begin{equation}\label{eq:chernoffless}
\Pr\range*{X\ge(1+\delta)\sum_{i=1}^r q_i}
\le \exp\parens*{-\frac{\min(\delta,\delta^2)}{3}\sum_{i=1}^r q_i}\ .
\end{equation}
Suppose instead that $\Pr\range*{X_i=1\mid X_1=x_1,\dots,X_{i-1}=x_{i-1}}\ge q_i$, $q_i\in (0,1)$ holds for $i, x_1, \ldots, x_{i-1}$ over the same ranges, then for any $\delta\in [0,1]$,
\begin{equation}\label{eq:chernoffmore}
    \Pr\range*{X\le(1-\delta)\sum_{i=1}^r q_i}\le \exp\left(-\frac{\delta^2}{2}\sum_{i=1}^r q_i\right)\ .
\end{equation}
\end{lemma}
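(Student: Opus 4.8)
The plan is to prove \cref{lem:chernoff} by the standard exponential-moment method, adapted to the martingale setting through the tower rule. Write $\mu=\sum_{i=1}^r q_i$ and let $\mathcal{F}_i=\sigma(X_1,\dots,X_i)$, with $\mathcal{F}_0$ trivial. The first step is to restate the hypothesis in the form I will actually use: for a $\{0,1\}$-valued $X_i$ the assumption $\Pr[X_i=1\mid\mathcal{F}_{i-1}]\le q_i$ \emph{is} the bound $\Exp[X_i\mid\mathcal{F}_{i-1}]\le q_i$, and for a general $[0,1]$-valued $X_i$ one reads the domination hypothesis directly as that conditional-expectation bound (which is how the lemma is invoked in the paper, e.g.\ for the $[0,1]$-valued sums appearing in the expansion and degree-reduction arguments). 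The lower-tail hypothesis similarly becomes $\Exp[X_i\mid\mathcal{F}_{i-1}]\ge q_i$.

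For the upper tail, fix $\lambda>0$. Convexity of $x\mapsto e^{\lambda x}$ on $[0,1]$ gives $e^{\lambda x}\le 1+(e^\lambda-1)x$, so, since $e^\lambda-1>0$, taking conditional expectations yields $\Exp[e^{\lambda X_i}\mid\mathcal{F}_{i-1}]\le 1+(e^\lambda-1)q_i\le\exp\big((e^\lambda-1)q_i\big)$. Peeling off one variable at a time with the tower rule gives $\Exp[e^{\lambda X}]\le\exp\big((e^\lambda-1)\mu\big)$, and Markov's inequality then gives $\Pr[X\ge(1+\delta)\mu]\le\exp\big((e^\lambda-1)\mu-\lambda(1+\delta)\mu\big)$. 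The choice $\lambda=\ln(1+\delta)$ makes the exponent $\mu\big(\delta-(1+\delta)\ln(1+\delta)\big)$, and to reach \eqref{eq:chernoffless} I would invoke the elementary scalar inequality $\delta-(1+\delta)\ln(1+\delta)\le-\tfrac13\min(\delta,\delta^2)$ for all $\delta>0$, checked by splitting into $\delta\le1$ (using $\ln(1+\delta)\ge\delta-\delta^2/2$) and $\delta>1$ (monotonicity together with the value at $\delta=1$). The lower tail \eqref{eq:chernoffmore} is entirely parallel: for $\lambda>0$, convexity of $e^{-\lambda x}$ on $[0,1]$ gives $e^{-\lambda x}\le 1+(e^{-\lambda}-1)x$ with $e^{-\lambda}-1<0$, so the hypothesis $\Exp[X_i\mid\mathcal{F}_{i-1}]\ge q_i$ yields $\Exp[e^{-\lambda X_i}\mid\mathcal{F}_{i-1}]\le\exp\big((e^{-\lambda}-1)q_i\big)$; telescoping and Markov applied to $e^{-\lambda X}$ give $\Pr[X\le(1-\delta)\mu]\le\exp\big((e^{-\lambda}-1)\mu+\lambda(1-\delta)\mu\big)$, and $\lambda=-\ln(1-\delta)$ together with $-\delta-(1-\delta)\ln(1-\delta)\le-\delta^2/2$ on $[0,1]$ finishes it.

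This is essentially a textbook argument, so there is no genuine obstacle. The two points that need care are: (i) setting up the conditioning so that the tower-rule peeling is legitimate — the bound on $\Exp[X_i\mid\mathcal{F}_{i-1}]$ must hold for \emph{every} realization of the history, not merely unconditionally, which is exactly what the "for all $(x_1,\dots,x_{i-1})$" quantifier in the statement encodes; and (ii) the two scalar logarithm inequalities above, which are slightly fiddly to verify but routine. One could equally well cite a standard reference for the domination/martingale form of the Chernoff bound (e.g.\ \cite{Doerr2020}) and omit the derivation.
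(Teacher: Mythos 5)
The paper does not prove \cref{lem:chernoff} at all: it is stated in the appendix as a known concentration result (the martingale/stochastic-domination form of the Chernoff bound), so there is no in-paper argument to compare against. Your derivation is the standard and correct one: bounding the conditional moment generating function via convexity of $x\mapsto e^{\lambda x}$ on $[0,1]$, peeling with the tower rule (which is exactly where the ``for all realizations of the history'' quantifier is needed, as you note), applying Markov, and optimizing $\lambda$. Your reading of the hypothesis as a bound on $\Exp[X_i\mid\mathcal{F}_{i-1}]$ is also the right one, since the statement's conditioning on $\{0,1\}$-valued histories is incompatible with $[0,1]$-valued variables as literally written, and the lemma is invoked in the paper for genuinely $[0,1]$-valued summands.

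One small repair: your proposed verification of $\delta-(1+\delta)\ln(1+\delta)\le-\tfrac13\delta^2$ for $\delta\in(0,1]$ via $\ln(1+\delta)\ge\delta-\delta^2/2$ is too lossy; it only yields $\delta-(1+\delta)\ln(1+\delta)\le-\tfrac{\delta^2}{2}(1-\delta)$, which degenerates to $0$ at $\delta=1$ while the target is $-1/3$. Use instead $\ln(1+\delta)\ge\frac{2\delta}{2+\delta}$, which gives $\delta-(1+\delta)\ln(1+\delta)\le-\frac{\delta^2}{2+\delta}\le-\frac{\delta^2}{3}$ on $(0,1]$; the case $\delta>1$ then follows from monotonicity of $\delta\mapsto \delta-(1+\delta)\ln(1+\delta)+\delta/3$ together with its value at $\delta=1$, as you suggest. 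The lower-tail inequality $-\delta-(1-\delta)\ln(1-\delta)\le-\delta^2/2$ checks out as stated. With that fix the proof is complete.
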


\paragraph{Talagrand inequality.}
A function $f(x_1,\ldots,x_n)$ is  \emph{$c$-Lipschitz} iff changing any single $x_i$ affects the value of $f$ by at most $c$, and $f$ is  \emph{$r$-certifiable} iff whenever $f(x_1,\ldots,x_n) \geq s$ for some value $s$, there exist $r\cdot s$ inputs $x_{i_1},\ldots,x_{i_{r\cdot s}}$ such that knowing the values of these inputs certifies $f\geq s$ (i.e., $f\geq s$ whatever the values of $x_i$ for $i\not \in \{i_1,\ldots,i_{r\cdot s}\}$).

\begin{lemma}[Talagrand's inequality~\cite{Talagrand95,DP09}]
\label{lem:talagrand}
Let $\{X_i\}_{i=1}^n$ be $n$ independent random variables and $f(X_1,\ldots,X_n)$ be a $c$-Lipschitz $r$-certifiable function; then for $t\geq 1$,
\[\Pr\range*{\abs*{f-\Exp[f]}>t+30c\sqrt{r\cdot\Exp[f]}}\leq 4 \cdot \exp\parens*{-\frac{t^2}{8c^2r\Exp[f]}}\]
\end{lemma}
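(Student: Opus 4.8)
The plan is to derive the inequality from \emph{Talagrand's convex distance inequality}, the standard concentration engine for product spaces; the two structural hypotheses ($c$-Lipschitz and $r$-certifiable) are precisely what is needed to turn a deviation of $f$ into a lower bound on the convex distance to a level set. Recall that for a product space $\Omega=\prod_{i=1}^n\Omega_i$ with product measure, a point $x\in\Omega$, and a measurable set $A\subseteq\Omega$, the convex distance is
\[
 d_T(x,A)=\sup_{\alpha\in\mathbb R^n_{\ge 0},\ \norm{\alpha}_2\le 1}\ \inf_{y\in A}\ \sum_{i\,:\,x_i\neq y_i}\alpha_i,
\]
and Talagrand's theorem states $\Pr[A]\cdot\Pr[\,d_T(\cdot,A)\ge u\,]\le e^{-u^2/4}$ for every $u\ge 0$. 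I would take this inequality as a black box: it is the genuinely hard analytic ingredient, proved by induction on $n$ via a sharp one-dimensional functional inequality, and citing \cite{Talagrand95,DP09} is the honest move. Everything else is elementary bookkeeping.

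The second step is the key geometric lemma. Fix a threshold $b$ and set $A=\{x:f(x)\le b\}$. Suppose $f(x)=s>b$. By $r$-certifiability there is an index set $I$ with $\abs{I}\le rs$ such that every $y$ agreeing with $x$ on $I$ satisfies $f(y)\ge s$. For any $y\in A$, let $y'$ agree with $y$ off $I$ and with $x$ on $I$; then $f(y')\ge s$, while $c$-Lipschitzness gives $f(y')\le f(y)+c\cdot\abs{\{i\in I:x_i\neq y_i\}}\le b+c\cdot\abs{\{i\in I:x_i\neq y_i\}}$, so $\abs{\{i\in I:x_i\neq y_i\}}\ge (s-b)/c$. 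Taking $\alpha$ uniform on $I$ (so $\alpha_i=\abs{I}^{-1/2}$) and noting that this bound holds for \emph{every} $y\in A$ once $x$ (hence $I$, hence $\alpha$) is fixed, we get $d_T(x,A)\ge \frac{(s-b)/c}{\sqrt{\abs{I}}}\ge\frac{s-b}{c\sqrt{rs}}$. An identical argument with $A=\{x:f(x)\ge b\}$ and $f(x)=s<b$ yields $d_T(x,A)\ge \frac{b-s}{c\sqrt{rb}}$.

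Third, I assemble the bound. Combining the geometric lemma with Talagrand's theorem, for $s>b$,
\[
 \Pr[f\ge s]\ \le\ \Pr\!\left[d_T(\cdot,A)\ge \tfrac{s-b}{c\sqrt{rs}}\right]\ \le\ \frac{1}{\Pr[f\le b]}\,\exp\!\left(-\frac{(s-b)^2}{4c^2 rs}\right),
\]
and symmetrically for the lower tail. Choosing $b=m$, a median of $f$, makes both $\Pr[f\le m]\ge 1/2$ and $\Pr[f\ge m]\ge 1/2$, so $\Pr[\abs{f-m}\ge u]\le 4\exp(-u^2/(4c^2 r(m+u)))$ up to constants. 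The remaining work is cosmetic: (i) replace $m$ by $\Exp[f]$, using that the tail bound just derived forces $\abs{m-\Exp[f]}=O(c\sqrt{r\,\Exp[f]})$, which is absorbed into the additive $30c\sqrt{r\cdot\Exp[f]}$ slack; and (ii) over the range of deviations $s=\Exp[f]\pm O(t+c\sqrt{r\,\Exp[f]})$ for which the statement is non-vacuous, replace $rs$ in the exponent's denominator by $r\,\Exp[f]$ at the cost of widening the constants ($4\to 8$) and the shift ($+30c\sqrt{r\,\Exp[f]}$) — which is exactly the form quoted as \cref{lem:talagrand}.

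The main obstacle is twofold. The genuinely deep ingredient — Talagrand's convex distance inequality — I would not reprove; it is a self-contained theorem and I would simply invoke it. Within the reduction itself, the only delicate point is the last normalization step: converting the ``natural'' bound $\exp(-(s-b)^2/(4c^2rs))$ centered at a median into the stated form centered at the mean with the additive $t+30c\sqrt{r\,\Exp[f]}$ correction requires checking that the relevant range of $s$ is indeed $\Theta(\Exp[f])$ so that $rs\asymp r\,\Exp[f]$, and it is precisely this slack-management that fixes the explicit constants $30$ and $8$ appearing in the statement.
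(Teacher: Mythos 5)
The paper does not prove this lemma at all; it is imported verbatim as a black box from the cited sources \cite{Talagrand95,DP09}. Your derivation—lower-bounding the convex distance to the sublevel set via the certificate set $I$ of size at most $rs$ together with $c$-Lipschitzness, applying the convex distance inequality, and then converting from the median to the mean to absorb the $30c\sqrt{r\Exp[f]}$ shift and fix the constants—is exactly the standard proof found in those references, and it is correct in outline.
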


In the next lemma, $\mathbb{I}_X$ denotes the indicator random variable of an event $X$.

\begin{lemma}[\cite{HKNT22}]
\label{lem:talagrand-difference}
Let $\set*{X_i}_{i=1}^n$ be $n$ independent random variables. Let $\set*{A_j}_{j=1}^k$ and $\set*{B_j}_{j=1}^k$ be two families of
events that are functions of the $X_i$'s. Let $f=\sum_{j\in[k]} \mathbb{I}_{A_j}$, $g=\sum_{j\in[k]} \mathbb{I}_{A_j \cap \overline{B}_j}$, and $h=f-g$ be such that $f$ and $g$ are $c$-Lipschitz and $r$-certifiable w.r.t.\ the $X_i$'s, and $\Exp[h] \geq \alpha \Exp[f]$ for some constant $\alpha \in (0,1)$. Let $\delta \in (0,1)$. Then for $\Exp[h]$ large enough:
\[\Pr\range*{\abs*{h - \Exp[h]} > \delta \Exp[h]} \leq \exp(-\Omega(\Exp[h])).\]
\end{lemma}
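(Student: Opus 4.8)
The plan is to control the two sums $f$ and $g$ separately with Talagrand's inequality (\cref{lem:talagrand}) and then recombine. First I would record the easy consequences of the hypothesis $\Exp[h]\ge\alpha\Exp[f]$: since $h=f-g$ with $g\ge 0$ pointwise, we have $\Exp[g]=\Exp[f]-\Exp[h]$, and hence $\Exp[h]\le\Exp[f]\le\Exp[h]/\alpha$ and $\Exp[g]\le\Exp[f]\le\Exp[h]/\alpha$. In particular $\Exp[f]=\Theta(\Exp[h])$ and $\Exp[g]=O(\Exp[h])$, with the implicit constants depending only on $\alpha$.

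Next, I would use the decomposition $h-\Exp[h]=(f-\Exp[f])-(g-\Exp[g])$ together with the triangle inequality: the event $\lvert h-\Exp[h]\rvert>\delta\Exp[h]$ is contained in the union of $\lvert f-\Exp[f]\rvert>\delta\Exp[h]/2$ and $\lvert g-\Exp[g]\rvert>\delta\Exp[h]/2$, so by a union bound it suffices to bound each of these two probabilities by $\exp(-\Omega(\Exp[h]))$. For the first, apply \cref{lem:talagrand} to the $c$-Lipschitz, $r$-certifiable function $f$ with deviation parameter $t_f\eqdef\delta\Exp[h]/2-30c\sqrt{r\,\Exp[f]}$. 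Since $\Exp[f]\le\Exp[h]/\alpha$ and $c,r,\alpha,\delta$ are constants, for $\Exp[h]$ larger than a suitable constant we get $30c\sqrt{r\,\Exp[f]}\le\delta\Exp[h]/4$, whence $t_f\ge\delta\Exp[h]/4\ge 1$ and $t_f+30c\sqrt{r\,\Exp[f]}=\delta\Exp[h]/2$; \cref{lem:talagrand} then yields $\Pr[\lvert f-\Exp[f]\rvert>\delta\Exp[h]/2]\le 4\exp(-t_f^2/(8c^2 r\,\Exp[f]))$, and the exponent is $-\Omega(\delta^2\alpha\,\Exp[h])=-\Omega(\Exp[h])$ because $t_f\ge\delta\Exp[h]/4$ and $\Exp[f]\le\Exp[h]/\alpha$. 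The bound on the $g$-term is obtained identically, using that $g$ is $c$-Lipschitz and $r$-certifiable (this is a hypothesis of the lemma, and is exactly what lets us handle $g$ despite $A_j\cap\overline{B_j}$ being an intersection with a complement) and that $\Exp[g]\le\Exp[h]/\alpha$, by choosing $t_g\eqdef\delta\Exp[h]/2-30c\sqrt{r\,\Exp[g]}\ge\delta\Exp[h]/4$. Summing the two contributions proves the statement.

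The one point that needs care — and the reason the statement is phrased in terms of $\Exp[h]$ rather than $\Exp[g]$ — is that $\Exp[g]$ may be far smaller than $\Exp[h]$ (indeed possibly $0$, in which case $g\equiv 0$ and the $g$-term vanishes). A naive multiplicative Chernoff/Talagrand bound on $g$ would only give $\exp(-\Omega(\Exp[g]))$, which is useless when $\Exp[g]$ is small; the additive-slack form of \cref{lem:talagrand}, applied with an additive deviation of order $\delta\Exp[h]$, sidesteps this since the resulting exponent $t_g^2/\Exp[g]$ is $\Omega(\Exp[h]^2/\Exp[g])=\Omega(\Exp[h])$ regardless of how small $\Exp[g]$ is, using only $\Exp[g]\le\Exp[h]/\alpha$. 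So I do not expect a genuinely hard step here: the work is entirely in organizing the union bound and checking that "$\Exp[h]$ large enough" simultaneously forces $t_f,t_g\ge 1$ and $t_f,t_g=\Omega(\Exp[h])$.
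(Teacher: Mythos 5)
Your proof is correct, and it is the standard argument for this lemma: the paper itself states it as imported from \cite{HKNT22} without reproving it, and the proof there is exactly your route --- bound $\Exp[f],\Exp[g]\le\Exp[h]/\alpha$, split $|h-\Exp[h]|$ via the triangle inequality, and apply \cref{lem:talagrand} to $f$ and $g$ separately with an additive deviation of order $\delta\Exp[h]$ so that the exponent is $\Omega(\Exp[h])$ even when $\Exp[g]\ll\Exp[h]$. Your closing remark correctly identifies the only delicate point (why the bound is in terms of $\Exp[h]$ rather than $\Exp[g]$), so there is nothing to add.
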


\section{Omitted Proofs}

\subsection{Computing the Almost-Clique Decomposition}
\label{sec:ACD}

In this section, we show the following lemma:

\ACD*

\begin{Algorithm}
\label{alg:ACD}
Algorithm computing a $\epsilon$-almost-clique decomposition.

\textbf{Parameters.} Define 
\[ \delta=\frac{\epsilon}{12},\qquad
\lambda = \frac{16\Delta}{\delta}\qquad\text{and}
\qquad\sigma=\frac{384\beta}{\delta^4}. \]

\textbf{When sparsifiying the input graph.} Each node $v$ samples a value $r(v)\in [\lambda]$ uniformly at random.
Each node $v$ then computes
\begin{itemize}
\item the set $F(v)$ containing all values $r(u)$ for neighbors $u\in N(v)$ such that $r(v)\le\sigma$.
\item a set $E_s(v)$ of $O(\log n/\delta^2)$ random edges using reservoir sampling.
\end{itemize}

\textbf{Communication Phase.} Each $v$ performs the following algorithm:
\begin{enumerate}
\item Send $F(v)$ to each neighbor in $E_s(v)$. 

If $|F(v)\cap F(u)| \ge (1-\delta)\Delta\sigma/\lambda$, then $u$ and $v$ are \emph{friends}.
\item By counting its number of friends in $E_s(v)$, $v$ learns if it is \emph{popular}.
\end{enumerate}
\end{Algorithm}

\begin{definition}[Friendly edges]
\label{def:friend-edges}
For any $\delta\in(0,1)$, we say that nodes $u$ and $v$ are \emph{friends} if they are connected, i.e., $uv\in E$, and share a $(1-\delta)$-fraction of their neighborhood, i.e., $|N(u)\cap N(v)| \ge (1-\delta)\Delta$.
\end{definition}

To detect friendly edges, the approach of \cite{ACK19} was to sample nodes with probability $O(\frac{\log n}{\delta^2\Delta})$ and, for each sampled edge $uv$, compare the set of nodes sampled in $N(v)$ to that of $N(u)$. This approach requires nodes to communicate $O(\log^2 n)$ bits with their neighbors ($O(\log n)$-bits identifiers for $O(\log n)$ sampled neighbors); hence, it exceeds the bandwidth requirements of our model.

In recent work, \cite{HNT22} proposed a \congest algorithm for solving this task in $O(1)$ rounds. They devise an algorithm (\cite[Algorithm 1]{HNT22}) using families of pseudo-random hash functions to estimate up to $\delta\Delta$ precision the similarities of two $\Theta(\Delta)$-sized sets. They observed it could be used to compute ACD in a rather straightforward way by using this primitive to compare neighborhoods. The main obstacle to implement this algorithm in our model are memory constraints: their families of hash function are non-constructive and of size $\poly(n)$.

We note, however, that for this specific use, we can sample \emph{a truly random function}. To sample a random function $r$ mapping nodes to values in $[\lambda]$, it is enough if each node $v$ samples a value $r(v)\in[\lambda]$ independently. By \cite[Claim 1]{HNT22}, the induced random function has few enough collisions for nodes to estimate the size of their shared neighborhood with sufficient accuracy with high probability. Furthermore, to do so, nodes need only to know the hash value of their neighbors. The parameters of \cref{alg:ACD} are set to match the ones of \cite[Algorithm 1]{HNT22} with up to $(\delta/2)\Delta$ error.

\begin{lemma}[{Detecting Friendly Edges, \cite[Claim 1 + Lemma 2]{HNT22}}]
\label{lem:friendly}
Let $\delta \in (0, 1/10)$. For every pair of adjacent nodes $uv$, with high probability, we have that 
\begin{itemize}
    \item if $u$ and $v$ are $\delta$-friends, we have $|F(v)\cap F(u)| \ge (1-1.5\delta)\Delta\sigma/\lambda$; and
    \item if $u$ and $v$ are not $2\delta$-friends, we have $|F(v)\cap F(u)| < (1-1.5\delta)\Delta\sigma/\lambda$.
\end{itemize}
\end{lemma}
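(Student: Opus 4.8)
The plan is to compare the two sketches by isolating the contribution of the \emph{shared} neighborhood, which is the same set in $F(u)$ and $F(v)$. For adjacent $u,v$, write $S=N(u)\cap N(v)$, $A=N(u)\setminus N(v)$, $B=N(v)\setminus N(u)$, and for a vertex set $X$ let $F_X=\set{r(w):w\in X,\ r(w)\le\sigma}$ denote the set of ``active'' buckets (those $\le\sigma$) hit by some vertex of $X$. Then $F(u)=F_S\cup F_A$ and $F(v)=F_S\cup F_B$, so by distributivity $F(u)\cap F(v)=F_S\cup(F_A\cap F_B)$, which yields the sandwich $|F_S|\le|F(u)\cap F(v)|\le|F_S|+|F_A\cap F_B|$. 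Everything then reduces to two balls-into-bins estimates: pinning $|F_S|$ down to a $(1\pm O(\delta))$ factor, and bounding $|F_A\cap F_B|$ by a $\Theta(\delta)$-fraction of the target $\Delta\sigma/\lambda=\delta\sigma/16$.

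For $|F_S|$: this is the number of occupied bins among $\set{1,\dots,\sigma}$ when $|S|$ balls are dropped uniformly into $[\lambda]$, so $\Exp[|F_S|]=\sigma(1-(1-1/\lambda)^{|S|})$, and since $\lambda=16\Delta/\delta$ forces $|S|/\lambda\le\delta/16$, a first-order expansion gives $\Exp[|F_S|]=(1\pm O(\delta))\,\sigma|S|/\lambda$; in particular $\Exp[|F_S|]\le\sigma|S|/\lambda$ always, and $\Exp[|F_S|]\ge(1-O(\delta))\sigma|S|/\lambda$. When $|S|=\Omega(\Delta)$ this mean is $\Theta(\log n/\delta^3)=\Omega(\log n)$, so one gets $|F_S|=(1\pm\delta/8)\Exp[|F_S|]$ w.h.p.\ — either by invoking negative association of the occupancy indicators $\set{\mathbb{I}[c\in F_S]}_{c\le\sigma}$, or, matching \cite{HNT22}, by applying the exposure martingale of \cref{lem:chernoff} to $X_i=\mathbb{I}[\text{the $i$-th ball of $S$ is the first to land in its active bucket}]$, whose conditional success probability lies in $\left[(\sigma-\sum_{j<i}X_j)/\lambda,\ \sigma/\lambda\right]$; the lower tail needs the standard stopping-time device to replace the data-dependent lower bound by the fixed value $(\sigma/\lambda)(1-O(\delta))$ after the upper bound on $|F_S|$ has been established. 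When $|S|$ is small only the (easy) upper bound is used.

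For $|F_A\cap F_B|$: since $F_A$ and $F_B$ are functions of disjoint blocks of the independent variables $r(\cdot)$ they are independent, hence $\Exp[|F_A\cap F_B|]\le\sum_{c\le\sigma}\Pr[c\in F_A]\Pr[c\in F_B]\le\sigma(|A|/\lambda)(|B|/\lambda)\le\sigma(\Delta/\lambda)^2=\sigma\delta^2/256$, which is a $\Theta(\delta)$-fraction of $\delta\sigma/16$. For a w.h.p.\ bound I would first expose $A$'s balls (so $|F_A|\le(1+\delta/8)|A|\sigma/\lambda$ w.h.p.\ by the same occupancy estimate) and then expose $B$'s balls, each of which lands in $F_A$ with probability $|F_A|/\lambda$; another application of \cref{lem:chernoff} gives $|F_A\cap F_B|\le(1+\delta/8)^2\sigma\delta^2/256$ w.h.p. (if $|A|$ or $|B|$ is small, the large-deviation form of the Chernoff bound instead yields an $O(\log n)$ bound, still $\ll\delta\sigma/16$ because $\sigma=\Theta(\beta/\delta^4)$ with $\beta=\Theta(\log n)$).

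Combining: if $u,v$ are $\delta$-friends then $|S|\ge(1-\delta)\Delta$, so $|F(u)\cap F(v)|\ge|F_S|\ge(1-\delta/8)(1-O(\delta))\,\Delta\sigma/\lambda>(1-1.5\delta)\Delta\sigma/\lambda$; if they are not $2\delta$-friends then $|S|<(1-2\delta)\Delta$, so $|F(u)\cap F(v)|\le|F_S|+|F_A\cap F_B|<(1+\delta/8)(1-2\delta)\Delta\sigma/\lambda+(1+o(1))\sigma\delta^2/256<(1-1.5\delta)\Delta\sigma/\lambda$, the last step using the slack between $15\delta/8$ and $3\delta/2$. A union bound over the $O(n\Delta)$ adjacent pairs (with room to spare since $\beta=\lfloor C\log n\rfloor$ for large $C$) finishes the proof. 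The one genuinely delicate point is the w.h.p.\ concentration of the occupancy count $|F_S|$ in the regime $|S|\gg\log n$, where plain bounded differences is too weak and one must appeal to negative association (or the exposure martingale with a stopping time); the rest is constant-chasing to check that the $1.5\delta$ bound stays comfortably between the $\delta$- and $2\delta$-friend thresholds.
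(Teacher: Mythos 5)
Your proof is correct, but note that the paper does not actually prove this lemma: it is imported wholesale from the cited prior work (Claim~1 and Lemma~2 of the referenced paper), with the surrounding text only explaining why a truly random $r$ can replace the pseudorandom hash families used there. So there is no in-paper argument to compare against; what you have written is a self-contained derivation of the imported statement. The derivation itself is sound: the identity $F(u)\cap F(v)=F_S\cup(F_A\cap F_B)$ is exactly the right decomposition, the independence of $F_A$ and $F_B$ (disjoint vertex sets, independent $r(\cdot)$) justifies the $\sigma(\Delta/\lambda)^2=\sigma\delta^2/256$ cross-term bound, and the constant-chasing against the $(1-1.5\delta)\Delta\sigma/\lambda=(1-1.5\delta)\delta\sigma/16$ threshold checks out on both sides, with the additive $O(\log n)$ slacks absorbed because $\delta^2\sigma=\Theta(\beta/\delta^2)\gg\log n$. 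You also correctly identify the one genuinely delicate point, namely multiplicative concentration of the occupancy count $|F_S|$ when $|S|\gg\log n$, and both of your fixes (negative association of the occupancy indicators, or the first-occupant exposure martingale with the stopping-time truncation to get a history-independent lower bound on the conditional probabilities, as required by the martingale form of the Chernoff bound in the appendix) are standard and valid. The only cosmetic caveat is that the algorithm only has nodes of degree at least $\Delta/2$ build $F(\cdot)$, but such low-degree nodes cannot be $2\delta$-friends with anyone, so the statement is vacuous for them.
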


The other primitive required to compute the almost-clique decomposition is for distinguishing between $\delta$-popular nodes and those that are not $2\delta$-popular.

\begin{definition}[Popular Nodes]
\label{def:popular}
For any $\delta\in (0,1/10)$, we say $u$ is \emph{$\delta$-popular} if it has $(1-\delta)\Delta$ friendly edges.
\end{definition}

The following lemma states that by sampling edges with probability $\Theta(\log n/\delta^2\Delta)$ edges in its neighborhood, a node can distinguish between it being $\delta$-popular and it not being $2\delta$-popular. It follows directly from the Chernoff bound (\cref{lem:chernoff}) as the number of sampled edges allows us to estimate w.h.p.\ the number of friendly edges up to $(\delta/2)\Delta$ by sampling.

\begin{lemma}[Detecting Popular Nodes]
\label{lem:popular}
Let $\delta\in (0,1/10)$. If edges are sampled in $E_s$ with probability $p=\Theta(\log n/(\delta^2\Delta))$, then, with high probability, for every node $u$, we have that
\begin{itemize}
    \item if $u$ is $\delta$-popular, it samples at least $(1-1.5\delta)\Delta p$ $\delta$-friendly edges in $E_s(u)$;
    \item if $u$ is not $2\delta$-popular, it samples fewer than $(1-1.5\delta)\Delta p$ $2\delta$-friendly edges in $E_s(u)$.
\end{itemize}
\end{lemma}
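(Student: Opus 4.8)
\textbf{Proof plan for \cref{lem:popular}.}
The plan is to reduce the statement to a standard Chernoff bound on the number of friendly edges that land in the sampled set $E_s(u)$. First I would fix a node $u$ and condition on the set of friendly edges incident to $u$ being determined; here I rely on \cref{lem:friendly}, which guarantees that, with high probability, the friend relation computed by \cref{alg:ACD} agrees (up to the $\delta$ vs.\ $2\delta$ slack) with the true $\delta$-friend relation, so it suffices to reason about the true friendly edges. Let $\phi$ denote the number of truly $\delta$-friendly edges incident to $u$; in the first bullet case $\phi \ge (1-\delta)\Delta$, and in the second bullet case the number of $2\delta$-friendly edges is $< (1-2\delta)\Delta$.

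Next I would analyze the random variable $Y_u = |\{e \in E_s(u): e \text{ friendly}\}|$, the number of friendly edges sampled into $E_s(u)$. Since each incident edge is placed in $E_s(u)$ independently with probability $p = \Theta(\log n/(\delta^2\Delta))$ (reservoir sampling of $\Theta(\log n/\delta^2)$ edges out of $\Theta(\Delta)$ makes each edge present with this marginal probability, and the negative-association of reservoir sampling is enough to apply \cref{lem:chernoff}), we have $\Exp[Y_u]=\phi p$. In the $\delta$-popular case $\Exp[Y_u] \ge (1-\delta)\Delta p = \Omega(\log n/\delta^2)$, so a lower-tail Chernoff bound \eqref{eq:chernoffmore} with a constant deviation of order $\delta/(1-\delta)$ gives $Y_u \ge (1-1.5\delta)\Delta p$ with probability $1-\exp(-\Omega(\log n/\delta^2)) \ge 1-1/\poly(n)$. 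Symmetrically, in the not-$2\delta$-popular case I would apply the upper-tail bound \eqref{eq:chernoffless} to the number of $2\delta$-friendly edges sampled, whose expectation is at most $(1-2\delta)\Delta p$, and a deviation of order $\delta$ pushes the sampled count below $(1-1.5\delta)\Delta p$ with the same high probability; note that the true $\delta$-friendly edges are a subset of the true $2\delta$-friendly edges, so this also bounds the count of $\delta$-friendly sampled edges. A union bound over the $n$ nodes $u$ completes the argument, choosing the constant in $p$ (equivalently in $\beta=\floor{C\log n}$) large enough to absorb the union bound.

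The only mildly delicate point is that the algorithm actually tests the \emph{computed} friend relation of \cref{lem:friendly}, not the true one, so I would phrase the two cases of this lemma so that they compose cleanly with the two cases of \cref{lem:friendly}: if $u$ is $\delta$-popular then all $(1-\delta)\Delta$ true $\delta$-friendly incident edges are also reported as friendly by the algorithm (w.h.p.), so $Y_u$ lower-bounds what the algorithm counts; if $u$ is not $2\delta$-popular then every edge the algorithm reports as friendly is a true $2\delta$-friendly edge (w.h.p.), so the algorithm's count is upper-bounded by the number of sampled $2\delta$-friendly edges. I expect this bookkeeping — making sure the two-sided slack in \cref{lem:friendly} lines up with the two-sided slack in the popularity test and that all the ``w.h.p.'' events are union-bounded together — to be the main (very minor) obstacle; the concentration itself is an immediate application of \cref{lem:chernoff}.
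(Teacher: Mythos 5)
Your proposal is correct and matches the paper's argument: the paper itself only states that \cref{lem:popular} "follows directly from the Chernoff bound (\cref{lem:chernoff}) as the number of sampled edges allows us to estimate w.h.p.\ the number of friendly edges up to $(\delta/2)\Delta$ by sampling," and your write-up is exactly that two-sided Chernoff estimate (deviation of order $\delta/2$ on an expectation of $\Omega(\log n/\delta^2)$) followed by a union bound over nodes. The extra bookkeeping you do about composing with the computed friend relation of \cref{lem:friendly} is not needed for the lemma as stated (which is phrased in terms of true $\delta$- and $2\delta$-friendly edges), but it is the right concern for how the lemma is later used in \cref{lem:acd}.
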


\cref{lem:friendly,lem:popular} are sufficient to find a $\delta$-almost-clique decomposition.

\begin{lemma}[\cite{ACK19}]
\label{lem:find-decomp}
Let $H$ be the subgraph of $G$ with $2\delta$-popular nodes. Let $C_1, \ldots, C_t$ be the connected components of $H$ with at least one $\delta/2$-popular node and $\Vsparse=V\setminus\bigcup_{i\in[t]} C_i$. This decomposition is a $12\delta$-almost-clique decomposition.
\end{lemma}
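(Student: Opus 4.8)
The final statement is Lemma~\ref{lem:find-decomp}, attributed to \cite{ACK19}: the claim that taking $H$ to be the induced subgraph on $2\delta$-popular nodes, and letting the almost-cliques be the connected components of $H$ containing at least one $\delta/2$-popular node (with everything else going into $\Vsparse$), yields a $12\delta$-almost-clique decomposition.

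\medskip\noindent\textbf{Proof plan.}
The plan is to verify the three defining properties of an $\epsilon$-almost-clique decomposition (\cref{def:ACD}) with $\epsilon = 12\delta$, exploiting only the combinatorial relationship between popularity, friendship, and sparsity. First I would record the elementary facts about friends: if $u$ and $v$ are $\delta$-friends, then $|N(u)\cap N(v)|\ge(1-\delta)\Delta$, so $|N(u)\cup N(v)|\le(1+\delta)\Delta$ and both $u,v$ have at least $(1-\delta)\Delta$ neighbors in common; in particular, $u$'s friends all lie within a ball of radius that keeps neighborhoods overlapping heavily. The key quantitative lemma is a ``transitivity of friendship'' statement: if $u$ is $\delta$-popular (has $(1-\delta)\Delta$ friends), then any two friends $x,y$ of $u$ satisfy $|N(x)\cap N(y)|\ge(1-3\delta)\Delta$, because $N(x)$ and $N(y)$ each contain all but $\delta\Delta$ of $N(u)$'s $(1-\delta)\Delta$ vertices, so they overlap in at least $(1-3\delta)\Delta$ vertices. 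Iterating this one more step shows that within a connected component of $H$ anchored by a $\delta/2$-popular node, all pairwise neighborhoods overlap in $\ge (1-O(\delta))\Delta$ vertices, which forces the component to have bounded diameter (at most $2$ in $H$, essentially) and bounded size.

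\medskip\noindent\textbf{Key steps in order.}
(1) Establish the friendship-transitivity estimates above, tracking the constants so that the final loss is $12\delta$. (2) Fix a component $C_i$ of $H$ with a $\delta/2$-popular anchor $w$; show every node of $C_i$ is a friend of $w$ (any $2\delta$-popular node adjacent in $H$ to a friend of $w$ must itself be a friend of $w$, since otherwise its neighborhood and $w$'s would be too disjoint — here is where the gap between $\delta/2$-popular and $2\delta$-popular is used). This gives $C_i\subseteq N(w)\cup\{w\}$, hence $|C_i|\le (1+\delta)\Delta\le(1+12\delta)\Delta$, which is \cref{part:SACD-upp}. (3) For any $v\in C_i$, since $v$ and $w$ are $\delta$-friends and every node of $C_i$ is a friend of $w$, count how many of $v$'s neighbors land in $C_i$: at least $|N(v)\cap N(w)| - |N(w)\setminus C_i|$, and bound $|N(w)\setminus C_i|$ by noting that friends of $w$ (of which there are $\ge(1-2\delta)\Delta$ by the popularity threshold, after accounting for the non-$2\delta$-popular ones being excluded) are mostly in $C_i$; this yields $|N(v)\cap C_i|\ge(1-12\delta)\Delta$, which is \cref{part:SACD-inside}. (4) For the sparse side $\Vsparse$: a node $v$ placed in $\Vsparse$ is either not $2\delta$-popular, or is $2\delta$-popular but lies in a component with no $\delta/2$-popular node. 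In the first case $v$ has at least $2\delta\Delta$ non-friend neighbors, and a standard counting argument (each non-friend neighbor $u$ has $\ge\delta\Delta$ vertices of $N(v)$ outside $N(u)$, contributing many non-edges in $N(v)$) shows $v$ is $\Omega(\delta^2\Delta)$-sparse. In the second case one argues similarly that $v$ cannot have too many friends that are themselves sufficiently popular, again forcing sparsity. Assembling these gives a $12\delta$-almost-clique decomposition.

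\medskip\noindent\textbf{Main obstacle.}
The delicate point — and the main obstacle — is the bookkeeping in Step~(2)–(4): the decomposition is built using the \emph{tested} predicates (``$\delta$-friend'' detected versus ``not $2\delta$-friend'' detected, ``$\delta$-popular'' versus ``not $2\delta$-popular''), which are two-sided promises with a gap, so one must carefully argue that edges/nodes in the ambiguous middle band are consistently resolvable within a component, i.e.\ that the component structure of $H$ (defined via the $2\delta$-popular set and anchored by $\delta/2$-popular nodes) is robust to which way the borderline cases fell. Getting the sparsity bound $\Omega(\delta^2\Delta)$ for $\Vsparse$ with the right dependence on $\delta$ (so it matches \cref{def:ACD}'s requirement of $\Omega(\epsilon^2\Delta)$-sparsity) is the other place where constants must be chased. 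Since this lemma is quoted from \cite{ACK19} and relies only on the combinatorics of the popularity/friendship predicates — which \cref{lem:friendly,lem:popular} certify we can compute — I would present the argument at the level of these counting bounds and cite \cite{ACK19} for the routine constant-tracking.
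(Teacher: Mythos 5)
The paper does not prove this lemma: it is stated with the label $[\text{ACK19}]$ and used as a black box in the proof of \cref{lem:acd}, so there is no paper-internal proof to compare against. Your proposal is therefore a reconstruction of the ACK argument, and it is worth evaluating on that footing.

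Your high-level plan (friendship transitivity, size bound, internal-degree bound, sparsity of $\Vsparse$) is the standard route, but step~(2) as written does not go through. You claim every node of $C_i$ is a friend of the anchor $w$, and from this conclude $C_i\subseteq N(w)\cup\{w\}$. Being a friend requires adjacency in $G$ by \cref{def:friend-edges}, and the transitivity estimate you correctly derive --- $|N(v)\cap N(w)|\ge(1-3\delta)\Delta$ when $v$ is $H$-adjacent to a friend of $w$ --- gives neighborhood overlap, not an edge $vw$. A popular node two (or more) hops from $w$ in $H$ can share almost all of $N(w)$ without being a neighbor of $w$, so the inclusion $C_i\subseteq N(w)\cup\{w\}$ is false. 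Moreover your per-hop overlap bound degrades by $\Theta(\delta)\Delta$ per extra step in $H$, so you cannot iterate it across a component of unbounded diameter; one needs the additional structural claim in \cite{ACK19} (all $\delta/2$-popular nodes in one component are pairwise friends, which caps the effective diameter at a constant) before the constants close to $12\delta$. Once you do have $|N(v)\cap N(w)|\ge(1-O(\delta))\Delta$ for all $v\in C_i$, the size bound $|C_i|\le(1+O(\delta))\Delta$ should come from double-counting incidences $(v,x)$ with $v\in C_i$ and $x\in N(v)\cap N(w)$ --- at least $|C_i|(1-O(\delta))\Delta$ on one side, at most $|N(w)|\cdot\Delta\le\Delta^2$ on the other --- not from the false inclusion. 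Your sparsity computation in step~(4) is sound in spirit and the resulting $\Omega(\delta^2\Delta)$ bound is compatible with \cref{def:ACD} at $\epsilon=12\delta$ since the definition carries an implicit $\Omega(\cdot)$.
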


We are now ready to prove \cref{lem:acd}.

\begin{proof}[Proof of \cref{lem:acd}]
Let $\Vsparse, C_1, \ldots, C_t$ be the decomposition described in \cref{lem:find-decomp}. Consider a cluster $C_i$ for some $i\in [t]$. By \cref{lem:popular}, nodes can tell if they are $\delta/2$-popular. Moreover, the subgraph  $\tilde{H}(C)$ of $C_i$, consisting of the sampled edges in $E_s(\cdot)$, is a random graph where edges are sampled with probability $O(\log n/\Delta)$. This means that $\tilde{H}(C)$ has a constant rate vertex expansion (see \cref{lem:expansion} for a proof of a similar fact). Therefore, in $O(\log\Delta)$ rounds, every node $v$ in the connected component $C_i$ knows it belongs to an almost-clique, as well as the identifier of the $\delta/2$-popular node in $C_i$ with minimal ID (used as identifier for the clique) and which edges in $E_s(v)$ are connecting it to $C_i$.
\end{proof}

\subsection{Preconditioning Almost-Clique}
\label{sec:strong-ACD}

\strongACD*

Assume we computed an $\epsilon'$-almost-clique decomposition using \cref{alg:ACD} for $\epsilon'=\epsilon/3$. In this section, we use this decomposition to compute the partial coloring described in \cref{thm:strong-ACD}.

Sparse nodes receive slack after a single randomized color trial. Intuitively, this happens because each non-edge in the neighborhood of a $\zeta$-sparse node $v$ has both its endpoints colored the same with probability $\Omega(1/\Delta)$. Since it has $\Delta\zeta$ such non-edges in $E(N(v))$ (see \cref{def:sparsity}), it receives $\Omega(\zeta)$ slack in expectation. Formally, this gives the following lemma.

\begin{lemma}[{\cite[Lemma 6.3]{HKMT21}}]
\label{lem:slackgen}
Let $v$ be a $\zeta$-sparse node. After a random fraction of its neighbors try colors, it has slack $\Omega(\zeta)$ with probability $1-e^{-\Omega(\zeta)}$. Furthermore, if $v$ is a dense node, it receives slack $\Omega(e_v)$ with probability $1-e^{-\Omega(e_v)}$.
\end{lemma}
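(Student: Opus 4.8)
This is a standard \emph{slack generation} statement, proved by analyzing one randomized color trial, and I would follow the usual template (as in Elkin--Pettie--Su and later work). Fix a small absolute constant $p$ and consider the trial in which every node joins independently with probability $p$ and, if it joins, tries a uniformly random color from $[\Delta+1]$, keeping it iff no neighbor tried the same color. Let $m_c$ be the number of neighbors of $v$ that permanently receive color $c$ in this trial. Then $v$'s palette shrinks by $|\{c : m_c\ge 1\}|$ while its uncolored degree shrinks by $\sum_c m_c$, so the slack generated at $v$ equals $\sum_c (m_c-1)^+$, which is nonnegative since $\deg(v)\le\Delta$. Hence it suffices to show $\sum_c (m_c-1)^+ = \Omega(\zeta_v)$ with probability $1-e^{-\Omega(\zeta_v)}$ when $v$ is $\zeta_v$-sparse.

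The expectation is the conceptual heart: one uses \emph{all} $\binom{\Delta}{2}-|E(N(v))| = \Delta\zeta_v$ non-edges of $N(v)$, not merely a matching of them (a matching has only $\Theta(\zeta_v)$ non-edges, each giving a monochromatic hit with probability $\Theta(1/\Delta)$, which is a factor $\Delta$ too weak). For a non-edge $\{x,y\}\subseteq N(v)$, the probability that $x$ and $y$ both join, try the \emph{same} color, and both keep it is at least $p^2\cdot\frac1{\Delta+1}\cdot(1-2p)$, and when this happens the common color $c$ satisfies $m_c\ge 2$; summing over the $\Delta\zeta_v$ non-edges gives $\mathbb{E}\big[\sum_c(m_c-1)^+\big]=\Omega(\zeta_v)$ (equivalently $\sum_c\Pr[m_c\ge2]=\Omega(\zeta_v)$, since a uniformly random pair of neighbors of $v$ is non-adjacent with probability $\Theta(\zeta_v/\Delta)$).

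The main obstacle is the concentration: $\sum_c(m_c-1)^+$ is not $O(1)$-Lipschitz in the nodes' color choices --- changing one node's color can flip the retention status of several of its neighbors --- and attaining expectation $\Omega(\zeta_v)$ forces some color to be tried by $\Omega(\zeta_v)$ neighbors of $v$ in the worst case. I would handle this as in the proof of \cref{lem:talagrand-difference}: write the number of monochromatic non-edges inside $N(v)$ as a difference $f-g$ of two $O(1)$-certifiable functions, where $f$ counts the non-edges whose endpoints both join and pick a common color (ignoring retention) and $g$ subtracts those spoiled by a retention failure, condition on the high-probability event that no color is tried by too many neighbors of $v$ so that $f$ and $g$ have a controlled Lipschitz constant, and then apply the difference form of Talagrand's inequality (\cref{lem:talagrand-difference}) together with \cref{lem:chernoff} to get $f-g = \Omega(\zeta_v)$ with probability $1-e^{-\Omega(\zeta_v)}$; since each monochromatic non-edge forces some $m_c\ge2$, this lower-bounds the slack gain. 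The ``dense node'' half needs no new argument: in any $\epsilon$-almost-clique decomposition a dense node $v\in C$ with external degree $e_v$ is automatically $\Omega(e_v)$-sparse, because every external neighbor of $v$ lies outside $C$ and, being ineligible to join $C$, must be non-adjacent to $\Omega(\Delta)$ of the $\ge(1-\epsilon)\Delta$ vertices of $N(v)\cap C$, so $N(v)$ contains $\Omega(e_v\Delta)$ non-edges; this reduces the dense case to the sparse one with $\zeta_v$ replaced by $\Theta(e_v)$.
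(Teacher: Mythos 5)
The paper does not prove this lemma at all: it is imported verbatim as \cite[Lemma 6.3]{HKMT21}, so there is no in-paper argument to match your reconstruction against. Judged on its own, your sparse-node half follows the standard slack-generation template correctly: reducing slack to $\sum_c(m_c-1)^+$, lower-bounding the expectation via the $\Delta\zeta_v$ non-edges of $N(v)$, and handling concentration with the difference form of Talagrand (\cref{lem:talagrand-difference}) after conditioning away colors tried by too many neighbors. One point you gloss over: summing retention probabilities over non-edges bounds the expected number of monochromatic retained non-edges, which is $\sum_c\binom{m_c}{2}$, not $\sum_c(m_c-1)^+$; these can differ by a factor of $m_c$ when many non-edges share a color. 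You need either to disjointify the per-color events (e.g., lower-bound $\Pr[m_c\ge 2]$ by the probability that \emph{exactly} one non-edge pair retains $c$ among $N(v)$) or to invoke the bounded-multiplicity conditioning already in the expectation step, not only in the concentration step. This is fixable and is exactly how the cited proofs proceed.

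The dense-node half has a genuine gap. You reduce it to the sparse case by asserting that every external neighbor $u\notin C$ of $v$, ``being ineligible to join $C$, must be non-adjacent to $\Omega(\Delta)$ of the vertices of $N(v)\cap C$.'' Nothing in \cref{def:ACD} forces this: the definition only constrains nodes \emph{inside} almost-cliques and imposes no maximality, so a node outside $C$ may be adjacent to all of $N(v)\cap C$ (in which case $N(v)$ can be nearly a clique and $\zeta_v=O(1)$ despite $e_v=\Omega(\epsilon\Delta)$). The property you need is a structural guarantee of the particular decomposition construction (roughly, that non-members have bounded adjacency into each almost-clique), and verifying it for the construction of \cref{alg:ACD} requires an argument with its own constant bookkeeping — it is precisely the content that \cite{HKMT21} supplies and that you cannot get for free from the definition. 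As written, the claim ``dense with external degree $e_v$ $\Rightarrow$ $\Omega(e_v)$-sparse'' is unproven, and the second sentence of the lemma does not follow.
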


In \cref{thm:strong-ACD}, we want to get rid of high external degree nodes. By \cref{lem:slackgen}, if a node has a high external degree, it should also have a lot of slack. \cref{alg:SACD} carefully generates slack to color all nodes of high external degree.

First, we claim that high-external-degree nodes can be easily detected by randomly sampling edges.

\begin{claim}
\label{claim:detect-extrovert}
There is an algorithm partitioning the dense nodes into two classes: \emph{extroverted} nodes of external degree at most $\Delta/\eta$, and \emph{introverted} nodes of external degree at least $\Delta/(2\eta)$.
The algorithm samples $O(\eta\log n)$ edges per node.
\end{claim}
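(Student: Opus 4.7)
The plan is to use a simple edge-sampling estimator for the external degree. Recall that by \cref{lem:acd}, after computing the almost-clique decomposition, every dense node $v$ knows the identifier of its cluster $C$. I will have each dense node $v$ sample each of its incident edges independently with probability $p = \Theta(\eta \log n / \Delta)$ (the hidden constant will be chosen large enough for the Chernoff bounds below to hold with high probability). Along each sampled edge, $v$ sends the identifier of its cluster to the other endpoint, who responds with its own cluster identifier; this takes $O(1)$ rounds of communication on the sampled edges. Let $X_v$ denote the number of sampled edges whose other endpoint lies in a cluster different from $C$; this is an unbiased estimator for $p \cdot e_v$.

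Next, I would apply a standard Chernoff bound (\cref{lem:chernoff}) to $X_v$. Since edges are sampled independently, $X_v$ is a sum of independent indicator random variables with $\Exp[X_v] = p \cdot e_v$. Fix a classification threshold $\tau \eqdef \tfrac{3}{4} p \Delta / \eta = \Theta(\log n)$.
\begin{itemize}
\item If $e_v \ge \Delta / \eta$, then $\Exp[X_v] \ge p\Delta/\eta$, and by \cref{lem:chernoff} we have $X_v \ge \tau$ with probability $1 - \exp(-\Omega(\log n)) = 1 - 1/\poly(n)$.
\item If $e_v \le \Delta / (2\eta)$, then $\Exp[X_v] \le p\Delta/(2\eta)$, and by \cref{lem:chernoff} we have $X_v < \tau$ with the same probability.
\end{itemize}
Declaring $v$ extroverted iff $X_v < \tau$ and introverted otherwise therefore classifies every dense node correctly (w.h.p.\ after a union bound over $V$) into the two overlapping regimes stated in the claim.

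Finally, I need to verify the edge-sampling budget. The number of sampled edges at a node $v$ has expectation at most $p \Delta = O(\eta \log n)$, and is concentrated around this value by Chernoff; hence with high probability every node samples only $O(\eta \log n)$ edges. No step of the procedure is subtle: the routine is sample, exchange cluster identifiers along the sampled edges once, compare $X_v$ to $\tau$. The only ``obstacle'' worth mentioning is choosing the constant inside $p$ large enough that the Chernoff deviation bound survives the union bound over all $n$ nodes, which is straightforward since $\tau = \Omega(\log n)$.
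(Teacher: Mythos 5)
Your proof is correct and essentially identical to the paper's: both sample incident edges with probability $\Theta(\eta\log n/\Delta)$, count the sampled external edges, threshold at $\tfrac34$ of the expected count for external degree $\Delta/\eta$ (the paper's threshold $0.75\beta$ with $p=\eta\beta/\Delta$ is exactly your $\tau$), and conclude by Chernoff plus a union bound; the only cosmetic difference is that the paper samples during the streaming phase and identifies external edges afterwards, whereas you exchange cluster identifiers over the sampled edges, which is equally within budget. One caveat: your final rule ``extroverted iff $X_v<\tau$'' follows the claim statement as literally written, but that statement has the two labels swapped relative to the rest of the paper --- in the paper's own proof (and in \cref{def:extrovert-cliques} and the proof of \cref{thm:strong-ACD}) \emph{extroverted} means \emph{high} external degree, i.e., a node is classified introvert iff it samples fewer than the threshold --- so you should flip the two labels to be consistent with how the classification is used downstream.
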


\begin{proof}
Let $\emax=\Delta/\eta$.
During the streaming phase, a node samples edges with probability $p\eqdef\frac{\eta\beta}{\Delta}$. Once the nodes have computed the almost-clique decomposition, they know which edges connect them to external neighbor. If a node sampled fewer than $0.75\beta$ edges to external neighbors, it classify itself as introvert; otherwise, as extrovert.
\begin{itemize}
\item Consider a node with external degree at most $\emax/2$. In expectation, it samples $p\emax/2 \le \beta/2$ edges to external neighbors. By Chernoff, it samples fewer than $0.75\beta$ edges with high probability. 
Nodes with external degree less than $\emax/2$ are classified as introverts.
\item Consider an extroverted nodes, i.e., with external degree at least $\emax$. In expectation, it samples at least $p\emax \ge \beta$ edges to external neighbors. By Chernoff, it samples at least $0.75\beta$ edges with high probability. All nodes with external degree more than $\emax$ are classified as extrovert, w.h.p.
\end{itemize}
Nodes with external degree between $\emax/2$ and $\emax$ can be arbitrarily classified as introvert or extrovert.
\end{proof}

\begin{definition}[Extrovert/Introvert]
\label{def:extrovert-cliques}
An almost-clique is \emph{extrovert} if it has more than $2\epsilon'\Delta$ extroverted nodes, and
\emph{introvert} otherwise.
\end{definition}

\begin{Algorithm}
\label{alg:SACD}
The algorithm preconditioning almost-cliques.

\textbf{Input:} an $\epsilon'$-almost-clique decomposition $\Vsparse, C_1, \ldots, C_t$ for some $t$.

\begin{enumerate}
\item In each clique $C_i$, let $W_i \subseteq C_i$ be its set of extroverted nodes. Each clique learns if it is introvert or extrovert in $O(\log\Delta)$ rounds by aggregating the size of $W_i$ on a BFS tree. Denote by $J$ the set of indices $i\in[t]$ such that $C_i$ is extrovert.
\item\label[step]{step:SACD-genslack} (Generate Slack) With probability $1/20$, sparse nodes and dense nodes from extroverted cliques $\Vsparse\cup\bigcup_{i\in J}C_i$ independently try a random color.
\item\label[step]{step:SACD-Vprime} Let $V'=\Vsparse\cup\bigcup_{i\notin J} W_i \cup \bigcup_{i\in J} (C_i\setminus W_i)$ be the set containing sparse nodes, extroverted nodes from introverted cliques and introvertednodes from extroverted cliques. All nodes in $V'$ have slack $\Omega(\epsilon'^2\Delta)$ and can be colored in $O(\log\Delta)$ rounds by \multitrial.
\item\label[step]{step:SACD-extroverted} Run randomized color trial for $O(\log\eta)$ rounds in extroverted cliques. The number of uncolored nodes left in each $W_i$ for $i\in J$ is at most $O(\Delta/\eta)$. Complete the coloring of extroverted cliques using \multitrial.
\end{enumerate}
\end{Algorithm}

\begin{proof}[Proof of \cref{thm:strong-ACD}]
After \cref{step:SACD-genslack}, nodes in $\Vsparse$ have $\Omega(\epsilon'^2\Delta)$ permanent slack and extroverted nodes have $\Omega(\emax)=\Omega(\Delta/\eta)$ permanent slack (by \cref{lem:slackgen}). 
Let $J\subseteq [t]$ the set of extroverted cliques. In \cref{step:SACD-Vprime}, we color nodes of $V'$ where
\[ V'=\Vsparse\cup\bigcup_{i\notin J} W_i \cup \bigcup_{i\in J} (C_i\setminus W_i) \ . \]
Dense nodes of $V'$ receive slack from their inactive neighbors in $V\setminus V'$.
\begin{itemize}
\item An extroverted nodes $v\in W_i$ in some introverted clique $C_i$ with $i\notin J$ has $|N(v)\cap (C\setminus W_i)| \ge (1-3\epsilon')\Delta$ introverted neighbors in $C_i$. Note that none of them was colored in \cref{step:SACD-genslack}.
\item A introverted node $v\in C$ in an extroverted clique $C_i$ with $i\in J$ has at least $|N(v)\cap W_i| \ge (2\epsilon' - \epsilon')\Delta = \epsilon'\Delta$ extroverted neighbors in $C_i$. Each such neighbor gets colored in \cref{step:SACD-genslack} with probability at most $1/20$; hence, w.h.p. at least $0.9\epsilon'\Delta$ are uncolored.
\end{itemize}

Adding sparse nodes, all nodes in $V'$ have slack $\Omega(\Delta)$ for a small enough universal constant. Hence, by \cref{lem:multi-trial}, we can color all nodes in $V'$ in $O(\log\Delta)$ rounds and $O(\log n)$ fresh colors with high probability.

After \cref{step:SACD-Vprime}, the only extroverted nodes to remain uncolored are in extroverted cliques. We now explain how we color these nodes. Nodes have $\Omega(\Delta/\eta)$ slack. By an argument similar to \cref{lem:reduce}, w.h.p., we reduce the degree of each node by a constant factor. After $O(\log \eta)$ rounds, each node has uncolored degree $O(\Delta/\eta)$. It samples $O(\eta\log\log n)$ colors.
Nodes now have slack proportional to their degree and can be colored by \multitrial in $O(\log \Delta)$ rounds and using $O(\eta\log n)$ colors.

We now prove that our coloring verifies the properties of \cref{thm:strong-ACD}.
\emph{The crux is that the only uncolored nodes remaining are introverted nodes in introverted almost-cliques}. For each introverted almost-clique $C$ in the $\epsilon'$-almost-clique decomposition, we get an $\epsilon$-almost-clique $C'$ with the claimed properties by simply removing colored nodes. This is because $C'$ is an $\epsilon'$-almost-cliques from which we removed at most $2\epsilon'\Delta$ extroverted nodes. Hence, the upper bound $|C'|\le (1+\epsilon')\Delta\le(1+\epsilon)\Delta$ trivially holds (recall $\epsilon'=\epsilon/3$) and for all $v\in C'$, we have $|N(v)\cap C'| \ge (1-3\epsilon')\Delta = (1-\epsilon)\Delta$. Furthermore, all nodes of $C'$ are introverted, therefore they are connected to at most $\Delta/\eta$ nodes in other cliques. Note however that they can be connected to $\epsilon\Delta$ colored nodes (as they include sparse nodes and extroverted nodes from $C$).
\end{proof}

\subsection{Analysis of RandomPush}
\label{app:randompush}

\randompushlemma*

\begin{proof}
    Consider a particular message, and for each $i \in [O(\log \Delta)]$, let $S_i$ be the set of nodes in the almost-clique that know this message before iteration $i$.
    Let $\ov{S_i} = C \setminus S_i$. 
    
    Initially, $\card{S_1} \geq 1$. 
    Each node has degree $\Theta(\beta^4)$, w.h.p., by \cref{lem:sparse-graph}. 
    Thus, nodes forward the message to $\Omega(\beta)$ of its neighbors, so $\card{S_2} \geq \beta$, w.h.p.
    We now show that $S_i$ grows geometrically while $\card{S_i} \leq 3\Delta/4$, then afterwards $\ov{S_i}$ decreases geometrically.
    
    By \cref{lem:expansion}, while $\card{S_i} \leq 3\Delta/4$, there are at least $\card{S_i}\beta^4/40$ edges between $S_i$ and $\ov{S_i}$.
    For an uninformed node $v$ in $\ov{S_i}$, let $d_v^{S_i} = \card{N_{\sparse{C}}(v) \cap S_i}$ be its number of informed neighbors. Letting $X_v$ be the event that $v$ learns the message in this iteration, we have that
    \[
        \Pr[X_v] = 1 - \parens*{1-\frac{1}{x}}^{d_v^{S_i}}
        \geq 1 - \frac{1}{1+d_v^{S_i}/x} = \frac{d_v^S}{x+d_v^{S_i}}
    \]
    where we used \cref{lem:useful-inequalities}. Hence, the expected number of nodes that learn the message is at least
    \[
    \sum_{v\in \ov{S_i}} \frac{d_v^{S_i}}{x+d_v^{S_i}} \geq \frac{\sparse{\Delta}}{x+\sparse{\Delta}}\cdot \frac{\card{S_i}\beta^4}{8\sparse{\Delta}} \geq \Omega(\card{S_i}) \ ,
    \] 
    since by concavity of the function $f(y) = y/(x+y)$, this sum is minimized when the degrees $d_v^S$ are as unevenly distributed as possible, with $\frac{\card{\ov{S_i}}\beta^4/40}{\sparse{\Delta}}$ nodes satisfying $d_v^S = \sparse{\Delta}$ and the rest satisfying $d_v^S = 0$.

    Since the $X_v$ are independent, by \cref{lem:chernoff} (Chernoff bound) it holds w.h.p.\ that $\card{S_{i+1}} \geq (1+\Omega(1))\card{S_{i}}$ while $\card{S_i} \in [\beta, 3\Delta/4]$. Therefore, after $i \in \Theta(\log \Delta)$ iterations, $\card{S_i} \leq 3\Delta/4$.

    The rest of the argument is similar. The nodes in $\ov{S_i}$ have at least $\card{S_i}\beta^4/40$ edges with $S_i$. In expectation, $\Theta(\card{\ov{S_i}})$ of them get colored in each iteration in expectation, and this holds w.h.p.\ while $\card{\ov{S_i}} \geq \beta$. When $\card{\ov{S_i}}$ drops below $O(\beta^3)$, each node in $\ov{S_i}$ is adjacent to $\Omega(\beta^4)$ nodes in $S_i$. Therefore, it receives the message $\Omega(\beta)$ times in expectation, and thus receives it w.h.p.
\end{proof}

\section{Corollaries for Other Models}

\subsection{Coloring in Distributed Streaming}
\label{sec:local-stream}

\begin{definition}[\textbf{Local Streaming Model}]
In the {\distream} model, there are $n$ nodes with unique $O(\log n)$-bit identifiers and $p(n) = \poly(\log n)$ bits of local space. The nodes have no initial information but have a limited source of randomness. There are two phases: a streaming phase and a communication phase.

\begin{itemize}
\item \textbf{(Streaming Phase)}
Nodes receive their incident edges in the graph $G$ as a stream. Attached to each edge are (some of the) random variables of the incident vertices. I.e., each node $v$ receives a sequence $(v,u_i,s_i)_i$, where $s_i^j$ is the random bits of neighbor $u_i$ in iteration $j$
\footnote{An alternative would be to supply the nodes with shared randomness. Then the ID of the other node would suffice to learn its random bits.}.

\item \textbf{(Communication Phase)}
The nodes communicate in synchronous rounds with their neighbors with $O(\log n)$ bit messages (as in the \CONGEST model). They can only send a message to a neighbor whose ID they have stored, and \emph{we additionally limit them to send/receive $\poly\log n$ messages per rounds}.
\end{itemize}
At the end of the computation, each node outputs its color, which together should form a valid $\Delta+1$-coloring. The objective is to minimize the total number of communication rounds.
\end{definition}

\begin{corollary}
There exists a \distream algorithm using $O(\log^4 n)$ memory per node and $O(\log^2\Delta)$ rounds of communication.
\end{corollary}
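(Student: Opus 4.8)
The plan is to show that the \congest algorithm underlying \cref{thm:DPS} can be run in the \distream model with only cosmetic overhead, by using the single streaming pass to carve out the sparse subgraph on which all subsequent communication takes place. The two model-specific constraints to honour are the $\poly\log n$ memory bound and the restriction that, per round, a node communicates with only $\poly\log n$ neighbours whose identifiers it has stored. Both are met essentially for free.

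\textbf{Streaming phase.} First I would have each node $v$, as the incident edges $(v,u,s_u)$ stream by, read off from $u$'s random bits $s_u$ (or, with shared randomness, from $u$'s identifier, as in the model's footnote) the quantities fixed at sampling time: the color list $L(u)$, the \cref{alg:ACD} hash value $r(u)$, and whether $u$ has sampled any of the edges $E_s(u)$ incident to $v$. Node $v$ keeps $u$'s identifier exactly when $uv\in\sparse{E}$, i.e.\ when $L(u)\cap L(v)\neq\emptyset$; it also keeps $u$ whenever $u$ or $v$ sampled the edge $uv$ into the edge-sample families used by \cref{lem:acd,thm:strong-ACD}; and it maintains, by reservoir sampling, $v$'s own edge samples together with $F(v)=\{r(u):u\in N(v),\ r(u)\le\sigma\}$. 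By \cref{lem:sparse-graph}, \cref{lem:lists-size}, and the sample sizes in \cref{lem:acd,thm:strong-ACD}, $v$ stores $O(\log^4 n)$ neighbours, an $O(\log^2 n)$-color list, $|F(v)|=O(\log n/\epsilon^4)$, and $O(\log^2 n)$ extra sampled edges, all within the claimed $O(\log^4 n)$ memory. If $\Delta=O(\log^4 n)$ I would instead simply store all $\le\Delta$ incident edges of $G$.

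\textbf{Communication phase.} Then I would run the algorithm of \cref{thm:DPS} verbatim on the stored subgraph: compute the almost-clique decomposition (\cref{lem:acd}), precondition it (\cref{thm:strong-ACD}), build the colorful matching (\cref{thm:colorful-matching}), reduce the uncolored set (\cref{lem:reduce}), and finish with \augpath (\cref{thm:augpath}); in the small-$\Delta$ regime run instead the $O(\log^3\log n)$-round \congest algorithm of \cite{BEPSv3,GK21} on the stored copy of $G$. Every edge used by any of these subroutines lies either in $\sparse{G}$ or in one of the two sampled edge families, and in all cases the edge is, by construction, stored at \emph{both} endpoints: membership in $\sparse{G}$ is symmetric, and whether $v$ sampled $uv$ is a function of $v$'s randomness, which $u$ also observed during its own pass. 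Hence all communication is legal, the communication graph has maximum degree $O(\log^4 n)$ so each round carries $O(\log^4 n)=\poly\log n$ messages of $O(\log n)$ bits per node, and both correctness and the round complexity are inherited directly from \cref{thm:DPS}.

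\textbf{Main obstacle.} There is no deep obstacle; the one point that needs genuine care is the bookkeeping of which edges both endpoints may use. Concretely, one must verify that \emph{every} primitive invoked transitively by \cref{thm:DPS} communicates only along $\sparse{G}$ and along the $E_s$-type sampled edges (the $O(\log n/\epsilon^2)$ edges of \cref{lem:acd} and the $O(\eta\log n)=O(\log^2 n)$ preconditioning edges of \cref{thm:strong-ACD}), and that the neighbourhood-similarity test of \cref{alg:ACD}—which reads the randomness of all of $N(v)$—is computable from the stream, as already noted in the remarks after \cref{alg:high-level-ACD}. Once this is pinned down, the corollary follows immediately.
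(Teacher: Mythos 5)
Your proposal is correct and takes exactly the route the paper intends: the corollary is stated as an immediate consequence of \cref{thm:DPS}, with the streaming pass used only to identify and store the $O(\log^4 n)$ sparsified-graph edges (plus the auxiliary sampled edges of \cref{lem:acd,thm:strong-ACD}) before running the \congest algorithm verbatim on the stored subgraph. Your additional bookkeeping about both endpoints storing each usable edge is a reasonable elaboration of what the paper leaves implicit.
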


\subsection{Coloring in the Cluster Graph Model}
\label{sec:clusterGraphs}
We first define the cluster graph model (a variant appears in \cite{ghaffari2015flow} and similar concepts appear in other places in the literature, see e.g., \cite{rozhovn2022undirected, ghaffari2016distributed,ghaffari2013cut,ghaffari2022universalCut}). Then, we state our result. 

\begin{definition}[\textbf{Cluster graph model}] Consider a cluster graph defined as follows: Given a graph $G=(V, E)$, suppose that the nodes have been partitioned into vertex-disjoint clusters. Definite the cluster graph as an abstract graph with one node for each cluster, where two clusters are adjacent if they include two nodes that are neighboring each other in $G$. Furthermore, for each cluster, we are given a cluster center and cluster tree that spans from the cluster center to all nodes of the cluster. One round of communication on the cluster graph involves the following three operations: 
\begin{itemize}
    \item (\textbf{Intra-cluster broadcast}) Each cluster center starts with a $\poly(\log n)$-bit message and this message is delivered to the nodes in its cluster. 
    \item (\textbf{Inter-cluster communication}) For each edge $e=\{v, u\}$ for which $v$ and $u$ are in two different clusters, node $v$ can send a $\poly(\log n)$-bit message and this message is delivered to $u$, simultaneously for all such inter-cluster edges.  
    \item (\textbf{Intra-cluster convergecast}) Each node can start with a $\poly(\log n)$-bit message and, in each cluster, we deliver a $\poly(\log n)$-bit aggregate of the messages of the cluster's nodes to the cluster center. The aggregate function can be computing the minimum, maximum, summation, or even gathering all messages if there are at most $\poly(\log n)$ many. These suffice for our application. More generally, this intra-cluster convergecast operation can be any problem that can be computed in $O(h)$ rounds of the \congest model communication on a given tree of depth $h$ and using $\poly(\log n)$-bit messages.
\end{itemize}
\end{definition}
\begin{theorem} There is a distributed randomized algorithm that computes a $\Delta+1$-coloring in $\poly(\log n)$ rounds of the cluster graphs model.
\end{theorem}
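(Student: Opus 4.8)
The plan is to simulate the distributed palette sparsification algorithm of \cref{thm:DPS} on the cluster graph, treating the cluster graph itself as the input graph to be $(\Delta+1)$-colored: each cluster plays the role of one graph node, its cluster center holding that node's state and private randomness, and $\Delta$ denotes the maximum number of clusters adjacent to a cluster. Since the clusters are vertex-disjoint subsets of $V$, there are at most $n$ of them, so $\log\Delta=O(\log n)$. It therefore suffices to show that one communication round of any algorithm running on the sparsified graph $\Gsparse$ with $O(\log n)$-bit messages, in which each node talks to only $O(\log^4 n)$ neighbors, can be emulated by $O(\log^4 n)$ rounds of communication on the cluster graph; the same will hold for the $O(\log\Delta)$-round almost-clique-decomposition and preconditioning subroutines (\cref{alg:high-level-ACD}, \cref{lem:acd}, \cref{thm:strong-ACD}), which only broadcast $O(\log n)$-bit messages, aggregate short bitmaps, and sample $O(\log^2 n)$ edges per node. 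Multiplying the $O(\log^2\Delta+\log^3\log n)$ rounds of \cref{thm:DPS} (plus the $O(\log\Delta)$ decomposition rounds) by the $O(\log^4 n)$ emulation overhead yields a $\poly(\log n)$ round bound on the cluster graph, after which each cluster center learns a color and broadcasts it along its cluster tree so that every node knows its cluster's color.

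First I would build the sparsified graph. Each cluster center runs the palette sampling of \cref{alg:palette-sampling}, obtaining a list $L(v)$ of $O(\log^2 n)$ colors, i.e.\ an $O(\log^3 n)$-bit object; it broadcasts this down its cluster tree in $O(\log^2 n)$ intra-cluster broadcasts. Then along each inter-cluster edge $\{x,y\}$, with $x$ in cluster $u$ and $y$ in cluster $v$, the incident nodes exchange the two lists (and the two cluster identifiers) in $O(\log^2 n)$ rounds of inter-cluster communication, so $y$ can decide whether $u$ and $v$ are adjacent in $\Gsparse$. A convergecast then gathers at each cluster center the identifiers of its $\Gsparse$-neighbors (together with a representative incident edge for each); by \cref{lem:sparse-graph} there are only $O(\log^4 n)$ of them, so this is an $O(\log^5 n)$-bit aggregate done in $O(\log^4 n)$ cluster-graph rounds. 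Afterwards, one round of message passing on $\Gsparse$ is emulated as follows: the center of cluster $v$ computes its $O(\log^4 n)$ outgoing $O(\log n)$-bit messages, broadcasts the batch (each message tagged by its destination cluster) into its cluster in $O(\log^4 n)$ intra-cluster broadcasts; each boundary node forwards the relevant message across each of its inter-cluster edges; and the received messages are convergecast back up each cluster tree to its center in $O(\log^4 n)$ rounds. Hence the emulation overhead is $O(\log^4 n)$, and no shared randomness is needed since all of a node's randomness is generated at its cluster center.

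The same template handles the almost-clique decomposition and the preconditioning of \cref{thm:strong-ACD}. The only nonstandard ingredient is that the set $F(v)$ in \cref{alg:high-level-ACD} depends on the randomness of the \emph{entire} neighborhood of $v$; but $F(v)$ is a subset of $[\sigma]$ with $\sigma=O(\log n)$, hence an $O(\log n)$-bit bitmap, and it is obtained by OR-ing up the cluster tree the one-hot bitmaps derived from the random values $r(u)$ that the boundary nodes receive over their inter-cluster edges. Likewise the $O(\log^2 n)$ random edges each cluster must sample are obtained by reservoir sampling over the cluster's inter-cluster edges, which is an aggregate computable by a single convergecast, and the $O(\log\Delta)$ rounds of communication on the resulting sparse subgraph are emulated exactly as above. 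If $\Delta=O(\log^4 n)$, the cluster graph already has polylogarithmic degree, so each cluster center gathers all its adjacent edges by convergecast in $\poly(\log n)$ rounds and then runs the $O(\log^3\log n)$-round \congest coloring algorithm of \cite{BEPSv3,GK21} with $O(\log^4 n)$ overhead per round.

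I expect the main obstacle to be the bookkeeping needed to certify that \emph{every} message and aggregate used by the upper-bound algorithm and all of its subroutines fits within the $\poly(\log n)$-bit budget of the three cluster-graph primitives: messages depending on a cluster's whole neighborhood must always be expressible as a short aggregate (a bitmap, a count, a min/max, or a bounded reservoir sample), and the number of $\Gsparse$-neighbors and of sampled edges a cluster center must store and address must remain $\poly(\log n)$ throughout --- which is precisely what \cref{lem:sparse-graph}, \cref{lem:acd}, and \cref{thm:strong-ACD} guarantee. Once this accounting is carried out faithfully, the emulation itself is mechanical, and plugging the round complexities together gives the claimed $\poly(\log n)$-round cluster-graph algorithm for $(\Delta+1)$-coloring.
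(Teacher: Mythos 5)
Your proposal is correct and follows essentially the same approach as the paper: build the sparsified variant of the cluster graph by having cluster centers sample palettes, broadcast them, exchange them over inter-cluster edges, and convergecast the resulting $\poly(\log n)$-many sparsified-graph neighbors (with elected representative inter-cluster edges), and then simulate \cref{thm:DPS} over this structure. One small remark: the $O(\log^4 n)$ multiplicative overhead per simulated round is more pessimistic than needed, since the cluster-graph model already permits $\poly(\log n)$-bit messages in each of its three primitives, so a whole batch of $O(\log^4 n)$ messages of $O(\log n)$ bits each fits into a single cluster-graph round; this changes nothing for the claimed $\poly(\log n)$ bound, but it is the accounting the paper implicitly uses.
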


\begin{proof}[Proof Sketch] The proof follows essentially directly from our distributed palette sparsification theorem, stated in \Cref{thm:DPS}. We just need to discuss how the cluster graph computes and simulates the corresponding sparsified graph.

Each cluster center samples the $\poly(\log n)$ colors of its node in the palette sparsification theorem. Then, via intra-cluster broadcast, the cluster center delivers these colors to all nodes of its cluster. Afterward, via inter-cluster communication, each node sends the colors of its cluster to all neighboring nodes in other clusters. Each node $v$ in a cluster $\mathcal{C}$ that notices a neighboring cluster $\mathcal{C}'$ that sampled a common color remembers the cluster identifier of $\mathcal{C}'$, as a neighboring cluster in the sparsified variant of the cluster graph. We then perform one intra-cluster convergecast, where each node starts with the neighboring clusters that it remembered as neighboring clusters in the sparsified graph, and we gather all of these neighboring cluster identifiers to the cluster center. Since each cluster has $\poly(\log n)$ neighboring clusters after the sparsificaiton, this can be done as a $\poly(\log n)$-bit aggregation. 

In the course of this process, we could also elect for each pair of neighboring clusters $\mathcal{C}$ and $\mathcal{C}'$ in this sparsified graph one physical edge from node $v \in \mathcal{C}$ to a node $u\in \mathcal{C}'$. For instance, that can be the edge $(v, u)$ with the highest ID tuple. Again, this fits easily as a $\poly(\log n)$-bit aggregation.  

At this point, each cluster center knows all its $\poly(\log n)$ neighboring clusters and has identified a physical edge 
connected to each neighboring cluster. Hence, the cluster graph model can simulate one round of the \congest model communication on the sparsified graph. Therefore, to compute a $\Delta+1$ coloring of the cluster graph, it suffices to invoke \Cref{thm:DPS}.
\end{proof}

\subsection{Coloring in the Node Capacitated Clique}
\label{sec:NCC}

We show, in fact, that any 1-pass \distream algorithm with $\poly\log n$ memory and bandwidth can be turned into a $\poly\log n$ rounds \NCC algorithm.

\begin{theorem}
\label{thm:NCC-simulation}
Let $\cA$ be a randomized \distream algorithm using \emph{one streaming pass} and $T$-communication rounds. If it has bandwidth $B$ and communication with at most $D$ different neighbors within a round, then there is an algorithm emulating $\cA$ with high probability in the \NCC model in $O(\log n + \frac{T\cdot B D}{\log n})$ communication rounds.
\end{theorem}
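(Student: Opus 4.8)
Proof plan for Theorem~\ref{thm:NCC-simulation} (simulating a \distream algorithm in the \NCC).

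The plan is to simulate $\cA$ in two stages, mirroring the two phases of the \distream model. First I would handle the \emph{streaming phase}. In \distream, each node $v$ passively reads its incident edges $(v,u_i,s_i)_i$ one at a time and, using only $\poly\log n$ memory, maintains some sketch; crucially $\cA$'s final state after the stream is a function only determined by which neighbors $v$ has (plus their relevant random bits), not by the order in which they arrive. So in the \NCC it suffices for $v$ to \emph{learn its neighbor set} (together with the needed random bits of those neighbors) and then locally replay the stream in an arbitrary order. Learning the neighbor set is the main difficulty, since a node could have up to $\Delta$ neighbors, far more than the $O(\log n)$ messages/round the \NCC allows, and there is no a priori routing structure. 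The standard tool here is the \NCC primitive of \cite{AGGHSKL19} for aggregation/broadcast along random overlay trees: in $O(\log n)$ rounds one can build, for each node, a balanced tree that lets it broadcast its own ID to all of its graph-neighbors and convergecast information back. Concretely, I would have each node broadcast its $O(\log n)$-bit ID along the physical graph edges (which in the \NCC are globally addressable once IDs are known), so that after $O(\log n)$ rounds every $v$ knows $N_G(v)$ and can request, via the overlay, the $\poly\log n$ relevant random bits of each neighbor. Using shared randomness (as assumed for the \NCC application, see \cref{sec:NCC}) removes even that last step, since $v$ can recompute each neighbor's sampled colors from its ID alone. After this, $v$ runs $\cA$'s streaming computation locally; this costs $O(\log n)$ \NCC rounds and leaves every node in exactly the state it would have after the \distream streaming phase.

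Second, the \emph{communication phase}. Here $\cA$ runs for $T$ rounds, in each of which a node sends/receives at most $D$ messages of $B$ bits each, but only along edges it has stored. In the \NCC, a node can send $O(\log n)$ bits to $O(\log n)$ destinations per round. A single round of $\cA$ therefore requires a node to deliver $D$ messages of $B$ bits, i.e., $DB$ bits total distributed among $D$ recipients. Since each \NCC round carries $O(\log^2 n)$ bits of ``outgoing capacity'' per node ($O(\log n)$ messages $\times$ $O(\log n)$ bits), but with the constraint of at most $O(\log n)$ distinct recipients, I would spread one round of $\cA$ over $O\!\big(\frac{DB}{\log n} + \frac{D}{\log n}\big) = O\!\big(\frac{DB}{\log n}\big)$ \NCC rounds: in each batch, the node addresses $O(\log n)$ of its (at most $D$) recipients and pushes $O(\log n)$ bits to each; after $O(D/\log n)$ batches every recipient has been addressed, and since each message is $B$ bits we need $O(B/\log n)$ passes over the recipients, giving $O\!\big(\frac{DB}{\log n}\big)$ \NCC rounds per \distream round. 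The \NCC's global addressability means $v$ can talk to any stored neighbor directly (it knows the ID), so no routing overhead beyond this batching is incurred. Over $T$ rounds of $\cA$ this is $O\!\big(\frac{TBD}{\log n}\big)$ \NCC rounds. Adding the streaming-phase cost gives the claimed $O\!\big(\log n + \frac{TBD}{\log n}\big)$.

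The step I expect to be the main obstacle is the first one: establishing, within $O(\log n)$ \NCC rounds and $O(\log n)$-bit messages, that every node reliably learns its full (possibly $\Delta$-sized) neighborhood and replays the stream faithfully. One has to be careful that the \distream model's ``passive'' stream — where random bits of neighbors are attached to edges — is correctly emulated; with shared randomness this is clean (the ID determines the random bits), but without it one needs an extra convergecast round per bit-block, which is why the statement's applications (\cref{sec:NCC}) invoke shared randomness or an existential construction. A secondary subtlety is the high-probability bound: the overlay-tree primitive of \cite{AGGHSKL19} succeeds w.h.p., and there are only $\poly(n)$ nodes and $O(T)$ phases, so a union bound over all simulated steps keeps the overall failure probability polynomially small, matching the w.h.p.\ guarantee of $\cA$ itself. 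Everything else is bookkeeping: counting bits, batching recipients, and checking that the $O(\log n)$ additive term absorbs the tree-construction and neighbor-discovery overhead.
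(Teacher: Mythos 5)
Your simulation of the communication phase is exactly the paper's: one \distream round costs a node at most $B$ bits to each of $D$ recipients, the \NCC budget is $O(\log n)$ messages of $O(\log n)$ bits per round, so batching gives $O(BD/\log n)$ \NCC rounds per simulated round and $O(TBD/\log n)$ overall. That part is fine.

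The gap is in the streaming phase. The theorem is stated \emph{without} assuming shared randomness, and the entire content of the paper's proof there is the derandomization that replaces it: one shows (Lemma~\ref{lem:pseudorandom-init}) that there exists a family $\cS$ of $\poly(n)$ assignments of $\poly\log n$-bit ``random'' strings to nodes such that, for \emph{every} $n$-node input graph, running $\cA$ with a uniformly random member of $\cS$ succeeds w.h.p. The proof is a counting argument: sample $t=\Omega(n^3)$ independent assignments, apply Chernoff to the number of bad ones for a fixed graph, and union bound over all $2^{n^2}$ graphs. Since \NCC nodes have unbounded local computation, every node computes the same family $\cS$ locally; the minimum-ID node broadcasts a random $O(\log n)$-bit index into $\cS$ in $O(\log n)$ rounds, after which every node knows every other node's random bits and can replay its stream with no further communication. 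Your proposal instead leans on shared randomness and says the alternative is ``an existential construction'' without supplying it --- but supplying it is precisely what this theorem is for (cf.\ the footnote in \cref{sec:intro-corollaries}). A secondary point: the effort you devote to neighbor discovery is unnecessary, since in the \NCC each node is given its incident edges of $G$ as input; the only missing information is the neighbors' random bits, and the $O(\log n)$ additive term in the bound comes from broadcasting the index into $\cS$, not from building overlay trees.
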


Consider an arbitrary communication round of \distream. In the worst case, a node must send $B$ bits to $D$ nodes. Since in the \NCC model, a node can only communicate $O(\log n)$ bits to $O(\log n)$ nodes in $G$ within a round, it can emulate one communication round of \distream in $O(BD/\log n)$ rounds. Note that this upper bound can be improved in some specific cases, e.g., if the algorithm only broadcast messages, but we ignore such optimizations here. This gives the following claim:

\begin{claim}
\label{claim:NCC-simulation}
If nodes know the $\poly\log n$ random bits of their neighbors, then emulating $\cA$ requires $O\parens*{\frac{T\cdot (BD)}{\log n}}$.
\end{claim}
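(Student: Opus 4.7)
The plan is to emulate a single round of communication of $\cA$ with $O(BD/\log n)$ NCC rounds; summing over the $T$ rounds of $\cA$ then yields the claimed bound $O(T \cdot BD/\log n)$.

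First, I need to argue that each node can actually address its $\cA$-neighbors in the NCC. The hypothesis gives every node $v$ access to the $\poly\log n$ random bits of all of its neighbors. Applying the sparsification rule of \cref{def:conflict-graph} locally, $v$ can determine from these bits exactly which of its at most $D$ adjacent vertices lie in the sparsified graph, together with their unique identifiers. Since NCC supports direct point-to-point communication between any two nodes whose identifiers are known, $v$ can therefore send its messages straight to the intended recipients with no auxiliary routing or discovery phase.

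Second, I emulate one round of $\cA$ in NCC by chunking. In a given $\cA$-round, $v$ must deliver a message of at most $B$ bits along each of its $\leq D$ outgoing sparsified edges. Split each such message into $\lceil B/\log n\rceil$ consecutive chunks of $O(\log n)$ bits, each of which fits into one NCC message. The node $v$ therefore has at most $D\cdot \lceil B/\log n\rceil = O(BD/\log n)$ outgoing chunks to transmit. Since each $\cA$-round is symmetric (an edge incident to $v$ generates at most $B$ bits in each direction), a fixed node receives at most $O(BD/\log n)$ chunks as well. Scheduling a single chunk per NCC round trivially respects the NCC per-node capacity of $O(\log n)$ messages (of $O(\log n)$ bits) per round, both for sending and for receiving, and delivers all chunks in $O(BD/\log n)$ NCC rounds.

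I do not expect a genuine obstacle: the argument is an accounting of message volume against NCC bandwidth, and both halves are under the hypothesized limits of $\cA$. The single point that deserves a line of care is the receiver side: a node can, a priori, be the target of up to $D$ simultaneous $B$-bit transmissions in one $\cA$-round, but the degree bound of $\cA$ applies symmetrically, so the inbound chunk count at every node is also $O(BD/\log n)$ per $\cA$-round and never exceeds the schedule we used for the sending side. Chaining the $T$ emulated rounds sequentially gives the claimed $O(T\cdot BD/\log n)$ total round complexity.
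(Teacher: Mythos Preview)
Your proposal is correct and takes essentially the same approach as the paper: both argue by counting the $O(BD)$ bits a node must transmit per $\cA$-round and dividing by the NCC per-round capacity. The paper's justification is in fact the short paragraph immediately \emph{preceding} the claim; your version simply fleshes out that same accounting with an explicit chunking schedule and a remark on the receive side (one small quibble: your appeal to \cref{def:conflict-graph} is specific to the coloring algorithm, whereas the claim is about a generic $\cA$---the cleaner justification for addressability is that NCC nodes know their full adjacency list and, with the neighbors' random bits, can locally simulate the streaming phase and recover the stored IDs).
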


The only information missing to nodes in order to run the \distream algorithm is the initial state of their neighbors. As nodes have no memory restriction in \NCC, we are free to use pseudo-random initial states. For a function $S$ mapping nodes to $\poly\log n$ bits binary strings, we write $\cA[S]$ for the algorithm $\cA$ where each node $u$ has the string $S(u)$ as random bits.

\begin{lemma}
\label{lem:pseudorandom-init}
For any fixed $n$, there is a family $\cS$ of $\poly(n)$ functions mapping nodes to initial states such that for any $n$-node graph input, if we run $\cA$ on a random function in $\cS$, the streaming is correct with high probability.
\end{lemma}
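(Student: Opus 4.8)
The plan is to prove \Cref{lem:pseudorandom-init} by a standard union-bound / probabilistic-method derandomization over the choice of random seeds, exploiting that \NCC nodes have no memory restriction, so that the pseudorandom initial states can have $\poly\log n$ length but the \emph{family} of such assignments can be of polynomial size. The key point is that for a \emph{fixed} input graph, algorithm $\cA$ succeeds (produces a correct streaming outcome, and hence a valid $(\Delta+1)$-coloring after the communication phase) with probability at least $1-1/n^{c}$ for a large constant $c$, when the initial states $S(u)$ are drawn fully independently and uniformly. We want to replace these truly random states by a $\poly(n)$-size family $\cS$ such that a \emph{uniformly random} member of $\cS$ still succeeds with high probability on every $n$-node input.

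First I would fix the number of nodes $n$ and consider all $n$-node input graphs $G$ together with all possible partitions of $L(v)$-style randomness that $\cA$ consumes; since $\cA$ reads at most $\poly\log n$ random bits per node during the single streaming pass and the communication phase is determined by these, each assignment $S$ is an element of $\parens*{\set{0,1}^{p(n)}}^{[n]}$ for some $p(n)=\poly\log n$. For a fixed input $G$, let $\mathrm{Bad}(S,G)$ be the event that $\cA[S]$ fails on $G$; by correctness of $\cA$, $\Pr_S\range{\mathrm{Bad}(S,G)} \le n^{-c}$ when $S$ is uniform. Now sample a family $\cS$ of $m = \Theta(n^2)$ independent uniform assignments $S_1,\dots,S_m$. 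For a fixed $G$, the probability that \emph{more than an $n^{-1}$-fraction} of the $S_j$ are bad for $G$ is, by a Chernoff bound, at most $\exp(-\Omega(m n^{-c})) \le \exp(-\Omega(n^{2-c}))$ — which I can make as small as $n^{-\omega(n)}$ by taking $c$ small enough relative to the exponent, or alternatively I would simply take $m$ polynomially large enough. Wait — I need to be careful: there are $2^{\Theta(n^2)}$ possible $n$-node graphs, so a naive union bound over all inputs fails. The fix is the standard observation that what matters is not the labeled graph but the \emph{behavior} of $\cA$: since $\cA$ communicates only along edges of the sparsified graph and the final coloring is a function of the transcript, and more importantly since correctness for \emph{every} input is what we want, I would instead union-bound over inputs after fixing $\cS$ and argue that the fraction of bad seeds is small \emph{simultaneously}, using that for the derandomization we only need: there \emph{exists} a family $\cS$ of size $\poly(n)$ such that for every $n$-node input $G$, $\Pr_{j\in[m]}\range{\mathrm{Bad}(S_j,G)} \le n^{-1}$. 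Taking $m = n^{c/2}$, for each fixed $G$ the probability over the random choice of $\cS$ that the bad fraction exceeds $n^{-1}$ is at most $\binom{m}{m/n} (n^{-c})^{m/n} \le \exp\parens*{O\parens*{\tfrac{m}{n}\log m} - \tfrac{cm}{n}\log n} = \exp(-\Omega(\tfrac{m}{n}\log n))$, and choosing $c$ large enough that this is below $2^{-n^2}$ lets the union bound over all $\le 2^{n^2}$ graphs go through; the existence of the desired $\cS$ then follows by the probabilistic method.

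The main obstacle I anticipate is precisely this union bound over all $n$-node inputs: a single random seed only fails with probability $n^{-c}$ on a fixed input, but there are doubly-exponentially many inputs, so we cannot fix a small family and union bound directly. The resolution above — sampling a \emph{polynomially large} family $\cS$ and arguing that, with probability exceeding $1-2^{-n^2}$ over the choice of $\cS$, the bad-seed fraction is tiny for every input simultaneously — requires that the per-input failure probability $n^{-c}$ be boosted enough (by taking $c$ a sufficiently large constant, which \Cref{thm:DPS} and \Cref{lem:sparse-graph} permit since all their guarantees are ``with high probability'' for an arbitrarily large polynomial) to dominate the $2^{n^2}$ union bound once amplified by the family size $m = \poly(n)$. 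One subtlety to handle carefully is that $\cA$'s streaming phase reads random bits attached to edges, i.e.\ the random bits of \emph{neighbors}; replacing truly random states by $S_j(u)$ is transparent here because each edge carries $S_j(u)$ deterministically, and in \NCC nodes can store the full description of $S_j$ (it is only $\poly(n)$ bits to name which member of $\cS$ is used, plus the ability to evaluate $S_j(u)$ for any $u$), so \Cref{claim:NCC-simulation} applies verbatim to $\cA[S_j]$.

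Finally, I would assemble the pieces: fix $\cS$ as guaranteed above; in the \NCC algorithm, the nodes agree (using the shared public randomness available in \NCC, or a single leader broadcast) on an index $j\in[m]$ uniformly at random, after which every node $u$ knows it should behave as in $\cA$ with initial state $S_j(u)$ and, crucially, can compute $S_j(w)$ for any neighbor $w$ whose ID it learns. By the defining property of $\cS$, the run of $\cA[S_j]$ on the (worst-case but fixed) input graph is correct with probability at least $1-n^{-1}$ over the choice of $j$, i.e.\ with high probability; then \Cref{claim:NCC-simulation} gives the stated $O\parens*{\tfrac{T\cdot(BD)}{\log n}}$ communication-round overhead, plus the $O(\log n)$ rounds to disseminate $j$. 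Combining with the $O(\log^2\Delta + \log^3\log n)$-round, $O(\log n)$-bandwidth, $O(\log^4 n)$-degree guarantees of \Cref{thm:DPS} yields the claimed $\poly\log n$-round \NCC algorithm for $(\Delta+1)$-coloring, completing the proof of \Cref{lem:pseudorandom-init} and hence of the \NCC corollary.
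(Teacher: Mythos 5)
Your proposal is correct and takes essentially the same route as the paper: sample a $\poly(n)$-size family of seed assignments, apply a Chernoff/binomial bound to show that for any fixed input the fraction of bad seeds is small except with probability $\exp(-\Omega(n^2))$, and union-bound over the $2^{O(n^2)}$ possible $n$-node graphs to conclude existence by the probabilistic method. The paper fixes $t = \Omega(n^3)$ and applies a plain Chernoff bound where you use the $\binom{m}{m/n}(n^{-c})^{m/n}$ counting estimate, but the calculations are equivalent and the key idea — boosting the per-input failure probability to dominate the doubly-exponential input count after multiplying by the polynomial family size — is identical.
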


\begin{proof}[Proof of \cref{lem:pseudorandom-init}]
Fix a $n$-node graph $G$. 
Sample $t$ functions $S_1, \ldots, S_t$ assigning $\poly\log n$-bits binary string to nodes.

Since $\cA$ is correct with high probability, it means that on a random $S_i$, algorithm $\cA[S_i]$ fails with probability at most $1/n$. For a fixed $G$, call $X_i$ the random variable equal to one iff algorithm $\cA[S_i]$ fails on $G$ with probability more than $1/n$. In expectation, the number of bad assignments is $\Exp\range*{\sum_{i\in [t]} X_u} \le t/n$. Samples are independent; hence, by Chernoff, we get
\begin{equation}\label{eq:NCC}
\Pr\range*{\sum_{i\in [t]} X_i > \frac{2t}{n}} \le \exp\parens*{-\frac{2t}{3n}}.
\end{equation}

We conclude the proof by using the union bound on all $n$ nodes graphs. There are at most $2^{n^2}$ input graphs $G$ on $n$ nodes. Therefore, for some large enough $t=\Omega(n^3)$, the bound in \cref{eq:NCC} is strictly less than 1; hence, there is a family $\cS=\set{S_1, \ldots, S_t}$ such that the probability that $\cA[S_i]$ fails for a random $i\in[t]$ is at most $2/n$ for all $n$-nodes graphs.
\end{proof}

\begin{proof}[Proof of \cref{thm:NCC-simulation}]
For a fixed $n$-sized network. Nodes can locally compute the family $\cS$ described in \cref{lem:pseudorandom-init} (recall there is not memory or local time constraints on nodes in the \NCC model). The node of minimum ID then sample a random index $i\in \range*{|\cS|}$ and broadcast it. Since $|\cS| \le \poly(n)$, index $i$ can be described in $O(\log n)$ bits. Broadcasting a message to every one takes $O(\log n)$ rounds.

Nodes then know the randomness of every node in $G$ as well as their adjacency list. They can therefore run the streaming phase without any communication. Once this is done, they can emulate $\cA$ in $O\parens*{TBD/\log n}$ rounds. By \cref{lem:pseudorandom-init}, it fails with probability $1/\poly(n)$.
\end{proof}

\end{document}